\let\oldmarginpar\marginpar
\renewcommand\marginpar[1]{\-\oldmarginpar[\raggedleft\footnotesize #1]%
  {\raggedright\footnotesize #1}}
\newtheorem{thm}{Theorem}[section]  
\newtheorem{proposition}[thm]{Proposition}
\newtheorem{lem}[thm]{Lemma}
\newtheorem{lemma}[thm]{Lemma}
\newcommand{\done}{}
\newcommand{\ech}[2]{#2}	
\newcommand{\ekt}{e^{K\lvert t\rvert}}	
\newcommand{\bx}{{\bf x}}
\newcommand{\cU}{{\cal U}}
\newcommand{\bR}{{\mathbb R}}
\newcommand{\bN}{{\mathbb N}}
\newcommand{\tr}{\mbox{Tr}}
\newcommand{\wt}{\widetilde}
\newcommand{\cF}{{\cal F}}
\newcommand{\cE}{{\cal E}}
\newcommand{\cK}{{\cal K}}
\newcommand{\cH}{{\cal H}}
\newcommand{\cL}{{\cal L}}
\newcommand{\cN}{{\cal N}}
\newcommand{\R}{\mathbb{R}}
\newcommand{\N}{\mathcal{N}}
\newcommand{\ad}{\operatorname{ad}}	
\newcommand{\di}{{d}}		
\newcommand{\Ncal}{\mathcal{N}}		
\newcommand{\Hcal}{\mathcal{H}}		
\newcommand{\scal}[2]{\big<#1,#2\big>} 
\newcommand{\cc}[1]{\overline{#1}}	
\newcommand{\Rbb}{\mathbb{R}}		
\newcommand{\Nbb}{\mathbb{N}}		
\renewcommand{\Im}{\operatorname{Im}\,} 	
\newcommand{\norm}[1]{\lVert#1\rVert}	
\newcommand{\ph}{\varphi_t^{(N)}}	
\newcommand{\phdot}{\dot{\varphi}_t^{(N)}}	
\newcommand{\phddot}{\ddot{\varphi}_t^{(N)}}	
\newcommand{\be}[1]{\begin{equation}\label{eq:#1}}	
\newcommand{\ee}{\end{equation}}
\newcommand{\bd}{\begin{displaymath}}			
\newcommand{\ed}{\end{displaymath}}
\newcommand{\todo}[1]{}
\title{Quantitative Derivation of the Gross-Pitaevskii Equation} 
\author{Niels Benedikter, Gustavo de Oliveira\thanks{Supported by ERC Grant MAQD 240518 }  \, and Benjamin Schlein\thanks{Partially supported by ERC Grant MAQD 240518} \\ \\ Institute of Applied Mathematics, University of Bonn\\ Endenicher Allee 60, 53115 Bonn, Germany}
\begin{document}
\maketitle

\begin{abstract}
Starting from first principle many-body quantum dynamics, we show that the dynamics of 
Bose-Einstein condensates can be approximated by the time-dependent nonlinear 
Gross-Pitaevskii equation, giving a bound on the rate of the convergence. Initial data are constructed on the bosonic Fock space applying an appropriate Bogoliubov transformation on a coherent state with expected number of particles $N$. The Bogoliubov transformation plays a crucial role; it produces the correct microscopic correlations among the particles. Our analysis shows that, on the level of the one particle reduced density, the form of the initial data is preserved by the many-body evolution, up to a small error which vanishes as $N^{-1/2}$ in the limit of large $N$.  
\end{abstract}

\section{Introduction and main results}
\label{s:intro}

A Bose-Einstein condensate is a state of matter of a gas of bosons where a macroscopic fraction of the particles occupy the same one-particle state. The existence of Bose-Einstein condensation at small temperature was first predicted in 1925 by Bose and Einstein, who considered gases of non-interacting bosons. Seventy years later, in 1995, the existence of Bose-Einstein condensates 
was then confirmed by experiments; see \cite{BEC1,BEC2}. Since then, Bose-Einstein condensates have attracted a lot of attention in theoretical and in experimental physics. In particular, they have been used to explore fundamental questions in quantum mechanics, such as the emergence of interference, decoherence, superfluidity and quantized vortices. In experiments, condensates are initially trapped by strong magnetic fields and cooled down at extremely low temperatures (in the nano-kelvin scale). Then, the traps are released and the subsequent evolution of the condensate is observed. The goal of this paper is to study the dynamics of initially trapped Bose-Einstein condensates. In particular, starting from many-body quantum dynamics, we 
show rigorously that the evolution of the condensate can be described, in certain regimes, by the time-dependent Gross-Pitaevskii equation. 


\bigskip

{\it The model.} We consider a trapped gas of $N$ bosons, described by the Hamilton operator 
\begin{equation}\label{eq:ham-0}
H^{\text{trap}}_N = \sum_{j=1}^N \left(-\Delta_{x_j} +V_{\text{ext}} (x_j)\right) + \sum_{i<j}^N N^2 V (N (x_i - x_j)) 
\end{equation}
acting on the Hilbert space $L^2_s (\bR^{3N}, d\bx)$, the subspace of $L^2 (\bR^{3N}, d\bx)$ consisting of all functions symmetric with respect to permutations of the $N$ particles. The external potential $V_{\text{ext}}$ confines the particles inside the trap. The interaction potential $V$ is assumed to be non-negative, spherically symmetric and decaying sufficiently fast at infinity. We denote by $a_0$ the scattering length of the potential $V$. To define the scattering length, we consider the solution of the zero energy scattering equation 
\begin{equation}\label{eq:0en-0} \left( -\Delta + \frac{1}{2} V \right) f = 0 \end{equation}
with the boundary condition $f (x) \to 1$ as $|x| \to \infty$. The scattering length is then given by 
\begin{equation}\label{eq:8pia} 8 \pi a_0 = \int dx \, V(x) f(x) \, . \end{equation}
It is easy to check that, for large $|x|$, 
\begin{equation}\label{eq:f} f(x) = 1- \frac{a_0}{|x|} + O (|x|^{-2}) \, . \end{equation}
By simple scaling, we find then
\[ \left( -\Delta + \frac{N^2}{2} V(N x) \right) f (Nx) = 0 \]
which implies that the scattering length of the rescaled potential $N^2 V(N.)$ appearing in (\ref{eq:ham-0}) is given by $a=a_0/N$. 

\bigskip

{\it Ground state properties.} Let \[ E_N = \min_{\substack{\psi_N \in L^2_s (\bR^{3N}) \\ \| \psi_N \| = 1}} \langle \psi_N , H^{\text{trap}}_N \psi_N \rangle \]
denote the ground state energy of (\ref{eq:ham-0}). It was proven in \cite{LSY} that
\[ \lim_{N \to \infty} \frac{E_N}{N} = \min_{\substack{\varphi \in L^2 (\bR^3) \\ \| \varphi \|_2 = 1}} \cE_{\text{GP}} (\varphi) \]
where 
\begin{equation}\label{eq:GP-en} \cE_{\text{GP}} (\varphi) = \int dx \, \left[ |\nabla \varphi (x)|^2 + V_{\text{ext}} (x) |\varphi (x)|^2 + 4 \pi a_0 |\varphi (x)|^4 \right] \end{equation}
is the Gross-Pitaevskii energy functional. Hence, in the leading order, the ground state energy per particle depends on the interaction potential $V$ only through its scattering length $a_0$. In \cite{LS}, it was also shown that the ground state of (\ref{eq:ham-0}) exhibits complete Bose-Einstein condensation in the minimizer of the Gross-Pitaevskii energy functional (\ref{eq:GP-en}),
in the sense that 
\begin{equation}\label{eq:cond} \gamma_N^{(1)} \to | \phi_{\text{GP}} \rangle \langle \phi_{\text{GP}}| \end{equation}
where $|\phi_{\text{GP}} \rangle \langle \phi_{\text{GP}} |$ is the orthogonal projection onto the (normalized) 
minimizer $\phi_{\text{GP}} \in L^2 (\bR^3)$ of (\ref{eq:GP-en}) and where $\gamma^{(1)}_N$ denotes the one-particle reduced density associated with the ground state $\psi_N \in L^2_s (\bR^{3N})$ of (\ref{eq:ham-0}), which is defined as the non-negative trace class operator with integral kernel
\begin{equation}\label{eq:one-red} \gamma^{(1)}_N (x;x') := \int dx_2 \dots dx_N \psi_N (x , x_2, \dots  ,x_N) \overline{\psi}_N (x' , x_2 , \dots , x_N). \end{equation}
We assume here that $\| \psi_N \|  =1$, and therefore that $\tr \, \gamma^{(1)}_N = 1$. The convergence in (\ref{eq:cond}), which hold in the trace-class topology, implies that, in the ground state of (\ref{eq:ham-0}), all particles, up to a fraction which vanishes in the limit of large $N$, are in the same one particle state, described by the orbital $\phi_{\text{GP}}$. The results of \cite{LSY,LS} show that the Gross-Pitaevskii theory correctly describes the ground state properties of the Hamiltonian (\ref{eq:ham-0}). 

\bigskip

{\it Time evolution.} 
When the traps are switched off, the system starts to evolve, the dynamics being governed by the $N$-particle Schr\"odinger equation 
\begin{equation}\label{eq:schr} i \partial_t \psi_{N,t} = H_N \psi_{N,t} \end{equation}
with the translation invariant Hamiltonian
\begin{equation}\label{eq:ham-1} H_N = \sum_{j=1}^N -\Delta_{x_j} +
\sum_{i<j}^N N^2 V(N (x_i - x_j)). \end{equation}
It turns out that the time evolution of an initial data exhibiting complete condensation can be described, in the limit of large $N$, by the Gross-Pitaevskii theory. In fact, the following result was established in 
\cite{ESY0,ESY1,ESY2,ESY3,ESY4}, and, in a slightly different form, in \cite{P}. Consider a family $\psi_N \in L^2_s (\bR^{3N})$ with bounded energy per particle
\[ \langle \psi_N , H_N \psi_N \rangle \leq C N \]
and exhibiting complete condensation in a one-particle state $\varphi \in H^1 (\bR^3)$, in the sense that the one-particle reduced density $\gamma^{(1)}_N$ associated with $\psi_N$ (defined as in (\ref{eq:one-red})) satisfies
\[ \gamma_N^{(1)} \to |\varphi \rangle \langle \varphi| \] 
as $N \to \infty$. Then, the solution $\psi_{N,t}$ of the Schr\"odinger equation (\ref{eq:schr}) still exhibits complete Bose-Einstein condensation, in the sense that the reduced one-particle density $\gamma_{N,t}^{(1)}$ associated with $\psi_{N,t}$ satisfies 
\begin{equation}\label{eq:conv-GP} \gamma_{N,t}^{(1)} \to |\varphi_t \rangle \langle \varphi_t| \end{equation}
as $N \to \infty$, where $\varphi_t$ is the solution of the time-dependent Gross-Pitaevskii equation
\begin{equation}\label{eq:GP-0} i\partial_t \varphi_t = -\Delta \varphi_t +
8\pi a_0 |\varphi_t|^2 \varphi_t. \end{equation}
This result establishes the stability of complete Bose-Einstein condensation with respect to the time-evolution, and the fact that the condensate wave function evolves according to (\ref{eq:GP-0}). It justifies therefore the use of the Gross-Pitaevskii equation (\ref{eq:GP-0}) to predict and describe the outcome of the experiments discussed above. 

\bigskip

{\it Mean field regime.} The method used in \cite{ESY0,ESY1,ESY2,ESY3,ESY4} 
to prove (\ref{eq:conv-GP}) relies on techniques first developed to understand the mean-field limit of many-body quantum dynamics. This regime is achieved when considering the time-evolution 
\begin{equation}\label{eq:mf-schr}
i\partial_t \psi_{N,t} = H_N^{\text{mf}} \psi_{N,t} 
\end{equation}
generated by the Hamiltonian
\begin{equation}\label{eq:ham-mf} H_N^{\text{mf}} = \sum_{j=1}^N -\Delta_{x_j} + \frac{1}{N} \sum_{i<j}^N V (x_i -x_j) \end{equation}
in the limit of large $N$. Also in this case, under appropriate assumptions on the potential $V$, 
complete condensation is preserved by the time-evolution. Here, the evolution of the condensate wave function is governed by the Hartree equation
\begin{equation}\label{eq:hartree-0} i\partial_t \varphi_t = -\Delta \varphi_t + \left( V * |\varphi_t|^2\right) \varphi_t \,. \end{equation}
The first rigorous derivation of (\ref{eq:hartree-0}) was obtained in \cite{S} for bounded interaction potential, i.e.\ under the assumption $\| V \|_\infty < \infty$. The basic idea in \cite{S} was to study directly the time-evolution of the family of reduced densities $\gamma^{(k)}_{N,t}$, $k =1 ,2, \dots , N$, defined similarly to (\ref{eq:one-red}).  {F}rom the Schr\"odinger equation (\ref{eq:mf-schr}), it is easy to derive a hierarchy of $N$ coupled equations, known as the BBGKY hierarchy, for the evolution of the densities $\gamma^{(k)}_{N,t}$. As $N \to \infty$, the BBGKY hierarchy converges, at least formally, towards an infinite hierarchy of coupled equations, which is solved by products of the solution of the Hartree equation (\ref{eq:hartree-0}). The problem of proving the convergence towards the Hartree dynamics essentially reduces to showing the uniqueness of the solution of the infinite hierarchy. This general scheme, first introduced in \cite{S} for bounded interactions, was later extended to potentials with Coulomb singularities in \cite{EY} (and in \cite{ES}, for bosons with relativistic dispersions). In \cite{ESY1,ESY2,ESY3,ESY4}, the same strategy was then applied to analyze the dynamics generated by the Hamiltonian (\ref{eq:ham-1}) and to obtain a rigorous derivation of the Gross-Pitaevskii equation (\ref{eq:GP-0}). 

Writing (\ref{eq:ham-1}) as
\begin{equation}\label{eq:ham-11} H_N = \sum_{j=1}^N -\Delta_{x_j} + \frac{1}{N} \sum_{i<j}^N N^3 V (N (x_i - x_j)) \end{equation}
one can interpret the Gross-Pitaevskii regime as a very singular mean-field limit, where the interaction converges, as $N \to \infty$, towards a delta-function. {F}rom the point of view of physics, however, these two regimes are very different. While in the mean field limit particles experience a large number of weak collisions, the evolution generated by (\ref{eq:ham-0}) is characterized by very rare and strong interactions (two particles only interact when they are very close, at distances of order $N^{-1}$ from each others, which is much smaller than the typical interparticle distance, of order $N^{-1/3}$). As a consequence, it turns out that correlations among particles, which are negligible in the mean field limit, play a crucial role in the Gross-Pitaevskii regime. To explain this point, let us consider the evolution of the one-particle reduced density $\gamma^{(1)}_{N,t}$ which is governed by the equation 
\begin{equation}\label{eq:BBGKY1} \begin{split} 
i\partial_t & \gamma^{(1)}_{N,t} (x;x') \\ = \; & \left( -\Delta_x + \Delta_{x'} \right) \gamma^{(1)}_{N,t} (x;x')  
\\ &+ \int dx_2 \, \left( (N-1) N^2 V (N (x-x_2) - (N-1) N^2 V(N (x'-x_2)) \right) \gamma^{(2)}_{N,t} (x,x_2 ; x',x_2) \end{split} \end{equation}
where $\gamma^{(2)}_{N,t}$ is the two-particle reduced density, defined similarly to 
(\ref{eq:one-red}), with the normalization $\tr \; \gamma^{(2)}_{N,t} = 1$. If we assume the initial 
state to exhibit complete condensation and we accept that condensation is preserved by the time evolution, we should expect $\gamma^{(1)}_{N,t}$ and $\gamma^{(2)}_{N,t}$ to be approximately factorized. In $\gamma^{(2)}_{N,t}$, however, we also want to take into account the correlations between the two particles. Describing the correlations through the solution of the zero-energy scattering equation (\ref{eq:0en-0}), we are led to the ansatz
\begin{equation}\label{eq:ans-mod} \begin{split} 
\gamma^{(1)}_{N,t} (x;x') &\simeq \varphi_t (x) \overline{\varphi}_t (x'), \\
\gamma^{(2)}_{N,t} (x_1, x_2; x'_1, x'_2) &\simeq  f (N (x_1 - x_2)) f (N
(x'_1- x'_2)) \, \varphi_t (x_1)  \varphi_t (x_2) \overline{\varphi}_t
(x'_1) \overline{\varphi}_t (x'_2).
\end{split} \end{equation}
Plugging this into (\ref{eq:BBGKY1}), we obtain a new self-consistent equation for $\varphi_t$,  given by
\begin{equation}\label{eq:gp-mod0} i\partial_t \varphi_t = -\Delta \varphi_t
+ \left(N^2 (N-1) V(N.) f(N.) * |\varphi_t|^2 \right) \varphi_t.
\end{equation}
{F}rom (\ref{eq:8pia}), we have $N^2 (N-1) V(Nx) f(Nx) \simeq N^3 V(Nx) f(Nx) \simeq 8\pi a_0 \delta (x)$ as $N \to \infty$; therefore, in this limit, $\varphi_t$ must be a solution of the Gross-Pitaevskii equation (\ref{eq:GP-0}). We observe here that the presence of the factor $f(N.)$, describing the correlations among the particles, is crucial in this argument to understand the emergence of the scattering length in  (\ref{eq:GP-0}). We conclude that any derivation of the Gross-Pitaevskii equation (\ref{eq:GP-0}) must take into account the singular correlation structure developed by the many body evolution. In fact, understanding the correlations and adapting the techniques of \cite{S,EY,ES} to deal with them (for example, when proving a priori regularity of the limiting densities) was one of the main challenges faced in \cite{ESY0,ESY1,ESY2,ESY3,ESY4}. As we will discuss shortly, correlations also play a major role in the approach presented in this paper. 

\bigskip

{\it The coherent states approach.} A drawback of the approach used in \cite{ESY0,ESY1,ESY2,ESY3,ESY4} is the fact that it does not give any control on the rate of the convergence of the many-body quantum evolution towards the limiting dynamics (described by the Hartree or the Gross-Pitaevskii equation). 
The problem of controlling the rate of convergence is by no means of purely academic interest. In experiments, the number of particles $N$ is large ($N \simeq 10^3$ in typical samples of Bose-Einstein condensates), but of course finite. For fixed $N$, the statement (\ref{eq:conv-GP}) is empty. Only an explicit  bound on the difference $\gamma^{(1)}_{N,t} - |\varphi_t \rangle \langle \varphi_t|$ can tell us whether the limiting (Hartree or Gross-Pitaevskii) dynamics is a good approximation for the many-body evolution. In \cite{RS}, a different approach to the study of the many-body quantum dynamics in the mean field regime was developed, starting from ideas introduced in a slightly different context in \cite{H,GV}. 
To explain this approach, we consider the bosonic Fock space 
\[ \cF = \bigoplus_{n \geq 0} L^2_s (\bR^{3n} , dx_1 \dots  dx_n). \]
The idea here is that on $\cF$ we can describe states with variable number
of particles. The normalized Fock-state $\psi = \{ \psi^{(n)} \}_{n\geq 1}$
contains $n$ particles with probability $\| \psi^{(n)} \|^2$. For $f \in L^2
(\bR^3)$ we introduce, as usual, creation and annihilation operators $a^*
(f)$ and $a(f)$ which act on $\cF$ by creating and annihilating a particle
with wave function $f$, respectively (precise definitions and basic
properties are given in Section \ref{sec:fock}). For $x \in \bR^3$, we can
also introduce operator valued distributions $a_x^*$ and $a_x$, creating and annihilating a particle at $x$. In terms of these distributions, we define the mean field Hamiltonian
\[ \cH^{\text{mf}}_N = \int dx \nabla_x a_x^* \nabla_x a_x + \frac{1}{2N}
\int dxdy \, V(x-y) a_x^* a_y^*a_y a_x. \]
The operator $\cH_N$ commutes with the number of particles operator $\cN$, whose action on $\psi = \{ \psi^{(n)} \}_{n\geq 0} \in \cF$ is given by $(\cN \psi)^{(n)} = n \psi^{(n)}$. Moreover, when restricted to the sector with exactly $N$ particles, $\cH^{\text{mf}}_N$ coincides exactly with the mean field Hamiltonian (\ref{eq:ham-mf}) introduced above. The advantage of working on the Fock space is that we have more freedom in the choice of the initial state. We will use this freedom, by choosing coherent initial states. The coherent state with wave function $f \in L^2 (\bR^3)$ is defined as the vector $W(f) \Omega \in \cF$, where $\Omega = \{ 1, 0, \dots \}$ is the Fock vacuum, and $W(f)= \exp (a^* (f) - a(f))$ is the Weyl operator with wave function $f$.  A simple computation shows that
\[ W(f) \Omega = e^{-\| f \|_2^2/2} \sum_{j \geq 0} \frac{(a^* (f))^j}{j!}
\Omega = e^{-\|f \|_2^2/2} \left\{ 1, f , \frac{f^{\otimes 2}}{\sqrt{2!}} ,
\dots \right\}. \]
Coherent states do not have a fixed number of particles. However, the expected number of particles is given by $\| f \|_2^2$. To recover the mean field regime discussed above, we consider therefore the time evolution of an initial coherent state $W(\sqrt{N} \varphi) \Omega$, for $\varphi \in L^2 (\bR^3)$ with $\| \varphi \|_2 = 1$. It turns out that coherent states have especially nice algebraic properties (due to the fact that they are eigenvectors of all annihilation operators). Making use of these properties, one can show that, for large $N$, the mean field evolution $\Psi_{N,t} = e^{-i\cH^{\text{mf}}_N t} W(\sqrt{N} \varphi) \Omega$ of an initial coherent state will again be approximately coherent, of the form $W(\sqrt{N} \varphi_t) \Omega$, where $\varphi_t$ is the solution of the Hartree equation (\ref{eq:hartree-0}) with initial data $\varphi_{t=0} = \varphi$.

Moreover, it is possible to express the difference between the reduced densities associated with the evolved coherent state and the orthogonal projection onto $\varphi_t$ in terms of a so called fluctuation dynamics, defined as the two-parameter group of unitary transformations
\begin{equation}\label{eq:cU-mf} \cU^{\text{mf}} (t;s) = W^* (\sqrt{N}
\varphi_t) e^{-i \cH^{\text{mf}}_N (t-s)} W(\sqrt{N} \varphi_s). \end{equation}
Essentially, the problem of bounding the rate of the convergence of the
many-body evolution towards the Hartree dynamics in the mean field limit
reduces to the problem of controlling the growth of the expectation of the
number of particles operator $\cN$ with respect to $\cU^{\text{mf}} (t;s)$,
uniformly in $N$. To obtain such estimates, we observe that the fluctuation dynamics satisfies a Schr\"odinger type equation
\[ i\partial_t \cU^{\text{mf}} (t;s) = \cL^{\text{mf}}_N (t) \cU^{\text{mf}} (t;s) \]
with $\cU^{\text{mf}} (s;s) = 1$ and with the time dependent generator
\begin{equation}\label{eq:cL-mf} \begin{split} 
\cL^{\text{mf}}_N (t) = \; & [i\partial_t W^* (\sqrt{N} \varphi_t)] W(\sqrt{N} \varphi_t) + W^* (\sqrt{N} \varphi_t) \cH^{\text{mf}}_N W(\sqrt{N} \varphi_t) \\  = \; &\int dx \nabla_x a_x^* \nabla_x a_x + \int dx ( V*|\varphi_t|^2) (x) a_x^* a_x + \int dx dy V(x-y) \varphi_t (x) \overline{\varphi}_t (y) \, a_x^* a_y \\ &+  \int dx dy V(x-y)  \left(\varphi_t (x) \varphi_t (y) \, a_x^* a^*_y + \overline{\varphi}_t (x) \overline{\varphi}_t (y) \, a_x a_y \right) \\
&+\frac{1}{\sqrt{N}} \int dx dy V(x-y) a_x^* \left( \varphi_t (y) a_y^* +
\overline{\varphi}_t (y) a_y \right) + \frac{1}{2N} \int dx dy V(x-y) a_x^*
a_y^* a_y a_x. \end{split} \end{equation}
The terms on the third and fourth line do not commute with the number of particles operator (because the number of creation operator does not match the number of annihilation operators). As a consequence, the fluctuation dynamics $\cU^{\text{mf}} (t;s)$ does not leave the number of particles invariant. This is not surprising since $\cU^{\text{mf}} (t;s)$ describes fluctuations around the Hartree evolution, which are expected to grow with time. Still, under the assumption that the interaction $V$ contains at most Coulomb singularities, bounds of the form
\[ \langle \psi , \cU^{\text{mf}} (t;0)^* \cN \cU^{\text{mf}} (t;0) \psi \rangle \leq C e^{K |t|} \langle \psi, (\cN+1)^{4} \psi \rangle \]
for the growth of the number of particles operator (and, actually, also for all its power) were proven in \cite{RS}. As a corollary, estimates of the form
\begin{equation}\label{eq:bds-mf} \tr \, \left| \Gamma_{N,t}^{(1)} - |\varphi_t \rangle \langle \varphi_t| \right| \leq \frac{C e^{K|t|}}{N} \,  \end{equation}
on the rate of convergence towards the Hartree dynamics followed. Here $\Gamma_{N,t}^{(1)}$ denotes the one-particle reduced density associated with the evolved state $\Psi_{N,t} = e^{-i\cH_N t} W(\sqrt{N} \varphi) \Omega$. The one-particle reduced density associated with a Fock space vector $\Psi$ is defined by the integral kernel
\begin{equation}\label{eq:GNt} \Gamma_{\Psi}^{(1)} (x,y) = \frac{1}{\langle
\Psi , \cN \Psi \rangle} \left\langle \Psi , a_y^* a_x \Psi \right\rangle. \end{equation}
It is simple to check that, for $N$-particle states, (\ref{eq:GNt}) coincides with the definition (\ref{eq:one-red}). 
The analysis of the time evolution of initial coherent states is also useful to study the dynamics of initial data with a fixed number of particles $N$. For $N$-particle initial data $\psi_N \in L^2 (\bR^{3N})$ obtained projecting down a Fock-state of the form $W(\sqrt{N} \varphi) \Psi$ onto the $N$-particle sector, bounds of the form (\ref{eq:bds-mf}) were established in \cite{CLS}, extending the ideas developed in \cite{RS}, for arbitrary $\Psi \in \cF$ with finite number of particles and energy (note that this class of $N$-particle states include factorized wave functions of the form $\varphi^{\otimes N}$, which are obtained choosing $\psi = \Omega$). Using techniques similar to those proposed in \cite{P}, bounds for the rate of convergence towards the nonlinear Hartree dynamics were also obtained in \cite{KP2009}, allowing also for potential with strong singularities. A different point of view on the mean field limit was given in \cite{FKS}, where the convergence towards the Hartree dynamics was interpreted as a Egorov-type theorem.

\bigskip

{\it Bogoliubov transformations and the Gross-Pitaevskii regime.} It seems natural to ask whether the coherent state approach introduced in \cite{RS} can be used to obtain a rigorous derivation of the Gross-Pitaevskii equation (\ref{eq:GP-0}), providing at the same time bounds on the rate of the convergence of the form (\ref{eq:bds-mf}). A major difficulty in this program is immediately clear. 
{F}rom (\ref{eq:cL-mf}), we notice that there are two contributions to the generator $\cL_N (t)$, one 
 arising from the derivative of the Weyl operator $W^* (\sqrt{N}
 \varphi_t)$, the other from the derivative of $e^{-i (t-s) \cH_N}$. The
 second contribution, given by $W^* (\sqrt{N} \varphi_t) \cH^{\text{mf}}_N
 W(\sqrt{N} \varphi)$, can be computed recalling that Weyl operators act as shifts on creation and annihilation operators (see (\ref{eq:W3})). It turns out that this contribution contains a term, linear in creation and annihilation operators, having the form
\begin{equation}\label{eq:lin-mf}  \sqrt{N} \int dx \left[ -\Delta \varphi_t (x) + (V*|\varphi_t|^2) (x) \varphi_t (x) \right] a_x^* + \text{h.c.} \end{equation}
This term is large (of the order $N^{1/2}$) and does not commute with the number of particles operator. With such a term in the generator, it would be impossible to show uniform (in $N$) bounds for the growth of the number of particles. In the mean field regime, however, (\ref{eq:lin-mf}) is exactly 
canceled by the contribution proportional to the derivative of $W^* (\sqrt{N} \varphi_t)$, which contain the term
\[ \begin{split} -\sqrt{N} \int dx \, & ( i \partial_t \varphi_t (x) ) a_x^* - \text{h.c.} = -
\sqrt{N} \int dx \left[ -\Delta \varphi_t (x) + (V*|\varphi_t|^2) (x) \varphi_t (x) \right] a_x^* - \text{h.c.} 
\end{split}
\]
where we used the equation (\ref{eq:hartree-0}). As a result, the generator $\cL_N (t)$ on the r.h.s.\ of (\ref{eq:cLN}) contains only terms which, at least formally, are either order one, or smaller.

To adapt this approach to the Gross-Pitaevskii regime, we define the Fock-space Hamiltonian
\begin{equation}\label{eq:cHN} \cH_N = \int dx \, \nabla_x a_x^* \nabla_x a_x + \frac{1}{2} \int dx dy N^2 V(N (x-y)) a_x^* a_y^* a_y a_x \end{equation}
and, following (\ref{eq:cU-mf}), we naively introduce the fluctuation dynamics
\begin{equation}\label{eq:cU-mfGP} 
\cU^{GP} (t;s) = W^* (\sqrt{N} \varphi^{(N)}_t) e^{-i (t-s) \cH_N} W(\sqrt{N} \varphi^{(N)}_t) \end{equation}
where, for technical reasons which will be clear later, we choose $\varphi^{(N)}_t$ to solve the modified Gross-Pitaevskii equation 
\begin{equation}\label{eq:mod-GP0} i\partial_t \varphi^{(N)}_t (x) = -\Delta \varphi^{(N)}_t (x) + (N^3 V(N.) f(N.) * |\varphi^{(N)}_t|^2) (x) \varphi^{(N)}_t (x) \end{equation}
where $f$ is the solution of the zero-energy scattering equation (\ref{eq:0en-0}) (we use the notation $\varphi_t^{(N)}$ to distinguish the solution of (\ref{eq:mod-GP0}) from the solution $\varphi_t$ of the limiting Gross-Pitaevskii equation (\ref{eq:GP-0})). Since the solution of (\ref{eq:mod-GP0}) can be shown to converge towards the solution of (\ref{eq:GP-0}), with an error of the order $N^{-1}$, control of the fluctuations around (\ref{eq:mod-GP0}) also implies control of the fluctuations around (\ref{eq:GP-0}). Let $\cL_N^{\text{GP}}$ denote the generator of (\ref{eq:cU-mfGP}), given by sum of the two contributions $(i\partial_t W^* (\sqrt{N} \varphi^{(N)}_t)) W (\sqrt{N} \varphi^{(N)}_t)$ and $W^* (\sqrt{N} \varphi^{(N)}_t) \cH_N W (\sqrt{N} \varphi^{(N)}_t)$. As in the mean field regime, the term $W^*(\sqrt{N} \varphi^{(N)}_t) \cH_N W(\sqrt{N} \varphi^{(N)}_t)$ contains a large contribution, linear in creation and annihilation operators, given by
\[ \sqrt{N} \int dx \left[ -\Delta \varphi^{(N)}_t (x) + (N^3 V (N.) * |\varphi^{(N)}_t|^2) (x) \varphi^{(N)}_t (x) \right] a_x^* + \text{h.c.} \] 
The term $(i\partial_t W^* (\sqrt{N} \varphi^{(N)}_t)) W (\sqrt{N} \varphi^{(N)}_t)$, on the other hand, contains the linear term
\[ \begin{split} 
- \sqrt{N} \int dx \big[ (i\partial_t &\varphi^{(N)}_t (x)) a_x^* + \text{h.c.}\big]  \\
&=- \sqrt{N} \int dx \left[ -\Delta \varphi^{(N)}_t (x) + (N^3 V(N.) f(N.) |\varphi^{(N)}_t (x)|^2 \varphi^{(N)}_t (x) \right] a_x^* - \text{h.c.}
\end{split}\]
In contrast with the mean field regime discussed above, here, because of the factor $f$, there is no complete cancellation between the two large linear terms. Hence, the generator $\cL^{\text{GP}}_N (t)$ of (\ref{eq:cU-mfGP}) contains a large contribution, linear in the creation and annihilation operators, of the form
\begin{equation}\label{eq:lin-N} \sqrt{N} \int dx \left( N^3 V(N.) (1- f(N.)) * |\varphi^{(N)}_t|^2 \right) (x) \left( \varphi^{(N)}_t (x) a_x^* + \text{h.c.} \right)  \end{equation}
and it seems impossible to obtain uniform (in $N$) bound on the growth of the number of particles w.r.t. (\ref{eq:cU-mfGP}). The reason for this failure is that we are trying to control fluctuations around the wrong evolution. When we approximate $e^{-it \cH_N} W(\sqrt{N} \varphi) \psi$ by an evolved coherent state $W(\sqrt{N} \varphi^{(N)}_t) \psi$, we are completely neglecting the correlation structure developed by the many-body evolution. As a result, fluctuations around the coherent approximation $W(\sqrt{N} \varphi^{(N)}_t)$ are too strong to be bounded uniformly in $N$. 
Since correlations are, in first order, an effect of two-body interactions, we are going to approximate them using a unitary operator of the form
\[ T(k) = \exp\left(\frac{1}{2} \int dx dy \left(k(x,y) a_x^* a_y^* - \overline{k} (x,y) a_x a_y \right)\right) \]
for an appropriate $k \in L^2 (\bR^3 \times \bR^3)$, which will be interpreted as the integral kernel of a Hilbert-Schmidt operator (again denoted by $k$) on $L^2 (\bR^3)$. The operator $T(k)$ acts on creation and annihilation operators as a Bogoliubov transformation (see Sect. \ref{sec:bt} for precise definitions and basic properties):
\[ \begin{split} T^* (k) a (f) T (k) &= a(\text{ch} (k) (f)) + a^* (\text{sh} (k) (\overline{f})) \\
T^*(k) a^* (f) T (k) &= a^* (\text{ch} (k) (f)) + a^* (\text{sh} (\overline{k}) (\overline{f}))
\end{split} \]
where $\text{ch} (k)$ and $\text{sh} (k)$ are the bounded operators on $L^2 (\bR^3)$ defined by the (absolute convergent) series
\[ \begin{split} 
\text{ch} (k) & = \sum_{n\geq 0} \frac{1}{(2n)!} (k\overline{k})^n \qquad \text{and } \quad 
\text{sh} (k)   = \sum_{n \geq 0} \frac{1}{(2n+1)!} (k\overline{k})^n k \end{split} \]
where products of $k$ and $\overline{k}$ have to be understood in the sense of operators. Inspired by the analysis of \cite{ESY1,ESY2,ESY3,ESY4}, where correlations were successfully described by the solution of the zero-energy scattering equation (\ref{eq:f}), we define the (time-dependent) kernel 
\begin{equation}\label{eq:kt-def0} k_t (x,y) = - N w (N(x-y)) \varphi^{(N)}_t (x) \varphi^{(N)}_t (y) \end{equation}
where $\varphi^{(N)}_t$ is the solution of the modified Gross-Pitaevskii equation (\ref{eq:mod-GP0}), and where
\[ w (x) = 1 - f(x) \]
with $f$ being the solution of (\ref{eq:0en-0}). We will consider an initial state of the form $W(\sqrt{N} \varphi) T(k_0) \psi$, for $\psi$ with bounded number of particles and bounded energy, and we will approximate its time evolution by $W(\sqrt{N} \varphi_t) T(k_t) \psi$, leading therefore to the fluctuation dynamics 
\begin{equation}\label{eq:cU-0} \cU (t;s) = T^* (k_t) W^* (\sqrt{N}
\varphi_t) e^{-i(t-s) \cH_N} W(\sqrt{N} \varphi_s) T(k_s). \end{equation}
In this way, the approximating dynamics takes into account the correct correlation structure, and we have a better chance to bound the fluctuations. In fact, we will show in Section \ref{s:growthoffluct} that it is indeed possible to obtain a uniform (in $N$) control for the growth of the number of particles w.r.t. the dynamics (\ref{eq:cU-0}). We will see in Section \ref{sec:gen-fd} that the large linear contribution (\ref{eq:lin-N}), which will still appear in the generator of (\ref{eq:cU-0}), is compensated by a contribution arising from sandwiching the cubic term in $W^* (\sqrt{N} \varphi_t^{(N)}) \cH_N W(\sqrt{N} \varphi_t^{(N)})$ between $T(k_t)$ and $T^* (k_t)$. In fact, after conjugating this expression with the Bogoliubov transformation, some of the (cubic) terms will not be in normal order (a creation operator will lie on the right of an annihilation operator). By the canonical commutation relations, normal ordering produces then terms which are linear in the creation and annihilation operators; some of these terms cancel exactly the large contribution (\ref{eq:lin-N}). Other important cancellations will emerge between the quadratic and the not normally ordered quartic terms; see Section \ref{sec:gen-fd} for the details. The control of the growth of the number of particles w.r.t. (\ref{eq:cU-0}) will imply convergence of the reduced densities associated with the fully evolved Fock state $e^{-i \cH_N t} W(\sqrt{N} \varphi_t) T(k_t) \psi$ towards the orthogonal projection onto the solution of the Gross-Pitaevskii equation (\ref{eq:GP-0}), with a bound on the rate of the convergence. Note that the idea of adding an operator with a quadratic expression in creations and annihilation operators in the exponent to the Weyl operator generating the coherent state was used in the mean-field regime in \cite{GMM2010} to obtain norm-approximations of the many-body dynamics. 

\bigskip

{\it The main theorem.} We are now ready to state our main result. 
\begin{thm}\label{thm:main}
Let $\varphi \in H^4 (\bR^3)$, with $\| \varphi \|_2 =1$. Let $\cH_N$ be the
Hamilton operator defined in (\ref{eq:cHN}), with a non-negative and
spherically symmetric potential $V \in L^1\cap L^3 (\bR^3, (1+|x|^6) dx)$. Let $\psi \in \cF$ (possibly depending on $N$) be such that
\begin{equation}\label{eq:ass-thm1} \langle \psi , \cN \psi \rangle ,  \frac{1}{N} \langle \psi , \cN^2 \psi \rangle, \langle \psi, \cH_N \psi \rangle \leq D \end{equation}
for a constant $D>0$. Let $\Gamma_{N,t}^{(1)}$ denote the one-particle reduced density associated with the evolved state $e^{-i\cH_N t} W(\sqrt{N} \varphi) T(k_0) \psi$, as defined in (\ref{eq:GNt}). Then there exist constants $C, c_1, c_2 >0$, depending only on $V, \| \varphi \|_{H^4}$ and on the constant $D$ appearing in (\ref{eq:ass-thm1}), such that
\begin{equation}\label{eq:mt} \tr \left| \Gamma_{N,t}^{(1)} - |\varphi_t\rangle \langle \varphi_t| \right| \leq \frac{C \exp (c_1 \exp (c_2 |t|))}{N^{1/2}} \end{equation}
for all $t \in \bR$ and $N \in \bN$. Here $\varphi_t$ denotes the solution of the time-dependent Gross-Pitaevskii equation
\begin{equation}\label{eq:GP} i\partial_t \varphi_t = -\Delta \varphi_t + 8 \pi a_0 |\varphi_t|^2 \varphi_t \end{equation}
with the initial condition $\varphi_{t=0} =\varphi$.
\end{thm}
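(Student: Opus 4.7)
The plan is to reduce the convergence statement to a uniform-in-$N$ bound on the growth of the number operator $\cN$ under the fluctuation dynamics $\cU(t;s)$ introduced in (\ref{eq:cU-0}). Writing the fully evolved state as $\Psi_{N,t}=W(\sqrt{N}\varphi_t^{(N)})T(k_t)\cU(t;0)\psi$ and using the shift identity $W^*(\sqrt{N}\varphi_t^{(N)})a_x W(\sqrt{N}\varphi_t^{(N)})=a_x+\sqrt{N}\varphi_t^{(N)}(x)$ followed by the Bogoliubov action of $T^*(k_t)$, the kernel $\Gamma_{N,t}^{(1)}(x;y)$ decomposes into the rank-one term $\varphi_t^{(N)}(x)\overline{\varphi}_t^{(N)}(y)$ plus residual contributions each of which contains at least one creation or annihilation operator acting on $\cU(t;0)\psi$. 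Cauchy--Schwarz on the residuals, combined with the classical estimate $\|\varphi_t-\varphi_t^{(N)}\|_2\leq C/N$ for the difference between the solutions of (\ref{eq:GP}) and (\ref{eq:mod-GP0}), then reduces the trace-norm bound (\ref{eq:mt}) to showing $\langle\cU(t;0)\psi,\cN\cU(t;0)\psi\rangle\leq C\exp(c_1\exp(c_2|t|))$.

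The next step is to compute the generator $\cL_N(t)$ of $\cU(t;s)$. It splits into three blocks: $(i\partial_t W^*(\sqrt{N}\varphi_t^{(N)}))W(\sqrt{N}\varphi_t^{(N)})$ evaluated using (\ref{eq:mod-GP0}), the Weyl-shifted Hamiltonian $W^*(\sqrt{N}\varphi_t^{(N)})\cH_N W(\sqrt{N}\varphi_t^{(N)})$, and the contributions generated by $T(k_t)$. I would expand the middle block into polynomials of degree $0$ to $4$ in $a,a^*$, conjugate each piece by $T(k_t)$, and normal-order. Large linear contributions of size $\sqrt{N}$ arise from three distinct places: the Weyl-derivative block (with coefficient dictated by the modified GP equation), the Bogoliubov-conjugated quadratic piece, and the normal-ordering remainder of the Bogoliubov-conjugated cubic term $N^{5/2}V(N\cdot)$. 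The very choice (\ref{eq:kt-def0}) of $k_t$ in terms of the zero-energy scattering solution is designed so that these three linear contributions cancel exactly. Similar cancellations reduce the dangerous non-number-preserving quadratic pieces and regularise the quartic singularity $N^3V(N\cdot)$ to $N^3V(N\cdot)(1-f(N\cdot))$, whose $L^1$ norm is uniformly bounded thanks to (\ref{eq:f}).

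Armed with a generator free of large terms, I would compute the commutator $[i\cL_N(t),\cN]$: only the remainders that fail to commute with $\cN$ survive, and each can be bounded as a quadratic form by $(\cN+1)$, plus in a few places a small multiple of the conjugated kinetic energy, with coefficients depending on Sobolev norms of $\varphi_t^{(N)}$. The differential inequality for $\cN$ alone is not self-closing, so I would propagate a coupled system of Gronwall inequalities for the two observables $\langle\cU\psi,(\cN+1)\cU\psi\rangle$ and $\langle\cU\psi,\cH_N\cU\psi\rangle$, using propagation of $H^s$-regularity for $\varphi_t^{(N)}$ (which is where the assumption $\varphi\in H^4$ enters) to control the time-dependent coefficients. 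A first Gronwall closure on the energy yields a single exponential in $|t|$; feeding that into the inequality for $\cN$ produces the double exponential appearing in (\ref{eq:mt}).

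The hard part is executing the Bogoliubov-conjugation bookkeeping and, in particular, verifying the cancellation of the linear pieces of order $\sqrt{N}$ while simultaneously producing uniform bounds for every one of the many surviving polynomials in $a,a^*$. After conjugation the kernels involve combinations of $k_t$, $\text{ch}(k_t)-1$ and $\text{sh}(k_t)-k_t$, multiplied by the highly singular factor $N^3V(N\cdot)$, and uniform estimates require careful use of the pointwise asymptotics $1-f(x)\sim a_0/|x|$ from (\ref{eq:f}) together with the weighted $L^p$ assumptions on $V$. Once this delicate piece of operator algebra is in place, the initial reduction to the number-growth estimate and the final Gronwall closure are essentially mechanical.
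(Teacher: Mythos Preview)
Your overall architecture matches the paper's: reduce to a bound on $\langle\cU(t;0)\psi,\cN\,\cU(t;0)\psi\rangle$, compute and normal-order the generator $\cL_N(t)$, exploit the cancellations coming from the choice of $k_t$, and close via Gronwall with time-dependent coefficients driven by Sobolev norms of $\varphi_t^{(N)}$. The reduction step, the comparison $\|\varphi_t-\varphi_t^{(N)}\|_2\lesssim N^{-1}$, and the passage from Hilbert--Schmidt to trace norm via the rank-one structure are all as in the paper.

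Two points of substance need correction. First, your description of the cancellation mechanism is off in places. The Bogoliubov-conjugated \emph{quadratic} block produces no linear terms; the $\sqrt{N}$-linear cancellation is between $T^*\cL_{1,N}^{(0)}T$ (the residual linear piece left after the Weyl derivative and the Weyl-shifted linear term are combined using the modified GP equation) and the linear terms arising when the Bogoliubov-conjugated \emph{cubic} block is normal-ordered (Proposition~\ref{prop:TL3T}). Likewise, the quartic interaction is \emph{not} regularised to $N^3V(N\cdot)(1-f(N\cdot))$; it survives intact as the potential part of $\cH_N$ in (\ref{eq:L-fin}). The scattering-equation cancellation occurs instead at the level of the dangerous \emph{quadratic} $a_x^*a_y^*$ terms: the contribution $N^3(\Delta w)(N(x-y))$ from $T^*\cK T$, the $\tfrac12 N^3 V(N(x-y))$ from $T^*\wt\cL_{2,N}^{(0)}T$, and the $-\tfrac12 N^3 V(N(x-y))w(N(x-y))$ from normal-ordering the quartic block combine to $(-\Delta+\tfrac12 V)(1-w)=0$; see (\ref{eq:cLN-can1}) and the lines following it.

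Second, and this is the real gap, your Gronwall scheme is incomplete. After all cancellations the generator estimates (Theorem~\ref{thm:L}) take the form
\[
\pm[\cN,\wt\cL_N(t)],\ \pm\dot{\wt\cL}_N(t)\ \leq\ \cH_N+C\,\frac{\cN^2}{N}+Ce^{K|t|}(\cN+1),
\]
so a coupled Gronwall for $\cN$ and $\cH_N$ (or $\wt\cL_N$) alone does not close: the term $\cN^2/N$ is controlled by neither. The paper disposes of it with a separate a priori estimate (Proposition~\ref{prop:apri}), obtained by commuting $\cN$ through $e^{-i\cH_Nt}$ and pushing it past the Weyl operator, which yields
\[
\cU^*(t;0)\,\cN^2\,\cU(t;0)\leq C\big(N\,\cU^*(t;0)\,\cN\,\cU(t;0)+N(\cN+1)+(\cN+1)^2\big).
\]
Only with this in hand can one run a single Gronwall for the combination $\wt\cL_N(t)+De^{K|t|}(\cN+1)$; the double exponential then comes from the $e^{K|t|}$ coefficient (itself due to growth of $\|\varphi_t^{(N)}\|_{H^s}$), not from iterating two successive Gronwalls as you suggest. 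Without this ingredient the argument does not close.
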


{\it Remarks.}
\begin{itemize}
\item[(i)] Let us point out that we insert the correct correlation structure in the initial data. Our result implies the approximate stability of states of the form $W(\sqrt{N} \varphi) T(k_0) \psi$ with respect to the many-body evolution (in the sense that the evolution of $W(\sqrt{N} \varphi) T(k_0) \psi$ has approximately the same form, just with evolved $\varphi_t$, up to a small error). It does not imply, on the other hand, that the correlation structure is produced by the time-evolution. This is in contrast with the results of \cite{ESY1,ESY2,ESY3,ESY4}, which can also be applied to completely factorized initial data. It remains unclear, however, if it is possible to obtain convergence with a $N^{-1/2}$ rate (or with any rate) for initial data with 
no correlations (the problem of the creation of correlation was studied in \cite{EMS}). 
\item[(ii)] The time dependence on the r.h.s.\ of (\ref{eq:mt}) deteriorates fast for large $t$. This however is just a consequence of the fact that, in general, high Sobolev norms of the solution of (\ref{eq:GP}) can grow exponentially fast. Assuming a uniform bound for $\| \varphi_t \|_{H^4}$, the time dependence on the r.h.s.\ of (\ref{eq:mt}) can be replaced by $C \exp (K |t|)$. 
\item[(iii)] To simplify a little bit the computations, we did not include an external potential in the Hamiltonian (\ref{eq:cHN}) generating the evolution on the Fock space. In contrast to \cite{ESY1,ESY2,ESY3,ESY4}, the approach presented in this paper can be extended with no additional complication to Hamilton operators with external potential. This remark is important to describe experiments where the evolution of the condensate is observed after tuning the magnetic traps, rather than switching them off. 
\item[(iv)] The convergence (\ref{eq:mt}) and the fact that the limit is a rank-one projection immediately implies convergence of the higher order reduced density $\Gamma_{N,t}^{(k)}$, associated with the evolved state $\Psi_{N,t} = e^{-i\cH_N t} W(\sqrt{N} \varphi) T(k_0) \psi$. Similarly to (\ref{eq:GNt}), $\Gamma^{(k)}_{N,t}$ is defined as the non-negative trace class operators on $L^2 (\bR^{3k})$ with integral kernel
\[ \Gamma_{N,t}^{(k)} (x_1, \dots , x_k ; y_1, \dots, y_k) = \frac{\langle \Psi_{N,t}, a_{y_1}^* \dots a_{y_k}^* a_{x_k} \dots a_{x_1} \Psi_{N,t} \rangle}{\langle \Psi_{N,t} , \cN ( \cN-1) \dots (\cN-k+1) \Psi_{N,t} \rangle} \, .  \]
Following the arguments outlined in Section 2 of \cite{KP2009}, (\ref{eq:mt}) implies that, for every $k \in \bN$,
\[ \tr \, \left| \Gamma^{(k)}_{N,t} - |\varphi_t \rangle \langle \varphi_t|^{\otimes k} \right| \leq C \frac{k^{1/2}}{N^{1/4}} \exp\left(\frac{c_1}{2} \exp (c_2 |t|)\right).\done \]
To obtain bounds for the convergence of the $k$-particle reduced density with the same $N^{-1/2}$ rate as in (\ref{eq:mt}), following the same approach used below to study $\Gamma^{(1)}_{N,t}$ would require to control the growth of higher powers of the number of particle operator with respect to the fluctuation dynamics (\ref{eq:cU-0}). This may be doable, but the analysis becomes more involved. 
\item[(v)] Theorem \ref{thm:main} and the method used in its proof can also be applied to deduce the convergence towards the Gross-Pitaevskii dynamics for certain initial data with a fixed number of particles. In Appendix \ref{sec:Npart}, we consider initial $N$-particle states of the form 
$P_N W(\sqrt{N} \varphi) T(k_0) \psi$, for $\psi \in \cF$ satisfying (\ref{eq:ass-thm1}), assuming 
$\| P_N W(\sqrt{N} \varphi) T(k_0) \psi \| \gg N^{-1/2}$ for large $N$ (it is explained in Appendix \ref{sec:Npart} why this is a reasonable condition). Here $P_N$ denotes the orthogonal projection onto the $N$-particle sector of $\cF$. It remains to be understood which class of $N$-particle states can be written as $P_N W(\sqrt{N} \varphi) T(k_0) \psi$, for a $\psi \in \cF$ satisfying (\ref{eq:ass-thm1}). 
\end{itemize}


\section{Operators on the Fock space}
\label{sec:fock}

The bosonic Fock space over $L^2(\R^3)$ is the Hilbert space
\[
  \mathcal{F} = \bigoplus_{n \ge 0} L^2(\R^3)^{\otimes_s n} = \mathbb{C}
  \oplus \bigoplus_{n \ge 1} L^2_s(\R^{3n}),
\]
with the convention that $L^2(\R^3)^{\otimes_s 0} = \mathbb{C}$. Here
$L^2_s (\R^{3n})$ is the subspace of $L^2(\R^{3n})$ consisting of all functions
that are symmetric with respect to arbitrary permutations of the $n$
variables. Vectors in $\mathcal{F}$ are sequences $\psi = \{\psi^{(n)}\}_{n
\ge 0}$ of $n$-particle wave functions $\psi^{(n)} \in L^2_s(\R^{3n})$. The
inner product on $\mathcal{F}$ is defined as
\begin{align*}
  \langle \psi_1, \psi_2 \rangle & = \sum_{n \ge 0} \langle \psi_1^{(n)},
  \psi_2^{(n)} \rangle_{L^2(\R^{3n})} \\
  & = \overline{\psi_1^{(0)}} \psi_2^{(0)} + \sum_{n \ge 1} \int dx_1 \cdots
  dx_n \overline{\psi_1^{(n)}}(x_1, \dots, x_n) \psi_2^{(n)}(x_1, \dots, x_n).
\end{align*}

On $\cF$, we can describe states where the number of particles is not fixed.
The Fock space vector $\psi = \{ \psi^{(0)}, \psi^{(1)}, \dots  \}$
describes a coherent superposition of states with different number of
particles; the $n$-particle component is described by $\psi^{(n)}$, for any
$n \in \bN$ (the probability that the state has exactly $n$ particles is
given by $\| \psi^{(n)} \|^2$).  A state with exactly $N$ particles is
described on $\mathcal{F}$ by a sequence $\{\psi^{(n)}\}_{n \ge 0}$ where
$\psi^{(n)} = 0$ for all $n \neq N$ and $\psi^{(N)} = \psi_N$. The vector
$\{1, 0, 0, \dots \} \in \mathcal{F}$ is called the vacuum and is denoted by
$\Omega$. States with a fixed number of particles are eigenvectors of the
number of particles operator, defined by
\[ (\cN \psi)^{(n)} = n \psi^{(n)} \, . \]

To define an evolution on $\cF$, we introduce the Hamilton operator 
\[ (\cH_N \psi)^{(n)} = \cH_N^{(n)} \psi^{(n)} \]
with the $n$-th sector operator 
\[ \cH_N^{(n)} = \sum_{j=1}^n -\Delta_{x_j} + \sum_{i<j}^n N^2 V(N (x_i - x_j)). \]
Note that the subscript $N$ in the notation $\cH_N$ is not related with the number of particles (since this is not fixed on the Fock space), but only reflects the scaling in the interaction potential (of course, at the end, $N$ will also be related with the number of particles in the initial Fock state; otherwise, there would be no relation with the regime discussed in Section \ref{s:intro}). Observe that, by definition, the Hamiltonian $\cH_N$ commutes with $\cN$. As a consequence, the evolution generated by $\cH_N$ leaves each $n$-particle sector invariant. In particular, 
\[ e^{-i\cH_N t}  \left\{ 0, \dots , 0, \psi_N , 0 ,\dots \} = \{ 0, \dots, 0, e^{-iH_N t} \psi_N , 0 ,\dots \right\} \]
where $H_N$ is the $N$-particle Hamiltonian defined in (\ref{eq:ham-1}). In this sense, the $N$-body dynamics is embedded in the Fock space representation. 

It is very useful to introduce creation and annihilation operators on $\cF$. 
For $f \in L^2(\R^3)$, the creation operator $a^*(f)$ and the annihilation
operator $a(f)$ on $\mathcal{F}$ are defined as
\[
  \begin{split}
    (a^*(f) \psi)^{(n)}(x_1, \dots, x_n) & = \frac{1}{\sqrt{n}} \sum_{j=1}^n
    f(x_j) \psi^{(n-1)}(x_1, \dots, x_{j-1}, x_{j+1}, \dots, x_n), \\
    (a(f) \psi)^{(n)}(x_1, \dots, x_n) & = \sqrt{n+1} \int dx \,
    \overline{f(x)} \psi^{(n+1)}(x, x_1, \dots, x_n).
  \end{split}
\]
The operators $a^*(f)$ and $a(f)$ are unbounded, densely defined and closed.
Note that $a^*(f)$ is linear in $f$, while $a(f)$ is anti-linear. The creation
operator $a^*(f)$ is the adjoint of the annihilation operator $a(f)$, and
they satisfy the canonical commutation relations
\begin{equation}\label{eq:ccr}
  [a(f), a^*(g)] = \langle f, g \rangle_{L^2} \qquad \text{and} \qquad [a(f),
  a(g)] = [a^*(f), a^*(g)] = 0
\end{equation}
for $f,g \in L^2(\R^3)$. We also introduce the self-adjoint operator
\[
  \phi(f) = a^*(f) + a(f).
\]

Although creation and annihilation operators are not bounded, they are bounded with respect to the number of particles operator $\cN$ (actually, to its square root). This is the content of the next standard lemma (see \cite{RS} for a proof of this well-known result).
\begin{lem} \label{l:a}
  Let $f \in L^2(\R^3)$. Then, for any $\psi \in
  \mathcal{F}$,
  \begin{equation} \label{aNorm}
    \begin{aligned}
      \norm{a(f)\psi} & \leq \norm{f}_2 \norm{\Ncal^{1/2}\psi}, \\
      \norm{a^*(f)\psi} & \leq \norm{f}_2 \norm{(\Ncal+1)^{1/2}\psi}, \\
      \norm{\phi(f) \psi} & \leq 2 \norm{f}_2 \norm{(\N+1)^{1/2} \psi}.
    \end{aligned}
  \end{equation}
\end{lem}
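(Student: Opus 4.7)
The plan is to argue sector by sector on Fock space, using only the explicit formula for $a(f)$ and Cauchy--Schwarz, and then derive the other two bounds from the first via the canonical commutation relations and the triangle inequality.

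First I would tackle the annihilation bound. Fix $\psi = \{\psi^{(n)}\}_{n \geq 0} \in \mathcal{F}$. From the definition,
\[
(a(f)\psi)^{(n)}(x_1,\dots,x_n) = \sqrt{n+1}\int dx\, \overline{f(x)}\, \psi^{(n+1)}(x,x_1,\dots,x_n),
\]
so by Cauchy--Schwarz in the $x$ variable,
\[
\bigl|(a(f)\psi)^{(n)}(x_1,\dots,x_n)\bigr|^2 \leq (n+1)\,\|f\|_2^2 \int dx\, |\psi^{(n+1)}(x,x_1,\dots,x_n)|^2.
\]
Integrating over $x_1,\dots,x_n$ gives $\|(a(f)\psi)^{(n)}\|^2 \leq (n+1)\|f\|_2^2\,\|\psi^{(n+1)}\|^2$, and summing over $n\geq 0$ and reindexing yields
\[
\|a(f)\psi\|^2 \leq \|f\|_2^2 \sum_{m \geq 1} m\,\|\psi^{(m)}\|^2 = \|f\|_2^2\,\|\mathcal{N}^{1/2}\psi\|^2,
\]
which is the first inequality.

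For the creation operator I would use the canonical commutation relation $[a(f),a^*(f)] = \|f\|_2^2$ (see \eqref{eq:ccr}) to write, on a suitable dense domain,
\[
\|a^*(f)\psi\|^2 = \langle \psi, a(f) a^*(f)\psi\rangle = \langle \psi, a^*(f) a(f)\psi\rangle + \|f\|_2^2\,\|\psi\|^2 = \|a(f)\psi\|^2 + \|f\|_2^2 \|\psi\|^2.
\]
Applying the annihilation bound just proved,
\[
\|a^*(f)\psi\|^2 \leq \|f\|_2^2 \bigl(\|\mathcal{N}^{1/2}\psi\|^2 + \|\psi\|^2\bigr) = \|f\|_2^2\,\|(\mathcal{N}+1)^{1/2}\psi\|^2,
\]
which gives the second inequality. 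The $\phi(f)$ bound then follows from the triangle inequality, $\|\phi(f)\psi\| \leq \|a(f)\psi\| + \|a^*(f)\psi\|$, combined with the trivial estimate $\|\mathcal{N}^{1/2}\psi\| \leq \|(\mathcal{N}+1)^{1/2}\psi\|$, giving the factor $2$.

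There is no real obstacle here: the statement is a standard consequence of the pointwise Cauchy--Schwarz estimate for $a(f)$ on each $n$-particle sector, and the CCR absorbs the extra $\|f\|_2^2$ needed for $a^*(f)$. The only minor care is to work first on the dense domain of vectors with finitely many non-zero components (so that all manipulations are justified) and then extend by density, since $a^*(f)$, $a(f)$ and $\phi(f)$ are unbounded but closable; the bounds obtained then show closability and extend the inequalities to the full domain of $(\mathcal{N}+1)^{1/2}$.
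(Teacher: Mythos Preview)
Your argument is correct and is exactly the standard proof of this well-known estimate. The paper itself does not give a proof of this lemma; it simply cites \cite{RS} for the result, so there is nothing to compare beyond noting that your sector-by-sector Cauchy--Schwarz argument for $a(f)$, followed by the CCR identity $\|a^*(f)\psi\|^2 = \|a(f)\psi\|^2 + \|f\|_2^2\|\psi\|^2$ and the triangle inequality for $\phi(f)$, is precisely the canonical route.
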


We will make use of operator-valued distributions $a_x^*$ and $a_x$, with $x
\in \R^3$, defined so that
\[
  a^*(f) = \int \di x\, f(x) a^*_x \qquad \text{and} \qquad a(f) = \int \di
  x\, \cc{f(x)} a_x
\]
for $f \in L^2(\R^3)$. The canonical commutation
relations assume the form
\[
  [a_x, a_y^*] = \delta(x-y) \qquad \text{and} \qquad [a_x, a_y] = [a_x^*,
  a_y^*] = 0,
\]
where $\delta$ is the Dirac delta distribution.

In terms of the operator valued distributions $a_x^*$ and $a_x$, the number of particles operator $\cN$ 
and the Hamilton operator $\cH_N$ can be written as
\[ \cN = \int dx \, a_x^* a_x \]
and 
\begin{equation}\label{eq:ham-F} \cH_N = \int dx \, \nabla_x a_x^* \nabla_x a_x + \frac{1}{2} \int dx dy N^2 V(N (x-y)) a^*_x a^*_y a_y a_x. \end{equation}
The first term in the Hamiltonian is the kinetic energy; since it will play an important role in our analysis, we introduce the notation 
\[ \cK = \int dx \, \nabla_x a_x^* \nabla a_x. \]
Note that, like $\cH_N$, $\cK$ leaves every $n$-particle sector invariant, and
\[ (\cK \psi)^{(n)} = \sum_{j=1}^n -\Delta_{x_j} \psi^{(n)} \]
for any $\psi \in \cF$. 


\subsection{Weyl operators and coherent states}

For $f \in L^2(\R^3)$, we define the Weyl operator 
\[
  W(f) = e^{a^*(f) - a(f)}\done
\]
acting on the Fock space $\cF$. In the following lemma we collect some well-known properties of Weyl operators.
\begin{lem}\label{l:W}
  Let $f, g \in L^2(\R^3)$.
  \begin{enumerate}
    \item \label{l:W1} Weyl operators satisfy the Weyl relations
      \[
        W(f) W(g) = W(g) W(f) e^{-2i \Im \langle f, g \rangle_{L^2}} = W(f+g)
        e^{-i \Im \langle f, g \rangle_{L^2}}.
      \]
    \item \label{l:W2} The operator $W(f)$ is unitary on $\mathcal{F}$ and
      \[
        W(f)^* = W(f)^{-1} = W(-f).
      \]
    \item \label{l:W3} We have
      \begin{equation}\label{eq:W3}
        W(f)^* a (g)  W(f) = a (g) + \langle g, f \rangle_{L^2}  \qquad \text{and} \qquad W(f)^* a^* (g)
        W(f) = a^* (g) + \langle f, g \rangle_{L^2}.
      \end{equation}
 \end{enumerate}
\end{lem}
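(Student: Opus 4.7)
The plan is to prove all three parts by elementary manipulation of $A(f) := a^*(f) - a(f)$ together with the canonical commutation relations, using the Baker--Campbell--Hausdorff identity in the special case when $[X,Y]$ is a scalar. All formal calculations below are to be justified on the dense subspace of Fock vectors with finitely many nonzero components (on which $A(f)$ is essentially skew-adjoint and all commutators make sense termwise), and then extended by continuity / Stone's theorem.

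I would start with part (\ref{l:W2}), which is essentially free: since $a(f)^* = a^*(f)$ with $f \mapsto a^*(f)$ linear and $f \mapsto a(f)$ antilinear, the operator $A(f) = a^*(f) - a(f)$ satisfies $A(f)^* = -A(f)$ on the finite-particle subspace, so it is skew-symmetric and in fact essentially skew-adjoint. Its closure generates a strongly continuous unitary group, and $W(f) = e^{A(f)}$ is therefore unitary with inverse $e^{-A(f)} = W(-f)$.

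For part (\ref{l:W1}), the key computation is the commutator $[A(f), A(g)]$. Expanding and using the CCR (\ref{eq:ccr}) together with the fact that $[a^*(f), a^*(g)] = [a(f), a(g)] = 0$, one obtains
\[
  [A(f), A(g)] = -[a^*(f), a(g)] - [a(f), a^*(g)] = \langle g, f \rangle - \langle f, g \rangle = -2i\,\Im \langle f, g \rangle,
\]
which is a scalar (multiple of the identity). Since this commutator commutes with both $A(f)$ and $A(g)$, the Baker--Campbell--Hausdorff identity reduces to the simple form $e^X e^Y = e^{X+Y}\, e^{[X,Y]/2}$, and applying it with $X = A(f)$, $Y = A(g)$ yields
\[
  W(f)W(g) = e^{A(f)+A(g)} \, e^{-i\,\Im\langle f, g\rangle} = W(f+g)\, e^{-i\,\Im\langle f, g\rangle}.
\]
The first equality in (\ref{l:W1}) is then obtained by swapping $f$ and $g$ and noting $\Im\langle g,f\rangle = -\Im \langle f,g\rangle$.

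For part (\ref{l:W3}), I would use a one-parameter differentiation argument. Define $F(t) := W(tf)^* a(g) W(tf)$ acting on a vector in the finite-particle subspace. Differentiating in $t$ and using $\partial_t W(tf) = A(f) W(tf)$ together with $[a(g), A(f)] = [a(g), a^*(f)] = \langle g, f\rangle$, one finds
\[
  F'(t) = W(tf)^* [a(g), A(f)] W(tf) = \langle g, f\rangle,
\]
so $F(1) = F(0) + \langle g, f\rangle = a(g) + \langle g,f\rangle$. This is the first identity in (\ref{eq:W3}); the second is obtained either by an analogous computation for $a^*(g)$, or simply by taking the adjoint of the first (and using part (\ref{l:W2}) to identify $(W(f)^*)^* = W(f)$). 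The only genuine subtlety anywhere in the argument is the domain question underlying the use of BCH with unbounded operators, but this is standard: working first on the finite-particle subspace, where $A(f), a(g), a^*(g)$ are all defined and the manipulations are algebraic, and then extending by the boundedness of $W(f)$ together with Lemma \ref{l:a} suffices.
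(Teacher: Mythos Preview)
Your proof is correct and entirely standard. The paper does not actually prove this lemma: it simply introduces it as a collection of ``well-known properties of Weyl operators'' and states the result without proof. Your argument via the skew-adjointness of $A(f)$, the BCH formula for scalar commutators, and the one-parameter differentiation for the shift property is exactly the textbook derivation one would supply if a proof were required, and nothing in it is in tension with how the paper uses the lemma.
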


Using Weyl operators, we construct coherent states on $\cF$. For $f \in L^2 (\bR^3)$, the coherent state with wave function $f$ is defined as $W(f) \Omega$, where $\Omega = \{ 1, 0, \dots \}$ is the vacuum vector in $\cF$, describing a state with no particles. Since $W(f)$ is unitary, coherent states are always normalized. {F}rom the canonical commutation relations (\ref{eq:ccr}), it follows that  
\[ \begin{split} 
W(f) \Omega &= e^{-\| f \|_2^2 /2} e^{a^* (f)} \Omega = e^{-\| f\|_2^2/2}
\sum_{n \geq 0} \frac{(a^* (f))^n}{n!} \Omega \\ &= e^{-\| f\|_2^2/2}
\left\{ 1 , f , \frac{f^{\otimes 2}}{\sqrt{2!}} , \dots , \frac{f^{\otimes
n}}{\sqrt{n!}} , \dots \right\}. \end{split} \]
In particular, coherent states do not have a fixed number of particles. Instead, they are given by linear combinations of states with all possible number of particles. {F}rom (\ref{eq:W3}), the expected 
number of particles in the state $W(f) \Omega$ is given by
\[ \langle W(f) \Omega, \cN \, W(f) \Omega \rangle = \int dx \, \langle \Omega, (a_x^* + \overline{f} (x)) (a_x + f(x)) \Omega \rangle = \| f \|_2^2. \]
More precisely, one can show that the number of particles in the coherent state $W(f) \Omega$ is a Poisson random variable with average and variance given by $\| f \|_2^2$. 

Coherent states have particularly nice algebraic properties, which also simplify the study of their time-evolution. These properties are a consequence of the fact that coherent states are eigenvectors of all annihilation operators; in fact, from (\ref{eq:W3}), we find 
\[  a(g) W(f) \Omega = W(f) ( a(g) + \langle g , f \rangle ) \Omega = \langle g ,f \rangle W(f) \Omega \]
for all $f,g \in L^2 (\bR^3)$.

\subsection{Bogoliubov transformations}
\label{sec:bt}

For $f,g \in L^2 (\bR^3)$, we introduce the notation
\begin{equation}\label{eq:Afg} A(f,g) = a(f) + a^* (\overline{g}). \end{equation}
We observe that 
\begin{equation}\label{eq:bog11} 
(A(f,g))^* = A(\overline{g}, \overline{f}) = A \left( \left( \begin{array}{ll} 0 & J \\ J & 0 \end{array} \right) (f,g) \right)\end{equation}
where $J:L^2 (\bR^3) \to L^2 (\bR^3)$ is the antilinear operator defined by $Jf = \overline{f}$. {F}rom the canonical commutation relations (\ref{eq:ccr}), we find that the operators $A(f,g)$ satisfy the commutation relations
\begin{equation}\label{eq:bog22} \left[ A(f_1, g_1) , A^* (f_2, g_2) \right] = \left\langle (f_1, g_1) , S (f_2, g_2) \right\rangle_{L^2 \oplus L^2} \qquad \text{with } S = \left( \begin{array}{ll} 1 & 0 \\ 0 & -1 \end{array} \right). \end{equation}

A bounded linear map $\Theta : L^2 (\bR^3) \oplus L^2 (\bR^3) \to L^2 (\bR^3) \oplus L^2 (\bR^3)$ satisfying
\begin{equation}\label{eq:bog1}
\Theta \left( \begin{array}{ll} 0 & J \\ J & 0 \end{array} \right) = \left( \begin{array}{ll} 0 & J \\ J & 0 \end{array} \right) \Theta \end{equation}
and
\begin{equation}\label{eq:bog2} S = \Theta^* \, S \, \Theta \end{equation}
is called a Bogoliubov transformation. Bogoliubov transformations are linear maps of the pairs $(f,g) \in L^2 (\bR^3) \oplus L^2 (\bR^3)$ with the property that the operators $b^* (f)$ and $b(f)$,\done{} defined similarly to (\ref{eq:Afg}) by the equation \[ b (f) + b^* (\overline{g}) = A (\Theta (f,g)) \] for any $f,g \in L^2 (\bR^3)$, are still creation and annihilation operators satisfying canonical commutation relations and being adjoint to each other. It is simple to check that a general Bogoliubov transformation can be written as  
\begin{equation}\label{eq:theta-UV} \Theta = \left( \begin{array}{ll} U &  \overline{V}  \\ V & \overline{U} \end{array} \right) \end{equation}
for bounded linear maps $U, V: L^2 (\bR^3) \to L^2 (\bR^3)$ satisfying $U^* U - V^* V = 1$ and $U^* \overline{V} = V^*\overline{U}$. Here we use the notation $\overline{U} = J U J$ for any bounded operator $U$ on $L^2 (\bR^3)$ (the integral kernel of $\overline{U} = J U J$ is given by $\overline{U(x,y)}$).

For a kernel $k \in L^2 (\bR^3 \times \bR^3)$ with $k (x,y) = k(y,x)$, we define now the operator 
\begin{equation}\label{eq:Tk-def} T (k) = \exp \left(\frac{1}{2} \int dx dy \, \left(k(x,y) a_x^* a_y^* - \overline{k} (x,y) a_x a_y \right) \right) \end{equation} 
acting on the Fock space $\cF$. 
\begin{lem} \label{l:bt}
Let $k \in L^2(\R^3 \times \R^3)$ be symmetric, in the sense that $k(x,y) = k(y,x)$.
\begin{itemize}
\item[(i)] The operator $T(k)$ is unitary on $\mathcal{F}$ and
  \[
    T(k)^* = T(k)^{-1} = T(-k).
  \]
\item[(ii)] For every $f,g \in L^2 (\bR^3)$, we have 
\begin{equation}\label{eq:TAT} T (k)^* A(f,g) T(k) = A (\Theta_k (f,g)) \end{equation}
where $\Theta_k : L^2 (\bR^3) \oplus L^2 (\bR^3) \to L^2 (\bR^3) \oplus L^2 (\bR^3)$ is the Bogoliubov transformation defined by the matrix 
\[ \Theta_k = \left(\begin{array}{ll} \text{ch} (k)  & \text{sh} (k)  \\ \text{sh} (\overline{k}) & \text{ch} (\overline{k}) \end{array} \right) \]
where $\text{ch} (k), \text{sh} (k) : L^2 (\bR^3) \to L^2 (\bR^3)$ are the bounded operators defined by 
\[ \begin{split} 
\text{ch} (k) & = \sum_{n\geq 0} \frac{1}{(2n)!} (k\overline{k})^n \\
\text{sh} (k)  & = \sum_{n \geq 0} \frac{1}{(2n+1)!} (k\overline{k})^n k \end{split} \]
where products of $k$ and $\overline{k}$ have to be understood in the sense of operators. 
\item[(iii)] We decompose 
\begin{equation}\label{eq:pr-def} \text{ch} (k) = 1 + p (k) , \quad \text{sh} (k) = k + r (k), \end{equation}
where $1$ denotes the identity operator on $L^2 (\bR^3)$. Then $p (k)$ and $r (k)$ (and therefore sh$(k)$) are Hilbert-Schmidt operators, with 
\begin{equation}
 \| p (k) \|_{2}  \le e^{\| k \|_{2}}, \quad 
\| r (k) \|_{2} \le e^{\| k \|_{2}},  \quad \| \text{sh} (k) \|_2 \leq  e^{\| k \|_{2}}.
\end{equation}
(Here $\| p (k) \|_2$ denotes the $L^2 (\bR^3 \times \bR^3)$ norm of the kernel $p(k) (x,y)$, which agrees with the Hilbert-Schmidt norm of the operator $p(k)$).
\item[(iv)] Suppose now that $k \in L^2 (\bR^3 \times \bR^3)$ is s.t.\ $\nabla_1 k  \in L^2 (\bR^3 \times \bR^3)$. Then, by symmetry, also $\nabla_2 k \in L^2 (\bR^3 \times \bR^3)$ (we use here the notation $(\nabla_1 k) (x,y) = \nabla_x k (x,y)$ and $(\nabla_2 k)(x,y) = \nabla_y k (x,y)$; note that $\nabla_1 k$ and $\nabla_2 k$ are the integral kernels of the operator products $\nabla k$ and $-k \nabla$). Moreover 
\[  \begin{split}  \| \nabla_1 p (k) \|_{2} , \| \nabla_1 r (k) \|_2 & \le
e^{\| k \|_{2}} \| \nabla_1 (k \overline{k}) \|_{2}, \\ \| \nabla_2 p (k)  \|_{2}, \| \nabla_2 r (k) \|_2 & \le e^{\| k \|_{2}} \| \nabla_2 (\overline{k} k)   \|_2.  \end{split} \]
\item[(v)] If the kernel $k$ depends on a parameter $t$ (later, it will depend on time), and if derivatives w.r.t. $t$ are denoted by a dot, we have
\[   \| \dot{p} (k) \|_2 , \| \dot{r} (k)\|_2  \leq \| \dot{k} \|_2 \, e^{\| k \|_2} \]
and
\[ \begin{split} 
\| \nabla_1 \dot p (k) \|_2 , \| \nabla_1 \dot r (k) \|_2  &\leq C  e^{\| k
\|_2} \left( \| \dot k \|_2 \| \nabla_1 (k\overline{k}) \|_2 + \| \nabla_1
(\dot{k} \overline{k}) \|_2 + \| \nabla_1 (k \overline{\dot{k}}) \|_2
\right),\done \\
 \| \nabla_2 \dot p (k) \|_2 , \| \nabla_2 \dot r (k) \|_2  &\leq C  e^{\| k \|_2} \left( \| \dot k \|_2 \| \nabla_2 (k\overline{k}) \|_2 + \| \nabla_2 (\dot{k} \overline{k}) \|_2 + \| \nabla_2 (k \overline{\dot{k}}) \|_2 \right).\done 
 \end{split} \]
\end{itemize}
\end{lem}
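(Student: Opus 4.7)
The proof splits naturally into the five parts, and I will address them in order since (i) and (ii) supply the algebraic framework that (iii)--(v) merely quantify.

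For (i), I would write the exponent as $B(k) = \tfrac{1}{2}\int dxdy\,(k(x,y)a_x^*a_y^* - \overline{k(x,y)}a_x a_y)$. The symmetry $k(x,y) = k(y,x)$ together with the CCR give $B(k)^* = -B(k)$ formally on the dense domain $\mathcal D$ of Fock vectors with finite particle number and smooth sectors. To upgrade this to genuine skew-adjointness, I would observe that $B(k)$ raises/lowers the particle number by at most $2$ and, on the $n$-particle sector, satisfies a bound of the form $\|B(k)\psi^{(n)}\| \leq C\|k\|_2\,(n+2)\,\|\psi^{(n)}\|$; this makes every vector in $\mathcal D$ an analytic vector for $B(k)$, and Nelson's theorem yields essential skew-adjointness. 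Then $T(k) = e^{B(k)}$ is unitary by Stone's theorem, and $T(k)^* = e^{-B(k)} = T(-k) = T(k)^{-1}$ follows from $B(-k) = -B(k)$.

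For (ii), I would compute the two basic commutators directly from the CCR and the symmetry of $k$:
\[
 [a(f), B(k)] = a^*(k\bar f), \qquad [a^*(\bar g), B(k)] = a(kg),
\]
so that $[A(f,g), B(k)] = A(M(f,g))$ with $M = \bigl(\begin{smallmatrix}0 & k \\ \bar k & 0\end{smallmatrix}\bigr)$ acting on $L^2 \oplus L^2$. Fixing $k$ and differentiating $F(s)(f,g) := T(sk)^* A(f,g) T(sk)$ in the scalar parameter $s$ (and using $[B(sk),B(k)]=0$) then yields the operator-valued linear ODE $F'(s) = F(s) \circ M$ with $F(0) = A$. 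Solving on $\mathcal D$ and extending by density gives $F(s)(f,g) = A(e^{sM}(f,g))$; at $s=1$ this identifies $\Theta_k = e^M$. A block calculation, using $M^{2n} = \mathrm{diag}((k\bar k)^n,(\bar k k)^n)$ and $M^{2n+1} = \mathrm{antidiag}((k\bar k)^n k,(\bar k k)^n \bar k)$, sums to exactly the matrix with entries $\mathrm{ch}(k), \mathrm{sh}(k), \mathrm{sh}(\bar k), \mathrm{ch}(\bar k)$.

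Parts (iii)--(v) are term-by-term series estimates using $\|k\|_{\mathrm{op}} \leq \|k\|_2$ (since $k$ is Hilbert--Schmidt) and $\|AB\|_2 \leq \|A\|_{\mathrm{op}}\|B\|_2$. For (iii), $\|(k\bar k)^n\|_2 \leq \|k\|_2^{2n}$ for $n\geq 1$, and summation against $1/(2n)!$, $1/(2n+1)!$ gives $\|p(k)\|_2 \leq \cosh\|k\|_2 - 1$ and $\|r(k)\|_2 \leq \sinh\|k\|_2 - \|k\|_2$, both majorized by $e^{\|k\|_2}$; the bound $\|\mathrm{sh}(k)\|_2 \leq \sinh\|k\|_2 \leq e^{\|k\|_2}$ is immediate. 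For (iv), the key point is that $\nabla_1$ only acts on the leftmost factor of an operator product, so $\nabla_1 (k\bar k)^n = \nabla_1(k\bar k)\cdot (k\bar k)^{n-1}$, bounded in HS by $\|\nabla_1(k\bar k)\|_2\|k\|_2^{2(n-1)}$; the $\nabla_2$ case is handled symmetrically after rewriting $(k\bar k)^n k = k(\bar k k)^n$ so that the derivative falls on the rightmost factor and produces $\nabla_2(\bar k k)$ (one also uses $\|\nabla_2(k\bar k)\|_2 = \|\nabla_2(\bar k k)\|_2$, which follows from symmetry of $k$). Part (v) applies the Leibniz rule to $\dot{(k\bar k)^n} = \sum_{j=0}^{n-1} (k\bar k)^j(\dot k\bar k + k\dot{\bar k})(k\bar k)^{n-1-j}$: each summand is estimated by the same $\|\cdot\|_{\mathrm{op}}$/$\|\cdot\|_2$ split, and for the gradient bound one separates the contribution where $\nabla_1$ lands on the time-derivative factor (yielding $\|\nabla_1(\dot k \bar k)\|_2 + \|\nabla_1(k\dot{\bar k})\|_2$) from those where $\nabla_1$ lands on an undifferentiated $k\bar k$ (yielding $\|\nabla_1(k\bar k)\|_2$ times an extra $\|\dot k\|_2$); summation against the factorials produces the claimed constants times $e^{\|k\|_2}$.

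The main obstacle will be the domain issues in (i) and (ii): $B(k)$ is unbounded, so the commutator computations, the ODE argument and the identification $T(k)^* = T(-k)$ are a priori formal on the core $\mathcal D$. I expect the analytic-vector bound on each particle sector to clear this hurdle cleanly, after which the Bogoliubov conjugation formula (ii) reduces everything else to the elementary norm bookkeeping in (iii)--(v).
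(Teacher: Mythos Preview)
Your proposal is correct and covers all five parts. The treatment of (iii)--(v) is essentially identical to the paper's (the paper uses the kernel Cauchy--Schwarz inequality $\|K_1K_2\|_2\le\|K_1\|_2\|K_2\|_2$ rather than your $\|AB\|_2\le\|A\|_{\mathrm{op}}\|B\|_2$, but either gives $\|(k\bar k)^n\|_2\le\|k\|_2^{2n}$).

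The one genuine methodological difference is in (ii). You set up the ODE $F'(s)=F(s)\circ M$ and appeal to analytic vectors to make the formal solution $e^{sM}$ rigorous. The paper instead writes the finite Baker--Campbell--Hausdorff expansion with explicit integral remainder,
\[
e^{-B}A(f,g)e^{B}=A(f,g)+\sum_{j=1}^n\frac{(-1)^j}{j!}\operatorname{ad}_B^j A(f,g)+R_{n+1},
\]
and controls $R_{n+1}$ on the domain $D(\mathcal N^{1/2})$ via the a priori bound $\|(\mathcal N+1)^{1/2}e^{-\lambda B}(\mathcal N+1)^{-1/2}\|\le e^{|\lambda|\|k\|_2}$, itself proved by a Gronwall argument. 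Your route is cleaner conceptually and isolates the domain work in a single Nelson-type step; the paper's route is more hands-on but yields, as a by-product, the $\mathcal N$-stability estimate for $T(k)$ that is reused repeatedly later (e.g.\ in Lemma~\ref{lm:TNT}). Either approach is fine here.
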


\begin{proof}
(i) is clear. To prove (ii), we observe that, letting 
\[  B = \frac{1}{2} \int dx dy \, \left(k(x,y) a_x^* a_y^* - \overline{k} (x,y) a_x a_y \right) , \]
we have, for any $f,g \in L^2 (\bR^3)$,  
\[ \begin{split} e^{-B} A (f,g) e^{B} & = A (f,g) + \int_0^1 d \lambda_1  \frac{d}{d\lambda_1} e^{-\lambda_1 B} A (f,g) e^{\lambda_1 B}\\
& = A (f,g) - \int_0^1 d \lambda_1  \, e^{-\lambda_1 B} [B,A (f,g)]
e^{\lambda_1 B}.
 \end{split} \]
Iterating, we find
\begin{equation}\label{eq:baker} \begin{split} e^{-B} A (f,g) e^{B} = \; &A (f,g) + \sum_{j=1}^n \frac{(-1)^j}{j!} \,\text{ad}_B^j (A (f,g)) \\ &+(-1)^{n+1}  \int_0^1 d\lambda_1 \int_0^{\lambda_1} d\lambda_2 \dots \int_0^{\lambda_{n}} d\lambda_{n+1} \, e^{-\lambda_{n+1} B}  \text{ad}_B^{n+1} (A (f,g)) \, e^{\lambda_{n+1} B}  \end{split}\end{equation}
where $\text{ad}_B^1 (C) = [B,C]$ and $\text{ad}_B^{n+1}(C) = [B,\text{ad}_B^n(C)]$. A simple computation shows that 
\[  \text{ad}_B^1 (A (f,g)) = [ B, A(f,g)] = - A \left(\left(  \begin{array}{ll}0 &  k  \\ \overline{k} & 0 \end{array} \right) \left( \begin{array}{l} f \\ g \end{array} \right) \right)  \]
and therefore that 
\[ \text{ad}^j_B (A (f,g)) = (-1)^j A  \left(\left(  \begin{array}{ll}0 &  k  \\ \overline{k} & 0 \end{array} \right)^j \left( \begin{array}{l} f \\ g \end{array} \right) \right). \]
We have 
\[ \begin{split} \left(  \begin{array}{ll}0 &  k  \\ \overline{k} & 0 \end{array} \right)^{2m} &= \left(  \begin{array}{ll} (k \overline{k})^m &  0 \\  0  & (\overline{k} k)^m  \end{array} \right)  \quad \text{and } \quad
\left(  \begin{array}{ll} 0 &  k  \\ \overline{k} & 0 \end{array} \right)^{2m+1} = \left(  \begin{array}{ll} 0 & (k \overline{k})^m k \\  (\overline{k} k)^m \overline{k} & 0 \end{array} \right) 
\end{split} \]
for every $m \in \bN$. Inserting all this in (\ref{eq:baker}), we obtain (\ref{eq:TAT}), if we can show that the error converges to zero. We claim, more precisely, that the error term on the r.h.s.\ of (\ref{eq:baker}) vanishes, as $n \to \infty$, when applied on the domain $D(\cN^{1/2})$. 
To prove this claim, we start by observing that
\begin{equation}\label{eq:NTN} \| (\cN+1)^{1/2} e^{-\lambda B} (\cN+1)^{-1/2} \|
\leq e^{|\lambda| \| k \|_2} \end{equation}
for every $\lambda \in \bR$. Assuming for example, that $n$ is odd, (\ref{eq:NTN}) implies that 
\[ \begin{split} \Big\| \int_0^1 d\lambda_1 \int_0^{\lambda_1} &d\lambda_2 \dots \int_0^{\lambda_{n}} d\lambda_{n+1} \, e^{-\lambda_{n+1} B}  \text{ad}_B^{n+1} (A (f,g)) \, e^{\lambda_{n+1} B} (\cN +1)^{-1/2} \Big\| \\ \leq \; & \frac{e^{\| k \|_2}}{(n+1)!}  \| A \left( (k\overline{k})^{(n+1)/2} f , (\overline{k} k)^{(n+1)/2} g \right) (\cN+1)^{-1/2} \| \\
\leq \; & (\| f \|_2 + \| g \|_2) \, e^{\| k \|_2} \, \frac{\| (k\overline{k})^{(n+1)/2} \|_2}{(n+1)!} \leq C \frac{\| k \|_2^n}{(n+1)!}\done 
\end{split}\]
which vanishes as $n \to \infty$. The case $n$ even can be treated similarly. To prove (\ref{eq:NTN}), we observe that 
\[ \begin{split} \frac{d}{d\lambda} \| (\cN+1)^{1/2} & e^{-\lambda B} \psi \|^2  \\
 =\; & \langle e^{-\lambda B} \psi, \left[ B , \cN \right] \, e^{-\lambda B} \psi \rangle \\
 = \; & -\int dx dy \,  k(x,y) \langle e^{-\lambda B} \psi, a_x^* a_y^* \, e^{-\lambda B} \psi \rangle - \int dx dy \, \overline{k} (x,y) \langle e^{-\lambda B} \psi, a_x a_y \, e^{-\lambda B} \psi \rangle \\ 
\leq \; & 2\int dx \, \| a_x e^{-\lambda B} \psi \| \, \| a^* (k(x,.)) e^{-\lambda B} \psi \| \\ 
\leq \; &2 \| k \|_2 \,  \| (\cN+1)^{1/2} e^{-\lambda B} \psi \|^2.
\end{split} \]
Gronwall's Lemma implies (\ref{eq:NTN}).   

To prove (iii), we notice that
\[  \| p (k) \|_2 = \left\| \sum_{n\geq 1} \frac{(k\overline{k})^n}{(2n)!} \right\|_2 \leq \sum_{n\geq 1} \frac{ \| (k\overline{k})^n \|_2}{(2n)!}  \leq \sum_{n\geq 1} \frac{ \| k \|^{2n}_2}{(2n)!} \leq e^{\| k \|_2} 
\]
where we used that, by Cauchy-Schwarz, for any two kernels $K_1, K_2 \in L^2 (\bR^3 \times \bR^3)$ 
\begin{equation}\label{eq:CS-k}\begin{split} \| K_1 K_2  \|_2^2 &= \int dx dy \, \left| \int dz \, K_1 (x,z) K_2  (z,y) \right|^2 \\ & = \int dx dy dz_1 dz_2 K_1 (x,z_1) \overline{K}_1 (x,z_2) K_2 (z_1, y) \overline{K}_2 (z_2,y) \\ &\leq \int dx dy dz_1 dz_2 \, |K_1 (x,z_1)|^2 \, |K_2 (z_2, y)|^2 \\ &= \| K_1 \|^2_2 \, \| K_2 \|^2_2. \end{split} \end{equation}
The bounds for $r(k)$ and $\text{sh} (k)$ can be proven similarly. 

To show (iv) we write, using the fact that the series for $p (k)$, $r (k) $ and $\text{sh} (k)$ are absolutely convergent, 
\begin{align*}
    \nabla_1 p (k) & =  \nabla_1 (k \overline{k}) \left[
    \sum_{n=1}^\infty \frac{1}{(2n)!} (k \overline{k})^{n-1} \right] \, , \\
    \nabla_1 r (k) & = \nabla_1 (k \overline{k}) \left[ \sum_{n=1}^\infty \frac{1}{(2n+1)!} (k
    \overline{k})^{n-1} k \right] .
  \end{align*}
Applying (\ref{eq:CS-k}), we find the desired bounds. The bounds for the derivative $\nabla_2$ can be obtained similarly. 

Finally, to show (v), we remark that
\[ \| \dot p (k) \|_2 \leq \sum_{n \geq 1} \frac{1}{(2n)!} 2n \| \dot k \|_2 \| k \|_2^{2n-1} \leq \| \dot k \|_2 \, e^{ \|k \|_2}.  \]
The bound for $\dot r (k)$ can be proven analogously. {F}rom the product rule, we also find that
\[ \begin{split} 
\| \nabla_1 \dot p (k) \|_2 & \leq \sum_{n=2}^\infty \frac{n-1}{(2n)!}   \, \| k  \|_2^{2(n-2)} 
\| \partial_t (k \overline{k}) \|_2  \, \| \nabla_1 (k \overline{k}) \|_2  + 
\sum_{n=1}^\infty \frac{1}{(2n)!} \, \| k \|_2^{2(n-1)}  \| \nabla_1 \partial_t (k \overline{k}) \|_2 \\
& \leq e^{\| k \|_2} \left( \| \dot k \|_2 \,  \| \nabla_1 (k \overline{k}) \|_2 + \| \nabla_1 (k \dot{\overline{k}}) \|_2  + \| \nabla_1 (\dot k\overline{k}) \|_2 \right). \end{split} \]
The other bounds are shown similarly.
\end{proof}

\section{Construction of the fluctuation dynamics}

In this section, we will construct an approximation for the full many-body evolution of an initial data of the form $W(\sqrt{N} \varphi) T(k_0) \psi$, as considered in Theorem \ref{thm:main}. 
Our approximation will consist of two parts. First of all, the evolution of an approximately coherent state will be approximated by a coherent state with an evolved one-particle wave function (later, we will take care of the correlation structure). 

For a given $\varphi \in H^1 (\bR^3)$, we define $\varphi^{(N)}_t$ as the solution of the modified time-dependent Gross-Pitaevskii equation
\begin{equation}\label{eq:mod-GP} i\partial_t \varphi_t^{(N)} = - \Delta \varphi_t^{(N)} + \left( N^3 f(N.) V(N.) * |\varphi^{(N)}_t|^2 \right) \varphi_t^{(N)} \end{equation}
where $f$ denotes the solution of the zero-energy scattering equation (\ref{eq:0en-0}). 
For technical reason, which will become clear later on, it is more convenient for us to work with the solution of the modified Gross-Pitaevskii equation (\ref{eq:mod-GP}), rather than directly with the solution of (\ref{eq:GP}). Since $N^3 f(Nx) V(Nx) \to 8 \pi a_0 \delta (x)$, the solution $\varphi_t^{(N)}$ converges towards the solution of (\ref{eq:GP}), as $N \to \infty$. This is proven, together with other important properties of the solutions of (\ref{eq:GP}) and (\ref{eq:mod-GP}), in the next proposition. 
\begin{proposition} \label{t:pdes}
Let $V \in L^1 \cap L^3 (\bR^3, (1+|x|^6) dx)$ be non-negative and spherically symmetric. Let $f$ denote the solution of the zero-energy scattering equation (\ref{eq:0en-0}), with boundary condition $f(x) \to 1$ as $|x| \to \infty$. Then, by Lemma \ref{lm:w} below, $0 \leq f \leq 1$ and therefore $Vf \geq 0$ with $Vf \in L^1 \cap L^3 (\bR^3, (1+|x|^6) dx)$. Let $\varphi \in H^1 (\bR^3)$, with 
$\| \varphi \|_2 = 1$.
\begin{itemize}
\item[(i)] Well-posedness. There exist unique global solutions $\varphi_. , \varphi_.^{(N)} \in C(\bR ; H^1 (\bR^3))$ of the Gross-Pitaevskii equation (\ref{eq:GP}) and, respectively, of the modified Gross-Pitaevskii equation (\ref{eq:mod-GP}), with initial data $\varphi$. These solutions are s.t. $\| \varphi_t \|_2 = \| \varphi_t^{(N)} \|_2 = 1$ for all $t \in \bR$. Moreover, there exists a constant $C > 0$ with 
\[  \| \varphi_t \|_{H^1} , \| \varphi_t^{(N)} \|_{H^1} \leq C \]
for all $t \in \bR$. 
\item[(ii)] Propagation of higher regularity. If we make the additional assumption that $\varphi \in H^n (\bR^3)$, for some integer $n \geq 2$, then $\varphi_t , \varphi_t^{(N)} \in H^n (\bR^3)$ for every $t \in \bR$. Moreover there exist constants $C>0$ depending on $\| \varphi \|_{H^n}$ and on $n$, and $K >0$, depending only on $\| \varphi \|_{H^1}$ and $n$, with
\begin{equation}\label{eq:hireg} \| \varphi_t \|_{H^n} , \| \varphi_t^{(N)} \|_{H^n} \leq C e^{K |t|} \end{equation}
for all $t \in \bR$. 
\item[(iii)] Regularity of time derivatives. Suppose $\varphi \in H^4 (\bR^3)$. Then there exists a constant $C>0$, depending on $\| \varphi \|_{H^4}$,  and $K > 0$, depending only on $\| \varphi \|_{H^1}$, such that
\[   \|\dot{\varphi}_t^{(N)} \|_{H^2} , \| \ddot{\varphi}_t^{(N)} \|_2 \leq C e^{K |t|} \]
for all $t \in \bR$.
\item[(iv)] Comparison of dynamics. Suppose now $\varphi \in H^2 (\bR^3)$. Then there exist constants $C,c_1,c_2 > 0$, depending on $\| \varphi \|_{H^2}$ ($c_2$ actually depends only on $\| \varphi \|_{H^1}$) such that 
\[ \| \varphi_t^{(N)} - \varphi_t \|_2 \leq \frac{C \exp (c_1 \exp (c_2 |t|))}{N} \,  \]
for all $t \in \bR$.
\end{itemize}
\end{proposition}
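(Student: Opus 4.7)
For parts (i)--(iii), the plan is to apply standard PDE techniques, taking advantage of two structural facts: by (\ref{eq:8pia}) and the change of variables $y=Nx$, the kernel $K_N := N^3 f(N\cdot)V(N\cdot)$ has $\|K_N\|_1 = \int fV = 8\pi a_0$ uniformly in $N$, and, by the assumed Lemma \ref{lm:w}, it is non-negative. The modified Gross--Pitaevskii equation therefore carries a non-negative conserved energy $\int|\nabla\varphi|^2 + \tfrac12\int(K_N*|\varphi|^2)|\varphi|^2$, and similarly for the cubic GP functional. For (i), local well-posedness in $H^1$ for both equations follows by a contraction-mapping argument applied to the Duhamel formula: the Hartree-type nonlinearity $\varphi\mapsto (K_N*|\varphi|^2)\varphi$ is locally Lipschitz on $H^1$ uniformly in $N$, using Sobolev embedding together with the uniform bound on $\|K_N\|_1$, and the cubic NLS nonlinearity is classical. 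Global existence, together with a uniform-in-$t$ (and in $N$) $H^1$ bound, follows from conservation of mass and energy.

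For part (ii), the plan is induction on $n$. For $|\alpha|=n$, apply $\partial^\alpha$ to the equation, pair with $\partial^\alpha\varphi_t^{(N)}$ in $L^2$, and take imaginary parts; the Laplacian contribution vanishes. The nonlinear contribution is expanded by the Leibniz rule, and every resulting term is bounded by a constant multiple of $\|\varphi_t^{(N)}\|_{H^{n-1}}^{j}\|\varphi_t^{(N)}\|_{H^n}^{2}$ for some $j$, using $\|K_N*g\|_p\leq \|K_N\|_1\|g\|_p$, Sobolev embeddings in three dimensions, and the inductive hypothesis. Gronwall then yields (\ref{eq:hireg}) with exponent $K$ depending only on $\|\varphi\|_{H^1}$ (propagated from (i)). Part (iii) then reduces to reading off regularity of time derivatives from the equation: $\dot\varphi_t^{(N)} = i\Delta\varphi_t^{(N)} - i(K_N*|\varphi_t^{(N)}|^2)\varphi_t^{(N)}$ lies in $H^2$ because $\varphi_t^{(N)}\in H^4$ by (ii), $H^2(\bR^3)$ is a Banach algebra, and $\|K_N\|_1\leq C$; applying the same estimate to the once time-differentiated equation controls $\|\ddot\varphi_t^{(N)}\|_2$.

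Part (iv) is the main content. Setting $u_t=\varphi_t^{(N)}-\varphi_t$, the two equations give
\[ i\partial_t u_t = -\Delta u_t + \bigl[A_N(\varphi_t^{(N)}) - A_N(\varphi_t)\bigr] + \bigl[A_N(\varphi_t) - 8\pi a_0|\varphi_t|^2\varphi_t\bigr], \]
where $A_N(\varphi):=(K_N*|\varphi|^2)\varphi$. Splitting the first bracket as $(K_N*|\varphi_t^{(N)}|^2)u_t + (K_N*(|\varphi_t^{(N)}|^2-|\varphi_t|^2))\varphi_t$, using $\|K_N\|_1\leq C$ and $\|\varphi\|_\infty\leq C\|\varphi\|_{H^2}$ together with the $H^2$ bound from (ii), one obtains a Lipschitz-type estimate in $L^2$ of size $C(\|\varphi_t^{(N)}\|_{H^2}^2+\|\varphi_t\|_{H^2}^2)\|u_t\|_2\leq Ce^{2K|t|}\|u_t\|_2$. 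The second bracket is the crucial regularization error: by the change of variables $y=N(x-z)$,
\[ (K_N*|\varphi_t|^2)(x) - 8\pi a_0|\varphi_t(x)|^2 = \int f(y)V(y)\bigl(|\varphi_t(x-y/N)|^2 - |\varphi_t(x)|^2\bigr)\,dy, \]
and a first-order Taylor expansion in $y/N$, combined with $\int|y|f(y)V(y)\,dy<\infty$ (which follows from $V\in L^1(\bR^3,(1+|x|^6)dx)$ and $0\leq f\leq 1$) and $\|\nabla|\varphi_t|^2\|_2\leq C\|\varphi_t\|_{H^2}^2$, yields $\|A_N(\varphi_t)-8\pi a_0|\varphi_t|^2\varphi_t\|_2\leq \tfrac{C}{N}\|\varphi_t\|_{H^2}^2\leq \tfrac{C}{N}e^{2K|t|}$. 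Taking the imaginary part of the $L^2$ inner product with $u_t$ gives
\[ \frac{d}{dt}\|u_t\|_2 \leq Ce^{2K|t|}\|u_t\|_2 + \frac{C}{N}e^{2K|t|}, \]
and Gronwall with $u_0=0$ produces the claimed bound, the double exponential arising because the Lipschitz constant itself grows exponentially in $t$. The only genuine obstacle I anticipate is the combinatorial bookkeeping in (ii), needed to keep the exponent $K$ depending only on $\|\varphi\|_{H^1}$; everything else is routine once the Taylor-expansion structure of the second bracket in (iv) is identified.
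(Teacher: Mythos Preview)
Your plan for parts (i), (iii), and (iv) matches the paper's proof closely, and the identification in (iv) of the Taylor-expansion structure of $(K_N*|\varphi_t|^2)-8\pi a_0|\varphi_t|^2$ as the source of the $1/N$ factor is exactly right.

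The gap is in part (ii). A direct $L^2$ energy estimate, as you propose, forces you to bound $\bigl\|\partial^\alpha[(K_N*|\varphi|^2)\varphi]\bigr\|_{L^2}$. After Leibniz, the terms where all $n$ derivatives hit a single factor inside the convolution, e.g.\ $(K_N*\partial^\alpha\bar\varphi\cdot\varphi)\varphi$, can only be put in $L^2$ using two factors of $\|\varphi\|_{L^\infty}$: H\"older forces $\tfrac1{p_1}+\tfrac1{p_2}+\tfrac12=\tfrac12$, so $p_1=p_2=\infty$. In three dimensions $\|\varphi\|_{L^\infty}$ is not controlled by $\|\varphi\|_{H^1}$; you need $\|\varphi\|_{H^2}$. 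At the base step $n=2$ this gives an inequality of the form $\partial_t X\le CX^2$ (with $X=\|\varphi_t^{(N)}\|_{H^2}^2$), which does not close globally; for $n\ge3$ the Gronwall coefficient inherits the exponential growth of $\|\varphi_t^{(N)}\|_{H^2}$, producing a double exponential rather than the single exponential claimed in (\ref{eq:hireg}). Your own caveat that ``the only genuine obstacle is the combinatorial bookkeeping in (ii)'' understates the problem: it is not bookkeeping but a genuine loss of the $H^1$-only dependence of $K$.

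The paper avoids this by using endpoint Strichartz estimates on short intervals: the Duhamel term is estimated in $L^2_t L^{6/5}_x$, and in $L^{6/5}_x$ the H\"older exponents $\tfrac16+\tfrac16+\tfrac12=\tfrac56$ are admissible, so the two undifferentiated factors are placed in $L^6$ and controlled by $\|\varphi\|_{H^1}$ alone. This yields
\[
\sup_{[0,T]}\|\varphi_t^{(N)}\|_{H^n}\le 2\|\varphi_0\|_{H^n}+\sup_{[0,T]}\|\varphi_t^{(N)}\|_{H^{n-1}}^{3}
\]
with $T=T(\|\varphi\|_{H^1})$, and iterating over intervals of this fixed length gives the single-exponential bound. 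If you want to keep an energy-estimate flavor, you would still need some dispersive input (Strichartz or local smoothing) to replace the $L^\infty$ control by $L^6$ control; purely $L^2$-based energy estimates do not suffice here.
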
 
The proof of Proposition \ref{t:pdes} can be found in Appendix \ref{s:pde}. 

\medskip

Using the solution $\varphi_t^{(N)}$ of (\ref{eq:mod-GP}), we are going to approximate the coherent part of the evolution. As explained in the introduction, however, this approximation is not good enough. The many-body evolution develops a singular correlation structure, which is completely absent in the evolved coherent state. As a consequence, fluctuations around the coherent approximation are too strong to be controlled. To solve this problem, we have to produce a better approximation of the many-body evolution, in particular an approximation which takes into account the short-scale correlation structure. To reach this goal, we are going to multiply the Weyl operator $W(\sqrt{N} \varphi_t^{(N)})$, which generates the coherent approximation to the many-body dynamics, by another unitary operator $T(k)$, having the form (\ref{eq:Tk-def}), obtained by taking the exponential of a quadratic expression in creation and annihilation operators. The kernel $k \in L^2 (\bR^3 \times \bR^3)$ has to be chosen so that $T(k)$ creates the correct correlations among the particles. Since correlations are, in good approximation, two-body effects, we can describe them through the solution $f$ of the zero-energy scattering equation (\ref{eq:0en-0}). We write 
%
\begin{equation}\label{eq:wdef}
  f (x) = 1 - w (x)
\end{equation}
with $\lim_{|x|\to\infty} w(x) = 0$. The scattering length of $V$ is defined
as
\[
  8 \pi a_0 = \int dx \, V(x) f(x).
\]
Equivalently, $a_0$ is given by 
\[
  a_0 = \lim_{|x| \to \infty} w(x)|x|.
\]
Note that, if $V$ has compact support inside $\{ x \in \bR^3 : |x| < R \}$, then $a_0 \leq R$ and $w(x) = a_0/|x|$ for $|x| > R$. In general, under our assumptions on $V$, one can prove the following properties of the function $w$.
\begin{lem}\label{lm:w}
Let $V \in L^1 \cap L^3(\bR^3, (1+|x|^6)dx)$ be spherically symmetric, with
$V \geq 0$. Denote by $f$ the solution of the zero-energy scattering
equation (\ref{eq:0en-0}) and let $w = 1 - f$. Then 
\[ 0 \leq w(x) \leq 1 \quad \text{for all } x \in \bR^3. \]
Moreover, there is a constant $C>0$ such that
\begin{equation}\label{eq:bdw} w(x) \leq \frac{C}{|x|+1} \qquad \text{and } \quad  |\nabla w (x)| \leq \frac{C}{|x|^2 + 1}. \end{equation}
\end{lem}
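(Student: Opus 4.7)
The strategy exploits the spherical symmetry of $V$: since $f$ is necessarily radial, the three-dimensional PDE for $w = 1-f$ reduces to an ODE in $r = |x|$ which can be integrated explicitly. I write $w(x) = w(|x|) = w(r)$ throughout.

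First, I would establish $0 \leq f \leq 1$ (equivalently $0 \leq w \leq 1$) by the classical construction of $f$ as a monotone limit of solutions on growing balls. For each $R>0$, let $f_R$ solve $(-\Delta + \tfrac12 V) f_R = 0$ in $B_R$ with $f_R = 1$ on $\partial B_R$. The weak maximum principle applied to both $f_R$ and $1 - f_R$ (using $V \geq 0$) gives $0 \leq f_R \leq 1$ in $B_R$; and for $R' > R$, since $f_{R'} \leq 1 = f_R$ on $\partial B_R$, the same principle applied to $f_{R'} - f_R$ yields $f_{R'} \leq f_R$ on $B_R$. Hence $f := \lim_{R \to \infty} f_R$ exists pointwise, satisfies $0 \leq f \leq 1$, and solves the zero-energy equation with $f(x) \to 1$ at infinity.

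Next, I would derive a radial representation formula. The identity $-\Delta w = \tfrac12 V f$ becomes $-(r^2 w'(r))' = \tfrac12 r^2 V(r) f(r)$; elliptic regularity with $V \in L^3$ makes $w \in C^{1,\alpha}_{\mathrm{loc}}$, so $r^2 w'(r) \to 0$ as $r \to 0^+$. Integrating once produces
\begin{equation*}
  w'(r) = -\frac{1}{2r^2}\int_0^r s^2 V(s) f(s)\, ds,
\end{equation*}
and since the defining identity $\int V f\, dx = 8\pi a_0$ reads $\int_0^\infty s^2 V f\, ds = 2a_0$ in radial form, a further integration in $r$ (using $w(r) \to 0$ at infinity) yields
\begin{equation*}
  w(r) = \frac{1}{2r}\int_0^r s^2 V f\, ds + \frac{1}{2}\int_r^\infty s V f\, ds.
\end{equation*}

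From this representation the two quantitative bounds follow quickly. For $w(r) \leq C/(r+1)$: on $r \leq 1$ the trivial bound $w \leq 1$ suffices. On $r \geq 1$ the first term is at most $a_0/r$, while for the second, using $s \leq s^8/r^7$ when $s \geq r \geq 1$ together with $0 \leq f \leq 1$ and $V \in L^1(\bR^3, |x|^6 dx)$,
\begin{equation*}
  \int_r^\infty s V(s) f(s)\, ds \leq \frac{1}{r^7}\int_0^\infty s^8 V(s)\, ds \leq \frac{C}{r^7},
\end{equation*}
which is dominated by $C'/r$. For $|\nabla w(x)| = |w'(r)| \leq C/(r^2+1)$: on $r \geq 1$ the explicit formula gives $|w'(r)| \leq a_0/r^2$; on $r \leq 1$, H\"older's inequality with $V \in L^3(\bR^3)$ gives
\begin{equation*}
  \int_0^r s^2 V(s) f(s)\, ds = \frac{1}{4\pi}\int_{B_r} V(y) f(y)\, dy \leq C\|V\|_{L^3} |B_r|^{2/3} \leq C' r^2,
\end{equation*}
so $|w'(r)| \leq C''$. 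Combining both regimes gives the claimed decay.

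The main work is in the first step: the ball approximation together with the maximum principle argument that identifies the scattering solution and pins down $0 \leq f \leq 1$. Once $f$ is known to be radial with this sign, the remaining estimates are one-variable manipulations of the explicit formula. The role of the weighted hypothesis on $V$ is precisely to furnish the $|x|^6$-decay needed to control $\int_r^\infty s V\, ds$ by $r^{-7}$, while the $L^3$ integrability is used only near the origin to bound $\int_{B_r} V$ by $r^2$.
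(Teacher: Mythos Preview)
Your proof is correct and takes a genuinely different route from the paper's. The paper writes $w$ as the three-dimensional Newtonian potential $w(x) = C\int |x-y|^{-1} V(y) f(y)\,dy$ (and similarly for $\nabla w$), then bounds $(1+|x|)w(x)$ and $(1+|x|^2)|\nabla w(x)|$ directly by inserting $|x| \leq |x-y| + |y|$ and invoking the Hardy--Littlewood--Sobolev inequality; this uses the norms $\|V\|_{3/2}$, $\|V\|_3$, $\|\,|y|V\|_{3/2}$, $\|\,|y|^2 V\|_3$, all of which are controlled by the weighted hypothesis. Your argument instead exploits the radial symmetry to reduce to a one-variable ODE, derives the explicit shell-theorem formula for $w(r)$, and reads off the decay by elementary manipulations. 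What this buys you is a more self-contained, HLS-free proof in which the role of the $|x|^6$ weight (controlling $\int s^8 V\,ds$) and of $V\in L^3$ (controlling $\int_{B_r} V$ by $r^2$) is completely transparent. What the paper's approach buys is robustness: its estimates do not actually use radiality, so they would extend verbatim to non-radial $V$ (and to the gradient of the Newtonian potential of any nonnegative $L^1\cap L^3((1+|x|^6)dx)$ source).

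One small technical remark: your appeal to $C^{1,\alpha}_{\mathrm{loc}}$ regularity from $V\in L^3$ is not quite right at the borderline $p=n=3$ (one only gets $w\in W^{2,3}_{\mathrm{loc}}$, hence $\nabla w\in L^q_{\mathrm{loc}}$ for all $q<\infty$). This does not damage the argument, since the condition $r^2 w'(r)\to 0$ follows simply from the boundedness of $w$ near the origin (a nonzero limit would force a $1/r$ singularity); alternatively, your explicit formula is exactly Newton's shell theorem applied to the convolution representation, which bypasses the ODE justification entirely.
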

\begin{proof}
Standard arguments show that $0 \leq f (x) \leq 1$ holds for every $x \in \bR^3$ ($f(x) \leq 1$ follows from $V \geq 0$, because of the monotonic dependence of $f$ on the potential; see \cite[Appendix C]{LSSY}). This implies that $0 \leq w (x) \leq 1$ for all $x \in \bR^3$. From the zero energy scattering equation, we have $-\Delta w = Vf /2$. This implies that 
\[ w(x) = C \int dy \, \frac{1}{|x-y|} V (y) f (y) \qquad \text{and } \quad  \nabla w (x) = C \int dy \frac{x-y}{|x-y|^3} \, V (y) f (y)  \]
for an appropriate constant $C \in \bR$. 
Using $|x| \le  |x-y| + |y|$, the fact that $f \le 1$, and
the Hardy-Littlewood-Sobolev inequality, we find
\begin{align*}
  |(1+|x|)w(x)| & \le C \int dy \left( \frac{1}{|x-y|} + 1 +
\frac{|y|}{|x-y|} \right) V(y) f(y) \\& \le C (\| V \|_{3/2} + \| V \|_{L^1} + \| |y| V(y) \|_{3/2} )
\end{align*}
and, analogously, 
\begin{align*}
\left| (1+|x|^2) w (x) \right| &\leq C \int dy \left( \frac{1}{|x-y|^2} + 1 + \frac{|y|^2}{|x-y|^2} \right) V(y) f (y) 
\\ &\leq C \left( \| V \|_3 + \| V \|_1 + \| |y|^2 V \|_3 \right).
\end{align*}
The right hand side of the last two equations is bounded under the assumption $V \in L^1 \cap L^3 ((1+|x|^6) dx)$. 
\end{proof}

The zero-energy scattering equation for the rescaled potential $N^2 V (Nx)$ is then solved by $f(Nx)$. We define $w(Nx) = 1- f(Nx)$. Clearly 
\[ \lim_{|x| \to \infty} w(Nx) |x| = \frac{a_0}{N} ,\]
showing that the scattering length of $N^2 V(Nx)$ is $a_0/N$. Equivalently, this follows from $\int \di x\, N^2V(Nx)f(Nx) = 8\pi a_0/N$.\done

It follows immediately from Lemma \ref{lm:w} that $0 < w(Nx) < c$ for some $c <1$ and for all $x \in \bR^3$, and that there exists $C$ with 
\begin{equation}\label{eq:wN-bd} w(Nx) \leq \frac{C}{N|x| + 1} \qquad \text{and } \quad |\nabla_x w(Nx)| \leq C \frac{N}{N^2 |x|^2 + 1}. \end{equation}


We will use the solution $f(Nx)$ of the scaled zero-energy scattering
equation to approximate the correlations among the particles, arising on the
microscopic scale. It is however important to keep in mind that these
correlations are also modulated on the macroscopic scale. The macroscopic
variation is described, or at least, this is what we expect, by the solution of the modified Gross-Pitaevskii equation (\ref{eq:mod-GP}). We define therefore the kernel
\begin{equation}\label{eq:kt} 
k_t (x,y)  = - N w (N (x-y)) \ph (x) \ph (y) \end{equation}
and the corresponding unitary operator
 \[ T (k_t) = \text{exp} \left( \frac{1}{2} \int dx dy \, \left( k_t (x,y) a_x^* a_y^* - \overline{k}_t (x,y) a_x a_y \right)\right). \]
In the next lemma, we collect several bounds for the kernel $k_t$ which will be useful in the following. 
\begin{lem} \label{l:kernels}
Let $\varphi_t^{(N)} \in H^1(\R^3)$ be the solution of (\ref{eq:mod-GP}) with initial data $\varphi \in H^1 (\bR^3)$. Let $w(Nx) = 1 - f(Nx)$, where $f$ solves the zero-energy scattering equation (\ref{eq:0en-0}). Let the kernel $k_t$ be defined as in (\ref{eq:kt}). \begin{enumerate}
\item \label{k} There exists a constant $C$, depending only on $\| \varphi^{(N)}_t \|_{H^1}$ such that
\[ \begin{split}  \| k_t \|_{2} &\le C , \\ \| \nabla_1 k_t \|_2 , \| \nabla_2 k_t
        \|_{2} &\le C \sqrt{N}, \\ \| \nabla_1 (k_t \overline{k}_t) \|_2 ,  \| \nabla_2 (k_t
        \overline{k}_t) \|_{2} &\le C .   \end{split}    \]
Defining $p(k_t)$ and $r (k_t)$ as in (\ref{eq:pr-def}), so that $\text{ch} (k_t) = 1 + p (k_t)$ and $\text{sh} (k_t) = k_t + r(k_t)$, it follows from Lemma \ref{l:bt}, part (iii) and (iv), that
      \[ \begin{split} \| p (k_t) \|_2 , \| r (k_t) \|_2 , \| \text{sh} (k_t) \|_2 &\leq C, \\
       \| \nabla_1 p(k_t) \|_2 , \| \nabla_2 p (k_t) \|_2  & \leq C, \\
        \| \nabla_1 r (k_t) \|_2 , \| \nabla_2 r (k_t) \|_2 & \leq C. \end{split} \] 
    \item \label{kr} For almost all $x,y \in \R^3$, we have the pointwise bounds 
      \[ \begin{split} 
        |k_t (x,y)| & \leq \min \left(N |\ph(x)| |\ph(y)| , \frac{1}{|x-y|} |\ph (x)| |\ph(y)| \right), \\
        |r (k_t) (x,y)| &\le C |\ph(x)| |\ph(y)|, \\ |p(k_t) (x,y)| &\leq C |\ph(x)| |\ph(y)| \, . \end{split} 
      \] 
    \item \label{sup} Suppose further that $\varphi \in H^2(\R^3)$. Then
      \[
         \sup_{x \in \R^3} \, \norm{k_t (.,x)}_{2}, \sup_{x \in \R^3} \, \norm{p (k_t) (.,x)}_{2},  \sup_{x \in \R^3} \, \norm{r (k_t) (.,x)}_{2} , \sup_{x \in \R^3}  \norm{\text{sh} (k_t) (.,x)}_{2} \leq C \norm{\ph}_{H^2}.
      \]
  \end{enumerate}
\end{lem}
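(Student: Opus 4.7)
The plan is to prove part~(ii) first, since the pointwise bounds there are the main building blocks for parts~(i) and (iii). The two estimates on $|k_t|$ follow at once from $0\le w\le 1$ and from $Nw(N(x-y))\le C/|x-y|$, both consequences of Lemma~\ref{lm:w}. For the pointwise control of $p(k_t)$ and $r(k_t)$, I would exploit the factorization
\[
(k_t\,\overline{k}_t)(x,y) = \ph(x)\overline{\ph(y)}\,G(x,y),\qquad G(x,y) := N^2\!\int w(N(x-z))\,w(N(z-y))\,|\ph(z)|^2\,dz.
\]
Cauchy--Schwarz in $z$ together with Hardy's inequality yields $|G(x,y)|\le C\|\nabla\ph\|_2^2\le C$ uniformly in $x,y,N$. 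Iterating, $(k_t\overline{k}_t)^n$ factorizes as $\ph(x)\overline{\ph(y)}G_n(x,y)$ with $|G_n|\le C^n$ (each intermediate integration consumes a factor $\|\ph\|_2^2=1$), and summing the absolutely convergent series produces the claimed bounds on $p(k_t)$ and $r(k_t)$; for $r(k_t)$ the trailing $k_t$ contributes an extra factor $\int|\ph(z)|^2/|z-y|\,dz\le C\|\ph\|_2\|\nabla\ph\|_2$ (again Cauchy--Schwarz plus Hardy).

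For part~(i), the bound $\|k_t\|_2\le C$ is immediate from the pointwise inequality $|k_t(x,y)|\le|x-y|^{-1}|\ph(x)||\ph(y)|$ and Hardy. For $\|\nabla_1 k_t\|_2\le C\sqrt{N}$ the product rule produces two summands: the one containing $Nw(N(x-y))\nabla\ph(x)\ph(y)$ is bounded by $C$ exactly as for $\|k_t\|_2$, while the one containing $N^2(\nabla w)(N(x-y))\ph(x)\ph(y)$, after the change of variables $u=N(x-y)$, reduces to $N\|\nabla w\|_2^2\|\ph\|_4^4\le CN$, using $|\nabla w(u)|\le C/(|u|^2+1)\in L^2(\R^3)$ and the Sobolev embedding $H^1(\R^3)\hookrightarrow L^4(\R^3)$. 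The bounds for $\nabla_2 k_t$ follow by symmetry.

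The main obstacle is the estimate $\|\nabla_1(k_t\overline{k}_t)\|_2\le C$, which must hold with constants depending only on $\|\ph\|_{H^1}$; via Lemma~\ref{l:bt}(iv) it is this bound that controls $\|\nabla_1 p(k_t)\|_2$ and $\|\nabla_1 r(k_t)\|_2$. The product rule gives
\[
\nabla_1(k_t\overline{k}_t)(x,y) = \nabla\ph(x)\overline{\ph(y)}\,G(x,y) + \ph(x)\overline{\ph(y)}\,\nabla_x G(x,y),
\]
and the first summand is bounded by $\|G\|_\infty\|\nabla\ph\|_2\|\ph\|_2\le C$. The second summand is the subtle one. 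The key observation I would rely on is that the bounds $|\nabla h(u)|\le C/|u|^2$ and $h(u)\le C/|u|$ for $h(u):=Nw(Nu)$ both hold \emph{uniformly in $N$}, since $Nw(Nu)\le C N/(N|u|+1)\le C/|u|$ and $N^2|\nabla w(Nu)|\le C N^2/(N^2|u|^2+1)\le C/|u|^2$. Applying Cauchy--Schwarz to $\nabla_x G(x,y)=\int(\nabla h)(x-z)|\ph(z)|^2 h(z-y)\,dz$ with the weight $|\nabla h(x-z)|^{1/2}$ yields
\[
|\nabla_x G(x,y)|^2\le \Bigl(\int|\nabla h(x-z)||\ph(z)|^2\,dz\Bigr)\Bigl(\int|\nabla h(x-z)||\ph(z)|^2 h(z-y)^2\,dz\Bigr),
\]
and Hardy's inequality gives the two uniform estimates $\int|\nabla h(x-z)||\ph(z)|^2\,dz\le C\|\nabla\ph\|_2^2$ and $\int|\ph(y)|^2 h(z-y)^2\,dy\le C\|\nabla\ph\|_2^2$. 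Interchanging the order of integration in $\int|\ph(x)|^2|\ph(y)|^2|\nabla_x G|^2\,dx\,dy$ then closes the estimate. The bound on $\nabla_2(k_t\overline{k}_t)$ is analogous.

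Finally, part~(iii) follows from the three-dimensional Sobolev embedding $H^2\hookrightarrow L^\infty$, which provides $\|\ph\|_\infty\le C\|\ph\|_{H^2}$. Combined with the pointwise estimates of part~(ii) and Hardy's inequality one obtains all four supremum bounds; for instance $\sup_x\|k_t(\cdot,x)\|_2^2\le|\ph(x)|^2\int|\ph(y)|^2/|x-y|^2\,dy\le C\|\ph\|_\infty^2\|\nabla\ph\|_2^2$, and the bounds for $p(k_t),r(k_t),\mathrm{sh}(k_t)$ follow from the pointwise estimates and $\|\ph\|_2=1$.
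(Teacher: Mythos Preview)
Your proof is correct and follows essentially the same approach as the paper: both rely on the uniform-in-$N$ pointwise bounds $Nw(Nu)\le C/|u|$ and $N^2|\nabla w(Nu)|\le C/|u|^2$ from Lemma~\ref{lm:w}, combined with repeated applications of Hardy's inequality. Your organization via the factorization $(k_t\overline{k}_t)^n(x,y)=\ph(x)\overline{\ph(y)}\,G_n(x,y)$ with $\|G_n\|_\infty\le C^n$ is slightly cleaner than the paper's version, which instead peels off one factor of $k_t$ on each side and bounds the interior by $\|\overline{k}_t(k_t\overline{k}_t)^{n-1}\|_2\le\|k_t\|_2^{2n-1}$; the two arguments are equivalent in substance.
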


We will also need bounds on the time derivative of the kernels $k_t$, $p(k_t)$, $r(k_t)$. These are collected in the following lemma.
\begin{lemma}\label{lm:dotk} 
Let $\varphi \in H^4(\R^3)$, and $\ph \in H^4 (\bR^3)$ be the solution of (\ref{eq:mod-GP}), with initial data $\varphi$. Let $w(Nx) = 1 - f(Nx)$, where $f$ is the solution of the zero-energy scattering equation (\ref{eq:0en-0}). Let the kernel $k_t$ be defined as in (\ref{eq:kt}), so that
\begin{equation}\label{eq:dtk} \dot{k}_t (x,y) = - N w (N (x-y)) \left( \dot{\varphi}_t^{(N)} (x) \varphi_t^{(N)} (y) + \varphi_t^{(N)} (x) \dot{\varphi}_t^{(N)} (y) \right). \end{equation}
Then there are constants $C,K >0$, where $C$ depends on the $\| \varphi \|_{H^4}$ and $K$ only on $\| \varphi \|_{H^1}$ such that the following bounds hold: 
\begin{itemize}
\item[(i)]   
 \[  \| \dot{k}_t \|_2 ,  \| \ddot k_t \|_2 , \| \dot{p} (k_t) \|_2 , \| \dot{r} (k_t) \|_2  \leq C e^{K|t|}, \]
\item[(ii)] 
\[ \| \nabla_1 \dot p (k) \|_2 ,  \| \nabla_2 \dot p (k) \|_2 , \| \nabla_1 \dot r (k) \|_2,  \| \nabla_2 \dot r (k) \|_2  \leq C  e^{K|t|}, \]
\item[(iii)] 
\[
\sup_x \, \norm{\dot k_t (., x)}_{2} ,  \sup_x \, \norm{\dot p (k_t) (., x)}_{2}, \sup_x \norm{\dot r (k_t) (., x)}_{2}, \sup_x \, \| \dot{\text{sh}} (k_t) (., x) \|_2  \leq C e^{K|t|}.
\] 
 \end{itemize}
 \end{lemma}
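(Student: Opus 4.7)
The proof rests on three inputs: the explicit formula (\ref{eq:dtk}) and its further differentiation, the regularity estimates for $\varphi_t^{(N)}$, $\dot\varphi_t^{(N)}$, $\ddot\varphi_t^{(N)}$ furnished by Proposition \ref{t:pdes}, and the general Bogoliubov identities of Lemma \ref{l:bt}(v). The whole strategy parallels the proof of Lemma \ref{l:kernels}, with $\varphi_t^{(N)}$ in certain slots replaced by $\dot\varphi_t^{(N)}$ or (for $\ddot k_t$) $\ddot\varphi_t^{(N)}$. The key new input is simply that these time derivatives retain enough Sobolev regularity: from Proposition \ref{t:pdes}(iii), $\|\dot\varphi_t^{(N)}\|_{H^2}$ and $\|\ddot\varphi_t^{(N)}\|_2$ are both bounded by $Ce^{K|t|}$.

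For (i), I would bound $\|\dot k_t\|_2^2$ directly from (\ref{eq:dtk}), using the pointwise bound $|Nw(Nz)|^2\le C/|z|^2$ (from (\ref{eq:wN-bd})) together with Hardy's inequality $\int dy\,|f(y)|^2/|x-y|^2\le C\|\nabla f\|_2^2$. This gives
\[
\|\dot k_t\|_2^2\le C\bigl(\|\dot\varphi_t^{(N)}\|_2^2\|\nabla\varphi_t^{(N)}\|_2^2+\|\varphi_t^{(N)}\|_2^2\|\nabla\dot\varphi_t^{(N)}\|_2^2\bigr)\le Ce^{K|t|}
\]
by Proposition \ref{t:pdes}. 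Differentiating (\ref{eq:dtk}) once more produces a sum involving $\ddot\varphi_t^{(N)}\otimes\varphi_t^{(N)}$, $\dot\varphi_t^{(N)}\otimes\dot\varphi_t^{(N)}$, and $\varphi_t^{(N)}\otimes\ddot\varphi_t^{(N)}$, all multiplied by $Nw(N(x-y))$. The same scheme applies, taking care to use Hardy on whichever factor holds $\varphi_t^{(N)}$ or $\dot\varphi_t^{(N)}$, which are bounded in $H^1$, rather than on $\ddot\varphi_t^{(N)}$, for which only the $L^2$ estimate is available. The bounds on $\|\dot p(k_t)\|_2$ and $\|\dot r(k_t)\|_2$ then follow from Lemma \ref{l:bt}(v), combined with the bound $\|k_t\|_2\le C$ of Lemma \ref{l:kernels}(i) and the bound on $\|\dot k_t\|_2$ just established.

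For (ii), the Bogoliubov identity Lemma \ref{l:bt}(v) reduces $\|\nabla_1\dot p(k_t)\|_2$ and $\|\nabla_1\dot r(k_t)\|_2$ to a linear combination of $\|\dot k_t\|_2\|\nabla_1(k_t\bar k_t)\|_2$, $\|\nabla_1(\dot k_t\bar k_t)\|_2$, $\|\nabla_1(k_t\overline{\dot k}_t)\|_2$, weighted by $e^{\|k_t\|_2}$. The first term is already controlled by (i) and Lemma \ref{l:kernels}(i). The remaining two are handled by the same mechanism used to bound $\|\nabla_1(k_t\bar k_t)\|_2$ in Lemma \ref{l:kernels}(i): expand the operator product as a convolution in $z$, differentiate in $x$ to produce at worst a factor $N^2|\nabla w(N(x-z))|\le C/|x-z|^2$ from (\ref{eq:wN-bd}), combine with the pointwise bound $|\bar k_t(z,y)|\le |\varphi_t^{(N)}(z)\varphi_t^{(N)}(y)|/|z-y|$ from Lemma \ref{l:kernels}(ii), and estimate the resulting three-point integral by Cauchy-Schwarz and Hardy, with Sobolev norms of $\dot\varphi_t^{(N)}$ playing the role that $\varphi_t^{(N)}$ played before. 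The bounds for the $\nabla_2$ derivatives are obtained by the symmetric argument. For (iii), I would estimate the slice $\|\dot k_t(\cdot,x)\|_2^2$ directly by pulling out the factor $\varphi_t^{(N)}(x)$ or $\dot\varphi_t^{(N)}(x)$ and applying Hardy in the remaining variable, obtaining
\[
\|\dot k_t(\cdot,x)\|_2^2\le C|\varphi_t^{(N)}(x)|^2\|\nabla\dot\varphi_t^{(N)}\|_2^2+C|\dot\varphi_t^{(N)}(x)|^2\|\nabla\varphi_t^{(N)}\|_2^2,
\]
and taking the supremum via the Sobolev embedding $H^2(\R^3)\hookrightarrow L^\infty(\R^3)$, which is bounded by Proposition \ref{t:pdes}(ii)--(iii). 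The sup bounds for $\dot p(k_t)$, $\dot r(k_t)$, $\dot{\mathrm{sh}}(k_t)$ follow by termwise differentiation of the absolutely convergent series defining them, bounding each slice using Cauchy-Schwarz as in (\ref{eq:CS-k}) together with the slice bounds for $k_t$ and $\dot k_t$ from Lemma \ref{l:kernels}(iii) and from the estimate just derived.

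The hard part is really the bookkeeping in (ii): the factor $N^2\nabla w$ is individually large, and the finiteness of $\|\nabla_1(\dot k_t\bar k_t)\|_2$ relies on a delicate interplay between this near-singular factor and the decay of $\bar k_t(z,y)$ in $z$, exactly mirroring the cancellation responsible for $\|\nabla_1(k_t\bar k_t)\|_2\le C$ in Lemma \ref{l:kernels}(i). No genuinely new estimate beyond Hardy and Proposition \ref{t:pdes} is required; the challenge lies in carefully tracking which Sobolev norm of $\varphi_t^{(N)}$ or its time derivatives is needed in each term, and verifying that these are all controlled by $\|\varphi\|_{H^4}$ through Proposition \ref{t:pdes}.
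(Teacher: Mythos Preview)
Your proposal is correct and follows essentially the same route as the paper's proof: Hardy's inequality on the explicit formula (\ref{eq:dtk}) for part (i), the reduction via Lemma \ref{l:bt}(v) to $\|\nabla_1(\dot k_t\bar k_t)\|_2$ and $\|\nabla_1(k_t\overline{\dot k}_t)\|_2$ for part (ii), and slice bounds via Sobolev embedding for part (iii). The only cosmetic difference is that for $\sup_x\|\dot p(k_t)(\cdot,x)\|_2$ the paper derives an explicit pointwise bound $|\dot p(k_t)(x,y)|\le C e^{K|t|}(|\ph(x)|+|\phdot(x)|)(|\ph(y)|+|\phdot(y)|)$ by isolating the outermost factors in each term of the series, whereas you propose iterating the Cauchy--Schwarz slice estimate $\sup_y\|(K_1K_2)(\cdot,y)\|_2\le\|K_1\|_2\sup_y\|K_2(\cdot,y)\|_2$; both arguments are equivalent in spirit and require exactly the same inputs.
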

The proof of the last two lemmas can also be found in Appendix \ref{sec:kernels}. 

\medskip

As explained in the introduction, we are going to approximate the many-body evolution
\[ e^{-i \cH_N t} W(\sqrt{N} \varphi ) T(k_0) \psi \]
of an initial state which is almost coherent, but with the correct short-scale structure, by the Fock state $W(\sqrt{N} \varphi_t^{(N)}) T(k_t) \psi$, which is again almost coherent and has again the correct microscopic correlations. This leads us to the fluctuation dynamics, defined as the two-parameter group of unitary transformations 
\begin{equation}\label{eq:cU} \cU (t;s) = T^* (k_t) W^* (\sqrt{N} \varphi_t^{(N)}) e^{-i\cH_N (t-s)} W(\sqrt{N} \varphi_s^{(N)}) T(k_s) \end{equation}
where $\cU (s;s) = 1$ for all $s \in \bR$. 

The fluctuation dynamics satisfies the Schr\"odinger-type equation
\[ i\partial_t \cU (t;s) = \cL_N (t) \cU (t;s) \]
with the time-dependent generator
\[ \begin{split} \cL_N (t) = \; &T^* (k_t) \left[ i \partial_t W^* (\sqrt{N} \varphi_t^{(N)}) \right] W(\sqrt{N} \varphi_t^{(N)}) T (k_t) \\ &+ T^* (k_t) W^* (\sqrt{N} \varphi_t^{(N)}) \cH_N W (\sqrt{N} \varphi_t^{(N)}) T (k_t) + \left[ i \partial_t T^* (k_t) \right] T(k_t).  \end{split} \]

The next theorem, whose proof is deferred to Section \ref{sec:gen-fd}, is the main technical ingredient of this paper. It contains important estimates for the generator $\cL_N (t)$, which will be used in the next section to control the growth of the expectation of the number of particles operator with respect to the fluctuation dynamics $\cU (t;s)$. 
\begin{thm}\label{thm:L}
Define the time-dependent constant (of order $N$)
\begin{equation}\label{eq:CNt} \begin{split} 
C_N (t) = \; &\cL^{(0)}_{0,N} (t) + \int dx dy \, |\nabla_x \text{sh} (k_t) (y,x)|^2  \\
 &+  \int dx dy (N^3 V(N.) * |\varphi_t^{(N)}|^2) (x) \, |\text{sh} (k_t) (y,x)|^2 \\
 &+ \int dx dy dz \, N^3 V(N (x-y)) \varphi_t^{(N)} (x) \overline{\varphi}_t^{(N)} (y) \text{sh} (\overline{k}_t) (z,x)  \, \text{sh} (k_t) (z,y)  \\
 &+ \text{Re } \int dx dy dz \, N^3 V(N (x-y)) \varphi_t^{(N)} (x) \varphi_t^{(N)} (y) \text{sh} (\overline{k}_t) (z,x) \text{ch} (k_t) (z,y)  \\
&+ \int dx dy N^2 V(N (x-y)) \\ &\hspace{1.5cm}  \times \left[ \left|\int dz \, \text{sh} (\overline{k}_t) (z,x) \, \text{ch} (z,y) \right|^2  + \left| \int dz \, \text{sh} (\overline{k}_t) (z,x) \, \text{sh} (z,y) \right|^2 \right. \\ &\hspace{2cm}  \left. 
+  \int dz_1 dz_2 \, \text{sh} (\overline{k}_t) (z_1,x) \, \text{sh} (k_t) (z_1,y)  \text{sh} (\overline{k}_t) (z_2,x) \, \text{sh} (k_t) (z_2,y) \right] 
\end{split} \end{equation}
and let
\begin{equation}\label{eq:wtcL} \wt{\cL}_N (t) = \cL_N (t) - C_N (t). \end{equation}
Then we have, for some $K > 0$ depending only on $\norm{\varphi}_{H^1}$,
\begin{equation}\label{eq:thmL-1} \wt{\cL}_N (t) \geq \frac{1}{2} \cH_N - C \, \frac{\cN^2}{N}  - C \ech{\| \varphi_t^{(N)} \|_{H^2}^2}{\ekt}  \left(\cN + 1\right) \end{equation}
and
\begin{equation}\label{eq:thmL-1b} \wt{\cL}_N (t) \leq \frac{3}{2} \cH_N + C \, \frac{\cN^2}{N}  +  C \ech{\| \varphi_t^{(N)} \|_{H^2}^2}{\ekt}  \left(\cN + 1\right). \end{equation}
Moreover, 
\begin{equation}\label{eq:thmL-2} \pm \left[ \cN , \wt{\cL}_N (t) \right]  \leq \cH_N + C \frac{\cN^2}{N} + C \ech{\| \varphi_t^{(N)} \|_{H^2}^2}{\ekt} (\cN+1) \end{equation}
and 
\begin{equation}\label{eq:thmL-3}
\begin{split} 
\pm \dot{\wt{\cL}}_N (t)  \leq \; &\cH_N + \ech{C \| \varphi_t^{(N)} \|_{H^4}^2 \, \frac{\cN^2}{N}  \\ &+ C \left( \| \varphi_t^{(N)} \|_{H^4}  \| \varphi_t^{(N)} \|_{H^2} + \| \varphi_t^{(N)} \|_{H^2}^3 \right) \left(\cN+1 \right)}{C \ekt \left(\frac{\Ncal^2}{N} + \Ncal + 1 \right)}.  
\end{split}
\end{equation}
\end{thm}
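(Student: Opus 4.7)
The generator decomposes as
\[
\cL_N(t) = T^*(k_t)\bigl[(i\partial_t W^*(\sqrt{N}\ph))W(\sqrt{N}\ph) + W^*(\sqrt{N}\ph)\cH_N W(\sqrt{N}\ph)\bigr]T(k_t) + (i\partial_t T^*(k_t))T(k_t),
\]
so the plan is to expand each piece explicitly, collect terms by operator order, extract the scalar $C_N(t)$, and estimate what remains. First I would apply the Weyl shift (\ref{eq:W3}) inside $\cH_N$, replacing every $a_x$ by $a_x + \sqrt{N}\ph(x)$; this produces a polynomial in creation and annihilation operators with contributions of orders $N^2$, $N^{3/2}$, $N$, $N^{1/2}$, and $1$. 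Combined with the Weyl-derivative term and the modified Gross--Pitaevskii equation (\ref{eq:mod-GP}), the order-$N^2$ constant becomes the GP-type scalar $\cL_{0,N}^{(0)}(t)$, while the order-$N^{3/2}$ linear part collapses to the residue (\ref{eq:lin-N}) proportional to $w(N\cdot) = 1-f(N\cdot)$.

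Next I would conjugate the resulting polynomial by $T(k_t)$, applying Lemma \ref{l:bt}(ii) to each $a(f) \mapsto a(\text{ch}(k_t)f) + a^*(\text{sh}(k_t)\overline{f})$. The key manipulation is then \emph{normal ordering}: the CCR $[a_x,a_y^*] = \delta(x-y)$ converts contracted pairs into kernel integrals such as $\int dz\, \text{sh}(\overline{k}_t)(z,x)\,\text{sh}(k_t)(z,y)$, lowering the operator degree by two. The scalars generated by these contractions, together with the $N^2$-constant from Stage 1, form precisely the expression $C_N(t)$ in (\ref{eq:CNt}). The crucial algebraic cancellation is the following: the linear residue (\ref{eq:lin-N}) is compensated by the linear contributions produced when normal-ordering the conjugated cubic term of $\cH_N$, because $k_t(x,y) = -Nw(N(x-y))\ph(x)\ph(y)$ has been chosen with exactly the factor needed to convert the cubic contraction into the negative of (\ref{eq:lin-N}). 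A parallel mechanism removes the dangerous parts of the quadratic contributions against the non-normally-ordered piece of the conjugated quartic. Finally, $(i\partial_t T^*(k_t))T(k_t)$ contributes quadratic remainders involving $\dot{k}_t$, to be controlled through Lemma \ref{lm:dotk}.

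After these cancellations, $\widetilde{\cL}_N(t)$ splits into a positive part—the transported kinetic term $T^*(k_t)\cK T(k_t)$ together with the surviving positive conjugated quartic, which jointly reproduce at least $\tfrac12\cH_N$ (and at most $\tfrac32\cH_N$)—plus lower-order remainders. The quadratic remainders $\int dxdy\, K(x,y)a_x^*a_y$, with kernels $K$ built from $p(k_t), r(k_t)$, convolutions with $N^3 V(N\cdot)$, etc., admit an operator-norm bound $\|K\|_{\text{op}} \le C\|\ph\|_{H^2}^2$ via Lemma \ref{l:kernels}(\ref{sup}) and are therefore bounded by $C\|\ph\|_{H^2}^2(\cN+1)$. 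The cubic remainders, of the form $N^{-1/2}\int N^3 V(N(x-y))\cdots a^*a^*a$, are split by Cauchy--Schwarz into a small fraction of $\cH_N$ plus a term bounded by $C\cN^2/N$. Summing yields (\ref{eq:thmL-1}) and (\ref{eq:thmL-1b}). For the commutator bound (\ref{eq:thmL-2}), only the pieces with unmatched creation/annihilation counts survive, i.e.\ the linear, cubic and off-diagonal quadratic parts, and they obey the same estimates. For the time-derivative bound (\ref{eq:thmL-3}), the derivative falls on $\ph$, $k_t$, or $\text{sh}/\text{ch}(k_t)$; Proposition \ref{t:pdes}(iii) combined with Lemma \ref{lm:dotk} supplies the required $e^{K|t|}$ growth through $\|\ph\|_{H^4}$-type Sobolev norms.

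The main obstacle is twofold: \emph{bookkeeping}, because after two successive unitary conjugations one must track many monomials of different degrees through normal ordering; and verifying the $O(\sqrt{N})$ \emph{linear cancellation}, which fails in the naive mean-field approach and is precisely the reason for the Bogoliubov dressing $T(k_t)$. Closely related is the need to exploit the singular scaling $N^3 V(N\cdot)f(N\cdot) \to 8\pi a_0 \delta$ with care—in the surviving quartic one keeps the full interaction (since $\cH_N$ is allowed on the right-hand side), whereas in the quadratic and cubic remainders one trades powers of $N$ for smallness via $L^1\cap L^3$ integrability of $V$, the pointwise bounds (\ref{eq:wN-bd}) on $w(N\cdot)$, and Sobolev embedding for $\ph$.
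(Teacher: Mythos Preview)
Your overall strategy---Weyl shift, then Bogoliubov conjugation, then normal ordering, then tracking the cancellations---is exactly the paper's route, and you have correctly identified the linear cancellation between (\ref{eq:lin-N}) and the contracted cubic. But there is a second cancellation that you describe only vaguely, and the way you describe it is not quite right; this is a genuine gap.

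You write that ``a parallel mechanism removes the dangerous parts of the quadratic contributions against the non-normally-ordered piece of the conjugated quartic,'' and then that ``the transported kinetic term $T^*(k_t)\cK T(k_t)$ together with the surviving positive conjugated quartic \ldots\ jointly reproduce at least $\tfrac12\cH_N$.'' This misses the role of the kinetic energy in the cancellation. Concretely: $T^*(k_t)\cK T(k_t)$ is \emph{not} $\cK$ plus a harmless remainder. After integrating by parts, it produces the large off-diagonal quadratic term
\[
N^3\int dx\,dy\,(\Delta w)(N(x-y))\,\ph(x)\ph(y)\,a_x^*a_y^* + \text{h.c.},
\]
which is of order $N$ and is neither positive nor controllable by $\cH_N$, $\cN^2/N$, or $(\cN+1)$. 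The contraction of the quartic gives $\tfrac12\int N^2 V(N(x-y))k_t(x,y)a_x^*a_y^* = -\tfrac12\int N^3 V(N(x-y))w(N(x-y))\ph(x)\ph(y)a_x^*a_y^*$, and the bare quadratic gives $\tfrac12\int N^3 V(N(x-y))\ph(x)\ph(y)a_x^*a_y^*$. Only when all \emph{three} are combined does one obtain
\[
N^3\int\Bigl[\Delta w + \tfrac12 V(1-w)\Bigr](N(x-y))\,\ph(x)\ph(y)\,a_x^*a_y^* = 0,
\]
precisely because $f=1-w$ solves the zero-energy scattering equation $(-\Delta+\tfrac12 V)f=0$. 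This three-way cancellation is the heart of the proof (see the passage between (\ref{eq:cLN-can1}) and (\ref{eq:L-fin}) in the paper), and it is what forces the specific choice of $k_t$ in (\ref{eq:kt-def0}). Your ``quadratic vs.\ quartic'' description captures only two of the three pieces, leaving an $O(N)$ term $\tfrac12\int N^3 V f\,\ph\ph\,a^*a^*$ unaccounted for. Once you include the kinetic contribution, the rest of your outline (Cauchy--Schwarz on cubics, kernel bounds on quadratics, $\dot k_t$ estimates for the time derivative) matches the paper's argument.
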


\section{Growth of fluctuations}
\label{s:growthoffluct}
The goal of this section is to prove a bound, uniform in $N$, for the growth of the expectation of the number of particles operator with respect to the fluctuation dynamics. The properties of the generator $\cL_N (t)$ of the fluctuation dynamics, as established in Theorem \ref{thm:L}, play here a crucial role.

\begin{thm}\label{thm:N}
Suppose $\psi \in \cF$ (possibly depending on $N$) with $\| \psi \|_\cF  =1$ is such that \begin{equation}\label{eq:ass-grw} \begin{split} \left\langle \psi , \left( \frac{\cN^2}{N} + \cN + \cH_N \right) \psi \right\rangle &\leq C  \end{split} \end{equation}
for a constant $C>0$. Let $\varphi \in H^4 (\bR^3)$, and let $\varphi_t^{(N)}$ be the solution of the modified Gross-Pitaevskii equation (\ref{eq:mod-GP}) with initial data $\varphi$. Let $\cU (t;s)$ be the fluctuation dynamics defined in (\ref{eq:cU}). Then there exist constants $C,c_1,c_2 > 0$ such that
\[ \langle \psi , \cU^* (t;0) \cN \cU (t;0) \psi \rangle \leq C \exp (c_1 \exp (c_2 |t|)). \] 
\end{thm}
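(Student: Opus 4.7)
The plan is a Gronwall estimate on a carefully chosen Lyapunov functional built from the four estimates in Theorem~\ref{thm:L}. For a sufficiently large constant $M>0$ (with $K$ as in Theorem~\ref{thm:L}), define
\[
\Psi(t) := \bigl\langle \cU(t;0)\psi,\, \bigl(\wt{\cL}_N(t) + M e^{K|t|}(\cN+1) + M\,\cN^2/N \bigr)\, \cU(t;0)\psi\bigr\rangle.
\]
The shifts are calibrated so that the lower bound \eqref{eq:thmL-1} absorbs the negative contributions and yields the coercivity
\[
\Psi(t) \geq \tfrac{1}{2}\langle \cU\psi,\cH_N\cU\psi\rangle + c\, e^{K|t|}\langle\cU\psi,(\cN+1)\cU\psi\rangle + c\langle\cU\psi,\cN^2/N\,\cU\psi\rangle;
\]
in particular $\langle\cU(t;0)\psi,\cN\,\cU(t;0)\psi\rangle \le C e^{-K|t|}\Psi(t)$. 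At $t=0$, combining the upper bound \eqref{eq:thmL-1b} with the initial hypothesis \eqref{eq:ass-grw} gives $\Psi(0)\le C$.

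Differentiating $\Psi$, and using $[\cL_N(t),\wt{\cL}_N(t)]=0$ since $C_N(t)$ is a scalar, the fluctuation Schr\"odinger equation gives
\[
\dot \Psi(t) = \langle\cU\psi,\dot{\wt{\cL}}_N(t)\cU\psi\rangle + MK e^{K|t|}\langle(\cN+1)\rangle - iM e^{K|t|}\langle [\cN,\cL_N]\rangle - \tfrac{iM}{N}\langle[\cN^2,\cL_N]\rangle,
\]
with all unmarked expectations taken in $\cU(t;0)\psi$. The first three summands are dominated directly via \eqref{eq:thmL-3} and \eqref{eq:thmL-2} by $\langle\cH_N\rangle + Ce^{K|t|}\langle\cN^2/N+\cN+1\rangle$, which by coercivity is at most $Ce^{K|t|}\Psi(t)$. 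For the last summand, expand $[\cN^2,\cL_N]=\cN[\cN,\cL_N]+[\cN,\cL_N]\cN$, set $Y := -i[\cN,\cL_N]$ and $D := \cH_N + C\cN^2/N + Ce^{K|t|}(\cN+1)$, so that \eqref{eq:thmL-2} reads $\pm Y\le D$. The polarization identity (if $\pm Y\le D$ for self-adjoint $Y$ and $D\ge 0$, then $|\Re\langle u,Yv\rangle|\le \tfrac{1}{2}(\langle u,Du\rangle+\langle v,Dv\rangle)$), applied with $u=(\cN/\sqrt{N})\cU\psi$ and $v=(1/\sqrt{N})\cU\psi$, together with the crucial commutation $[\cN,D]=0$ (since $\cH_N$ commutes with $\cN$), yields
\[
\bigl|\tfrac{1}{N}\tfrac{d}{dt}\langle\cN^2\rangle\bigr| \le \tfrac{1}{N}\langle \cN^2 D\rangle + \tfrac{1}{N}\langle D\rangle.
\]

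The quantity $\tfrac{1}{N}\langle \cN^2 D\rangle = \langle (\cN^2/N)\,D\rangle$ is the delicate term: its naive expansion produces moments like $\langle\cN^4/N^2\rangle$ which are not directly controlled by $\Psi$, and closing this estimate is the main technical obstacle. I expect to handle it by splitting $D$ into its pieces $\cH_N$, $C\cN^2/N$ and $Ce^{K|t|}(\cN+1)$ and using that the potentially dangerous high-$\cN$ contributions carry a compensating prefactor $1/N$ relative to the corresponding coercive lower bounds on $\Psi$---the same scaling pattern of the Gross--Pitaevskii regime that makes the whole analysis possible. Once this step is settled, one has $\dot\Psi(t)\le Ce^{K|t|}\Psi(t)$, and Gronwall's lemma gives $\Psi(t)\le \Psi(0)\exp(\tfrac{C}{K}(e^{K|t|}-1)) \le C\exp(c_1 e^{K|t|})$. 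Combining with the coercivity finally yields
\[
\langle \cU(t;0)\psi,\cN\,\cU(t;0)\psi\rangle \le Ce^{-K|t|}\Psi(t) \le C\exp(c_1\exp(c_2|t|)),
\]
which is the claim.
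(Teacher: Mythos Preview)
Your Gronwall scheme and your handling of $\dot{\wt{\cL}}_N$ and $[\cN,\cL_N]$ track the paper's argument closely, but the paragraph you flag as ``delicate'' is in fact a genuine gap, not a technicality. The term $\tfrac{1}{N}\langle \cN\,\cU\psi, D\,\cN\,\cU\psi\rangle$ expands into $\tfrac{1}{N}\langle \cN^2\cH_N\rangle$, $\tfrac{C}{N^2}\langle \cN^4\rangle$, and $\tfrac{Ce^{K|t|}}{N}\langle \cN^3\rangle$, none of which is dominated by $\Psi$: your functional controls only $\langle\cH_N\rangle$, $\langle\cN\rangle$, and $\langle\cN^2/N\rangle$. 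There is no ``extra $1/N$'' here relative to the coercive lower bound---for instance $\tfrac{1}{N^2}\langle\cN^4\rangle$ versus $\tfrac{1}{N}\langle\cN^2\rangle$ is one power of $\cN^2/N$ too many, and a priori nothing prevents $\cN$ from being of order $N$ along the evolution. Adding higher moments $\cN^{2k}/N^{2k-1}$ to $\Psi$ does not help either: each new moment generates, through the same polarization, a yet higher one, and the hierarchy never closes.

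The paper breaks this loop by an independent a priori estimate (Proposition~\ref{prop:apri}) which shows directly that
\[
\langle\cU(t;0)\psi,\cN^2\,\cU(t;0)\psi\rangle \le C\bigl(N\langle\cU(t;0)\psi,\cN\,\cU(t;0)\psi\rangle + N\langle\psi,(\cN+1)\psi\rangle + \langle\psi,(\cN+1)^2\psi\rangle\bigr).
\]
This is not proved by differential inequalities at all: one unfolds $\cU$ back to the full evolution $e^{-i\cH_N t}$, uses that $\cN$ commutes with $\cH_N$, and exploits the explicit shift action of the Weyl operator $W(\sqrt{N}\varphi_t^{(N)})$ on $\cN$ to trade $\cN^2$ at time $t$ for $N\cdot\cN$ at time $t$ plus harmless initial data. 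With this in hand, the $\cN^2/N$ term in \eqref{eq:thmL-1}--\eqref{eq:thmL-3} is already controlled by $C+C\langle\cN\rangle_t$, so one can drop $\cN^2/N$ from the Lyapunov functional entirely and run Gronwall on $\langle\wt{\cL}_N(t)+De^{K|t|}(\cN+1)\rangle$ alone, exactly as the paper does. Your argument becomes correct once you insert this proposition in place of the missing step.
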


The strategy to prove Theorem \ref{thm:N} consists in applying Gronwall's
inequality. The derivative of the expectation of $\cN$ is given by the
expectation of the commutator $i[\cN , \cL_N (t)]$, where $\cL_N(t)$ is the
generator \eqref{eq:cLN} of the fluctuation dynamics. By (\ref{eq:thmL-2}),
this commutator  is bounded in terms of the energy, of $(\cN+ 1)$, and of
$\cN^2/N$ (the difference between $\cL_N (t)$ and the generator $\wt{\cL}_N
(t)$ appearing in (\ref{eq:thmL-2}) is a constant and hence does not
contribute to the commutator). The growth of the energy is controlled with
the help of (\ref{eq:thmL-3}). What remains to be done in order to apply
Gronwall's inequality is to bound the term $\cN^2 / N$. In the next
proposition, we show that the expectation of $\cN^2/N$ at time $t$ can be
controlled by its expectation at time $t=0$ (a harmless constant, by the
assumption (\ref{eq:ass-grw})) and by the expectation of $(\cN+1)$ (which
fits well in the scheme of Gronwall's inequality). 

\begin{proposition} \label{prop:apri}
Let the fluctuation dynamics $\cU (t;s)$ be defined as in (\ref{eq:cU}). Then there exists a constant $C > 0$ such that
 \[ \cU^* (t;0) \N^2 \, \cU (t;0) \le C \big( N  \, \cU^* (t;0) \N \, U (t;0) + N (\N+1) + (\N+1)^2 \big).
 \]
\end{proposition}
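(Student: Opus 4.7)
The strategy is to exploit unitarity of $\cU(t;0)$, the conservation $[\cH_N,\cN]=0$, and the explicit structure of $\widetilde W_t := W(\sqrt N\,\ph)\,T(k_t)$. Since $\cU$ is unitary, $\cU^*(t;0)\cN^2\cU(t;0) = (\cU^*(t;0)\cN\cU(t;0))^2$, so the claimed inequality is equivalent to
\[
\|\cN\,\cU(t;0)\psi\|^2 \;\leq\; C\bigl(N\,\|\cN^{1/2}\cU(t;0)\psi\|^2 + N\langle\psi,(\cN+1)\psi\rangle + \langle\psi,(\cN+1)^2\psi\rangle\bigr)
\]
for every $\psi\in\cF$. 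Setting $\tilde\psi_t := \cU(t;0)\psi$ and $\Psi_t := \widetilde W_t\tilde\psi_t = e^{-i\cH_N t}\widetilde W_0\psi$, unitarity of $\widetilde W_t^*$ identifies $\|\cN\tilde\psi_t\|^2 = \|S_t\Psi_t\|^2$, where $S_t := \widetilde W_t\cN\widetilde W_t^*$. Conservation $[\cH_N,\cN]=0$ allows one to reduce $\cN$-moments of $\Psi_t$ to $\cN$-moments of $\Psi_0 = W(\sqrt N\varphi)T(k_0)\psi$, and thence, via Lemmas \ref{l:W} and \ref{l:bt}, to the initial-data quantities $\langle\psi,(\cN+1)^k\psi\rangle$ for $k=1,2$.

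The next step is to decompose, using Lemma \ref{l:W} and Lemma \ref{l:bt},
\[
S_t \;=\; \cN + N - \sqrt N\,\phi(\ph) + \hat d_t, \qquad \hat d_t := W(\sqrt N\,\ph)\bigl[T(k_t)\cN T^*(k_t) - \cN\bigr] W^*(\sqrt N\,\ph),
\]
with $\hat d_t$ a quadratic expression in $a,a^*$ whose Hilbert--Schmidt kernels are controlled uniformly in $t$ by Lemma \ref{l:kernels}. The analogous expansion of $\cU^*\cN\cU$, combined with $[\cH_N,\cN]=0$, reads $\cU^*\cN\cU = \cN + 2N + \sqrt N\,L_t + Q_t$ with $L_t$ linear and $Q_t$ quadratic in $a,a^*$, both controlled by Bogoliubov estimates. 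Squaring yields a scalar contribution $\sim 4N^2$ that is exactly matched by $N\cU^*\cN\cU$ (which contains a $2N^2$ scalar) for any $C\geq 2$; analogous matching happens at orders $N^{3/2}$ (cross-terms with the linear $L_t$) and $N$ (squares of linear operators and cross-terms with $Q_t$).

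The key algebraic step, obtained by pushing $W(\sqrt N\,\ph)$ past $\cN$ via $W(f)\cN = (\cN - \phi(f) + \|f\|^2)W(f)$, is the cancellation
\[
\bigl(\cN - \sqrt N\,\phi(\ph)\bigr)\Psi_t \;=\; \widetilde W_t\bigl[(\cN - N) + d'_t\bigr]\tilde\psi_t, \qquad d'_t := T^*(k_t)\cN T(k_t) - \cN,
\]
which absorbs the $O(\sqrt N)$-linear coherent contribution into the fluctuation $(\cN-N)$ seen in the transformed frame. Together with the companion identity $\hat d_t\Psi_t = -\widetilde W_t\,d'_t\tilde\psi_t$ and the Bogoliubov bounds $\|d_t\phi\|, \|d'_t\phi\|\leq C\|(\cN+1)\phi\|$ from Lemma \ref{l:bt}, $\|S_t\Psi_t\|^2$ is expanded into pieces that, after Cauchy--Schwarz with Peter--Paul splitting, separate into a piece proportional to $N\|\cN^{1/2}\tilde\psi_t\|^2 = N\langle\psi,\cU^*\cN\cU\psi\rangle$ and remainders bounded through conservation $\|(\cN+1)^k\Psi_t\|^2 = \|(\cN+1)^k\Psi_0\|^2$ by the initial-data quantities $\langle\psi,(\cN+1)^k\psi\rangle$.

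The main obstacle is the bookkeeping of the $N$-scale of each contribution. A crude $\|A+B\|^2 \leq 2\|A\|^2 + 2\|B\|^2$ applied to the decomposition $S_t\Psi_t = \widetilde W_t[\cN + d'_t]\tilde\psi_t + \hat d_t\Psi_t$ loses the cancellation between the coherent-Weyl shift and the $O(\sqrt N)$-linear $\phi(\ph)$-contribution, producing only the cruder estimate $\cU^*\cN^2\cU \leq C(\cN+1)^2 + CN(\cN+1) + CN^2\cdot I$, which is strictly weaker than the proposition. One must instead track the interference between $\widetilde W_t[\cN + d'_t]\tilde\psi_t$ and $\hat d_t\Psi_t$ — which share the common factor $W(\sqrt N\,\ph)$ and exhibit algebraic cancellation — in order to obtain the coefficient $N$, rather than $N^2\cdot I$, in front of $\cU^*\cN\cU$ on the right-hand side.
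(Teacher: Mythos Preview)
Your proposal identifies the correct ingredients (unitarity of $\cU$, the conservation law $[\cH_N,\cN]=0$, the Weyl shift, and Lemma~\ref{lm:TNT}), but the assembly has a genuine gap.

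The claimed expansion $\cU^*\cN\cU = \cN + 2N + \sqrt N\,L_t + Q_t$ with $L_t$ linear and $Q_t$ quadratic in $a,a^*$ cannot be justified: the many-body evolution $e^{\pm i\cH_N t}$ does not preserve polynomial degree in creation and annihilation operators, because $\cH_N$ contains a quartic interaction. Only the $\cN$-part of $S_t$ commutes through $e^{-i\cH_N t}$; the $\sqrt N\,\phi(\ph)$ and Bogoliubov pieces do not, so after conjugation they are no longer linear/quadratic. Consequently the ``scalar matching'' argument (extracting $4N^2$ from the square and matching it against $2N^2$ from $N\,\cU^*\cN\cU$) has no basis. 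Moreover, your ``key algebraic step'' is a tautology: combining $(\cN-\sqrt N\,\phi)\Psi_t = \widetilde W_t[(\cN-N)+d'_t]\tilde\psi_t$ with $\hat d_t\Psi_t = -\widetilde W_t d'_t\tilde\psi_t$ and $N\Psi_t = N\widetilde W_t\tilde\psi_t$ just reproduces $S_t\Psi_t = \widetilde W_t\cN\tilde\psi_t$, which is the definition of $S_t$. You yourself note that the only inequality your decomposition actually yields has $CN^2\cdot I$ on the right, not $CN\,\cU^*\cN\cU$---and that is indeed too weak for the Gronwall argument in Theorem~\ref{thm:N}.

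The paper's proof avoids this by never attempting to expand $\cU^*\cN\cU$ as a polynomial. Instead it first applies Lemma~\ref{lm:TNT} to pass from $\cN^2$ to $T^*(k_t)\cN^2 T(k_t)$, which strips the Bogoliubov factor from $\cU$ and reduces to estimating $\|\cN\, T(k_t)\cU\psi\|^2$ with $T(k_t)\cU(t;0)=W^*(\sqrt N\,\ph)e^{-i\cH_N t}W(\sqrt N\,\varphi)T(k_0)$. Then it writes $\cN^2=\cN\cdot\cN$ and transports \emph{one} factor of $\cN$ back to time $0$: pull $\cN$ through $W^*(\sqrt N\,\ph)$ to get $\cN-\sqrt N\,\phi(\ph)+N$, commute the $\cN$ piece through $e^{-i\cH_N t}$ via $[\cH_N,\cN]=0$, pull it through $W(\sqrt N\,\varphi)$ to get $\cN+\sqrt N\,\phi(\varphi)+N$, and move the $\phi(\ph)$ piece back to the left of $W^*$. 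The three scalar $N$-contributions cancel exactly, leaving
\[
\|\cN T(k_t)\cU\psi\|^2 \;\le\; \|\cN T(k_t)\cU\psi\|\,\bigl(\|\cN T(k_0)\psi\| + \sqrt N\,\|\phi(\varphi)T(k_0)\psi\| + \sqrt N\,\|\phi(\ph)T(k_t)\cU\psi\|\bigr).
\]
Now one can absorb $\tfrac12\|\cN T(k_t)\cU\psi\|^2$ into the left side and use $\|\phi(f)\xi\|\le C\|(\cN+1)^{1/2}\xi\|$ together with Lemma~\ref{lm:TNT}. This ``bootstrap'' step---getting the same quantity on both sides of a Cauchy--Schwarz estimate and subtracting---is the idea your proposal is missing.
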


The next lemma is useful in the proof of Proposition \ref{prop:apri}.
\begin{lemma} \label{lm:TNT}
Let $k_t \in L^2(\R^3 \times \R^3)$ be as defined in (\ref{eq:kt}). Then there exists a constant $C$, depending only on $\| k_t \|_2$, such that
\begin{align}
    T^* (k_t)  \, \N \, T (k_t) & \le C (\N+1), \label{eq:TNT} \\
    T^* (k_t) \, \N^2 \, T (k_t) & \le C (\N+1)^2 \label{eq:TN2T} 
  \end{align}
for all $t \in \bR$.
\end{lemma}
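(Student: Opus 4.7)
The proof rests on the pointwise action of the Bogoliubov transformation $T(k_t)$ on the distribution $a_x$: taking $f = \delta_x$ and $g = 0$ in Lemma \ref{l:bt}(ii) and using $\text{ch}(k_t) = 1 + p(k_t)$, one obtains, for almost every $x \in \bR^3$,
\[ T^*(k_t) a_x T(k_t) = a_x + a(p(k_t)(\cdot, x)) + a^*(\text{sh}(k_t)(\cdot, x)), \]
where $p(k_t)(\cdot, x)$ and $\text{sh}(k_t)(\cdot, x)$ denote the functions $y \mapsto p(k_t)(y, x)$ and $y \mapsto \text{sh}(k_t)(y, x)$. Their $x$-integrated squared $L^2(\bR^3)$-norms equal the Hilbert-Schmidt norms of $p(k_t)$ and $\text{sh}(k_t)$, both bounded by $e^{2\|k_t\|_2}$ via Lemma \ref{l:bt}(iii).

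For \eqref{eq:TNT}, I would compute, using unitarity of $T(k_t)$,
\[ \langle \psi, T^*(k_t) \cN T(k_t)\psi\rangle = \int dx \, \|a_x T(k_t) \psi\|^2 = \int dx \, \|(T^*(k_t) a_x T(k_t))\psi\|^2, \]
then insert the Bogoliubov identity above, expand the square, and apply Lemma \ref{l:a} to each creation/annihilation operator. The contributions involving $p(k_t)(\cdot, x)$ and $\text{sh}(k_t)(\cdot, x)$ integrate in $x$ to Hilbert-Schmidt norms $\|p(k_t)\|_2^2$ and $\|\text{sh}(k_t)\|_2^2$ times $\langle \psi, (\cN+1)\psi\rangle$, and Lemma \ref{l:bt}(iii) closes the bound.

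For \eqref{eq:TN2T}, the key observation is that unitarity of $T(k_t)$ yields the operator identity $T^*(k_t) \cN^2 T(k_t) = (T^*(k_t) \cN T(k_t))^2$, and therefore
\[ \langle \psi, T^*(k_t) \cN^2 T(k_t)\psi\rangle = \|T^*(k_t) \cN T(k_t) \psi\|^2. \]
Writing $T^*(k_t) \cN T(k_t) = \cN + R$, with
\[ R = \int dx \, [a_x^* B_x + B_x^* a_x + B_x^* B_x], \quad B_x = a(p(k_t)(\cdot, x)) + a^*(\text{sh}(k_t)(\cdot, x)), \]
the task reduces to showing $\|R\psi\| \leq C \|(\cN+1)\psi\|$. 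Expanding, $R$ is a finite sum of operators of three types: number-preserving second quantizations $\mathrm{d}\Gamma(A)$ (with $A$ a product of $p(k_t)$ and/or $\text{sh}(k_t)$), pair creations $\int K(x,y) a_x^* a_y^*\, dxdy$, and pair annihilations, with kernels $K \in L^2(\bR^6)$ bounded by products of the Hilbert-Schmidt norms from Lemma \ref{l:bt}(iii). The main obstacle is the pair creation/annihilation terms: iterating Lemma \ref{l:a} after an SVD decomposition only gives a trace-norm bound, which is too weak. Instead, one performs a direct computation on each $n$-particle sector, yielding $\|\int K(x,y) a_x^* a_y^*\, dxdy \, \psi^{(n)}\|^2 \leq (n+1)(n+2)\|K\|_2^2 \|\psi^{(n)}\|^2$; summation over $n$ then gives $C \|K\|_2^2 \|(\cN+1)\psi\|^2$ (since $(n+1)(n+2)\leq 2(n+1)^2$), which combined with the analogous bound for the $\mathrm{d}\Gamma$ pieces closes the estimate.
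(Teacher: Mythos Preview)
Your argument for \eqref{eq:TNT} is essentially identical to the paper's: both conjugate $a_x$ by $T(k_t)$, write the result as $a_x + a(p_x) + a^*(s_x)$, and integrate the squared norm using Lemma~\ref{l:a} together with the Hilbert--Schmidt bounds from Lemma~\ref{l:bt}(iii).

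For \eqref{eq:TN2T} you take a genuinely different route. The paper does not square $T^*(k_t)\cN T(k_t)$; instead it writes $\cN^2 = \int dx\, a_x^* \cN a_x + \cN$, conjugates, and applies the already--established bound \eqref{eq:TNT} to the vector $(a(c_x)+a^*(s_x))\psi$, then commutes $(\cN+1)^{1/2}$ past a single creation/annihilation operator. This bootstrapping is shorter and generalizes immediately to higher powers of $\cN$. Your approach --- decomposing $R = T^*(k_t)\cN T(k_t) - \cN$ into $d\Gamma$ pieces, pair creation/annihilation with $L^2$ kernels, and a harmless constant $\|\text{sh}(k_t)\|_2^2$ (which you should mention explicitly), then bounding each in operator norm against $(\cN+1)$ via a sector-wise calculation --- is correct and more explicit about the structure of $R$, but requires enumerating and estimating more terms. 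Both methods yield constants depending only on $\|k_t\|_2$.
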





\begin{proof}
We use the decomposition $\text{ch} (k_t) = 1 + p(k_t)$ and the shorthand notation $c_x(z) = \text{ch} (k_t) (z,x)$, $p_x(z) = p(k_t) (z,x)$ and $s_x(z) = \text{sh} (k_t) (z,x)$. We have
 \[\begin{split} 
  \langle \psi, T^* (k_t)  \N T (k_t) \psi \rangle = \; & \int dx \, \langle  \psi, (a^* (c_x) + a (s_x)) (a (c_x) + a^* (s_x)) \psi \rangle \\ = \; &\int dx \, \| (a_x + a(p_x) + a^*(s_x)) \psi
  \|^2 \\  \leq \; &C  \int dx \, \| a_x \psi \|^2
    + \int dx \, \| a(p_x) \psi \|^2 + \int dx \, \| a^*(s_x) \psi \|^2 \\
    \leq \; &C ( 1 + \| p (k_t) \|_2^2 + \| \text{sh} (k_t) \|_2^2 ) \| (\N + 1)^{1/2}  \psi \|,
      \end{split}\]
and (\ref{eq:TNT}) follows by Lemma \ref{l:bt} (since $\| p (k_t) \|_2, \| \text{sh} (k_t) \|_2 \leq e^{\| k_t \|_2})$. To prove (\ref{eq:TN2T}), we observe that 
\begin{align*}
 & \langle \psi, T^* (k_t) \N^2 T (k_t) \psi \rangle = \int dxdy \, \langle \psi, T^* (k_t)  a_x^* a_x a_y^* a_y T (k_t) \psi \rangle \\
    & = \int dx \, \langle \psi, T^* (k_t)  a_x^* \N a_x T (k_t) \psi \rangle + \langle
    \psi, T^* (k_t)  \N T (k_t) \psi \rangle \\
    & = \int dx \, \langle (a(c_x) + a^*(s_x)) \psi, T^* (k)  \N T (k) (a (c_x)  + a^*(s_x)) \psi \rangle + \langle \psi, T^* (k_t) \N T (k_t) \psi \rangle.
  \end{align*}
  Then, applying \eqref{eq:TNT},  we obtain
  \begin{align*}
    & \langle \psi, T^* (k_t) \N^2 T (k_t) \psi \rangle \\
    & \leq C \int dx \, \| (\N+1)^{1/2} (a (c_x) +
    a^*(s_x)) \psi \|^2 + C \langle \psi, (\N+1) \psi \rangle \\
    & \leq C \int dx \, (\| a_x \N^{1/2} \psi \|^2 + \| a(p_x)
    \N^{1/2} \psi \|^2 + \| a^*(s_x) (\N+2)^{1/2} \psi \|^2 ) + C \langle
    \psi, (\N+1) \psi \rangle \\
    & \leq C(1 + \| p (k_t) \|_{2}^2 + \| \text{sh} (k_t)  \|_{2}^2) \langle \psi, (\N+1)^2
    \psi \rangle.
  \end{align*}
The bounds from Lemma \ref{l:bt} imply \eqref{eq:TN2T}.
\end{proof}

\begin{proof}[Proof of Proposition \ref{prop:apri}]
{F}rom Lemma \ref{lm:TNT}, we find
\begin{equation}\label{ep4-0}   \langle \psi , \cU^* (t;0) \cN^2 \cU (t;0) \psi \rangle \leq C  \langle \psi , \cU^* (t;0) T^* (k_t) \cN^2 T (k_t) \cU (t;0) \psi \rangle. \end{equation}
We now show how to bound the r.h.s.\ of the last equation. Using the definition of the fluctuation dynamics $\cU (t;0) = T^* (k_t) W^* (\sqrt{N} \varphi_t^{(N)}) e^{-i\cH_N t} W^* (\sqrt{N} \varphi) T (k_0)$, we find 
\begin{equation}\label{ep4}
\begin{split}
\langle \psi, &\cU^* (t;0) T^* (k_t) \N^2 T (k_t) \cU (t;0) \psi \rangle 
     \\  = \; & \langle \cN \, T (k_t) \cU (t;0) \psi,  W^* (\sqrt{N} \varphi_t^{(N)}) \, (\cN - \sqrt{N} \phi (\varphi_t^{(N)}) + N) e^{-i\cH_N t}  W (\sqrt{N} \varphi) T (k_0) \psi \rangle \\
      = \; & \langle \cN \, T (k_t) \cU (t;0) \psi,  W^* (\sqrt{N} \varphi_t^{(N)}) \,  \cN e^{-i \cH_N t}
      W (\sqrt{N} \varphi) T (k_0) \psi \rangle \\ & - \sqrt{N} \, \langle \cN \, T (k_t) \cU (t;0) \psi,  W^* (\sqrt{N} \varphi_t^{(N)} )\, \phi (\varphi_t^{(N)}) e^{-i \cH_N t}  W (\sqrt{N} \varphi) T (k_0) \psi \rangle \\
      &+  N  \langle \psi,\cU^* (t;0) T^* (k_t) \cN T (k_t) \cU (t;0) \psi \rangle 
      \end{split} \end{equation}
where we used the notation $\phi (f) = a( f) + a^* (f)$, and the property (\ref{eq:W3}) to show that 
\[ W^* (\sqrt{N} \varphi_t^{(N)}) \cN W (\sqrt{N} \varphi_t^{(N)}) = \cN - \sqrt{N} \phi (\varphi_t^{(N)})  +N.  \]      
In the first term on the r.h.s.\ of (\ref{ep4}), we use now the fact that $\cN$ commutes with $\cH_N$. In the second term, on the other hand, we move the factor $\phi (\varphi_t^{(N)})$ back to the left of the Weyl operator $W (\sqrt{N} \varphi_t^{(N)})$, using that \[ W^* (\sqrt{N} \varphi_t^{(N)}) \phi (\varphi_t^{(N)}) = \left(\phi (\varphi_t^{(N)}) + 2 \sqrt{N} \right) \, W^* (\sqrt{N} \varphi_t^{(N)})\,. \] We conclude that
\begin{equation}\begin{split}
\langle \psi , &\cU^* (t;0) T^* (k_t) \cN^2 T(k_t) \cU (t;0) \psi \rangle \\ = \; & \langle \cN \, T (k_t) \cU (t;0) \psi,  W^* (\sqrt{N} \varphi_t^{(N)}) \,  e^{-i \cH_N t} W (\sqrt{N} \varphi) (\cN + \sqrt{N} \phi (\varphi) + N) \, T (k_0) \psi \rangle \\ &-  \sqrt{N} \, \langle \cN \, T (k_t) \cU (t;0) \psi,  \left(\phi (\varphi_t^{(N)}) + 2\sqrt{N} \right)  W^* (\sqrt{N} \varphi_t^{(N)}) \, e^{-i \cH_N t}  W (\sqrt{N} \varphi) T (k_0) \psi \rangle \\
 &+ N  \langle \psi, \cU^* (t;0)  T^* (k_t) \cN T (k_t) \cU (t;0) \psi \rangle \\
 = \; &  \langle \cN \, T (k_t) \cU (t;0) \psi,  W^* (\sqrt{N} \varphi_t^{(N)}) \,  e^{-i \cH_N t} W (\sqrt{N} \varphi)  \cN \, T (k_0) \psi \rangle \\ &+\sqrt{N} \langle \cN \, T (k_t) \cU (t;0) \psi,  W^* (\sqrt{N} \varphi_t^{(N)}) \,  e^{-i \cH_N t} W (\sqrt{N} \varphi)  \phi (\varphi) \, T (k_0) \psi \rangle \\
 &-\sqrt{N} \, \langle \cN \, T (k_t) \cU (t;0) \psi,  \phi (\varphi_t^{(N)})  W^* (\sqrt{N} \varphi_t^{(N)}) \, e^{-i \cH_N t}  W (\sqrt{N} \varphi) T (k_0) \psi \rangle.
\end{split} \end{equation}
By Cauchy-Schwarz, we obtain 
\[ \begin{split} 
\langle \psi , \cU^* (t;0) &T^* (k_t) \cN^2  T(k_t) \cU (t;0) \psi \rangle \\ \leq \; & \| \cN \, T (k_t) \cU (t;0) \psi \|\\ & \hspace{1cm} \times  \left(  \| \cN \, T (k_0) \psi \rangle \| + \sqrt{N} \| \phi (\varphi) \, T (k_0) \psi \| + \sqrt{N} \|  \phi (\varphi_t^{(N)}) T (k_t) \cU (t;0)  \psi \| \right) \\ \leq \; & \frac{1}{2} \| \cN \, T (k_t) \cU (t;0) \psi \|^2 \\ &+ C \left( \|
\cN \, T (k_0) \psi \|^2 + N \| \phi (\varphi) \, T (k_0) \psi \|^2 + N \|  \phi (\varphi_t^{(N)}) T (k_t) \cU (t;0)  \psi \|^2 \right). 
\end{split} \]
Subtracting the first term appearing on the r.h.s., and using the bound $\| \phi (f) \psi \| \leq \| f \|_2 \| (\cN+1)^{1/2} \psi \|$, we find that
\[ \begin{split} \langle \psi , & \cU^* (t;0) T^* (k_t) \cN^2  T(k_t) \cU (t;0) \psi \rangle \\ & \leq C \left( \|
\cN \, T(k_0)  \psi \|^2 + N \| (\cN+1)^{1/2} T(k_0) \psi \|^2 + N \| (\cN+1)^{1/2} T(k_t) \cU (t;0)  \psi \|^2 \right).  \end{split} \]
By Lemma \ref{lm:TNT} we conclude that 
\[ \begin{split} \langle \psi , \cU^* (t;0) T^* (k_t)  \cN^2 &T (k_t) \cU (t;0) \psi \rangle \\ \leq \; &C N  \| \cN^{1/2}  \cU (t;0) \psi \|^2 + C \| \cN \psi \|^2 + CN \| (\cN+1)^{1/2} \psi \|^2 . \end{split} \]
With (\ref{ep4-0}), this concludes the proof of the proposition.
\end{proof}

We are now ready to show Theorem \ref{thm:N}.

\begin{proof}[Proof of Theorem \ref{thm:N}]
Let $C_N (t)$ be defined as in (\ref{eq:CNt}) and define
\[ \wt{\cU} (t;s) = e^{i \int_s^t C_N (\tau) d\tau} \cU (t;s). \]
Then $\wt{\cU} (t;s)$ satisfies the Schr\"odinger type equation
\[ i\partial_t \wt{\cU} (t;s) = \wt{\cL}_N (t) \wt{\cU} (t;s), \quad \text{with } \wt{\cU} (s;s) = 1 \]
for all $s \in \bR$, and with generator $\wt{\cL}_N (t) = \cL_N (t) - C_N (t)$, as defined in (\ref{eq:wtcL}). On the other hand, since the two evolutions only differ by a phase, we have
\[ \langle \psi , \cU^* (t;0) \cN \cU (t;0) \psi \rangle = \langle \psi , \wt{\cU}^* (t;0) \cN \wt{\cU} (t;0) \psi \rangle. \]
We now use the properties of $\wt{\cL}_N(t)$, as established in (\ref{eq:thmL-1}), (\ref{eq:thmL-1b}), (\ref{eq:thmL-2}) and (\ref{eq:thmL-3}).\ech{To this end, we observe that, by the assumption $\varphi \in H^4 (\bR^3)$ and by Proposition \ref{t:pdes}, there exist constants $C,K > 0$ such that \begin{equation}\label{eq:varphi-bds}  \| \varphi_t^{(N)} \|_{H^4}^2 , \left(\| \varphi_t^{(N)} \|_{H^4} \| \varphi_t^{(N)} \|_{H^2} + \| \varphi_t^{(N)} \|_{H^2}^3 \right) \leq C e^{K |t|} \end{equation}
uniformly in $N$, and for all $t \in \bR$.}{} Eq.\ (\ref{eq:thmL-1}) implies \ech{therefore}{} 
\begin{equation}\label{eq:en-bd} \cH_N \leq 2 \wt{\cL}_N (t) + C \frac{\cN^2}{N} + C e^{K |t|} (\cN + 1). \end{equation}
{F}rom Proposition \ref{prop:apri}, we conclude that there exist a constant $C_1$ (depending on $\langle \psi , (\cN^2/ N + \cN + 1) \psi \rangle$\ech{and on the constant $C$ in (\ref{eq:varphi-bds})}{}), such that
\begin{equation}\label{eq:grw-1} \begin{split} 
0 \leq \Big\langle \psi , \wt{\cU}^* (t;0) &\cH_N \wt{\cU} (t;0) \psi \Big\rangle \leq \left\langle \psi , \wt{\cU}^* (t;0) \left( 2 \wt{\cL}_N (t) + C_1 e^{K|t|} (\cN + 1) \right) \wt{\cU} (t;0) \psi \right\rangle. \end{split} \end{equation}
{F}rom (\ref{eq:thmL-2}), combined with (\ref{eq:en-bd}) and Proposition \ref{prop:apri}, there exists moreover a constant $C_2 > 0$ (depending on $\langle \psi , (\cN^2/ N + \cN + 1) \psi \rangle$\ech{ and on the constant $C$ in (\ref{eq:varphi-bds})}{}) such that
\[  
\frac{d}{dt} \left\langle \psi, \wt{\cU}^* (t;0) \cN \wt{\cU} (t;0) \psi \right\rangle  \leq  \left\langle \psi , \wt{\cU}^* (t;0)  \left(2 \wt{\cL}_N (t) +C_2 e^{K|t|} (\cN + 1) \right)  \wt{\cU} (t;0) \psi \right\rangle.
 \]
We now estimate the growth of the expectation\done{} of the generator $\wt{\cL}_N (t)$. Using (\ref{eq:thmL-3}) together with (\ref{eq:en-bd}) and Proposition \ref{prop:apri}, we conclude that there exists a constant $C_3 >0$ (again, depending on $\langle \psi , (\cN^2/ N + \cN + 1) \psi \rangle$\ech{ and on the constant $C$ in (\ref{eq:varphi-bds})}{}), with 
\[ \begin{split}
\frac{d}{dt} \left\langle \psi, \wt{\cU}^* (t;0) \wt{\cL}_N (t) \wt{\cU} (t;0) \psi \right\rangle = \; & \left\langle \psi, \wt{\cU}^* (t;0) \dot{\wt{\cL}}_N (t) \, \wt{\cU} (t;0) \psi \right\rangle \\ \leq \; & \left\langle \psi , \wt{\cU}^* (t;0) \left( 2 \wt{\cL}_N (t) + C_3 e^{K|t|} (\cN + 1) \right) \wt{\cU} (t;0) \psi \right\rangle. 
\end{split} \]
We now fix $D := \max (C_1 + 1, C_2 , C_3 , K)$. Then, we have
\[ \begin{split} 
\frac{d}{dt} \Big\langle \psi , &\wt{\cU}^* (t;0) \left( \wt{\cL}_N (t) + D e^{K \lvert t\rvert} (\cN + 1) \right) \wt{\cU} (t;0) \psi \Big\rangle \\  \leq \; & \left\langle \psi, \wt{\cU}^* (t;0) \, \left[ (2 + 2D e^{K|t|}) \wt{\cL}_N (t) + (D e^{K |t|} +D K e^{K |t|} + D^2 e^{2K |t|}) (\cN + 1) \right] \wt{\cU} (t;0) \psi \right\rangle\\  \leq \; &(2 + D e^{K|t|}) \, \left\langle \psi, \wt{\cU}^* (t;0) \, \left( \wt{\cL}_N (t) + D e^{K|t|} (\cN + 1) \right) \wt{\cU} (t;0) \psi \right \rangle \\ \leq \; & 2D e^{K |t|} \left\langle \psi, \wt{\cU}^* (t;0) \, \left( \wt{\cL}_N (t) + D e^{K|t|} (\cN + 1) \right) \wt{\cU} (t;0) \psi \right\rangle. \end{split} \]
By Gronwall, we conclude that
\[ \begin{split}  \Big\langle \psi , \wt{\cU}^* &(t;0) \left( \wt{\cL}_N (t) + 
D e^{K \lvert t\rvert} (\cN + 1) \right) \wt{\cU} (t;0) \psi \Big\rangle \\ \leq \; & C e^{2 \frac{D}{K} e^{K |t|}} \, \left\langle \psi , \left( \wt{\cL}_N (0) + (\cN + 1)\right) \psi \right\rangle \\ \leq \; &  C e^{2 \frac{D}{K} e^{K |t|}} \left\langle \psi , \left( \frac{3}{2} \cH_N+ C \frac{\cN^2}{N} + C (\cN+ 1) \right) \psi \right\rangle \end{split} \]
where in the last inequality, we used the upper bound (\ref{eq:thmL-1b}). {F}rom the assumption (\ref{eq:ass-grw}), we obtain 
\[  \Big\langle \psi , \wt{\cU}^* (t;0) \left( \wt{\cL}_N (t) + 
D e^{K \lvert t\rvert} (\cN + 1) \right) \wt{\cU} (t;0) \psi \Big\rangle
\leq C \exp (c_1 \exp (c_2 |t|)). \]
The claim now follows from (\ref{eq:en-bd}), because $D \geq C_1 +1$.
\end{proof}

\section{Proof of the main theorem}

Using the bounds established in Theorem \ref{thm:N}, we proceed now to prove our main result.

\begin{proof}[Proof of Theorem \ref{thm:main}]
Let $\Psi_{N,t} = e^{-i\cH_N t} W(\sqrt{N} \varphi) T(k_0) \psi$. The one-particle reduced density of $\Psi_{N,t}$ has the integral kernel
\begin{equation}\label{eq:fin-G1} \Gamma_{N,t}^{(1)} (x;y) = \frac{1}{\langle \Psi_{N,t} , \cN \Psi_{N,t} \rangle} \langle \Psi_{N,t} , a_y^* a_x \Psi_{N,t} \rangle. \end{equation}
We start by computing the denominator. Since $\Hcal_N$ commutes with the number of particles operator, we find
\[ \begin{split} 
\langle \Psi_{N,t} , \cN \Psi_{N,t} \rangle = \; &\langle \psi,T^* (k_0) W^*
(\sqrt{N} \varphi) \cN W (\sqrt{N} \varphi)  T(k_0) \psi\rangle  \\ = \;
&\langle \psi, T^* (k_0) \left( \cN - \sqrt{N} \phi (\varphi) + N \right)
T(k_0) \psi \rangle \\ = \; &N + \left\langle \psi,  T^* (k_0) \left(\cN -
\sqrt{N} \phi(\varphi) \right) T(k_0) \psi \right\rangle. \end{split} \]
By Lemma \ref{lm:TNT}
\[  \langle \psi , T^* (k_0) \cN T(k_0) \psi \rangle \leq C \langle \psi , (\cN+1) \, \psi \rangle \leq C \]
and 
\[ \left| \langle \psi , T^*(k_0) \phi (\varphi) T(k_0) \psi \rangle \right| \leq C \langle \psi, T^* (k_0) (\cN+1)^{1/2} T(k_0) \psi \rangle \leq C \langle \psi , (\cN+1) \psi \rangle \leq C. \]
Hence, there exists $C>0$ with
\begin{equation}\label{eq:fin-G2} \left| \langle \Psi_{N,t} , \cN \Psi_{N,t} \rangle - N \right| \leq C N^{1/2}. \end{equation}
On the other hand, with $\varphi^{(N)}_t$ denoting the solution of the modified Gross-Pitaevskii equation (\ref{eq:mod-GP}), the numerator of (\ref{eq:fin-G1}) can be written as  
\[ \begin{split} 
\langle \Psi_{N,t}  &, a_y^* a_x \Psi_{N,t} \rangle \\ = \;  & \langle \psi, T(k_0) W (\sqrt{N} \varphi) e^{it \cH_N} a_y^* a_x e^{-i\cH_N t} W(\sqrt{N} \varphi) T(k_0) \psi \rangle \\  = \; & N \, \varphi_t (x) \overline{\varphi}_t (y) \\ &+  \sqrt{N} \, \varphi_t (x) \, \langle \psi, T^* (k_0) W^* (\sqrt{N} \varphi) e^{it \cH_N} \left(a_y^* - \sqrt{N} \overline{\varphi}^{(N)}_t (y)\right) e^{-i\cH_N t} W(\sqrt{N} \varphi) T(k_0) \psi \rangle \\  &+ \sqrt{N} \, \overline{\varphi}_t (y) \,  \langle \psi, T^* (k_0) W^* (\sqrt{N} \varphi) e^{it \cH_N} \left(a_x - \sqrt{N} \varphi^{(N)}_t (x) \right) e^{-i\cH_N t} W(\sqrt{N} \varphi) T(k_0) \psi \rangle \\ &+ \Big\langle \psi , T^* (k_0) W^* (\sqrt{N} \varphi) e^{it \cH_N} \left(a_y^* -\sqrt{N} \overline{\varphi}_t^{(N)} (y)\right) \\ &\hspace{5cm} \times \left(a_x - \sqrt{N} \varphi^{(N)}_t (x) \right) e^{-i\cH_N t} W(\sqrt{N} \varphi) T(k_0) \psi \Big\rangle. \end{split}\]
Recognising that
\[ \begin{split} (a_y^* - \sqrt{N} \overline{\varphi}^{(N)}_t (y)) &= W (\sqrt{N} \varphi^{(N)}_t) a_y^* W^* (\sqrt{N} \varphi_t^{(N)}) \\ (a_x - \sqrt{N} \varphi^{(N)}_t (x)) &= W (\sqrt{N} \varphi^{(N)}_t) a_x W^* (\sqrt{N} \varphi_t^{(N)}) \end{split} \]
we obtain
\[ \begin{split} \langle \Psi_{N,t} &, a_y^* a_x \Psi_{N,t} \rangle \\ = \; &N \varphi_t^{(N)} (x) \overline{\varphi}_t^{(N)} (y) \\ &+ \sqrt{N} \, \varphi^{(N)}_t (x) \, \langle \psi, T^* (k_0) W^* (\sqrt{N} \varphi) e^{it \cH_N} W (\sqrt{N} \varphi^{(N)}_t) \\ &\hspace{4cm} \times a_y^* W^* (\sqrt{N} \varphi^{(N)}_t) e^{-i\cH_N t} W(\sqrt{N} \varphi) T(k_0) \psi \rangle \\  &+ \sqrt{N} \, \overline{\varphi}^{(N)}_t (y)  \langle \psi, T^* (k_0) W^* (\sqrt{N} \varphi) e^{it \cH_N} W (\sqrt{N} \varphi^{(N)}_t) \\ &\hspace{4cm} \times a_x W^* (\sqrt{N} \varphi^{(N)}_t) e^{-i\cH_N t} W(\sqrt{N} \varphi) T(k_0) \psi \rangle \\ &+ \langle \psi , T^* (k_0) W^* (\sqrt{N} \varphi) e^{it \cH_N} W (\sqrt{N} \varphi^{(N)}_t) a_y^* a_x W^* (\sqrt{N} \varphi^{(N)}_t)  e^{-i\cH_N t} W(\sqrt{N} \varphi) T(k_0) \psi \rangle. \end{split}\]
Combining the last equation with (\ref{eq:fin-G2}) and inserting in (\ref{eq:fin-G1}), we have that
\[ \begin{split} 
\Big| \Gamma^{(1)}_{N,t} &(x;y) - \overline{\varphi}_t^{(N)} (y) \varphi_t^{(N)} (x) \Big| \\ \leq \; & \frac{C}{\sqrt{N}} |\varphi_t^{(N)} (x)| |\varphi_t^{(N)} (y)| \\ &+ \frac{1}{\sqrt{N}} |\varphi^{(N)}_t (y)| \, \| a_x \, W^* (\sqrt{N} \varphi^{(N)}_t)  e^{-i\cH_N t} W(\sqrt{N} \varphi) T(k_0) \psi  \| \\ &+ \frac{1}{\sqrt{N}} |\varphi^{(N)}_t (x)| \, \| a_y \, W^* (\sqrt{N} \varphi^{(N)}_t)  e^{-i\cH_N t} W(\sqrt{N} \varphi) T(k_0) \psi  \| \\ & + \frac{1}{N} 
 \| a_y \, W^* (\sqrt{N} \varphi^{(N)}_t)  e^{-i\cH_N t} W(\sqrt{N} \varphi) T(k_0) \psi  \|   \\ &\hspace{4cm} \times \| a_x \, W^* (\sqrt{N} \varphi^{(N)}_t)  e^{-i\cH_N t} W(\sqrt{N} \varphi) T(k_0) \psi  \|. \end{split} \] 
Taking the square and integrating over $x,y$, we find
\[ \left\| \Gamma^{(1)}_{N,t} - |\varphi^{(N)}_t \rangle \langle \varphi_t^{(N)}| \right\|_{\text{HS}} \leq \frac{C}{\sqrt{N}} \| (\cN + 1)^{1/2} \,  W^* (\sqrt{N} \varphi^{(N)}_t) e^{-i\cH_N t} W(\sqrt{N} \varphi) T(k_0) \psi \|^2. \]
Lemma \ref{lm:TNT} implies that
\[ \begin{split} \left\| \Gamma^{(1)}_{N,t} - |\varphi^{(N)}_t \rangle \langle \varphi_t^{(N)}| \right\|_{\text{HS}}  \leq \; &\frac{C}{\sqrt{N}} \,  \| (\cN + 1)^{1/2} \, T^* (k_t) W^* (\sqrt{N} \varphi^{(N)}_t) e^{-i\cH_N t} W(\sqrt{N} \varphi) T(k_0) \psi \|^2 \\ =\; &\frac{C}{\sqrt{N}} \| (\cN + 1)^{1/2} \, \cU (t;0) \psi \|^2  
\end{split}\]
with the fluctuation dynamics $\cU (t;s)$ defined in (\ref{eq:cU}).
{F}rom Theorem \ref{thm:N} we conclude that
\[ \left\| \Gamma^{(1)}_{N,t} - |\varphi^{(N)}_t \rangle \langle \varphi_t^{(N)}| \right\|_{\text{HS}} \leq \frac{C\exp (c_1 \exp (c_2 |t|))}{\sqrt{N}}. \]
Since $|\varphi^{(N)}_t\rangle \langle \varphi^{(N)}_t|$ is a rank-one projection, and $\Gamma_{N,t}^{(1)} \geq 0$, it follows that the difference $\Gamma_{N,t}^{(1)} - |\varphi^{(N)}_t \rangle \langle \varphi_t|$ can have only  one negative eigenvalue. Since the trace of $\Gamma_{N,t}^{(1)} - |\varphi_t \rangle \langle \varphi^{(N)}_t|$ vanishes, it must have one negative eigenvalue, with absolute value equal to the sum of all positive eigenvalues. As a consequence, the trace norm of the difference is controlled by the operator norm (given by the absolute value of the negative eigenvalue) and therefore also by the Hilbert-Schmidt norm. This shows that
\[  \tr \; \left| \Gamma^{(1)}_{N,t} - |\varphi^{(N)}_t \rangle \langle \varphi_t^{(N)}| \right| \leq \; \frac{C\exp (c_1 \exp (c_2 |t|))}{\sqrt{N}}. \]
Theorem \ref{thm:main} now follows because, if $\varphi_t$ denotes the solution of the Gross-Pitaevskii equation (\ref{eq:GP}), Proposition \ref{t:pdes} implies that
\[ \tr \; \left| |\varphi_t \rangle \langle \varphi_t| -  |\varphi^{(N)}_t \rangle \langle \varphi^{(N)}_t| \right| \leq 2 \| \varphi_t - \varphi_t^{(N)} \|_2 \leq \frac{C e^{c_1 |t|}}{N} \, .\qedhere \]
\end{proof}

\section{Key bounds on the generator of the fluctuation dynamics}
\label{sec:gen-fd}

In this section, we prove Theorem \ref{thm:L},  concerning the generator $\cL_N (t)$ of the fluctuation dynamics
\[  \cU (t;s) = T^* (k_t) W^* (\sqrt{N} \varphi_t^{(N)}) e^{-i\cH_N (t-s)} W(\sqrt{N} \varphi_s^{(N)}) T(k_s) \]
as defined in (\ref{eq:cU}). We write
\begin{equation}\label{eq:cLN} \begin{split}  \cL_N (t) = \; &T^* (k_t) \left[ i \partial_t W^* (\sqrt{N} \varphi_t^{(N)}) \right] W(\sqrt{N} \varphi_t^{(N)}) T (k_t) \\ &+ T^* (k_t) W^* (\sqrt{N} \varphi_t^{(N)}) \cH_N W (\sqrt{N} \varphi_t^{(N)}) T (k_t) + \left[ i \partial_t T^* (k_t) \right] T(k_t) \\
= \; &T^* (k_t) \cL^{(0)}_N (t) T (k_t) + \left[ i \partial_t T^* (k_t) \right] T(k_t) \end{split} \end{equation}
with 
\[ \cL_N^{(0)} (t) =  \left[ i \partial_t W^* (\sqrt{N} \varphi_t^{(N)}) \right] W(\sqrt{N} \varphi_t^{(N)}) + 
W^* (\sqrt{N} \varphi_t^{(N)}) \cH_N W (\sqrt{N} \varphi_t^{(N)}). \]
A simple computation shows that
\begin{displaymath}
 \left[ i \partial_t W^* (\sqrt{N} \varphi_t^{(N)}) \right] W(\sqrt{N}
 \varphi_t^{(N)})
= - a(\sqrt{N} i \partial_t \varphi_t^{(N)}) - a^*
 (\sqrt{N} i \partial_t \varphi_t^{(N)}) - N \scal{\ph}{i\partial_t \ph}.\done
\end{displaymath}
 On the other hand, using (\ref{eq:W3}), we find
 \[ \begin{split} W^* (\sqrt{N} &\varphi_t^{(N)}) \cH_N W (\sqrt{N} \varphi_t^{(N)})  \\
= \; & N \left[ \| \nabla \varphi_t^{(N)} \|_2^2 + \frac{1}{2} \int dx \, (N^3 V (N .) * |\varphi_t^{(N)}|^2) (x) |\varphi_t^{(N)} (x)|^2 \right] \\
&+ \sqrt{N} \left[  a^*\left( (N^3 V (N.) *| \varphi_t^{(N)}|^2) \varphi_t^{(N)} \right) + a\left( (N^3 V (N.) *| \varphi_t^{(N)}|^2) \varphi_t^{(N)} \right) \right] \\
&+ \left[ \int dx \, \nabla_x a_x^* \nabla_x a_x + \int dx \, (N^3 V (N.) * |\varphi_t^{(N)}|^2) (x) a_x^* a_x \right. \\ 
&\hspace{.5cm}+ \int dx dy N^3 V(N (x-y)) \varphi_t^{(N)} (x) \overline{\varphi}_t^{(N)} (y) a_x^* a_y \\
&\hspace{.5cm}+ \left.  \frac{1}{2} \int dx dy N^3 V(N (x-y)) \left( \varphi_t^{(N)} (x) \varphi_t^{(N)} (y) a_x^* a_y^* + 
\overline{\varphi}_t^{(N)} (x) \overline{\varphi}_t^{(N)} (y) a_x a_y \right) \right] \\
&+ \frac{1}{\sqrt{N}} \int dx dy \, N^3 V(N(x-y)) a_x^* \left( \varphi_t^{(N)} (y) a_y^* + \overline{\varphi}_t^{(N)} (y) a_y \right) a_x \\
&+\frac{1}{2N} \int dx dy N^3 V(N (x-y)) a_x^* a_y^* a_y a_x. \end{split} \]
Combining the last two equations and using (\ref{eq:mod-GP}), we conclude that
\[ \begin{split} \cL_N^{(0)} (t) = \; & 
N \int \di x\left( N^3 V(N.)\Big(\frac{1}{2}-f(N.)\Big)\ast \lvert \ph\rvert^2 \right)(x) \lvert \ph(x)\rvert^2 \done\\
&+ \sqrt{N}  \left[  a^*\left( (N^3
w(N.) V (N.) *| \varphi_t^{(N)}|^2) \varphi_t^{(N)} \right)
+ a\left( (N^3 w (N.)V (N.) *| \varphi_t^{(N)}|^2) \varphi_t^{(N)} \right) \right] \\
&+ \left[ \int dx \, \nabla_x a_x^* \nabla_x a_x + \int dx \, (N^3 V (N.) * |\varphi_t^{(N)}|^2) (x) a_x^* a_x \right. \\ 
&\hspace{.5cm}+ \int dx dy N^3 V(N (x-y)) \varphi_t^{(N)} (x) \overline{\varphi}_t^{(N)} (y) a_x^* a_y \\
&\hspace{.5cm}+ \left.  \frac{1}{2} \int dx dy N^3 V(N (x-y)) \left( \varphi_t^{(N)} (x) \varphi_t^{(N)} (y) a_x^* a_y^* + 
\overline{\varphi}_t^{(N)} (x) \overline{\varphi}_t^{(N)} (y) a_x a_y \right) \right] \\
&+ \frac{1}{\sqrt{N}}  \int dx dy \, N^3 V(N(x-y)) a_x^* \left( \varphi_t^{(N)} (y) a_y^* + \overline{\varphi}_t^{(N)} (y) a_y \right) a_x \\
&+\frac{1}{2N} \int dx dy N^3 V(N (x-y)) a_x^* a_y^* a_y a_x \\
=: \; & \cL_{0,N}^{(0)} (t) + \cL_{1,N}^{(0)} (t) + \cL_{2,N}^{(0)} (t) +  
\cL_{3,N}^{(0)} (t) + \cL_{4,N}^{(0)} 
\end{split} \]
where $\cL^{(0)}_{j,N} (t)$, for $j=0, \dots , 4$, is the part of $\cL_N^{(0)} (t)$ containing $j$ creation and annihilation operators. Recall here that $w (x) = 1 - f (x)$, as defined in (\ref{eq:wdef}). 

{F}rom (\ref{eq:cLN}), we find that the generator of the fluctuation dynamics is given by
\begin{equation}\label{eq:dec-cLN} \cL_N (t) =  \cL_{0,N}^{(0)} (t) +
\sum_{j=1}^4 T^* (k_t) \cL_{j,N}^{(0)} (t) T (k_t) +\left[ i \partial_t T^*
(k_t) \right] T(k_t). \end{equation}

In the next subsections, we study separately the different terms on the r.h.s.\ of (\ref{eq:dec-cLN}). The final goal of this analysis, a proof of Theorem \ref{thm:L}, will be reached in Subsection \ref{sub:thmL}.

\medskip

{\it Notation.} In the rest of this section, we will use the shorthand notation \begin{equation}\label{eq:def-csprx} 
c_x (y) = \text{ch} (k_t) (y,x), \;  s_x (y) = \text{sh} (k_t) (y,x), \; p_x (y) = p (k_t) (y,x), \; 
r_x (y) = r (k_t) (y,x). \end{equation}
Moreover, $\| p \|_2 , \| r \|_2, \|\text{sh} \|_2$ will denote the $L^2$-norms of the kernels $p (k_t) (x,y)$, $r (k_t) (x,y)$, and $\text{sh} (k_t) (x,y)$ over $\bR^3 \times \bR^3$ (in other words, they denote the Hilbert-Schmidt norms of the corresponding operators). The norms $\| p_x \|_2, \| r_x \|_2, \| \text{sh}_x \|_2$, on the other hand, indicate norms over $\bR^3$. Finally, the notation $\langle . , . \rangle$ will denote the $L^2$ inner product. We will abbreviate $T(k_t)$ by $T$.

\subsection{Analysis of $T^* \cL^{(0)}_{1,N} (t) T$}

Conjugating the linear term $\cL^{(0)}_{N,1} (t)$ produces again linear terms. {F}rom Lemma \ref{l:bt}, we obtain
\begin{equation}\label{eq:TL1T} \begin{split} T^* & \cL^{(0)}_{1,N} (t) T \\ = \; & \sqrt{N} \int dx dy \, N^3 V (N (x-y)) w (N (x-y)) |\varphi_t^{(N)} (y)|^2 \left( \varphi^{(N)}_t (x) T^* a^*_x T + \overline{\varphi}_t^{(N)} (x) T^* a_x T \right) \\ 
= \; &\sqrt{N} \int dx dy \, N^3 V (N (x-y)) w (N (x-y)) |\varphi_t^{(N)} (y)|^2 \, \varphi^{(N)}_t (x)  \, ( a^*(c_x) + a(s_x)) \\ &+ \sqrt{N} \int dx dy \, N^3 V (N (x-y)) w (N (x-y)) |\varphi_t^{(N)} (y)|^2 \, \overline{\varphi}_t (x)  \, ( a (c_x) + a^* (s_x)). \end{split} \end{equation}
These terms are potentially dangerous because they are large (of order $\sqrt{N}$) and do not commute with the number of particles. We will see however that they cancel with contributions arising from the cubic part $\cL_{3,N}^{(0)} (t)$.

\subsection{Analysis of $T^* \cL_{2,N}^{(0)} (t) T$}

We write $\cL_{2,N}^{(0)} (t) = \cK + \wt{\cL}_{2,N}^{(0)} (t)$, with \[ \cK = \int dx \, \nabla_x a^*_x \nabla_x a_x \] being the kinetic energy operator, and we consider separately the effects of $\cK$ 
and of the other quadratic terms collected in $\wt{\cL}_{2,N}^{(0)} (t)$. 

\subsubsection{Properties of $T^* \cK T$}

We have
\[ \begin{split} T^* \cK T = \; & \int dx \, \nabla_x (a^* (c_x) + a(s_x)) \, \nabla_x (a(c_x) + a^* (s_x)) \\
= \; & \int dx \nabla_x a^* (c_x) \nabla_x a(c_x) + \int dx \nabla_x a^* (c_x) \nabla_x a^* (s_x) \\ & + \int dx \nabla_x a(s_x) \nabla_x a(c_x) + \int dx \nabla a^* (s_x) \nabla_x a (s_x) + \int dx  \| \nabla_x s_x \|_2^2.
\end{split} \]
Following (\ref{eq:pr-def}), we decompose  $c_x (y) = \delta (x-y) + p_x (y)$ and $s_x (y) = k (x,y) + r_x (y)$. Hence
\begin{equation}\label{eq:TKT} 
\begin{split} 
T^* \cK T = \; & \cK + \int dx dy \, | \nabla_x \text{sh}_{k_t} (y,x)|^2 \\  & + \int dx \nabla_x a^*_x a (\nabla_x p_x) + \int dx a^* (\nabla_x p_x) \nabla_x a_x + \int dx a^* (\nabla_x p_x) a (\nabla_x p_x)  \\ &+ \int dx \nabla_x a_x^* a^* (\nabla_x k_x)  + \int dx a^* (\nabla_x p_x) a^* (\nabla_x k_x) +  \int dx \nabla_x a_x^* a^* (\nabla_x r_x)  \\ & + \int dx a^* (\nabla_x p_x) a^* (\nabla_x r_x) + \int dx \, a(\nabla_x k_x) \nabla_x a_x + \int dx a(\nabla_x r_x) \nabla_x a_x  \\ & + \int dx a(\nabla_x k_x) a (\nabla_x p_x) + \int dx a(\nabla_x r_x) a(\nabla_x p_x) + \int dx \, a^* (\nabla_x k_x) a (\nabla_x k_x)  \\ & + \int dx \, a^* (\nabla_x r_x) a (\nabla_x k_x) + \int dx \, a^* (\nabla_x k_x) a(\nabla_x r_x) + \int dx \, a^* (\nabla_x r_x) a(\nabla_x r_x). 
\end{split} 
\end{equation}

The properties of $T^* \cK T$ are summarized in the next proposition. 
\begin{proposition}\label{prop:TKT}
We have
\begin{equation}\label{eq:pr-K}
\begin{split}
T^* \cK T = \; &  \int dx dy | \nabla_x \text{sh}_{k_t} (y,x) |^2 + \cK  \\
&+ N^3 \int dx dy (\Delta w) (N (x-y)) \, \left(\varphi_t^{(N)} (x) \varphi_t^{(N)} (y)  a_x^* a^*_y + \overline{\varphi}_t^{(N)} (x) \overline{\varphi}_t^{(N)} (y)  a_x a_y  \right) \\
&+ \cE_K (t) 
\end{split} \end{equation}
where the error $\cE_K (t)$ is an operator such that\done{} for every $\delta > 0$ there exists a constant $C_\delta > 0$ with 
\begin{equation}\label{eq:TKT-cl} 
\begin{split} 
\pm \, \cE_K (t) &\leq \delta \cK + C_\delta \ech{\| \varphi_t^{(N)}
\|_{H^2}}{\ekt} \, (\cN + 1), \\
\pm \, \left[ \cN , \cE_K (t) \right] &\leq \delta \cK + C_\delta \ech{\, \|
\varphi_t^{(N)} \|_{H^2}}{\ekt} \, (\cN + 1), \\
\pm \dot{\cE}_K (t) &\leq \delta \cK + C_\delta \ech{\| \varphi_t^{(N)} \|^2_{H^3}}{\ekt} \,  (\cN+1). 
\end{split}
\end{equation}
\end{proposition}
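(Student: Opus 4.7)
The plan is to start from the explicit expansion of $T^*(k_t)\, \cK \, T(k_t)$ given by \eqref{eq:TKT}, which is obtained by applying Lemma \ref{l:bt}(ii) to $\cK = \int dx\, \nabla_x a_x^* \nabla_x a_x$, inserting the decompositions $\text{ch}(k_t) = 1 + p(k_t)$ and $\text{sh}(k_t) = k_t + r(k_t)$ in each creation/annihilation operator, and normal-ordering the result. The constant $\int dx\, dy\, |\nabla_x \text{sh}(k_t)(y,x)|^2$ in \eqref{eq:pr-K} arises from the commutators produced by normal-ordering the anomalous quartic pieces, while $\cK$ itself comes from the purely diagonal product $\nabla_x a_x^* \nabla_x a_x$. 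Our task is thus to identify, among the fifteen residual quadratic terms of \eqref{eq:TKT}, the two ``special'' terms $\int dx\, \nabla_x a_x^* \, a^*(\nabla_x k_x)$ and its hermitian conjugate, which together produce the second line of \eqref{eq:pr-K}, and to show that the remaining thirteen terms together with the sub-leading contributions produced in the special ones assemble into an error operator $\cE_K(t)$ satisfying \eqref{eq:TKT-cl}.

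The extraction of the $\Delta w$ contribution is accomplished via the integration-by-parts identity
\[
  \int dx\, \nabla_x a_x^* \, a^*(\nabla_x k_x) = -\int dx\, dy\, \Delta_x k_t(x,y)\, a_x^* a_y^*,
\]
obtained by interpreting $\nabla_x a_x^*$ distributionally and moving the gradient onto the smooth kernel $\nabla_x k_t(x,y)$. Using the explicit form $k_t(x,y)=-N w(N(x-y))\varphi_t^{(N)}(x)\varphi_t^{(N)}(y)$ and the product rule, the Laplacian $\Delta_x k_t$ splits into the principal contribution $-N^3 (\Delta w)(N(x-y))\varphi_t^{(N)}(x)\varphi_t^{(N)}(y)$ and two sub-leading pieces proportional to $N^2(\nabla w)(N(x-y))\cdot\nabla\varphi_t^{(N)}(x)\varphi_t^{(N)}(y)$ and $N w(N(x-y)) \Delta\varphi_t^{(N)}(x) \varphi_t^{(N)}(y)$. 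Adding the hermitian conjugate yields precisely the claimed $\Delta w$ term, with the sub-leading pieces absorbed into $\cE_K(t)$. Note that we deliberately do not invoke the zero-energy scattering equation $-\Delta w = Vf/2$ at this stage: the $\Delta w$ structure is retained so that it can later combine with analogous contributions arising from $T^*\cL^{(0)}_{4,N}(t)\, T$ elsewhere in the proof of Theorem \ref{thm:L}.

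The bounds on $\cE_K(t)$ follow from a uniform treatment of the thirteen non-special error terms of \eqref{eq:TKT} and the two sub-leading integration-by-parts contributions. Every such term has one of three structures, estimated as follows: (a) purely diagonal pieces $\int dx\, a^*(\nabla_x f_x)\, a(\nabla_x g_x)$ with $f,g \in \{p(k_t), r(k_t)\}$ are bounded by $\| \nabla_1 f \|_2 \| \nabla_1 g \|_2 \cdot \langle \psi, (\cN+1)\psi\rangle$ using Lemma \ref{l:a} together with the Hilbert-Schmidt estimates of Lemma \ref{l:kernels}; (b) terms carrying a bare gradient on some $a_x^{(*)}$, of the form $\int dx\, \nabla_x a_x^* \, a^{(*)}(\nabla_x h_x)$ with $h \in \{p(k_t), r(k_t)\}$, are estimated by Cauchy-Schwarz, isolating $\int \|\nabla_x a_x\psi\|^2 dx = \langle \psi, \cK \psi\rangle$ and using Young's inequality to produce $\delta\, \cK + C_\delta (\cN+1)$; (c) the sub-leading IBP pieces involving the singular rescaled kernels $N^2(\nabla w)(N\cdot)$ and $N w(N\cdot)$ multiplied by derivatives of $\varphi_t^{(N)}$ are controlled via the pointwise bound $N|w(N(x-y))| \le C/|x-y|$ from Lemma \ref{lm:w} and Hardy's inequality $\int |\varphi_t^{(N)}(y)|^2/|x-y|^2\, dy \le C \| \nabla \varphi_t^{(N)} \|_2^2$, which yields Hilbert-Schmidt norms of order $\| \varphi_t^{(N)} \|_{H^2}^2$. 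The factor $\| \varphi_t^{(N)} \|_{H^2}$ appearing in \eqref{eq:TKT-cl} originates precisely from the sub-leading piece $N w(N\cdot) \Delta \varphi_t^{(N)}(\cdot) \varphi_t^{(N)}(\cdot)$.

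The commutator bound $\pm[\cN, \cE_K(t)]$ is obtained by the same estimates, since $[\cN, \cdot]$ merely multiplies each anomalous quadratic term by $\pm 2$ and annihilates the diagonal ones. For the time derivative $\dot\cE_K(t)$, the kernels $\dot k_t, \dot p(k_t), \dot r(k_t)$ and their first spatial derivatives replace the static ones, and the analogous Hilbert-Schmidt and pointwise bounds of Lemma \ref{lm:dotk} together with Proposition \ref{t:pdes}(iii) yield the claimed exponential factor $\exp(K|t|)$ and the higher Sobolev dependence $\| \varphi_t^{(N)} \|_{H^3}^2$. The principal conceptual step is the integration by parts that reveals the $\Delta w$ structure responsible for the eventual emergence of the scattering length; the main technical burden is then the careful bookkeeping of the fifteen expansion terms and the sub-leading IBP pieces, and in particular ensuring that each individual estimate comes with only the claimed Sobolev norm of $\varphi_t^{(N)}$ rather than a higher one.
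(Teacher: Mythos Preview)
Your overall strategy matches the paper's, but there is a genuine gap in your term classification. Among the fifteen residual terms of \eqref{eq:TKT}, five involve the singular argument $\nabla_x k_x$ without a bare $\nabla_x a_x^{(*)}$: namely $\int dx\, a^*(\nabla_x p_x)\, a^*(\nabla_x k_x)$, its hermitian conjugate, the two mixed terms $\int dx\, a^*(\nabla_x k_x)\, a(\nabla_x r_x)$ and $\int dx\, a^*(\nabla_x r_x)\, a(\nabla_x k_x)$, and the purely $k$-diagonal term $\int dx\, a^*(\nabla_x k_x)\, a(\nabla_x k_x)$. None of these fit your categories (a), (b), or (c). Your estimate for (a) uses $\|\nabla_1 f\|_2 \|\nabla_1 g\|_2$ with $f,g\in\{p,r\}$, but by Lemma~\ref{l:kernels} one has $\|\nabla_1 k_t\|_2 \le C\sqrt{N}$, so the same bound applied with $f$ or $g$ equal to $k_t$ would produce a factor growing in $N$. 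The paper handles the four mixed terms via \eqref{eq:lmK-3}: one writes $\nabla_x k_t(y,x) = -\nabla_y k_t(y,x) + (\text{lower order})$, integrates by parts in $y$ to recover a $\nabla_y a_y^*$ that pairs with $\cK$, and controls the remainder by Hardy. The doubly-singular diagonal term is treated separately via \eqref{eq:lmK-4}, by recognising $\int dx\, \nabla_x k(y_1,x)\,\nabla_x \overline{k}(y_2,x)$ as a kernel $g(y_1,y_2)$ whose Hilbert--Schmidt norm is shown to be uniformly bounded using Hardy-type estimates on the four-fold convolution. Without these two mechanisms your bound on $\cE_K(t)$ is not uniform in $N$.

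A secondary issue: in your category~(c), the sub-leading piece $\int N^2(\nabla w)(N(x-y))\,\nabla\varphi_t^{(N)}(x)\,\varphi_t^{(N)}(y)\, a_x^* a_y^*$ is not controlled by the pointwise bound $Nw(N\cdot)\le C/|\cdot|$ and Hardy as you state; the relevant kernel is $N^2\nabla w(N\cdot)$, whose square is not $L^1$-controlled by $1/|\cdot|^2$. The paper integrates by parts once more in~$y$ to trade $N^2\nabla w$ for $Nw$ (at the cost of a $\nabla_y a_y^*$, giving $\delta\cK$, and a term with $\nabla\varphi_t^{(N)}(y)$, giving the $H^2$ dependence via \eqref{eq:lmK-sp}). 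An alternative that also works is to use the sharper bound $|N^2\nabla w(Nr)|\le CN/(N^2r^2+1)$ together with $\|\varphi_t^{(N)}\|_\infty\le C\|\varphi_t^{(N)}\|_{H^2}$, but you did not invoke either route.
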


To prove Proposition \ref{prop:TKT}, we will use the next lemma. 
\begin{lemma}\label{lm:K}
Let $j_1, j_2 \in L^2 (\bR^3 \times \bR^3)$. Let $j_{i,x} (z) := j_i (z,x)$ for $i =1,2$. Then we have
\begin{equation}\label{eq:lmK-1}
\left| \int dx \langle \psi , a^\sharp (j_{1,x}) a^\sharp (j_{2,x}) \psi \rangle \right| \leq C \| j_1 \|_2 \| j_2 \|_2  \, \| (\cN+1)^{1/2} \psi \|^2. 
\end{equation}
Here and in the following $a^\sharp$ can be either the annihilation operator $a$ or the creation operator $a^*$. Moreover, for every $\delta > 0$, there exists $C_\delta > 0$ such that 
\begin{equation}\label{eq:lmK-2}
\begin{split}
\left| \int dx \langle \psi , \nabla_x a_x^* a^\sharp (j_{1,x}) \psi \rangle \right| &\leq\delta \cK +  C_\delta \| j_1 \|^2_2  \| (\cN+1)^{1/2} \psi \|^2  \quad \text{and, by conjugation} \\
\left| \int dx \langle \psi , a^\sharp (j_{1,x}) \nabla_x a_x  \psi \rangle \right| &\leq \delta \cK +  C_\delta \| j_1 \|_2^2 \, \| (\cN+1)^{1/2} \psi \|^2. \end{split} \end{equation}
Terms where the argument of a creation and/or annihilation operator is the kernel $\nabla_x k_x$ (whose $L^2$-norm diverges as $N \to \infty$) can be handled with the following bounds. For every $\delta > 0$ there exists $C_\delta >0$ s.t. 
\begin{equation}\label{eq:lmK-3}
\begin{split}
\left| \int dx \langle \psi , a^* (\nabla_x k_x) a^\sharp (j_{1,x}) \psi \rangle \right| & \leq \delta \cK + C_\delta  (1+\| j_1 \|^2_2) \| (\cN+1)^{1/2} \psi \|^2 \quad \text{and, by conjugation} \\
\left| \int dx \langle \psi , a^\sharp (j_{1,x}) a (\nabla_x k_x)  \psi \rangle \right| & \leq \delta \cK + C_\delta (1+\| j_1 \|^2_2) \| (\cN+1)^{1/2} \psi \|^2.
\end{split}
\end{equation}
Moreover, we have
\begin{equation}\label{eq:lmK-4}
\left| \int dx \langle \psi , a^* (\nabla_x k_x) a (\nabla_x k_x) \psi
\rangle \right| \leq C \| \cN^{1/2} \psi \|^2.
\end{equation}
To control the time derivative of $\cE_K (t)$, we will also use the following bounds. For every $\delta > 0$ there exists $C_\delta > 0$ such that 
\begin{equation}\label{eq:lmK-5} \left| \int dx \langle \psi , a^* (\nabla_x \dot{k}_x) a^\sharp (j_{1,x}) \psi \rangle \right| \leq \delta \cK + C_\delta \ech{\, \| \varphi_t^{(N)} \|_{H^3}^2}{\ekt}(1+\| j_1 \|^2_2) \, \| (\cN+1)^{1/2} \psi \|^2 .
\end{equation}
Moreover,
\begin{equation}\label{eq:lmK-6} 
\left| \int dx \langle \psi , a^* (\nabla_x \dot{k}_x) a (\nabla_x k_x) \psi \rangle \right| \leq C \ech{\| \varphi_t^{(N)} \|_{H^3}^2}{\ekt}  \| \cN^{1/2} \psi \|^2. \end{equation}
\end{lemma}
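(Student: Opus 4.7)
The plan is to prove each bound by Cauchy--Schwarz combined with the basic estimate $\|a^\sharp(f)\psi\| \leq \|f\|_2 \|(\cN+1)^{1/2}\psi\|$ from Lemma \ref{l:a}, together with the immediate consequence $\int dx\, \|a^\sharp(j_x)\psi\|^2 \leq \|j\|_2^2 \|(\cN+1)^{1/2}\psi\|^2$, which follows since $\int \|j_x\|_2^2\,dx = \|j\|_2^2$. For \eqref{eq:lmK-1}, one Cauchy--Schwarz inside the inner product $\langle a^{\sharp *}(j_{1,x})\psi, a^\sharp(j_{2,x})\psi\rangle$ followed by another in $x$ immediately produces $\|j_1\|_2 \|j_2\|_2 \|(\cN+1)^{1/2}\psi\|^2$. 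For \eqref{eq:lmK-2}, I would factor the inner product as $\langle \nabla_x a_x\psi, a^\sharp(j_{1,x})\psi\rangle$, apply Cauchy--Schwarz in $x$ to generate $\|\cK^{1/2}\psi\|$ and $\|j_1\|_2\|(\cN+1)^{1/2}\psi\|$, and use Young's inequality to extract the $\delta\cK$ term.

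The heart of the proof is \eqref{eq:lmK-3}--\eqref{eq:lmK-4}, where naive Cauchy--Schwarz fails because $\|\nabla_1 k_t\|_2$ is of order $\sqrt{N}$ (Lemma \ref{l:kernels}). My main tool is the symmetry identity
\[
\nabla_x k_t(x,y) \;=\; -\nabla_y k_t(x,y) \;+\; G_t(x,y),
\]
where
\[
G_t(x,y) \;:=\; -Nw(N(x-y))\bigl[\nabla\varphi^{(N)}_t(x)\,\varphi^{(N)}_t(y) + \varphi^{(N)}_t(x)\,\nabla\varphi^{(N)}_t(y)\bigr]
\]
satisfies $\|G_t\|_2 \leq C\|\varphi^{(N)}_t\|_{H^1}^2$ by Hardy's inequality (using $|Nw(Nz)|\leq C/|z|$ from Lemma \ref{lm:w}). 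This identity is immediate from the chain rule since the $w$-factor in $k_t(x,y)$ depends on $x,y$ only through the difference $N(x-y)$, so $(\nabla_x+\nabla_y)k_t$ acts solely on the macroscopic factor $\varphi^{(N)}_t(x)\varphi^{(N)}_t(y)$. Integrating by parts in $y$ then produces the operator identity
\[
a^*(\nabla_x k_x) \;=\; \int dy\, k_t(x,y)\,\nabla_y a_y^* \;+\; a^*(G_t(x,\cdot)),
\]
trading the problematic derivative of $k_t$ for a kinetic-energy factor $\nabla_y a_y^*$ against the bounded kernel $k_t$. Substituting into \eqref{eq:lmK-3} and applying Cauchy--Schwarz (first in $y$ to produce $\|k_t(x,\cdot)\|_2\,\|\cK^{1/2}\psi\|$, then in $x$ to generate $\|k_t\|_2$ and $\|j_1\|_2\|(\cN+1)^{1/2}\psi\|$), followed by Young's inequality, yields the bound.

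For \eqref{eq:lmK-4}, applying the same identity to both the creation and annihilation factors produces a main cross term $\int dx\,dy\,dz\, k_t(x,y)\overline{k_t(x,z)}\,\nabla_y a_y^*\,\nabla_z a_z$, whose expectation is bounded via Cauchy--Schwarz by $\|k_t\|_2^2\,\|\cK^{1/2}\psi\|^2$, plus mixed and $G_t$-terms controlled by $C\|(\cN+1)^{1/2}\psi\|^2$; to reach the sharper $C\|\cN^{1/2}\psi\|^2$ stated in \eqref{eq:lmK-4} one recognizes the integral as the second quantization $d\Gamma\bigl((\nabla_1 k_t)^*(\nabla_1 k_t)\bigr)$ and uses $\langle\psi, d\Gamma(A)\psi\rangle\leq \|A\|_{\mathrm{op}}\langle\psi,\cN\psi\rangle$ for $A\geq 0$, with $\|\nabla_1 k_t\|_{\mathrm{op}}\leq C$ obtained from the same integration-by-parts identity at the operator level together with Hardy--Littlewood--Sobolev bounds on convolution with $Nw(N\cdot)$. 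Items \eqref{eq:lmK-5}--\eqref{eq:lmK-6} then follow by repeating the previous arguments with $\dot k_t$ in place of $k_t$ and invoking Lemma \ref{lm:dotk}; the analogous identity $\nabla_x\dot k_t = -\nabla_y\dot k_t + \dot G_t$ introduces the additional $\|\varphi^{(N)}_t\|_{H^3}^2\,e^{K|t|}$ factor accounting for the time-dependent growth. The main technical obstacle throughout is precisely the divergent norm $\|\nabla_1 k_t\|_2 \sim \sqrt{N}$; the whole purpose of the symmetry/integration-by-parts identity is to trade this for the bounded $\|k_t\|_2$ at the cost of a kinetic-energy factor that is harmlessly absorbed as the permitted $\delta\cK$ on the right-hand side.
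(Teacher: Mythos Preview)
Your treatment of \eqref{eq:lmK-1}--\eqref{eq:lmK-3} and \eqref{eq:lmK-5}--\eqref{eq:lmK-6} follows the paper's argument closely: the same integration-by-parts identity $\nabla_x k_t(y,x) = -\nabla_y k_t(y,x) + G_t(y,x)$ is the key step, and your packaging of the remainder into a single $L^2$-bounded kernel $G_t$ is if anything slightly cleaner than the paper's line-by-line use of the pointwise bound $|k_t(x,y)|\le C|\ph(x)||\ph(y)|/|x-y|$ followed by Hardy.

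Your handling of \eqref{eq:lmK-4} is genuinely different. The paper does \emph{not} invoke the $d\Gamma$ structure or an operator-norm bound; instead it writes the expression as $\int g(y_1,y_2)a_{y_1}^*a_{y_2}$ with $g(y_1,y_2)=\int\nabla_x k(y_1,x)\overline{\nabla_x k(y_2,x)}\,dx$ and shows by a direct (somewhat lengthy) computation that the \emph{Hilbert--Schmidt} norm $\|g\|_2$ is bounded uniformly in $N$, using the explicit form of $k_t$ and Hardy. Your route---recognizing the expression as $d\Gamma(KK^*)$ with $K=\nabla_2 k_t$ and using $d\Gamma(A)\le\|A\|_{\mathrm{op}}\cN$ for $A\ge 0$---is more conceptual and also correct, but be careful with the justification you give: the integration-by-parts identity at the operator level reads $K=-\nabla\circ k_t + G_t$, and since $\nabla$ is unbounded this does \emph{not} directly yield $\|K\|_{\mathrm{op}}<\infty$. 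What actually works is bounding $\|Kf\|_2$ directly from the pointwise kernel estimates $|N^2\nabla w(Nz)|\le C|z|^{-2}$ and $|Nw(Nz)|\le C|z|^{-1}$, sandwiched between multiplications by $\ph$ and $\nabla\ph$; either Hardy (as in the paper's $\|g\|_2$ computation) or HLS combined with Sobolev embedding gives $\|K\|_{\mathrm{op}}\le C\|\ph\|_{H^1}^2$. With that correction your argument for \eqref{eq:lmK-4} stands, and it avoids the paper's four-variable integral estimate.
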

\begin{proof}
To prove (\ref{eq:lmK-1}), we compute
\[ \begin{split}  \left| \int dx \langle \psi , a^\sharp (j_{1,x}) a^\sharp (j_{2,x}) \psi \rangle \right|  \leq  \; &\int dx \| a^\sharp (j_{1,x}) \psi \| \, \| a^\sharp (j_{2,x}) \psi \| \\ \leq \; &\left(\int dx \| j_{1,x} \|_2 \| j_{2,x} \|_2\right) \| (\cN+1)^{1/2} \psi \|^2 \\ \leq \; &\| j_1 \|_2 \| j_2 \|_2  \, \| (\cN+1)^{1/2} \psi \|^2. \end{split} \]
Eq.\ (\ref{eq:lmK-2}), on the other hand, follows by
\begin{equation} \begin{split}
\left| \int dx \langle \psi, \nabla_x a_x^* a^\sharp (j_{1,x}) \psi \rangle \right| \leq \; & \int dx \| \nabla_x a_x \psi \| \, \| a^\sharp (j_{1,x}) \psi \| \\ \leq \; &\delta \langle \psi, \cK \psi \rangle + C_\delta  \int dx \| j_{1,x} \|_2^2 \, \| (\cN+1)^{1/2} \psi \|^2 \\ \leq \; & \delta \langle \psi , \cK \psi \rangle + C_\delta \| j_1 \|^2_2 \, \| (\cN+1)^{1/2} \psi \|^2. \end{split} \end{equation}
To show (\ref{eq:lmK-3}), we need to integrate by parts. We write
\[ \int dx a^* (\nabla_x k_x) a^\sharp (j_{1,x})  = \int dx dy \, \nabla_x k (y,x) \, a^*_y  a^\sharp (j_{1,x}) \]
and we observe that
\[ \nabla_x k(y,x) = - \nabla_y k(y,x) - N w (N (y-x)) \left( \nabla
\varphi^{(N)}_t (x) \varphi^{(N)}_t (y) + \varphi_t^{(N)} (x) \nabla
\varphi_t^{(N)} (y) \right). \]
Hence
\[\begin{split} 
 \int dx a^* (\nabla_x k_x) a^\sharp (j_{1,x}) = \; & \int dx dy k (x,y) \, \nabla_y a^*_y  a^\sharp (j_{1,x}) 
\\ &- \int dx dy N w (N (x-y)) \nabla \varphi_t^{(N)} (x) \varphi_t^{(N)} (y) \, a^*_y  \, a^\sharp(j_{1,x})
\\& - \int dx dy N w (N (x-y)) \nabla \varphi_t^{(N)} (y) \varphi_t^{(N)} (x) \, a^*_y  \, a^\sharp(j_{1,x}).\end{split} \]
This implies, using Lemma \ref{lm:w} to bound $Nw(N(x-y))$,
\[ \begin{split}  \Big|  \int &dx \langle \psi,   a^* (\nabla_x k_x) a^\sharp (j_{1,x}) \psi \rangle \Big| \\ \leq \; & \int dx dy \, |k(x,y)| \| \nabla_y a_y \psi \| \, \| a^\sharp (j_{1,x}) \psi \| \\ &+ C \int dx dy \,  \frac{1}{|x-y|}  \left( |\nabla \varphi_t^{(N)} (x)| |\varphi^{(N)}_t (y)| +|\nabla \varphi_t^{(N)} (y)| |\varphi^{(N)}_t (x)| \right) \| a_y \psi \| \, \| a^\sharp (j_{1,x}) \psi \|  \\ 
\leq \; &\delta \int dx dy |\varphi_t^{(N)} (x)|^2 \| \nabla_y a_y \psi \|^2 + C_\delta \int dx dy \frac{1}{|x-y|^2} |\varphi^{(N)}_t (y)|^2 \|j_{1,x}\|_2^2 \| |(\cN+1)^{1/2} \psi \|^2 \\ &
+C \int dx dy \frac{1}{|x-y|^2} |\varphi^{(N)}_t (y)|^2 \|j_{1,x}\|_2^2 \| |(\cN+1)^{1/2} \psi \|^2 + 
C \int dx dy | \nabla \varphi_t^{(N)} (x)|^2 \| a_y \psi \|^2 \\
&+C\int dx dy \frac{1}{|x-y|^2} |\varphi_t^{(N)} (x)|^2 \| a_y \psi \|^2 
+ C \int dx dy |\nabla \varphi_t^{(N)} (y)|^2  \|j_{1,x} \|_2^2 \|
(\cN+1)^{1/2} \psi \|^2. \end{split} \]
Using Hardy's inequality, we conclude that for every $\delta >0$ there exists $C_\delta > 0$ (depending on $ \| j_1 \|_2$, $\delta$, $\| \varphi_t^{(N)} \|_{H^1}$) such that 
\[ \left| \langle \psi , \int dx\, a^* (\nabla_x k_x) a^\sharp (j_{1,x}) \psi \rangle \right| \leq \delta \langle \psi , \cK \psi \rangle + C_\delta (1+  \| j_1 \|_2^2 ) \, \langle \psi , (\cN + 1) \psi \rangle. \]
To show (\ref{eq:lmK-4}), we write
\[  \begin{split} \int dx  \, a^* (\nabla_x k_x) a (\nabla_x k_x) & = \int dx dy_1 dy_2 \, \nabla_x k (y_1,x) \nabla_x \overline{k} (y_2,x) \, a_{y_1}^* a_{y_2} \\ & =: \int dy_1 dy_2 \; g (y_1, y_2) a_{y_1}^* a_{y_2}. \end{split} \]
We have\done
\[ \begin{split} \Big| \int dy_1 dy_2\, g (y_1 , y_2) \langle &\psi , a_{y_1}^* a_{y_2} \psi \rangle \Big| \\ \leq \; & \int dy_1 dy_2 |g (y_1 , y_2)| \| a_{y_1} \psi \| \, \|  a_{y_2} \psi \| \\ \leq  \; & \left(\int dy_1 dy_2 |g (y_1 , y_2)|^2 \right)^{1/2}  \left( \int dy_1 dy_2 \, \| a_{y_1} \psi \|^2 \, \|  a_{y_2} \psi \|^2 \right)^{1/2} \\ = \; & \| g \|_2 \, \| \cN^{1/2} \psi \|^2. \end{split} \] 
Moreover, from the definition (\ref{eq:kt}) of the kernel $k_t$ and from the bounds of Lemma \ref{lm:w}, we find\done 
\[ \begin{split} \| g \|_2^2 \leq \; &C \int dy_1 dy_2 dx_1 dx_2  \frac{|\varphi_t^{(N)} (x_1)|^2 |\varphi_t^{(N)} (x_2)|^2 |\varphi_t^{(N)} (y_1)|^2 |\varphi_t^{(N)} (y_2)|^2}{|x_1 - y_1|^2 |x_1 - y_2|^2 |x_2-y_1|^2 |x_2 - y_2|^2}  \\ &+C \int dy_1 dy_2 dx_1 dx_2 \frac{|\nabla \varphi_t^{(N)} (x_1)|^2 |\nabla \varphi_t^{(N)} (x_2)|^2 |\varphi_t^{(N)} (y_1)|^2 |\varphi_t^{(N)} (y_2)|^2}{|x_1 - y_1| |x_1 - y_2| |x_2-y_1| |x_2 - y_2|} \\  \leq \;& C \int dy_1 dy_2 dx_1 dx_2  \frac{|\varphi_t^{(N)} (x_1)|^6 |\varphi_t^{(N)} (x_2)|^2}{|x_1 - y_1|^2 |x_1 - y_2|^2 |x_2-y_1|^2 |x_2 - y_2|^2} \\ &+C 
\int dy_1 dy_2 dx_1 dx_2  \frac{ |\nabla\varphi_t^{(N)} (x_1)|^2 |\nabla \varphi_t^{(N)} (x_2)|^2 
|\varphi_t^{(N)} (y_1)|^2 |\varphi_t^{(N)} (y_2)|^2}{|x_1 - y_1|^2 |x_2 - y_2|^2}  \\ \leq \; &C \int dx_1 dx_2 \frac{1}{|x_1 - x_2|^2} |\varphi_t^{(N)} (x_1)|^6 |\varphi_t^{(N)} (x_2)|^2 \\ &+ C \left( \sup_x \int dy \frac{1}{|x-y|^2} |\varphi_t^{(N)} (y)|^2 \right)^2 \| \varphi_t^{(N)} \|_{H^1}^4 \\ \leq \; & C 
\end{split} \]
for a constant $C$ depending only on the $H^1$-norm of $\varphi_t^{(N)}$. The last two bounds prove (\ref{eq:lmK-4}).

The inequalities (\ref{eq:lmK-5}), (\ref{eq:lmK-6}) can be proven similarly to (\ref{eq:lmK-3}) and (\ref{eq:lmK-4}); this time, however, the bounds will contain the norm $\| \dot{\varphi}_t^{(N)} \|_{H^1}$, which is bounded by $C e^{K |t|}$, as proven in Proposition \ref{t:pdes}.
\end{proof}

\begin{proof}[Proof of Proposition \ref{prop:TKT}] 
We prove the first bound in (\ref{eq:TKT-cl}). To this end, we observe that Lemma \ref{lm:K} can be used to bound all factors on the r.h.s.\ of (\ref{eq:TKT})  (using the uniform estimates for $\| \nabla_1 p (k_t) \|_{L^2 (\bR^3 \times \bR^3)}, \| \nabla_1 r_{k_t} \|_{L^2 (\bR^3 \times \bR^3)}$ from Lemma \ref{l:kernels}), with two exceptions, given by the term
\begin{equation}\label{eq:exc-K} \int dx \nabla_x a^*_x a^* (\nabla_x k_x) \end{equation}
and its hermitian conjugate. To control (\ref{eq:exc-K}), we use that, from (\ref{eq:kt}),
\[ \nabla_x k_t (y,x) = - N^2 \nabla w (N (x-y)) \varphi_t^{(N)} (x) \varphi_t^{(N)} (y) - N w (N (x-y)) \nabla\varphi_t^{(N)} (x) \varphi_t^{(N)} (y).  \]
Hence
\begin{equation}\label{eq:exc2-K} \begin{split} 
\int dx \, \nabla_x a_x^* a^* (\nabla_x k_x)  = &- N^2 \int dx dy \nabla w (N (x-y)) \varphi_t^{(N)} (x) \varphi_t^{(N)} (y) \nabla_x a^*_x a_y^* \\ &- N \int dx dy \, w(N (x-y)) \nabla \varphi_t^{(N)} (x) \varphi_t^{(N)} (y) \nabla_x a_x^* a_y^*.   \end{split}\end{equation}
The last term can be written as
\begin{equation} \int dx  \nabla_x a_x^* a^* (j_x) \end{equation}
with $j(y,x) = -N w(N (x-y)) \nabla \varphi_t^{(N)} (x) \varphi_t^{(N)} (y)$. Combining Lemma \ref{lm:w} and Proposition~\ref{t:pdes}, we find that $j \in L^2 (\bR^3 \times \bR^3)$, with uniformly bounded norm. Hence, Lemma \ref{lm:K} implies that, for every $\delta > 0$, there exists $C_\delta >0$ with
\begin{equation}\label{eq:exc3-K} \left| \int dx \, \langle \psi , \nabla_x a_x^* a^* (j_x) \psi \rangle \right| \leq \delta \| \cK^{1/2} \psi \|^2 + C_\delta \| (\cN + 1)^{1/2} \psi \|^2. \end{equation}
The first term on the r.h.s.\ of (\ref{eq:exc2-K}), on the other hand, can be written as
\[ \begin{split}
- N^2 \int dx dy \nabla w (N (x-y)) \varphi_t^{(N)} &(x) \varphi_t^{(N)} (y) \nabla_x a^*_x a_y^* \\ &=
 N^3 \int dx dy (\Delta w) (N (x-y)) \varphi_t^{(N)} (x) \varphi_t^{(N)} (y)
 a^*_x a_y^* \\ &\quad + N^2 \int dx dy \nabla w (N (x-y)) \nabla \varphi^{(N)}_t (x) \varphi_t^{(N)} (y) a_x^* a_y^*. \end{split} \]
The first contribution on the r.h.s.\ of the last equation is large and appears explicitly on the r.h.s.\ of (\ref{eq:pr-K}) (it will cancel later, when combined with other terms arising from $\wt{\cL}^{(0)}_{2,N}$ and $\cL^{(0)}_{4,N}$). The second term, on the other hand, is an error; integrating by parts, it can be expressed as
\[ \begin{split} N^2 \int dx dy \nabla &w (N (x-y)) \nabla \varphi^{(N)}_t (x) \varphi_t^{(N)} (y) a_x^* a_y^* \\ = \; &- N \int dx dy  w (N (x-y)) \nabla \varphi^{(N)}_t (x) \nabla \varphi_t^{(N)} (y) a_x^* a_y^* \\ &- N \int dx dy w (N (x-y)) \nabla \varphi^{(N)}_t (x)  \varphi_t^{(N)} (y) a_x^* \nabla_y a_y^*  \\ = \; &- \int dx \nabla \varphi_t^{(N)} (x) a_x^* a^* (N w(N(x-.) \nabla \varphi_t^{(N)}) + \int dy  \nabla_y a_y^* a^* (j_y) \end{split} \]
with $j(x,y) = -N w(N(x-y)) \nabla\ph(x)\ph(y)$. The second term is bounded as in (\ref{eq:exc3-K}). The first term, on the other hand, is estimated by
\begin{equation}\label{eq:lmK-sp} \begin{split}\Big|   \int dx \nabla \varphi_t^{(N)} (x)  \langle a_x \psi, &a^* (N w(N(x-.)) \nabla \varphi_t^{(N)}) \psi \rangle \Big| \\ &\leq C \sup_x \| N w(N(x-.)) \nabla \varphi_t^{(N)} \|_2 \| (\cN+1)^{1/2} \psi \|^2 \\ &\leq C \| \varphi_t^{(N)} \|_{H^2} \| (\cN+1)^{1/2} \psi \|^2   \end{split} \end{equation} and \eqref{eq:hireg}.

Also the second bound in (\ref{eq:TKT-cl}) follows from Lemma \ref{lm:K}. In fact, when one takes the commutator of $\cN$ with the terms on the r.h.s.\ of (\ref{eq:TKT}) one either finds zero (for all terms with one creation and one annihilation operators, which therefore preserve the number of particles), or one finds again the same terms (up to a possible change of sign). This follows because, by the canonical commutation relations
\[ [ \cN , a (f) ] = - a (f) \qquad \text{and } \quad [\cN , a^* (f)] = a^* (f) \]
for every $f \in L^2 (\bR^3)$. Finally, the third bound in (\ref{eq:TKT-cl}) is a consequence of Lemma \ref{lm:K} as well. In fact, the time derivative $\dot{\cE}_K (t)$ is a sum of terms very similar to the terms appearing on the r.h.s.\ of (\ref{eq:TKT}), with the difference that one of the appearing kernels contains a time-derivative. Combining the estimates from Lemma \ref{lm:K} (including, in this case, also (\ref{eq:lmK-5}), (\ref{eq:lmK-6})) with the bounds for $\| \nabla_1 \dot{p}_{k_t} \|_2$, $\| \nabla_1 \dot{r}_{k_t} \|_2$ from Lemma \ref{lm:dotk} and with the bound for $\| \dot{\varphi}_t^{(N)} \|_{H^1}$ from  Proposition \ref{t:pdes} (needed to control terms similar to (\ref{eq:exc2-K}), (\ref{eq:lmK-sp}), with 
a factor of $\varphi^{(N)}_t$ replaced by $\dot{\varphi}^{(N)}_t$), we obtain the last inequality in (\ref{eq:TKT-cl}). 
\end{proof}

\subsubsection{Properties of $T^* \wt{\cL}_{2,N}^{(0)} (t) T$}

We consider now the other quadratic terms, collected in $\wt{\cL}_{2,N} (t)$, defined by $\cL^{(0)}_{2,N} (t) = \cK + \wt{\cL}_{2,N}^{(0)} (t)$. 
We have
\[ \begin{split} 
T^* \wt{\cL}_{2,N}^{(0)} (t) T = & \; \int dx \left( N^3 V (N .) * |\varphi_t^{(N)}|^2 \right) (x) (a^* (c_x) + a(s_x)) (a(c_x) + a^* (s_x)) \\ & + \int dx dy \, N^3 V(N(x-y)) \varphi_t^{(N)} (x) \overline{\varphi}_t^{(N)} (y) (a^* (c_x) + a(s_x)) (a (c_y) + a^* (s_y)) \\ &+ \frac{1}{2} \int dx dy \, N^3 V(N (x-y)) \\
&\hspace{2cm} \times \left[  \varphi_t^{(N)} (x) \varphi_t^{(N)} (y) (a^* (c_x) + a(s_x)) (a^* (c_y) + a (s_y)) + \text{h.c.}  \right]. 
\end{split} \]
Expanding the products, and bringing all terms to normal order, we find
\begin{equation}\label{eq:T2T} \begin{split} 
T^*  \wt{\cL}_{2,N}^{(0)}& (t) T \\ =\; & \int dx \left(N^3 V (N .) * |\varphi_t^{(N)}|^2 \right) (x) \\ &\hspace{1cm}\times \left[  a^* (c_x) a (c_x) + a^* (s_x) a (s_x) + a^* (c_x) a^* (s_x) + a(s_x) a(c_x) + \langle s_x , s_x \rangle \right] \\
&+ \int dx dy N^3 V (N (x-y)) \varphi_t^{(N)} (x) \overline{\varphi}_t^{(N)} (y) \\ &\hspace{1cm}\times  \left[ a^* (c_x) a(c_y) + a^* (s_y) a(s_x) + a^* (c_x) a^* (s_y) + a(s_x) a(c_y) + \langle s_x , s_y \rangle \right] \\
&+ \frac{1}{2} \int dx dy N^3 V(N (x-y)) \varphi_t^{(N)} (x) \varphi_t^{(N)} (y) \\ &\hspace{1cm}\times \left[ a^* (c_x) a^* (c_y) + a^* (c_x) a(s_y) + a^* (c_y) a (s_x) + a(s_x) a(s_y) + \langle s_x , c_y \rangle \right] \\
&+ \frac{1}{2} \int dx dy N^3 V(N (x-y)) \overline{\varphi}_t^{(N)} (x) \overline{\varphi}_t^{(N)} (y) \\ &\hspace{1cm}\times \left[ a (c_x) a (c_y) + a^* (s_y) a(c_x) +  a^* (s_x) a(c_y) + a^* (s_x) a^* (s_y) + \langle c_y , s_x \rangle \right]. 
\end{split} \end{equation}

The properties of $T^* \wt{\cL}_{2,N}^{(0)} (t) T$ are summarized in the following proposition.
\begin{proposition}\label{prop:TwtL2T}
We have 
\begin{equation} \begin{split} 
T^* \wt{\cL}_{2,N}^{(0)} &(t) T \\ =\; & \int dx (N^3 V(N.) * |\varphi_t^{(N)}|^2) (x) \langle s_x , s_x \rangle \\ &+ \int dx dy N^3 V(N (x-y)) \varphi_t^{(N)} (x) \overline{\varphi}_t^{(N)} (y) \langle s_x , s_y \rangle \\ &+ \text{Re } \int dx dy N^3 V(N (x-y)) \varphi_t^{(N)} (x) \varphi_t^{(N)} (y) \langle s_x ,c_y \rangle \\ &+ \frac{1}{2} \int dx dy N^3 V(N (x-y)) \left[ \varphi_t^{(N)} (x) \varphi_t^{(N)} (y)\, a_x^* a_y^*  + \overline{\varphi}_t^{(N)} (x) \overline{\varphi}_t^{(N)} (y) \, a_x a_y \right]  \\ &+ \cE_2 (t) \end{split} \end{equation}
where the error $\cE_2 (t)$ is such that, for appropriate constants $C,K > 0$,
\begin{equation} \label{eq:T2T-cl}
\begin{split}
\pm  \cE_2 (t) &\leq C \ech{\| \varphi_t^{(N)} \|_{H^2}^2}{\ekt} \,  (\cN + 1), \\  \pm \left[ \cN , \cE_2 (t) \right]  &\leq C \ech{\| \varphi_t^{(N)} \|_{H^2}^2}{\ekt} \, (\cN + 1), \\ \pm \dot{\cE}_2 (t) &\leq C \, \ech{\left(\| \varphi_t^{(N)} \|_{H^4} \| \varphi_t^{(N)} \|_{H^2} +   \| \varphi_t^{(N)} \|_{H^2}^3 \right)}{\ekt} \, (\cN+1) .
\end{split}
\end{equation} 
\end{proposition}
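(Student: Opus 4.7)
The plan is to expand each operator appearing in \eqref{eq:T2T} via the decompositions $\text{ch}(k_t)=1+p(k_t)$ and $\text{sh}(k_t)=k_t+r(k_t)$ from Lemma \ref{l:bt}(iii); in the compact notation \eqref{eq:def-csprx} this means $c_x(z)=\delta(x-z)+p_x(z)$ and $s_x(z)=k_t(x,z)+r_x(z)$. The four explicit pieces in the proposition statement are then identified as follows: the three scalar terms involving $\langle s_x,s_x\rangle$, $\langle s_x,s_y\rangle$ and $\text{Re}\,\langle s_x,c_y\rangle$ are precisely the normal-ordering constants already visible in \eqref{eq:T2T} (the last arising from the sum of $\tfrac12\langle s_x,c_y\rangle$ in lines $5$--$6$ and its hermitian conjugate $\tfrac12\langle c_y,s_x\rangle$ in lines $7$--$8$); the large quadratic terms $\tfrac12\int N^3V(N(x-y))\varphi_t^{(N)}(x)\varphi_t^{(N)}(y)\,a_x^*a_y^*$ and its hermitian conjugate come from the $\delta(x-\cdot)\delta(y-\cdot)$ contributions to $a^*(c_x)a^*(c_y)$ and $a(c_x)a(c_y)$ after substituting $c_x=\delta_x+p_x$. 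Everything else defines $\cE_2(t)$.

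Each summand of $\cE_2(t)$ is normal-ordered and quadratic in creation and annihilation operators, integrated against a kernel of one of the two shapes $(N^3V\ast|\varphi_t^{(N)}|^2)(x)\delta(x-y)$ or $N^3V(N(x-y))\varphi_t^{(N)}(x)\varphi_t^{(N)}(y)$ (with suitable conjugations), whose operator arguments are drawn from $\{p_x,r_x,k_{t,x}\}$ with at least one argument different from $\delta_x$ (the pure-$\delta$ pieces having been extracted). I would bound each such summand by Cauchy--Schwarz, invoking three ingredients: the uniform estimates $\sup_x\|p_x\|_2$, $\sup_x\|r_x\|_2$, $\sup_x\|k_{t,x}\|_2$, $\sup_x\|\text{sh}(k_t)(\cdot,x)\|_2 \leq C\|\varphi_t^{(N)}\|_{H^2}$ from Lemma \ref{l:kernels}(iii); the elementary estimates $\|a^\sharp(f)\psi\|\leq\|f\|_2\|(\cN+1)^{1/2}\psi\|$ from Lemma \ref{l:a}; and the convolution bounds $\|N^3V(N\cdot)\ast|\varphi_t^{(N)}|^2\|_\infty\leq C\|V\|_1\|\varphi_t^{(N)}\|_{H^2}^2$ together with $\int N^3V(N(x-y))\,dy=\|V\|_1$. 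Putting these together yields $|\langle\psi,\cE_2(t)\psi\rangle|\leq C\|\varphi_t^{(N)}\|_{H^2}^2\langle\psi,(\cN+1)\psi\rangle$, which is the first bound of \eqref{eq:T2T-cl}.

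The commutator bound is then immediate: every normal-ordered summand of $\cE_2(t)$ has a fixed signed degree $m\in\{-2,0,+2\}$ in creation minus annihilation operators, so $[\cN,\cdot]$ merely multiplies each summand by $m$, and the previous estimate applies verbatim. For $\dot{\cE}_2(t)$, the time derivative lands on a factor of $\varphi_t^{(N)}$ or on one of $p(k_t),r(k_t)$. Using $i\dot\varphi_t^{(N)}=-\Delta\varphi_t^{(N)}+(N^3V(N\cdot)f(N\cdot)\ast|\varphi_t^{(N)}|^2)\varphi_t^{(N)}$ together with Proposition \ref{t:pdes}(iii), the contributions with $\dot\varphi_t^{(N)}$ paired against $\varphi_t^{(N)}$ produce the combination $\|\varphi_t^{(N)}\|_{H^4}\|\varphi_t^{(N)}\|_{H^2}$; those with $\dot p(k_t),\dot r(k_t)$ are controlled by Lemma \ref{lm:dotk}(iii), whose bounds, expanded in terms of Sobolev norms of $\varphi_t^{(N)}$ via \eqref{eq:hireg} and Proposition \ref{t:pdes}(iii), contain the cubic piece $\|\varphi_t^{(N)}\|_{H^2}^3$.

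I expect the principal technical obstacle to be the summands in which two $k_t$ kernels are paired against the already nearly-diagonal kernel $N^3V(N(x-y))$: naively both factors concentrate on the set $\{x=y\}$, raising the spectre of a divergent integral. One must resist pointwise estimation and instead treat each $k_t$ as an $L^2$ object in a judiciously chosen variable, using $\sup_x\|k_{t,x}\|_2\leq C\|\varphi_t^{(N)}\|_{H^2}$ from Lemma \ref{l:kernels}(iii); the remaining $N^3V(N(x-y))$ then integrates out to a constant multiple of $\|V\|_1$, and the apparent divergence does not occur.
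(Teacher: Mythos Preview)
Your approach is correct and follows essentially the same route as the paper's proof: expand $c_x=\delta_x+p_x$, isolate the $a_x^*a_y^*$ and $a_xa_y$ pieces from $a^*(c_x)a^*(c_y)$ and its conjugate, and bound all remaining quadratic summands by Cauchy--Schwarz using $\|N^3V(N\cdot)*|\varphi_t^{(N)}|^2\|_\infty\le C\|\varphi_t^{(N)}\|_{H^2}^2$ together with kernel bounds. The paper packages the Cauchy--Schwarz step into Lemma~\ref{lm:2b}, which controls each term via the \emph{full} Hilbert--Schmidt norms $\|p(k_t)\|_2,\|\text{sh}(k_t)\|_2$ from Lemma~\ref{l:kernels}(i), whereas you invoke the $\sup_x$ bounds from Lemma~\ref{l:kernels}(iii); both choices work and give the same final estimate.

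One small imprecision: you write that ``the pure-$\delta$ pieces have been extracted'', but in fact only the $a_x^*a_y^*$ and $a_xa_y$ pieces (from the $\varphi\varphi$ and $\bar\varphi\bar\varphi$ blocks) are singled out in the proposition. The remaining pure-$\delta$ contributions, namely $\int(N^3V*|\varphi_t^{(N)}|^2)(x)\,a_x^*a_x$ and $\int N^3V(N(x-y))\varphi_t^{(N)}(x)\overline{\varphi}_t^{(N)}(y)\,a_x^*a_y$, stay in $\cE_2(t)$. They are harmless (bounded by $\|N^3V(N\cdot)*|\varphi_t^{(N)}|^2\|_\infty\,\cN$ directly, cf.\ \eqref{eq:lm2b-3} and \eqref{eq:lm2b-4}), but your description of $\cE_2(t)$ as having ``at least one argument different from $\delta_x$'' should be relaxed to include them.
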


To show Proposition \ref{prop:TwtL2T}, we will make use of the next lemma.
\begin{lemma}\label{lm:2b}
Let $j_1, j_2 \in L^2 (\bR^3 \times \bR^3)$. Let $j_{i,x} (z) := j_i (z,x)$ for $i=1,2$. Then there exists a constant $C$ such that 
\begin{equation}\label{eq:lm2b-1}  
\begin{split}
\int dx dy N^3 V(N (x-y)) |\varphi_t^{(N)} (x)| |\varphi_t^{(N)} (y)| & \| a^\sharp (j_{1,x}) \psi \| \, \| a^\sharp (j_{2,y}) \psi \| \\  &\leq C \| \varphi_t^{(N)} \|_{H^2}^2  \, \| j_1 \|_2 \, \| j_2 \|_2 \, \| (\cN+1)^{1/2} \psi \|^2. \end{split} \end{equation}
Moreover,  
\begin{equation}\label{eq:lm2b-2} 
\begin{split}
\int dx dy N^3 V(N (x-y)) |\varphi_t^{(N)} (x)| |\varphi_t^{(N)} (y)| &\| a^\sharp (j_{1,x}) \psi \|  \| a_y \psi \| \\&\leq C\| \varphi_t^{(N)} \|_{H^2}^2  \, \| j_1 \|_2 \| (\cN+1)^{1/2} \psi \|^2 \end{split} \end{equation}
and 
\begin{equation}\label{eq:lm2b-3}
\int dx dy N^3 V(N (x-y))   |\varphi_t^{(N)} (x)| |\varphi_t^{(N)} (y)|  \, \| a_x \psi \| \, \|  a_y  \psi \|  
\leq C \| \varphi_t^{(N)}\|_{H^2}^2  \,\| (\cN+1)^{1/2} \psi \|^2 .
\end{equation}
The bounds remain true if both creation and/or annihilation operators act on the same variable, in the sense that
\begin{equation}\label{eq:lm2b-4} \begin{split}
 \int dx dy N^3 V(N (x-y))&|\varphi_t^{(N)} (x)|^2 \| a^\sharp (j_{1,y})  \psi \| \, \|  a^\sharp (j_{2,y}) \psi \|  
 \\  &\hspace{3cm} \leq C\| \varphi_t^{(N)} \|_{H^2}^2 \|j_1 \|_2 \| j_2 \|_2 \, \| (\cN+1)^{1/2} \psi \|^2, \\
 \int dx dy N^3 V(N (x-y))& |\varphi_t^{(N)} (x)|^2  \| a^\sharp (j_{1,y}) \psi \| \, \| a_y \psi \| \leq C\| \varphi_t^{(N)} \|_{H^2}^2 \| j_1 \|_2 \| (\cN+1)^{1/2} \psi \|^2, \\
\int dx dy N^3 V(N (x-y)) &|\varphi_t^{(N)} (x)|^2 \| a_y \psi \|^2 \leq C \| \varphi_t^{(N)} \|_{H^2}^2 \| \cN^{1/2} \psi \|^2.
\end{split} \end{equation}
\end{lemma}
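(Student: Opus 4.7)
The plan is to exploit three ingredients: the delta-like behaviour of $N^{3}V(N\cdot)$, which integrates to $\|V\|_1$ over either variable; the Sobolev embedding $H^{2}(\R^{3}) \hookrightarrow L^{\infty}(\R^{3})$, valid in dimension three, which controls $\|\varphi_{t}^{(N)}\|_{\infty}^{2} \leq C\|\varphi_{t}^{(N)}\|_{H^{2}}^{2}$; and the standard creation/annihilation bounds from Lemma \ref{l:a}, together with the identity $\int dx\, \|j_{1,x}\|_{2}^{2} = \|j_{1}\|_{2}^{2}$. None of the estimates require cancellations between terms; they are all straightforward Cauchy--Schwarz applications, so there is no serious obstacle.

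For \eqref{eq:lm2b-1}, I would first apply Cauchy--Schwarz in $(x,y)$ to split the integrand:
\[
\int dx\,dy\, N^{3}V(N(x-y))\,|\varphi_{t}^{(N)}(x)|\,|\varphi_{t}^{(N)}(y)|\,\|a^{\sharp}(j_{1,x})\psi\|\,\|a^{\sharp}(j_{2,y})\psi\|
\]
is bounded above by the geometric mean of
\[
\int dx\,dy\, N^{3}V(N(x-y))\,|\varphi_{t}^{(N)}(x)|^{2}\,\|a^{\sharp}(j_{1,x})\psi\|^{2}
\]
and the analogous expression with $j_{2}$ and the variable $y$. In each factor I would integrate out the free variable using $\int dy\, N^{3}V(N(x-y)) = \|V\|_{1}$, pull $|\varphi_{t}^{(N)}(x)|^{2}$ out in $L^{\infty}$ via Sobolev embedding, and finally apply Lemma \ref{l:a} pointwise in $x$ together with $\int dx\,\|j_{1,x}\|_{2}^{2} = \|j_{1}\|_{2}^{2}$. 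Multiplying the two factors gives the stated bound.

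For \eqref{eq:lm2b-2} and \eqref{eq:lm2b-3}, I would repeat exactly the same scheme, treating the factor $\|a_{y}\psi\|$ on equal footing with $\|a^{\sharp}(j_{2,y})\psi\|$: the only substitution in the final estimate is that $\|j_{2}\|_{2}$ gets replaced by $1$ (since $\int dy\,\|a_{y}\psi\|^{2} = \|\cN^{1/2}\psi\|^{2}$). This yields \eqref{eq:lm2b-2}, and a further replacement on the $x$-side gives \eqref{eq:lm2b-3}.

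The estimates in \eqref{eq:lm2b-4} differ only structurally: both creation/annihilation factors depend on the same integration variable $y$. Here I would integrate the $x$-variable out first, using $\int dx\, N^{3}V(N(x-y))\,|\varphi_{t}^{(N)}(x)|^{2} \leq \|V\|_{1}\|\varphi_{t}^{(N)}\|_{\infty}^{2} \leq C\|\varphi_{t}^{(N)}\|_{H^{2}}^{2}$ uniformly in $y$, and then treat the remaining single integral in $y$ exactly as in the previous cases (Cauchy--Schwarz, Lemma \ref{l:a}, and the identity for $\int dy\,\|j_{\cdot,y}\|_{2}^{2}$). The bookkeeping is identical, and all three sub-bounds of \eqref{eq:lm2b-4} follow.
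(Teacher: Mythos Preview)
Your proposal is correct and follows essentially the same route as the paper. The only cosmetic difference is that the paper splits the product $|\varphi_t^{(N)}(x)|\,\|a^\sharp(j_{1,x})\psi\|\cdot|\varphi_t^{(N)}(y)|\,\|a^\sharp(j_{2,y})\psi\|$ via the arithmetic--geometric mean inequality with a parameter $\alpha$ and then optimizes, whereas you use Cauchy--Schwarz directly with respect to the nonnegative measure $N^3 V(N(x-y))\,dx\,dy$; both lead to the same bound, and both ultimately rely on $\|N^3 V(N\cdot)*|\varphi_t^{(N)}|^2\|_\infty \le \|V\|_1\|\varphi_t^{(N)}\|_\infty^2 \le C\|\varphi_t^{(N)}\|_{H^2}^2$.
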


\begin{proof}
To prove (\ref{eq:lm2b-1}), we notice that for any $\alpha > 0$
\[ \begin{split} 
\int dx dy &N^3 V(N (x-y)) |\varphi_t^{(N)} (x)| |\varphi_t^{(N)} (y)|  \| a^\sharp (j_{1,x}) \psi \| \, \| a^\sharp (j_{2,y}) \psi \| \\ \leq \; & \alpha \int dx dy N^3 V(N (x-y)) |\varphi^{(N)}_t (x)|^2 \| j_{2,y} \|_2^2 \| (\cN+1)^{1/2} \psi \|^2 \\ &+ \alpha^{-1}
 \int dx dy N^3 V(N (x-y)) |\varphi^{(N)}_t (y)|^2 \| j_{1,x} \|_2^2 \| (\cN+1)^{1/2} \psi \|^2 \\
 \leq \; & \| N^3 V(N.) * |\varphi_t^{(N)}|^2 \|_\infty  \left( \alpha \| j_{1} \|^2_2 + \alpha^{-1}
 \| j_2 \|_{2}^2 \right) \, \| (\cN+1)^{1/2} \psi \|^2.
 \end{split} \]
 Using $\| N^3 V(N.) * |\varphi_t^{(N)}|^2 \|_\infty \leq C \| \varphi_t^{(N)} \|_{H^2}^2$, and optimizing over $\alpha >0$, we obtain (\ref{eq:lm2b-1}). Analogously, (\ref{eq:lm2b-2}) follows from
\[ \begin{split} 
\int dx dy N^3 V(N (x-y)) &|\varphi^{(N)}_t (x)| |\varphi^{(N)}_t (y)| \| a^\sharp (j_{1,x}) \psi \| \| a_y \psi \| \\ 
\leq \; & \alpha \int dx dy N^3 V(N (x-y)) |\varphi^{(N)}_t (x)|^2 \| a_y \psi \|^2  \\ &+ \alpha^{-1}
 \int dx dy N^3 V(N (x-y)) |\varphi^{(N)}_t (y)|^2 \| j_{1,x} \|_2^2 \| (\cN+1)^{1/2} \psi \|^2 \\
 \leq \; & \| N^3 V(N.) * |\varphi_t^{(N)}|^2 \|_\infty (\alpha + \alpha^{-1} \| j_1\|_{2}^2) \| (\cN+1)^{1/2} \psi \|^2
 \end{split} \]
for any $\alpha > 0$. Optimizing over $\alpha$ gives (\ref{eq:lm2b-2}). Eq.\ (\ref{eq:lm2b-3}) follows from Cauchy-Schwarz. The bounds in (\ref{eq:lm2b-4}) can be shown similarly.
\end{proof} 

\begin{proof}[Proof of Proposition \ref{prop:TwtL2T}]
To prove the first bound in (\ref{eq:T2T-cl}), we notice that the quadratic terms on the r.h.s.\ of (\ref{eq:T2T}) can be controlled with Lemma \ref{lm:2b}, decomposing, if needed, $a (c_x) = a_x + a(p_x)$ and then applying (\ref{eq:lm2b-1}), (\ref{eq:lm2b-2}), or (\ref{eq:lm2b-3}). There are two exceptions, given by the terms proportional to $a^* (c_x) a^* (c_y)$ and its hermitian conjugate, proportional to $a (c_x) a(c_y)$. For these two terms the bounds from Lemma \ref{lm:2b} do not apply. Instead, using $a^* (c_x) = a^*_x + a^* (p_x)$, we write 
\begin{equation}\label{eq:2-cancel} \begin{split} 
\int dx dy N^3 & V(N (x-y)) \varphi_t^{(N)} (x) \varphi_t^{(N)} (y) a^* (c_x) a^* (c_y) \\ = \; &   
\int dx dy N^3 V(N (x-y)) \varphi_t^{(N)} (x) \varphi_t^{(N)} (y) a_x^* a_y^* \\ &+ 
\int dx dy N^3 V(N (x-y)) \varphi_t^{(N)} (x) \varphi_t^{(N)} (y) a^* (p_x) a_y^* \\ &+ 
\int dx dy N^3 V(N (x-y)) \varphi_t^{(N)} (x) \varphi_t^{(N)} (y) a^* (c_x) a^* (p_y).
 \end{split} \end{equation}
The contribution of the last two terms can be bounded by Lemma \ref{lm:2b}, because one of the arguments of the creation operators is square integrable. In fact
\[ \begin{split} \Big| \int dx dy N^3 V(N (x-y)) &\varphi_t^{(N)} (x) \varphi_t^{(N)} (y)  \langle \psi , a^* (p_x) a_y^* \psi \rangle \Big| \\ \leq \; & \int dx dy N^3 V(N (x-y)) |\varphi_t^{(N)} (x)| |\varphi_t^{(N)} (y)|   \| a_y \psi \| \, \| a^* (p_x) \psi \|   \\ \leq \; & C \| \varphi_t^{(N)} \|_{H^2}^2 \, \| (\cN+1)^{1/2} \psi \|^2 \leq C \ekt \| (\cN+1)^{1/2} \psi \|^2
\end{split} \]
by (\ref{eq:lm2b-2}) and \eqref{eq:hireg}, and similarly for the last term on the r.h.s.\ of (\ref{eq:2-cancel}).
The hermitian conjugate of (\ref{eq:2-cancel}), proportional to $a (c_x) a (c_y)$, can be handled identically. 

The second bound in (\ref{eq:T2T-cl}) follows similarly, using the fact that the commutator of $\cN$ with the terms on the r.h.s.\ of (\ref{eq:T2T}) leaves their form unchanged (apart from the constant terms and the quadratic terms with one creation and one annihilation operators, whose contribution to the commutator $[\cN , \cE_2 (t)]$ vanishes).  

Also the third bound in (\ref{eq:T2T-cl}) can be proven analogously, using
the bounds for $\| \dot{\text{sh}}_{k_t} \|_2$ and $\| \dot{p}_{k_t} \|_2$, as proven in Lemma \ref{lm:dotk} . When the time derivative hits the factor $\varphi_t^{(N)} (x)$ or $\varphi_t^{(N)} (y)$, it generates a contribution which is bounded by $\| \dot\varphi_t^{(N)} \|_{H^2} \leq C \| \varphi_t^{(N)} \|_{H^4}^4 \leq C e^{K|t|}$ (for some $K$ depending only on $\norm{\varphi}_{H^1}$; here we used Proposition \ref{t:pdes}).
\end{proof}

\subsection{Analysis of $T^* \cL^{(0)}_{3,N} (t) T$}

We consider now the contributions arising from the cubic terms in $\cL^{(0)}_{3,N} (t)$.
We have
\[ \begin{split}
T^* & \cL_{3,N}^{(0)} (t) T \\ = \; & \frac{1}{\sqrt{N}} \int dx dy \, N^3 V(N (x-y)) \\ &\hspace{.5cm} \times \left[ \varphi_t^{(N)} (y) \, ( a^* (c_x) + a (s_x)) (a^* (c_y) + a (s_y)) (a (c_x) + a^* (s_x))  + \text{h.c.} \right] \\ = \; & \frac{1}{\sqrt{N}} \int dx dy N^3 V(N(x-y)) \varphi_t^{(N)} (y) \\ & \hspace{.5cm} 
\times \left[ \, a^* (c_x) a^* (c_y) a^* (s_x) + a^* (c_x) a^* (c_y) a(c_x) + a^* (c_x) a (s_y) a^* (s_x) +a^* (c_x) a (s_y) a (c_x)  \right. \\ & \hspace{1cm} \left. +  a (s_x) a^* (c_y) a^* (s_x) + a (s_x) a^* (c_y) a(c_x) + a (s_x) a (s_y) a^* (s_x) +a (s_x) a (s_y) a (c_x) \right]  \\ &+\text{h.c.} 
\end{split} \]
Writing the terms in normal order, we find
\begin{equation}\label{eq:cub-no} \begin{split} 
T^* &\cL_{3,N}^{(0)} (t) T  \\ = \; &\frac{1}{\sqrt{N}}  \int dx dy N^3 V(N(x-y)) \varphi_t^{(N)} (y) \\ & \hspace{.5cm}  \times \left[ a^* (c_x) a^* (c_y) a^* (s_x) + a^* (c_x) a^* (c_y) a(c_x) + a^* (c_x)  a^* (s_x) a (s_y) +a^* (c_x) a (s_y) a (c_x)  \right. \\ & \hspace{1cm} \left. +   a^* (c_y) a^* (s_x) a (s_x) + a^* (c_y) a (s_x)  a(c_x) + a^* (s_x) a (s_x) a (s_y)  +a (s_x) a (s_y) a (c_x) \right]  \\
&+ \frac{1}{\sqrt{N}} \int dx dy N^3 V (N (x-y)) \varphi_t^{(N)} (y) \\ &\hspace{.5cm} \times  \left[  \langle s_y , s_x \rangle (a^* (c_x) + a (s_x)) + \langle s_x, c_y \rangle \left( a (c_x) + a^* (s_x) \right) + \langle s_x, s_x \rangle \left(a^* (c_y) + a (s_y)\right) \right] 
\\ &+\text{h.c.}
\end{split} \end{equation}

The properties of $T^* \cL_{3,N}^{(0)} (t) T$ are summarized in the following proposition. 
\begin{proposition}\label{prop:TL3T}
We have
\begin{equation}\label{eq:pr-TL3T} \begin{split} T^* &\cL^{(0)}_{3,N} T \\ =  \; & \frac{1}{\sqrt{N}} 
\int dx dy N^3 V (N (x-y))  \\ & \hspace{.5cm} \times \left[ \varphi_t^{(N)} (y) \overline{k}_t (x,y)  \left( a (c_x) + a^* (s_x) \right)   + \overline{\varphi}_t^{(N)} (y) k_t (x,y)  \left( a^* (c_x) + a (s_x) \right)  \right] \\ &+ \cE_3 (t) \\
= \; &- \sqrt{N} \int dx dy N^3 V(N (x-y)) w(N (x-y)) |\varphi_t^{(N)} (y)|^2 \varphi_t^{(N)} (x) (a (c_x) + a^* (s_x)) + \text{h.c.} \\ &+ \cE_3 (t)
\end{split} \end{equation}
where we used the definition (\ref{eq:kt}) of the kernel $k_t$ and where the error term $\cE_3 (t)$ is such that for every $\delta > 0$ there exists a constant $C_\delta > 0$ with
\begin{equation}\label{eq:TL3T-cl}
\begin{split} \pm \cE_3 (t)  &\leq \delta \int dx dy N^2 V (N (x-y)) a_x^*
a_y^* a_x a_y + \delta \,\frac{\cN^2}{N} + C_\delta \ech{\| \varphi_t^{(N)}
\|_{H^2}^2}{\ekt} \left(\cN + 1\right), \\
\pm \left[ \cN , \cE_3 (t) \right] &\leq \delta \int dx dy N^2 V (N (x-y))
a_x^* a_y^* a_x a_y + \delta \,  \frac{\cN^2}{N} + C_\delta  \ech{\|
\varphi_t^{(N)} \|_{H^2}^2}{\ekt}  \left(\cN + 1\right), \\
\pm \dot{\cE}_3 (t) & \leq \delta \int dx dy N^2 V (N (x-y)) a_x^* a_y^* a_x a_y + \delta \, \frac{\cN^2}{N} 
\ech{\\ & \hspace{.5cm} + C_\delta \left( \| \varphi_t^{(N)} \|_{H^4}^2 \| \varphi_t^{(N)} \|_{H^2}^2 + \| \varphi_t^{(N)} \|_{H^2}^4 \right)}{+ C_\delta \ekt} \,  \left( \cN + 1\right).
\end{split} \end{equation}
\end{proposition}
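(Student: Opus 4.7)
The plan is to start from the normal-ordered expression \eqref{eq:cub-no}, which already separates the eight genuinely cubic summands (lines two to four) from the six linear remainders (last three lines) weighted by the scalar products $\langle s_y, s_x\rangle$, $\langle s_x, c_y\rangle$ and $\langle s_x, s_x\rangle$ that arose from commuting $a^\sharp$'s past each other via the canonical commutation relations. The large linear term claimed in \eqref{eq:pr-TL3T} must come from the remainder carrying $\langle s_x, c_y\rangle$, since this is the only inner product whose dominant piece is pointwise of order $N$ through a factor of $k_t$.

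First I would isolate this main linear term. Writing $c_y = \delta_y + p_y$ and $\text{sh}(k_t) = k_t + r(k_t)$, and using the symmetry of $k_t$ and $r(k_t)$, one obtains
\[
\langle s_x, c_y\rangle = \overline{k_t(x,y)} + \overline{r_t(x,y)} + \langle s_x, p_y\rangle.
\]
The part proportional to $\overline{k_t(x,y)}$, together with its hermitian conjugate (which produces the analogous $k_t(x,y)\overline{\ph(y)}$-term via $\langle c_y, s_x\rangle = s_x(y) + \langle p_y, s_x\rangle$), reproduces exactly the first line of \eqref{eq:pr-TL3T}; substituting $k_t(x,y) = -Nw(N(x-y))\ph(x)\ph(y)$ then converts it into the explicit second line. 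This is precisely the piece that will eventually cancel the dangerous linear contribution \eqref{eq:TL1T}.

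Everything else must be absorbed into $\cE_3(t)$. The "small" linear remainders, i.e.\ those involving $r_t$, $p_y$, $\langle s_y, s_x\rangle$ or $\langle s_x, s_x\rangle$, are harmless: Lemma \ref{l:kernels} (iii) gives $\sup_x\|r_x\|_2, \sup_x\|p_x\|_2, \sup_x\|s_x\|_2 \le C\|\ph\|_{H^2}$, so that one application of Lemma \ref{l:a} combined with a Cauchy--Schwarz against $N^3V(N\cdot)\ast|\ph|^2 \le C\|\ph\|_{H^2}^2$ yields a bound by $C\|\ph\|_{H^2}^k(\cN+1)$. For the eight cubic summands one expands $c_x = \delta_x + p_x$ and $s_x = k_x + r_x$; pieces built only from $\delta$, $p$, $r$ factors are controlled by a Cauchy--Schwarz splitting (with small parameter $\delta$) between the quartic interaction $\int dxdy\,N^2V(N(x-y))a_x^*a_y^*a_xa_y$ and $\cN^2/N + (\cN+1)$, invoking the uniform HS bounds of Lemma \ref{l:kernels} (i). The delicate pieces are those carrying a factor $k_x$ (or $k_y$); for these one uses either the pointwise bound $|k_t(x,y)|\le N|\ph(x)||\ph(y)|$ from Lemma \ref{l:kernels} (ii) or the refined estimate $\|k_x\|_2 \le C|\ph(x)|\|\ph\|_{H^2}/\sqrt{N}$ (obtainable by rescaling $z=N(x-y)$ in $\int w^2(N(x-y))|\ph(y)|^2\,dy$ using $w^2 \in L^1$, Lemma \ref{lm:w}), so that the prefactor $1/\sqrt N$ combined with a second Cauchy--Schwarz distributes the remaining mass between $(\int N^2V(N(x-y))a_x^*a_y^*a_xa_y\,dxdy)^{1/2}$ and the finite quantity $(N\int N^3V(N(x-y))|\ph(x)|^2|\ph(y)|^2\,dxdy)^{1/2}\le CN^{1/2}\|\ph\|_{H^2}^2$; the outcome matches the form of \eqref{eq:TL3T-cl}.

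The commutator bound $\pm[\cN,\cE_3(t)]$ follows immediately because $[\cN,\cdot]$ either annihilates particle-conserving contributions or reproduces the same cubic/linear structure with a sign, so the previous estimates apply verbatim. For $\pm\dot\cE_3(t)$ one differentiates in time, letting either a factor of $\ph_t^{(N)}$ become $\dot\ph_t^{(N)}$ or one of the kernels become its time derivative; combining Lemma \ref{lm:dotk} with $\|\dot\ph_t^{(N)}\|_{H^2}\le Ce^{K|t|}$ from Proposition \ref{t:pdes} (iii) delivers the displayed growth, with the extra $\|\ph_t^{(N)}\|_{H^4}$ factor appearing when the derivative hits kernels involving $\dot k$. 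The main obstacle throughout is the treatment of the cubic summands carrying the large kernel $k_x$: although $\|k_t\|_2$ is of order one, $k_t$ is pointwise of size $N$ on microscopic scales, which forbids a direct Lemma \ref{l:a} bound. It is only the near-delta structure $N^3V(N\cdot)$ of the interaction that furnishes the cancelling factor, via $\int N^3V(N(x-y))|\ph(y)|^2\,dy\le C\|\ph\|_{H^2}^2$, allowing the $\sqrt N$ in front of the cubic term to be absorbed into a fraction of the quartic interaction by a final Cauchy--Schwarz.
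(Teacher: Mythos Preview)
Your isolation of the main linear piece from $\langle s_x,c_y\rangle$ is correct and matches the paper, and your remarks on $[\cN,\cE_3(t)]$ and $\dot\cE_3(t)$ are fine. The gap is in your treatment of the cubic summands carrying a factor $k_x$.

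You claim the ``refined estimate'' $\|k_x\|_2 \le C|\ph(x)|\,\|\ph\|_{H^2}/\sqrt N$, justified by rescaling and ``$w^2\in L^1$''. This is false: from \eqref{eq:f} one has $w(x)\sim a_0/|x|$ at infinity, so $w^2$ decays only like $|x|^{-2}$, which is \emph{not} integrable on $\bR^3$. Lemma~\ref{lm:w} gives $w(x)\le C(1+|x|)^{-1}$, which is likewise insufficient. Hence no $N^{-1/2}$ gain is available from $\|k_x\|_2$.

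The fix is simpler than your workaround suggests. Lemma~\ref{l:kernels}(ii) gives the pointwise bound $|k_t(x,y)|\le |x-y|^{-1}|\ph(x)||\ph(y)|$, and Hardy's inequality then yields $\|k_x\|_2\le C|\ph(x)|\,\|\ph\|_{H^1}$, uniformly in $N$; this is recorded in Lemma~\ref{l:kernels}(iii) as $\sup_x\|s_x\|_2\le C\|\ph\|_{H^2}$. With this bound the $k_x$-pieces are no different from the $r_x$-pieces, and the Cauchy--Schwarz splitting you describe for the latter applies verbatim. In fact there is no reason to expand $s_x=k_x+r_x$ at all: the paper treats each cubic summand directly, splitting it (with weight $\delta$) into a quartic factor carrying two operators in $x,y$ (bounded via Lemma~\ref{lm:4} or Lemma~\ref{lm:4b}) and a quadratic factor carrying the remaining operator plus $|\ph(y)|^2$ (bounded via Lemma~\ref{lm:2b}, using only the Hilbert--Schmidt norm $\|\text{sh}(k_t)\|_2\le C$). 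Your final paragraph's intuition that the ``near-delta structure of $N^3V(N\cdot)$'' must rescue the $k_x$-terms is therefore misplaced: that structure enters only through $\|N^3V(N\cdot)*|\ph|^2\|_\infty\le C\|\ph\|_{H^2}^2$ in Lemma~\ref{lm:2b}, uniformly for all pieces.
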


Notice here that the first term on the r.h.s.\ of (\ref{eq:pr-TL3T}) cancels exactly with the contribution (\ref{eq:TL1T}); we will make use of this crucial observation in the proof of Theorem \ref{thm:L} below.

\begin{proof}
To bound the cubic terms on the r.h.s.\ of (\ref{eq:cub-no}), we systematically apply Cauchy-Schwarz.
This way, we control cubic terms by quartic and quadratic contributions, which are then estimated making use of Lemma \ref{lm:2b} (the quadratic part) and Lemmas \ref{lm:4} and \ref{lm:4b} (the quartic part). For example,
\[ \begin{split} \Big| \frac{1}{\sqrt{N}}\int dx dy &N^3 V(N(x-y)) \varphi_t^{(N)} (y) \langle  \psi , a^* (c_x) a^* (c_y) a^* (s_x) \psi \rangle \Big| \\ \leq \; &  \frac{1}{\sqrt{N}} \int dx dy N^3 V (N (x-y)) |\varphi_t^{(N)} (y)|  \| a (c_x) a (c_y) \psi \| \, \| a^* (s_x) \psi \| \\ \leq \; & \frac{\delta}{N}  \int dx dy \, N^3 V (N (x-y)) \| a (c_x) a(c_y) \psi \|^2 \\ &+ C_\delta\int dx dy N^3 V (N (x-y)) |\varphi_t^{(N)} (y)|^2 \| a^* (s_x)  \psi \|^2 \\ \leq \; & \frac{C \delta}{N} \int dx dy N^3 V(N (x-y)) \| a_y a_x \psi \|^2 + C_\delta  \| (\cN + 1)^{1/2}  \psi \|^2  \end{split} \]
where, in the last line, we used (\ref{eq:lm4b-1}) (from Lemma \ref{lm:4b}) and (\ref{eq:lm2b-4}) (from Lemma \ref{lm:2b}). All other cubic terms can be bounded similarly. We always separate the three creation and/or annihilation operators putting a small weight $\delta$ in front of the quartic term and
in such a way that, in the resulting quartic contribution, two operators depend on the $x$ and two on the $y$ variable. The corresponding quadratic term depends on $x$ and can always be bounded by (\ref{eq:lm2b-4}). It should be noted that the quartic contribution has either the form $\| a (c_x) a (c_y) \psi \|^2$ or $\| a (c_x) a^\sharp (j_y) \psi \|^2$ or $\| a^\sharp (j_{1,x}) a^\sharp (j_{2,y}) \psi \|^2$, with square-integrable arguments $j_1, j_2$ (here $a^\sharp$ is either $a$ or $a^*$). These terms can always be controlled using Lemma \ref{lm:4} or Lemma \ref{lm:4b}. As for the linear contributions on the r.h.s.\ of (\ref{eq:cub-no}), the first and third can simply be bounded by $N^{-1/2} \, (\cN+1)^{1/2}$, since 
\[ |\langle s_y , s_x \rangle| \leq C |\varphi^{(N)}_t (x)| |\varphi^{(N)}_t (y)|. \]
To bound the second linear term, we write 
\[ \begin{split} 
\frac{1}{\sqrt{N}} \int dx dy & N^3 V(N (x-y)) \varphi_t^{(N)} (y) \langle s_x, c_y \rangle (a(c_x) + a^* (s_x))  \\ = \; & \frac{1}{\sqrt{N}} \int dx dy  N^3 V(N (x-y)) \varphi_t^{(N)} (y) \overline{k}_t (x,y) \, (a(c_x) + a^* (s_x)) + \wt{\cE} (t) \end{split} \]
where $\pm \wt{\cE} (t) \leq N^{-1/2} (\cN+1)^{1/2}$ because, using Lemma
\ref{l:kernels}, \[ |\langle s_x , c_y \rangle - \overline{k}_t (x,y)| \leq
C |\varphi_t^{(N)} (x)| |\varphi_t^{(N)} (y)|.\] {F}rom  \eqref{eq:hireg},
this concludes the proof of the first estimate in (\ref{eq:TL3T-cl}). The
other two estimates are proven analogously, using the fact that the
commutators of $\cN$ with the terms on the r.h.s.\ of (\ref{eq:cub-no}) have
the same form as the terms on the r.h.s.\ of (\ref{eq:cub-no}) (with
possibly just a different sign), and using the bounds for $\|
\dot{\text{sh}}_{k_t} \|_2$ and $\| \dot{p}_{k_t} \|_2$ from Lemma
\ref{lm:dotk}, and the bound for $\| \dot{\varphi}_t^{(N)} \|_{H^2}$ from
Proposition \ref{t:pdes}.
\end{proof}

\subsection{Analysis of $T^* \cL^{(0)}_{4,N} T$}

We consider next the contributions arising from the quartic part $\cL_{4,N}^{(0)}$ of $\cL_N^{(0)} (t)$. We have
\[\begin{split} T^* \cL_{4,N}^{(0)} (t) T = &\frac{1}{2N} \int dx dy \, N^3
V(N (x-y)) \\ &\hspace{1cm} \times (a^* (c_x) + a (s_x)) (a^* (c_y) + a
(s_y)) (a (c_y) + a^* (s_y)) (a(c_x) + a^* (s_x)). \end{split} \] 
Expanding the products, we find 
\[ \begin{split} 
2 T^* & \cL_{4,N}^{(0)} (t) T \\ =  \; & \int dx dy \, N^2 V (N (x-y)) \Big[ a^* (c_x) a^* (c_y) a^* (s_y) a^* (s_x) + a^* (c_x) a^* (c_y) a^* (s_y) a (c_x) \\ &\hspace{.5cm} + a^* (c_x) a^* (c_y) a(c_y) a^* (s_x) + a^* (c_x) a^* (c_y) a (c_y) a (c_x)  + a^* (c_x) a (s_y) a^* (s_y) a^* (s_x)  \\ &\hspace{.5cm}  + a^* (c_x) a (s_y) a^* (s_y) a (c_x) + a^* (c_x) a (s_y) a(c_y) a^* (s_x) + a^* (c_x) a (s_y) a (c_y) a (c_x)  \\ &\hspace{.5cm}  + a (s_x) a^* (c_y) a^* (s_y) a^* (s_x) + a (s_x) a^* (c_y) a^* (s_y) a (c_x) + a (s_x) a^* (c_y) a(c_y) a^* (s_x)  \\ &\hspace{.5cm} + a (s_x) a^* (c_y) a (c_y) a (c_x)  + a (s_x) a (s_y) a^* (s_y) a^* (s_x) + a (s_x) a (s_y) a^* (s_y) a (c_x)  \\ &\hspace{.5cm}  + a (s_x) a (s_y) a(c_y) a^* (s_x) + a (s_x) a (s_y) a (c_y) a (c_x) \Big]. 
\end{split} \]
Writing all terms in normal order, we obtain
\begin{equation}\label{eq:TL4T} \begin{split} 
2T^* & \cL_{4,N}^{(0)} (t) T \\ =  \; & \int dx dy \, N^2 V (N (x-y)) \Big[ a^* (c_x) a^* (c_y) a^* (s_y) a^* (s_x) + a^* (c_x) a^* (c_y) a^* (s_y) a (c_x)  \\ &\hspace{.5cm} + a^* (c_x) a^* (c_y) a^* (s_x) a(c_y)  + a^* (c_x) a^* (c_y) a (c_y) a (c_x) + a^* (c_x)  a^* (s_y) a^* (s_x) a (s_y)  \\ &\hspace{.5cm}  
+ a^* (c_x) a^* (s_y) a (s_y) a (c_x) + a^* (c_x)  a^* (s_x)  a (s_y) a(c_y) + a^* (c_x) a (s_y) a (c_y) a (c_x)  \\ &\hspace{.5cm}  + a^* (c_y) a^* (s_y) a^* (s_x)  a (s_x) + a^* (c_y) a^* (s_y)  a (s_x) a (c_x) 
+  a^* (c_y)  a^* (s_x) a (s_x) a(c_y)  \\ &\hspace{.5cm}  + a^* (c_y)  a (s_x)  a (c_y) a (c_x) + a^* (s_y)   a^* (s_x) a (s_x) a (s_y) + a^* (s_y)  a (s_x) a (s_y) a (c_x)  \\ &\hspace{.5cm}  + a^* (s_x) a (s_x) a (s_y) a(c_y)  + a (s_x) a (s_y) a (c_y) a (c_x) \Big] \\
&+ \int dx dy \, N^2 V (N (x-y))  \Big[ \langle c_y, s_x \rangle a^* (c_x) a^* (c_y) + \langle s_x, c_y \rangle a (c_y)  a(c_x)   \\ &\hspace{.5cm} + 2 \langle s_y, s_y \rangle a^* (c_x)  a^* (s_x) + 2 \langle s_y, s_y \rangle a (s_x) a (c_x) 
+ 2 \langle s_y, s_x \rangle a^* (c_x)  a^* (s_y)   \\ &\hspace{.5cm}
+ 2\langle s_x , s_y \rangle a (s_y) a (c_x) + 2 \langle s_y, s_y \rangle   a^* (c_x) a (c_x) 
+ 2 \langle s_y , s_x \rangle a^* (c_x) a(c_y) 
  \\ &\hspace{.5cm} + 2 \langle s_x , s_y \rangle  a^* (s_x)  a (s_y) 
+ 2 \langle s_y , s_y \rangle  a^* (s_x)  a (s_x) +  \langle c_y , s_x \rangle  a^* (c_x)   a (s_y)   \\ &\hspace{.5cm} + \langle s_x, c_y \rangle a^* (s_y) a (c_x) 
+  \langle s_x , c_y \rangle a^* (s_y) a^* (s_x)   + \langle c_y , s_x \rangle a (s_x) a (s_y)  
  \\ &\hspace{.5cm} + \langle s_x , c_y \rangle a^* (s_x) a(c_y)  + \langle c_y, s_x \rangle  a^* (c_y) a (s_x) \Big] \\
&+ \int dx dy \, N^2 V (N (x-y)) \left[ |\langle s_x , c_y \rangle|^2  +  |\langle s_x , s_y \rangle|^2   
+ \langle s_y , s_y \rangle \langle s_x ,s_x \rangle \right]. 
\end{split} \end{equation}

The properties of $T^* \cL^{(0)}_{4,N} T$ are summarized in the next proposition. 
\begin{proposition}\label{prop:TL4T}
We have 
\begin{equation}\begin{split} 2 T^* \cL_{4,N}^{(0)} (t) T =\; & \int dx dy N^2 V (N (x-y)) \left[ |\langle s_x , c_y \rangle|^2  +  |\langle s_x , s_y \rangle|^2   + \langle s_y , s_y \rangle \langle s_x ,s_x \rangle \right] \\  &+ \int dx dy N^2 V (N (x-y)) a_x^* a_y^* a_y a_x \\ &+ \int dx dy \, N^2 V (N (x-y)) \left( k (x,y) a^*_x a^*_y + \overline{k} (x,y) a_x a_y \right) + \cE_4 (t) \end{split} \end{equation}
where the error $\cE_4 (t)$ is such that, for every $\delta > 0$, there exists a constant $C_\delta > 0$ with  
\begin{equation}\label{eq:TL4T-cl} 
\begin{split}
\pm \, \cE_4 (t) &\leq \delta \int dx dy N^2 V(N (x-y)) a_x^* a_y^* a_y a_x
+ C_\delta  \frac{\cN^2}{N}  + C_\delta \ech{\| \varphi_t^{(N)}
\|_{H^2}^2}{\ekt} \left(\cN + 1 \right), \\
\pm \, \left[ \cN , \cE_4 (t) \right]  &\leq  \delta \int dx dy N^2 V(N
(x-y)) a_x^* a_y^* a_y a_x + C_\delta \frac{\cN^2}{N} + C_\delta \ech{\|
\varphi_t^{(N)} \|_{H^2}^2}{\ekt} \left(\cN + 1 \right), \\
\pm \dot{\cE}_4 (t) &\leq 
\delta \int dx dy N^2 V(N (x-y)) a_x^* a_y^* a_y a_x + \ech{C_\delta \| \varphi_t^{(N)} \|_{H^4}^2  \frac{\cN^2}{N} \\ &\hspace{.5cm} + C_\delta \left(\| \varphi_t^{(N)} \|_{H^4} \| \varphi_t^{(N)} \|_{H^2} + \| \varphi_t^{(N)} \|_{H^2}^3 \right) \, \left(\cN + 1 \right)}{C_\delta \ekt \left(\frac{\Ncal^2}{N} +\Ncal + 1 \right)}.
\end{split}
\end{equation}
\end{proposition}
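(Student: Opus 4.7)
The plan is to start from the normally-ordered expansion (\ref{eq:TL4T}) for $2 T^* \cL^{(0)}_{4,N}(t) T$ and to substitute the decompositions $c_x(z) = \delta(z-x) + p_x(z)$ and $s_x(z) = k_t(z,x) + r_x(z)$ in every creation/annihilation operator argument. Each of the sixteen quartic monomials together with the associated quadratic and scalar contractions splits into a \emph{main part} (corresponding to replacing every $c$ by $\delta$ and every $s$ by $k$) and a \emph{remainder} (containing at least one $p$ or $r$ factor). The three leading pieces that should be isolated are: (a) the full quartic interaction $\int N^2 V(N(x-y)) a_x^* a_y^* a_y a_x$, coming exclusively from $a^*(c_x) a^*(c_y) a(c_y) a(c_x)$ when all four $c$'s become $\delta$'s; (b) the quadratic term $\int N^2 V(N(x-y))\bigl(k(x,y) a_x^* a_y^* + \overline{k}(x,y) a_x a_y\bigr)$, coming from contracted monomials like $\langle c_y, s_x\rangle a^*(c_x) a^*(c_y)$ and its conjugate when $c\to\delta$ and $s\to k$ (and, by symmetry of $V$, $k(y,x)$ is identified with $k(x,y)$); (c) the three scalar terms on the last line of (\ref{eq:TL4T}). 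Everything else is collected into $\cE_4(t)$.

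To control the operator part of $\cE_4(t)$ in (\ref{eq:TL4T-cl}), I would treat each surviving quartic monomial by Cauchy--Schwarz, splitting it into
\[
\delta \int dx dy \, N^2 V(N(x-y)) \|a_x a_y \psi\|^2 + \delta^{-1} \int dx dy \, N^2 V(N(x-y)) \|a^\sharp(j_{1,x}) a^\sharp(j_{2,y}) \psi\|^2,
\]
where $j_1, j_2$ are assembled from $p$, $r$, $k$ according to which arguments were not taken to be $\delta$. The first piece is precisely the permitted $\delta$-multiple of $\int N^2 V(N(x-y)) a_x^* a_y^* a_y a_x$; the second, together with the leftover quadratic monomials, is estimated using Lemma \ref{lm:2b} for the quadratic block and Lemmas \ref{lm:4}, \ref{lm:4b} (referenced in the proof of Proposition \ref{prop:TL3T}) for the quartic block, relying on the uniform bounds $\|p\|_2, \|r\|_2, \|\text{sh}\|_2 \le C$ supplied by Lemma \ref{l:kernels}(i). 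The scalar contractions that are not part of the extracted leading constants are bounded in modulus by $C \|\varphi_t^{(N)}\|_{H^2}^2$.

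The main obstacle is precisely the subset of remainder monomials in which a $k_t$ kernel itself plays the role of an argument of $a^\sharp$: the Hilbert--Schmidt norm of $\nabla k_t$ is of order $\sqrt{N}$ and $k_t(x,y)$ is not uniformly bounded, so naive estimates lose powers of $N$. I would handle these using the two pointwise bounds from Lemma \ref{l:kernels}(ii), namely $|k_t(x,y)| \le \min\bigl(N|\varphi_t^{(N)}(x)||\varphi_t^{(N)}(y)|,\, |\varphi_t^{(N)}(x)||\varphi_t^{(N)}(y)|/|x-y|\bigr)$. Inserting the second bound, applying Hardy's inequality in the variable carrying the singularity, and exploiting the concentration of $N^2 V(N\cdot)$ yields, after Cauchy--Schwarz, contributions of the form $C \|\varphi_t^{(N)}\|_{H^2}^2 \|(\cN+1)^{1/2}\psi\|^2 + C \langle \psi, \cN^2 \psi\rangle/N$, as required.

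The commutator bound in (\ref{eq:TL4T-cl}) then follows at once: each monomial inside $\cE_4(t)$ has a fixed net number of creation minus annihilation operators, so commuting with $\cN$ multiplies it by a scalar of the same order, and $[\cN, \cE_4(t)]$ inherits the same structural estimate. For the time-derivative bound, $\dot{\cE}_4(t)$ is a sum of operators of the same shape in which exactly one kernel or one factor $\varphi_t^{(N)}$ is replaced by its time derivative; invoking Lemma \ref{lm:dotk} (which supplies $\|\dot p\|_2, \|\dot r\|_2, \|\dot k\|_2 \le C\ekt$) and Proposition \ref{t:pdes} (which gives $\|\dot\varphi_t^{(N)}\|_{H^2} \le C\ekt$) in place of the uniform Sobolev bounds reproduces the combinations $\|\varphi_t^{(N)}\|_{H^4}^2$ and $\|\varphi_t^{(N)}\|_{H^4}\|\varphi_t^{(N)}\|_{H^2} + \|\varphi_t^{(N)}\|_{H^2}^3$ appearing in the third line of (\ref{eq:TL4T-cl}).
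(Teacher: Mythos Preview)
Your overall plan---extract the three leading contributions from the normal-ordered expansion (\ref{eq:TL4T}) and bound everything else via Cauchy--Schwarz together with Lemmas \ref{lm:2b}, \ref{lm:4}, \ref{lm:4b}---is exactly the paper's strategy, and your treatment of $[\cN,\cE_4(t)]$ and $\dot\cE_4(t)$ is correct. However, two points deserve comment.

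First, the decomposition $s_x = k_x + r_x$ in \emph{every} argument is unnecessary and is the source of the artificial ``main obstacle'' you describe. The paper never splits $s_x$ in the quartic analysis: since $\sup_x\|s_x\|_2\le C\|\varphi_t^{(N)}\|_{H^2}$ by Lemma~\ref{l:kernels}(iii), any $a^\sharp(s_x)$ is already a perfectly good square-integrable argument for Lemma~\ref{lm:4}. Thus each of the fifteen quartic monomials other than $a^*(c_x)a^*(c_y)a(c_y)a(c_x)$ has at least one $s$-factor and is handled by a single Cauchy--Schwarz split, putting the small weight $\delta$ on the factor $\|a(c_x)a(c_y)\psi\|^2$ (controlled by Lemma~\ref{lm:4b}) and the large weight on a factor containing $a^\sharp(s_\cdot)$ (controlled by Lemma~\ref{lm:4}). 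Only $c_x$ ever needs to be written as $\delta+p_x$, and only where it is forced. Your concern about $\nabla k_t$ is misplaced: no gradients occur in $\cL_{4,N}^{(0)}$, and ``$k_t$ as argument of $a^\sharp$'' is harmless for the same reason $s_x$ is. The Hardy-inequality detour you outline is therefore not needed.

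Second, for the two exceptional quadratic terms $\langle c_y,s_x\rangle a^*(c_x)a^*(c_y)$ and its conjugate, the paper's extraction is slightly sharper than your description: one writes $\langle c_y,s_x\rangle = k(x,y)+g(x,y)$ with $|g(x,y)|\le C|\varphi_t^{(N)}(x)||\varphi_t^{(N)}(y)|$ (pointwise, not merely in $L^2$), and then decomposes only the operator part $a^*(c_x)a^*(c_y)=a_x^*a_y^*+a^*(p_x)a_y^*+a^*(c_x)a^*(p_y)$. The product $k(x,y)a_x^*a_y^*$ is the surviving leading term; the term $g(x,y)a_x^*a_y^*$ is bounded using $\|a_xa_y\psi\|\|\psi\|$ and Cauchy--Schwarz against the quartic interaction, and the remaining pieces fall under Lemma~\ref{lm:2b}. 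Your scheme reaches the same endpoint but generates many more cross terms.
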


To prove Proposition \ref{prop:TL4T}, we will make use of the following two lemmas.
\begin{lemma}\label{lm:4}
Suppose $j_1, j_2 \in L^2 (\bR^3 \times \bR^3)$ are kernels with the property that 
\[ M_i  := \max \left( \sup_x \int dy \, |j_i (x,y)|^2 , \sup_y \int dx \, |j_i (x,y)|^2 \right) < \infty \]
for $i =1 ,2$. Let $j_{i,x} (z) := j_i (z,x)$ and recall the definition $c_x
(z) = \text{ch}_{k_t} (z,x)$ from (\ref{eq:def-csprx}). Then there exists a
constant $C$ depending only on $M_1, M_2$ and on the $L^2$-norms $\| j_1
\|_2$, $\| j_2 \|_2$ such that
\begin{equation}\label{eq:lm4-1}
\int dx dy N^3 V (N (x-y)) \| a^\sharp (j_{1,x}) a_y \psi \|^2 \leq C M_1 \| (\cN+1) \psi \|^2
\end{equation}
and 
\begin{equation}\label{eq:lm4-2}
\int dx dy N^3 V(N (x-y)) \| a^\sharp (j_{1,x}) a^\sharp (j_{2,y}) \psi \|^2 \leq C \min \left(M_1 \| j_2 \|^2_2 , M_2 \| j_1 \|_2^2 \right) \, \| (\cN+1) \psi \|^2 .
\end{equation}
As a consequence
\begin{equation}\label{eq:lm4-3}
\int dx dy N^3 V (N (x-y)) \| a^\sharp (j_{1,x}) a (c_y) \psi \|^2 \leq C M_1 \, \| (\cN+1) \psi \|^2. \end{equation}
The inequalities remain true (and are easier to prove) if both operators act on the same variable. In other words
\begin{equation}\label{eq:lm4-4}
\begin{split}
\int dx dy N^3 V (N (x-y)) \| a^\sharp (j_{1,x}) a_x \psi \|^2 &\leq C M_1
\, \| (\cN+1) \psi \|^2, \\
\int dx dy N^3 V(N (x-y)) \| a^\sharp (j_{1,x}) a^\sharp (j_{2,x}) \psi \|^2
&\leq C \min (M_1 \|j_2 \|_2^2 , M_2 \|j_1\|_2^2 ) \, \| (\cN+1) \psi \|^2, \\
\int dx dy N^3 V (N (x-y)) \| a^\sharp (j_{1,x}) a (c_y) \psi \|^2 &\leq C M_1  \| (\cN+1) \psi \|^2.
\end{split}
\end{equation}
Here $a^\sharp$ is either the annihilation operator $a$ or the creation operator $a^*$. 
\end{lemma}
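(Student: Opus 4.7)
The plan is to prove each of the three displayed bounds by systematic application of the elementary estimates $\|a(f)\phi\| \leq \|f\|_2 \|\cN^{1/2}\phi\|$ and $\|a^*(f)\phi\| \leq \|f\|_2 \|(\cN+1)^{1/2}\phi\|$ from Lemma \ref{l:a}, combined with the pull-through identities $(\cN+1)^{1/2} a_y = a_y \cN^{1/2}$ and $(\cN+1)^{1/2} a^*(f) = a^*(f)(\cN+2)^{1/2}$, together with the basic integrability $\int dy\, N^3 V(N(x-y)) = \|V\|_1$ for the rescaled interaction. Note that only $V \in L^1$ is used here; the stronger hypothesis $V \in L^1 \cap L^3$ enters elsewhere in the paper.

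For \eqref{eq:lm4-1}, I would first apply the $a^\sharp$-bound pointwise in $x,y$ to obtain $\|a^\sharp(j_{1,x}) a_y \psi\|^2 \leq \|j_{1,x}\|_2^2 \, \|(\cN+1)^{1/2} a_y \psi\|^2$. The factor $\|j_{1,x}\|_2^2$ is bounded by $M_1$ uniformly in $x$ (by the definition of $M_1$, which controls precisely $\sup_x\int dz\,|j_1(z,x)|^2$). Then, since $\int dx\, N^3 V(N(x-y)) = \|V\|_1$ independently of $y$, Fubini reduces the estimate to $M_1 \|V\|_1 \int dy\, \|(\cN+1)^{1/2} a_y\psi\|^2$; commuting $(\cN+1)^{1/2}$ past $a_y$ and integrating gives $\int dy\, \|a_y \cN^{1/2}\psi\|^2 = \|\cN \psi\|^2 \leq \|(\cN+1)\psi\|^2$, yielding the claimed bound.

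The proof of \eqref{eq:lm4-2} proceeds identically up to the point where, after extracting $\|j_{1,x}\|_2^2 \leq M_1$ and integrating out $x$ against $N^3 V(N(x-y))$, one is left with $M_1 \|V\|_1 \int dy\, \|(\cN+1)^{1/2} a^\sharp(j_{2,y}) \psi\|^2$. Pushing $(\cN+1)^{1/2}$ through $a^\sharp(j_{2,y})$ and using $\int dy\, \|a^\sharp(j_{2,y})\phi\|^2 \leq \|j_2\|_2^2 \|(\cN+1)^{1/2}\phi\|^2$ (which follows directly from Cauchy-Schwarz in the variable $z$ dual to $j_2$) with $\phi = \cN^{1/2}\psi$ produces the estimate $CM_1 \|j_2\|_2^2\,\|(\cN+1)\psi\|^2$. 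The full $\min$ in \eqref{eq:lm4-2} then comes for free by symmetry: since $V(N(x-y))$ is symmetric under $x \leftrightarrow y$, one may exchange the roles of $j_1$ and $j_2$ in the argument (using that $a^\sharp(j_{1,x})$ and $a^\sharp(j_{2,y})$ commute up to a term whose $L^2$ norm in $(x,y)$ is controlled by $\|j_1\|_2 \|j_2\|_2$ when $a^\sharp$'s differ) to obtain the companion bound $CM_2 \|j_1\|_2^2 \|(\cN+1)\psi\|^2$. The estimate \eqref{eq:lm4-3} is then immediate from the decomposition $c_y = \delta(\cdot - y) + p(k_t)(\cdot,y)$, so that $a(c_y) = a_y + a(p_{k_t,y})$; the first piece is handled by \eqref{eq:lm4-1} and the second by \eqref{eq:lm4-2} applied with $j_2 = p(k_t)$, invoking Lemma \ref{l:kernels} parts (i) and (iii) to control both $\|p(k_t)\|_2$ and $\sup_x \|p(k_t)(\cdot,x)\|_2$.

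The three inequalities in \eqref{eq:lm4-4}, where both operators depend on the same variable $x$, are strictly easier: after applying the $a^\sharp$-bound the argument reduces immediately to $\int dx\, \|(\cN+1)^{1/2} a^\sharp(j_{2,x})\psi\|^2$ (times the constant $\|V\|_1$), which is handled as above with no need for any pull-through across different variables. There is no genuinely hard step in the whole lemma; the only point that requires minor care is the commutation bookkeeping needed when both $a^\sharp$ in \eqref{eq:lm4-2} are creation operators, since then one must control a commutator $[a^*(j_{1,x}), a^*(j_{2,y})] = 0$—which is trivial—while if exactly one is a creation operator the commutator $-\langle j_{2,y}, j_{1,x}\rangle$ contributes a constant term whose square-integral in $(x,y)$ against $N^3 V(N(x-y))$ is again controlled by $\|V\|_1$ times the quantities $M_1 \|j_2\|_2^2$ or $M_2 \|j_1\|_2^2$ by Cauchy-Schwarz.
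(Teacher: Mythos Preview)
Your proposal is correct and follows essentially the same route as the paper: apply Lemma~\ref{l:a} to peel off the outer $a^\sharp$, pull $(\cN+1)^{1/2}$ through the inner operator, and then integrate using $\int N^3 V(N\cdot) = \|V\|_1$; the deduction of \eqref{eq:lm4-3} from \eqref{eq:lm4-1}--\eqref{eq:lm4-2} via $a(c_y)=a_y+a(p_y)$ is identical to the paper's.

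One minor simplification: your commutator discussion for the symmetric bound in \eqref{eq:lm4-2} is unnecessary. Rather than taking $\sup_x\|j_{1,x}\|_2^2 \le M_1$ before integrating, apply Lemma~\ref{l:a} twice to arrive at the pointwise bound $\|a^\sharp(j_{1,x})a^\sharp(j_{2,y})\psi\|^2 \le \|j_{1,x}\|_2^2\,\|j_{2,y}\|_2^2\,\|(\cN+2)\psi\|^2$, and only then integrate. The resulting scalar integral $\int dx\,dy\, N^3 V(N(x-y))\,\|j_{1,x}\|_2^2\,\|j_{2,y}\|_2^2$ is manifestly bounded by both $M_1\|V\|_1\|j_2\|_2^2$ and $M_2\|V\|_1\|j_1\|_2^2$, giving the $\min$ without any operator reordering. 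This is how the paper proceeds.
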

\begin{proof}
To prove (\ref{eq:lm4-1}), we observe that
\[ \begin{split} \int dx dy N^3 V (N (x-y)) & \| a^\sharp (j_{1,x}) a_y \psi \|^2 \\ \leq \; & \int dx dy N^3 V (N (x-y)) \| j_{1,x} \|_2^2 \| (\cN+1)^{1/2} a_y \psi \|^2 \\ \leq \; & M_1 \int dx dy N^3 V (N (x-y)) \| a_y \cN^{1/2} \psi \|^2 \\ = \; &C M_1 \| \cN \psi \|^2. \end{split} \]
As for (\ref{eq:lm4-2}), we notice that (considering for example the case $a^\sharp (j_{2,y}) = a^* (j_{2,y})$) 
\[ \begin{split} 
\int dx dy N^3 V(N (x-y)) & \| a^\sharp (j_{1,x}) a^* (j_{2,y}) \psi \|^2 \\ \leq \; & \int dx dy N^3 V (N (x-y)) \| j_{1,x} \|_2 \| (\cN+1)^{1/2} a^* (j_{2,y}) \psi \|^2 \\ \leq \; & \int dx dy N^3 V (N (x-y)) \| j_{1,x} \|^2_2 \| a^* (j_{2,y}) (\cN+2)^{1/2} \psi \|^2  \\ \leq \; & \int dx dy N^3 V (N (x-y)) \| j_{1,x} \|^2_2  \| j_{2,y} \|_2^2 \| (\cN+1)^{1/2} (\cN+2)^{1/2} \psi \|^2 \\ \leq \; & C M_1 \| j_2 \|^2_{2} \| (\cN+1) \psi \|^2.  \end{split} \]
Eq.\ (\ref{eq:lm4-3}) follows from the first two, by writing $a (c_y) = a_y + a (p_y)$ (recall here that we are using the notation $p_y (z) = p (k_t) (z,y)$ with the kernel $p (k_t) \in L^2 (\bR^3 \times \bR^3)$ defined in Lemma~\ref{l:kernels}). Eq.\ (\ref{eq:lm4-4}) follows similarly; in this case, however, one can immediately integrate over the variable $y$, simplifying the proof. 
\end{proof}

Terms of the form (\ref{eq:lm4-3}), but with $j_{1,x}$ replaced by $c_x$ (which is not in $L^2$) are treated differently.
\begin{lemma}\label{lm:4b}
Recall the definition $c_x (z) = \text{ch}_{k_t} (z,x)$ from (\ref{eq:def-csprx}). Then there exists a constant $C>0$ with
\begin{equation}\label{eq:lm4b-1}
\begin{split}
\int dx dy N^3 V(N(x-y)) \| a (c_x) a(c_y) \psi \|^2 \leq \; &C \int dx dy N^3 V (N(x-y)) \| a_x a_y \psi \|^2 \\ &+ C\, \| (\cN+1) \psi \|^2. \end{split} \end{equation}
More precisely, we have 
\begin{equation}\label{eq:lm4b-2} \int dx dy N^3 V(N(x-y)) \| a (c_x) a (c_y) \psi \|^2 = \int dx dy N^3 V(N (x-y)) \| a_x a_y \psi \|^2 + \wt{\cE} (t) \end{equation}
where the error $\wt{\cE} (t)$ is such that, for every $\delta > 0$, there exists a constant $C_\delta$ 
with
\[ \pm \wt{\cE} (t) \leq \delta \int dx dy N^3 V (N (x-y)) \| a_x a_y \psi \|^2 + C_\delta \| (\cN+1) \psi \|^2. \]
\end{lemma}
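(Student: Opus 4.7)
The plan is to exploit the decomposition $\text{ch}(k_t) = 1 + p(k_t)$ from (\ref{eq:pr-def}), which at the level of operator-valued distributions reads $a(c_x) = a_x + a(p_x)$, with $p_x(z) = p(k_t)(z,x)$ as in (\ref{eq:def-csprx}). Expanding the product yields
\[ a(c_x) a(c_y) = a_x a_y + R_{x,y}, \qquad R_{x,y} := a_x a(p_y) + a(p_x) a_y + a(p_x) a(p_y). \]
The term $a_x a_y$ reproduces exactly the right-hand side of (\ref{eq:lm4b-2}), so the whole task is to show that $R_{x,y}$ produces an acceptable error when its squared norm is integrated against $N^3 V(N(x-y))$.

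Next, expanding the squared norm gives
\[ \|a(c_x) a(c_y) \psi\|^2 = \|a_x a_y \psi\|^2 + 2\,\text{Re}\,\langle a_x a_y \psi, R_{x,y}\psi\rangle + \|R_{x,y}\psi\|^2. \]
A Cauchy-Schwarz inequality bounds the cross term by $\delta \|a_x a_y \psi\|^2 + \delta^{-1} \|R_{x,y}\psi\|^2$ for any $\delta > 0$, so both (\ref{eq:lm4b-1}) and (\ref{eq:lm4b-2}) follow once I prove the key estimate
\[ \int dx\, dy\, N^3 V(N(x-y)) \|R_{x,y}\psi\|^2 \leq C \|(\cN+1)\psi\|^2. \]
Defining $\wt{\cE}(t)$ as the integral of $2\,\text{Re}\langle a_x a_y \psi, R_{x,y}\psi\rangle + \|R_{x,y}\psi\|^2$ against $N^3 V(N(x-y))$ then yields (\ref{eq:lm4b-2}); taking $\delta = 1$ and dropping a sign on the cross term gives (\ref{eq:lm4b-1}).

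The key estimate reduces in turn to three applications of Lemma \ref{lm:4}, one for each summand of $R_{x,y}$, with $j_1$ (and $j_2$ in the third summand) chosen to be the kernel $p(k_t)$. The hypotheses of Lemma \ref{lm:4} demand finiteness of the suprema $M_i = \max(\sup_x \int dy\, |p(k_t)(x,y)|^2, \sup_y \int dx\, |p(k_t)(x,y)|^2)$ and of $\|p(k_t)\|_2$; both follow uniformly in $N$ from the pointwise bound $|p(k_t)(x,y)| \leq C|\varphi_t^{(N)}(x)||\varphi_t^{(N)}(y)|$ of Lemma \ref{l:kernels}(\ref{kr}) together with $\|p(k_t)\|_2 \leq C$ from part (\ref{k}). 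I do not anticipate a genuine obstacle here: the whole argument is essentially bookkeeping once one has the $L^2$-regularity of $p(k_t)$ in hand, with the only slightly delicate point being to verify that the manipulation $a(c_x) = a_x + a(p_x)$ at the distributional level is justified on states with a finite expectation of $\cN$, which it is by the standard bound $\|a(f)\phi\| \leq \|f\|_2 \|\cN^{1/2}\phi\|$ of Lemma \ref{l:a}.
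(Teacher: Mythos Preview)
Your proof is correct and follows essentially the same approach as the paper: decompose $a(c_x) = a_x + a(p_x)$ and control the remainder via Lemma~\ref{lm:4}, using the $L^2$ and pointwise bounds on $p(k_t)$ from Lemma~\ref{l:kernels}. The only cosmetic difference is that the paper proves (\ref{eq:lm4b-1}) first by a triangle inequality on the norm (not its square) and then proves (\ref{eq:lm4b-2}) separately by expanding the inner product $\langle \psi, a^*(c_x)a^*(c_y)a(c_y)a(c_x)\psi\rangle$ one factor at a time, whereas you handle both at once by expanding the squared norm and applying Cauchy--Schwarz to the cross term; the content is the same.
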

\begin{proof}
We write $a (c_x) = a_x + a (p_x)$, using the notation $p_x (z) = p (k_t) (z,x)$ introduced in (\ref{eq:def-csprx}). We have
\[ \| a (c_x) a (c_y) \psi \| \leq \| a_x a_y \psi \| + \| a_x a (p_y) \psi \| + \| a (p_x) a (c_y) \psi \|. \] 
Therefore, using (\ref{eq:lm4-1}) and (\ref{eq:lm4-3}), we immediately find (using Lemma \ref{l:kernels} to bound $\| p \|_2$ and $\sup_x \| p_x \|_2$)
\[ \int dx dy N^3 V(N(x-y)) \| a(c_x) a(c_y) \psi \|^2 \leq C \int dx dy N^3 V (N (x-y)) \| a_x a_y \psi \|^2 + C \| (\cN+1) \psi \|^2. \]
To prove (\ref{eq:lm4b-2}), we notice that
\[ \begin{split} 
\int dx dy N^3 &V (N(x-y)) \| a (c_x) a(c_y) \psi \|^2 \\ = \; & \int dx dy N^3 V (N(x-y)) \langle \psi , a^* (c_x) a^* (c_y) a(c_y) a(c_x) \psi \rangle \\
= \; & \int dx dy N^3 V (N(x-y)) \langle \psi , a^*_x a^*_y a_y a_x \psi \rangle \\
&+\int dx dy N^3 V (N(x-y))  \Big\langle \psi , \big[ a^* (p_x) a^*_y a_y a_x + a^* (c_x) a^* (p_y) a_y a_x \\ &\hspace{4cm} + a^* (c_x) a^* (c_y) a (p_y) a_x + a^* (c_x) a^* (c_y) a (c_y) a(p_x) \big] \psi \Big\rangle \\
=: \; & \int dx dy N^3 V (N (x-y)) \| a_x a_y \psi \|^2 + \wt{\cE} (t) \end{split}\]
where
\[ \begin{split}
|\wt{\cE} (t)| \leq \; & \int dx dy N^3 V(N(x-y)) \Big[ \| a (p_x) a_y \psi \| \| a_y a_x \psi \| + \| a(c_x) a(p_y) \psi \| \| a_y a_x \psi \| \\ &\hspace{3cm} 
+ \| a (c_x) a (c_y) \psi \| \| a(p_y) a_x \psi \| + \| a (c_x) a(c_y) \psi \| \| a(c_y) a(p_x) \psi \| \Big]  \\ \leq\; & \delta \int dx dy N^3 V(N (x-y)) \left[ \| a_x a_y \psi \|^2 + \| a (c_x) a(c_y) \psi \|^2 \right] \\ &+ C_\delta \int dx dy N^3 V(N (x-y)) \left[ \| a(p_x) a_y \psi \|^2 + \| a(c_x) a(p_y) \psi \|^2 \right] 
\\ \leq \; & \delta \int dx dy N^3 V(N (x-y)) \| a_x a_y \psi \|^2 +
C_\delta \| (\cN+1) \psi \|^2. \end{split} \]
Here, in the last inequality, we used (\ref{eq:lm4-1}), (\ref{eq:lm4-3}) from Lemma \ref{lm:4} and 
(\ref{eq:lm4b-1}).
\end{proof}

\begin{proof}[Proof of Proposition \ref{prop:TL4T}] 
To prove the first bound in (\ref{eq:TL4T-cl}) we observe that all quartic terms on the r.h.s.\ of (\ref{eq:TL4T}) can be bounded using Lemmas \ref{lm:4} and \ref{lm:4b}. For example, the contribution arising from the 
first term on the r.h.s.\ of (\ref{eq:TL4T}) is bounded by
\[ \begin{split} 
\Big| \int dx dy \,  N^2 V (N (x-y)) \langle &\psi ,  a^* (c_x) a^* (c_y) a^* (s_y) a^* (s_x) \psi \rangle \Big| \\\leq \; & \int dx dy N^2 V(N (x-y)) \| a (c_x) a (c_y) \psi \| \| a^* (s_y) a^* (s_x) \psi \| \\
\leq \; & \delta \int dx dy N^2 V (N (x-y)) \| a (c_x) a(c_y) \psi \|^2 \\ &+C_\delta \int dx dy N^2 V(N (x-y)) \| a^* (s_y) a^* (s_x) \psi \|^2 \\
\leq \; & C \delta \int dx dy N^2 V (N (x-y)) \| a_x a_y \psi \|^2 + \frac{C_\delta}{N} \|(\cN+1) \psi \|^2 \end{split} \]
where, in the last inequality, we used (\ref{eq:lm4b-1})  and
(\ref{eq:lm4-2}). All the other quartic terms on the r.h.s.\ of
(\ref{eq:TL4T}), with the exception of the fourth term (the one containing
only $c_x$ or $c_y$ as arguments of the creation and annihilation
operators), can be bounded similarly; the key observation here is that all
these terms have at least one creation or annihilation operator with square
integrable argument (this allow us to apply Lemma \ref{lm:4}). Moreover, in
all these terms, the quartic expression does not contain the annihilation
operators $a(c_x)$ and $a(c_y)$ in the two factors on the left, nor the
creation operators $a^* (c_x)$ and $a^*  (c_y)$ in the two factors on the
right (in Lemma~\ref{lm:4}, in particular in (\ref{eq:lm4-3}) it is of
course important that the factor $a(c_y)$ in the norm appears as an
annihilation and not as a creation operator). To bound the fourth term on the r.h.s.\ of (\ref{eq:TL4T}), where all the arguments of the creation and annihilation operators are not integrable, we cannot apply Lemma \ref{lm:4}. Instead, we use (\ref{eq:lm4b-2}) from Lemma \ref{lm:4b}. We obtain
\[ \begin{split}  \int dx dy N^2 V(N(x-y)) \langle \psi, a^* (c_x) &a^* (c_y) a(c_y) a(c_x) \psi \rangle \\ &=  \int dx dy N^2 V(N (x-y)) \langle \psi , a_x^* a_y^* a_y a_x \psi \rangle + \wt{\cE} (t) \end{split} \]
where the error $\wt{\cE} (t)$ is such that, for every $\delta > 0$, there exists $C_\delta >0$ with
\[ |\wt{\cE} (t)| \leq  \delta \int dx dy N^2 V(N (x-y)) \| a_x a_y \psi \|^2 + \frac{C_\delta}{N} \| (\cN+1) \psi \|^2. \]

The quadratic terms on the r.h.s.\ of (\ref{eq:TL4T}) can be bounded using Lemma \ref{lm:2b}. To this end, we observe that
\[ |\langle s_x, s_y \rangle| , |\langle s_x, c_y \rangle| \leq C N |\varphi^{(N)}_t (x)| |\varphi_t^{(N)} (y)|. \]
It is therefore easy to check that all the quadratic terms, with the exception of the first two (the quadratic terms appearing on the eighth line of (\ref{eq:TL4T})), have a form suitable to apply one of the bounds in Lemma \ref{lm:2b}. More precisely, we apply (\ref{eq:lm2b-1}), if the arguments of the two creation and/or annihilation operators are either $s_x$ or $s_y$. If, on the other hand, one of the two arguments is $c_x$ or $c_y$ and the other one is $s_x$ or $s_y$, we write $a^\sharp (c_x) = a_x + a^\sharp (p_x)$ and then we apply (\ref{eq:lm2b-1}) (to bound the contribution proportional to $a^\sharp (p_x)$) and (\ref{eq:lm2b-2}) (to bound the contribution proportional to $a_x^\sharp$). Finally, if both arguments are either $c_x$ or $c_y$ (and we have  exactly one creation and one annihilation operators), we write $a^\sharp (c_x) = a^\sharp_x + a^\sharp (p_x)$ and we apply (\ref{eq:lm2b-1}), (\ref{eq:lm2b-2}) and (\ref{eq:lm2b-3}). To control the two remaining quadratic contributions, we observe that, writing $a^* (c_x)= a^*_x + a^* (p_x)$,  
\begin{equation} \begin{split}\label{eq:4-2}  \int dx dy N^2 V(N (x-y)) & \langle c_y , s_x \rangle \langle \psi , a^* (c_x) a^* (c_y) \psi \rangle \\ =\; & \int dx dy N^2 V(N (x-y)) \langle c_y , s_x \rangle \langle \psi , a^*_x a^*_y \psi \rangle
\\ &+  \int dx dy N^2 V(N (x-y)) \langle c_y , s_x \rangle \langle \psi , a^* (p_x) a^*_y \psi\rangle
\\ &+  \int dx dy N^2 V(N (x-y)) \langle c_y , s_x \rangle \langle \psi , a^* (c_x) a^*(p_y) \psi \rangle. \end{split} \end{equation}
Since $|\langle c_y , s_x \rangle| \leq C N |\varphi^{(N)}_t (x)| |\varphi^{(N)}_t (y)|$, the last two terms can be bounded (in absolute value) using (\ref{eq:lm2b-2}) and (\ref{eq:lm2b-3}), respectively. We find
\[ \begin{split}
\left|   \int dx dy N^2 V(N (x-y)) \langle c_y , s_x \rangle \langle \psi ,
a^* (p_x) a^*_y \psi\rangle \right| &\leq C \| \varphi_t^{(N)} \|_{H^2}^2 \,
\| (\cN+1)^{1/2} \psi \|^2, \\ 
\left|   \int dx dy N^2 V(N (x-y)) \langle c_y , s_x \rangle \langle \psi , a^* (c_x) a^* (p_y) \psi\rangle \right| &\leq C  \| \varphi_t^{(N)} \|_{H^2}^2 \, \| (\cN+1)^{1/2} \psi \|^2. 
\end{split} \]
As for the first term on the r.h.s.\ of (\ref{eq:4-2}), we notice that 
\[ \langle c_y , s_x \rangle = k (x,y) + g (x,y) \]
where $g (x,y) = r(x,y) + \langle p_y , s_x \rangle$ is such that
\[ |g (x,y) | \leq C |\varphi_t^{(N)} (x)| \, | \varphi_t^{(N)} (y)|. \]
Therefore,
\[ \begin{split} 
\int dx dy N^2 V(N (x-y)) \langle c_y , s_x \rangle \langle \psi , a^*_x a^*_y \psi \rangle =\; & \int dx dy N^2 V(N (x-y)) k(x,y)  \langle \psi , a^*_x a^*_y \psi \rangle \\ &+ \int dx dy N^2 V(N (x-y)) g(x,y) \langle \psi , a^*_x a^*_y \psi \rangle.
\end{split} \]
The second term can be bounded by
\[ \begin{split} 
\Big| \int dx &dy  \, N^2 V(N (x-y)) g (x,y) \langle \psi , a^*_x a^*_y \psi \rangle \Big| \\  \leq \; &C \int dx dy N^2 V(N (x-y)) |\varphi_t^{(N)} (x)| |\varphi_t^{(N)} (y)| \|  a_x a_y \psi \| \, \| \psi \| \\ \leq \; &\delta \int dx dy N^2 V(N (x-y)) \| a_x a_y \psi \|^2 + C_\delta \int dx dy N^2 V(N(x-y)) | |\varphi_t^{(N)} (x)|^2 |\varphi_t^{(N)} (y)|^2  \\ \leq \; & \delta \int dx dy N^2 V(N(x-y)) \| a_x a_y \psi \|^2 + C_\delta \| \varphi_t^{(N)} \|_{H^2}^2.  \end{split} \]
Proceeding analogously to control the second quadratic term on the r.h.s.\ of (\ref{eq:TL4T}), we conclude that 
\[ \begin{split} 
\int dx dy \, N^2 V (N (x-y)) & \left[ \langle c_y, s_x \rangle \,  a^* (c_x) a^* (c_y)  + \langle s_x, c_y \rangle \, a (c_y)  a(c_x) \right] \\  = \; & \int dx dy \, N^2 V(N(x-y))  \left[ k(x,y) a^*_x a^*_y + \overline{k} (x,y) a_x a_y \right] \\ &+ \wt{\cE} (t) \end{split} \]
where the error $\wt{\cE} (t)$ is such that, for every $\delta >0$ there exists $C_\delta$ with 
\[ \pm \, \wt{\cE} (t) \leq  \delta \int dx dy N^2 V(N (x-y)) a_x^* a_y^* a_y a_x + C_\delta \| \varphi_t^{(N)}  \|_{H^2}^2 \,  (\cN + 1). \]
We then use \eqref{eq:hireg} to conclude the proof of the first bound in (\ref{eq:TL4T-cl}).

The proof of the second inequality in (\ref{eq:TL4T-cl}) is analogous, because commuting the terms contributing to $\cE_4 (t)$ with the number of particles operator $\cN$ either gives zero or leaves the terms essentially invariant (up to a constant and a possible sign change). Finally, also the third estimate in (\ref{eq:TL4T-cl}) can be proven similarly, because the time derivative of the terms contributing to $\cE_4 (t)$ can be expressed as linear combination of terms having the same form, just with one argument $c_x$, $c_y$, $s_x$ or $s_y$ replaced by its time-derivative. These terms can then be handled as above, using however the bounds for $\| \dot{\text{sh}}_{k_t} \|_2$ and $\| \dot{p}_{k_t} \|_2$ from Lemma \ref{lm:dotk} and the bound for $\| \dot{\varphi}_t^{(N)} \|_{H^2}$ from Proposition \ref{t:pdes}.
\end{proof}

\subsection{Analysis of $[i \partial_t T^*] T$}

We set 
\[ B = \frac{1}{2} \int dx dy \, \left( k_t (x,y) a_x^* a_y^* - \overline{k_t (x,y)} a_x a_y \right) \]
and 
\[ \dot{B} =  \frac{1}{2} \int dx dy \, \left( \dot{k}_t (x,y) a_x^* a_y^* - \overline{\dot{k}_t (x,y)} a_x a_y \right) \]
with 
\[ k_t (x,y) = -N w(N(x-y)) \varphi_t^{(N)} (x) \varphi_t^{(N)} (y) \]
and
\[ \dot{k}_t (x,y) = - N w (N (x-y))  \left( \dot{\varphi}_t^{(N)} (x) \varphi_t^{(N)} (y) + \varphi_t^{(N)} (x) \dot{\varphi}_t^{(N)} (y) \right). \]
Then $T = \exp (B)$. Using \eqref{eq:baker}, we find
\begin{equation}\label{eq:BCH} \left(\partial_t T^* \right) T = \int_0^1 \di\lambda\, e^{-\lambda B(t)} \dot B(t) e^{\lambda B(t)} = \sum_{n \geq 0} \frac{1}{(n+1)!}\,  \text{ad}^n_{B} (\dot{B}) \end{equation}
up to an error term, which however converges to zero in expectation on the
domain $D(\cN)$ of the number of particles operator (this can be shown as in
Lemma \ref{l:bt}). Notice that, by the estimates in Prop. \ref{prp:dTT}, the series is absolutely convergent in expectations. Since $D(\cN)$ is invariant w.\,r.\,t.\ the fluctuation dynamics $\cU (t,s)$ (this is proven similarly to Prop.\ \ref{prop:apri})\done{}, we can use (\ref{eq:BCH}) to compute the expectation of $(\partial_t T^*) T$ in the state $\cU (t;0) \psi$ for any $\psi \in D(\cN)$. 

Next, we compute the terms on the r.h.s.\ of (\ref{eq:BCH}).

\begin{lem}
\label{lm:highercommutators}
 For each $n \in \Nbb$ there exist $f_{n,1}, f_{n,2} \in L^2(\Rbb^3 \times \Rbb^3)$ such that
\begin{equation}\label{eq:adnBB} \begin{split}
\ad^n_B(\dot B) &= \frac{1}{2} \int \di x\di y\left( f_{n,1}(x,y) a^\ast_y a^\ast_x + f_{n,2}(x,y) a_x a_y \right) \quad \text{for all even $n$ and }\\
\ad^n_B(\dot B) &= \frac{1}{2} \int \di x\di y\left( f_{n,1}(x,y) a^\ast_x a_y + f_{n,2}(x,y) a_x a^\ast_y \right) \quad \text{for all odd $n$}
\end{split} \end{equation}
where \begin{equation}\label{eq:est-fn}  \| f_{n,i} \|_2  \leq 2^n \norm{k_t}_2^n \| \dot{k}_t \|_2 , 
\end{equation}
for all $n \geq 0$ and $i=1,2$, 
\begin{equation}\label{eq:est-dotfn}
\| \dot{f}_{n,i} \|_2 \leq \left\{ \begin{array}{ll} \| \ddot k_t \|_2 \quad & \text{if } n = 0 \\ 
4^n \| k_t \|_2^{n-1} \, \left( \| \ddot{k}_t \|_2  \| k_t \|_2 + \|\dot{k}_t \|_2^{2} \right) \quad & \text{if } n \geq 1 \end{array} \right. 
\end{equation}
and
\begin{equation}\label{eq:est-fn2} \int dx  \, |f_{n,i} (x,x)|  \leq  2^n \norm{k_t}_2^n \| \dot{k}_t \|_2 , \quad \int dx \, | \dot{f}_{n,i} (x,x)| \leq  4^n \| k_t \|_2^{n-1} \, \left( \| \dot{k}_t \|_2^2 + \| \ddot{k}_t \|_2 \| k_t \|_2\right)
\end{equation}
for all $n \geq 1$.  
\end{lem}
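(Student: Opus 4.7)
I would prove the lemma by induction on $n$, tracking how each application of $\operatorname{ad}_B$ transforms a quadratic expression in creation/annihilation operators into another quadratic expression, and how the associated kernels compose.

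For the base case $n=0$, the identity $\operatorname{ad}^0_B(\dot B) = \dot B$ gives directly $f_{0,1}(x,y) = \dot k_t(x,y)$ and $f_{0,2}(x,y) = -\overline{\dot k_t(x,y)}$, from which the estimates $\|f_{0,i}\|_2 = \|\dot k_t\|_2$ and $\|\dot f_{0,i}\|_2 = \|\ddot k_t\|_2$ are immediate. The inductive step rests on a direct computation with the canonical commutation relations: commuting the $a^* a^*$--part of $B$ with the $aa$--summand of $\operatorname{ad}^n_B(\dot B)$ (for $n$ even) contracts one index via $[a_x,a_y^*]=\delta(x-y)$ and produces expressions of the form $a^* a$ and $a a^*$, and symmetrically the $aa$--part of $B$ acting on the $a^* a^*$--summand does the same; the resulting kernels are operator-compositions of $k_t$ (or $\overline{k_t}$) with $f_{n,i}$ (or $\overline{f_{n,i}}$). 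A parallel computation for odd $n$ (where $\operatorname{ad}^n_B(\dot B)$ is made of $a^*a$ and $aa^*$ summands) shows that $[B,\cdot]$ restores the $a^*a^*$--$aa$ form. This proves the parity claim in \eqref{eq:adnBB}, and yields a recursion of the schematic form $f_{n+1,i} = \sum_j (\pm)\, k_t \circ f_{n,j}$ (or $\overline{k_t} \circ \overline{f_{n,j}}$), with at most two composition terms contributing to each $f_{n+1,i}$.

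Combining the recursion with the Hilbert--Schmidt inequality $\|KG\|_2 \le \|K\|_2\|G\|_2$ from \eqref{eq:CS-k} gives $\|f_{n+1,i}\|_2 \le 2\|k_t\|_2 \|f_{n,i}\|_2$, and the bound \eqref{eq:est-fn} follows by iteration from the base case. Differentiating the same recursion in $t$ and applying the Leibniz rule produces $\dot f_{n+1,i}$ as a sum of $k_t \circ \dot f_{n,i}$ and $\dot k_t \circ f_{n,i}$ type terms; feeding in the inductive bound on $\|\dot f_{n,i}\|_2$ together with \eqref{eq:est-fn} for $f_{n,i}$ gives the recursive estimate $\|\dot f_{n+1,i}\|_2 \le 2\|k_t\|_2\|\dot f_{n,i}\|_2 + 2\|\dot k_t\|_2\|f_{n,i}\|_2$, which readily yields \eqref{eq:est-dotfn} with the factor $4^n$ (the extra factor of $2$ per step absorbing the additional Leibniz summand). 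For the diagonal estimates \eqref{eq:est-fn2}, I would use that for $n\ge 1$ the kernel $f_{n,i}$ factors as a composition $K\circ G$ with $K$ essentially a copy of $k_t$ (or $\overline{k_t}$), so that by Cauchy--Schwarz
\[
\int dx\, |f_{n,i}(x,x)| \le \int dx dz\, |K(x,z)||G(z,x)| \le \|K\|_2 \|G\|_2,
\]
and the inductive bound \eqref{eq:est-fn} applied to $G$ completes the proof; the $\dot f_{n,i}$--case is handled identically after using the Leibniz decomposition.

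The main obstacle is not conceptual but combinatorial: carefully enumerating, at each inductive step, all the contractions produced by $[B,\cdot]$, verifying that every resulting term is again of the claimed quadratic form with the correct symmetry, and that the resulting combinations of kernels compose in a way that the norm picks up no more than the factor $2\|k_t\|_2$ per step (respectively the $4$ per step in the derivative estimate). Once the recursion is set up cleanly, the required bounds drop out of the Hilbert--Schmidt estimate \eqref{eq:CS-k} and the base case.
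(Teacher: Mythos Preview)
Your proposal is correct and follows essentially the same approach as the paper: induction on $n$, explicit computation of $[B,\cdot]$ via the CCR to obtain a recursion of the form $f_{n+1,i} = (\text{linear combination of compositions of } k_t,\overline{k_t} \text{ with } f_{n,j})$, and then the Hilbert--Schmidt bound \eqref{eq:CS-k} together with Cauchy--Schwarz on the diagonal to extract the factors $2\|k_t\|_2$ (resp.\ $4$ after the Leibniz rule). The only cosmetic differences are that the paper writes out the recursion explicitly (four terms with a prefactor $1/2$ in the even case, two terms in the odd case) and that $f_{n+1,1}$ is built from $f_{n,2}$ rather than $f_{n,1}$, which does not affect the uniform-in-$i$ bounds.
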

\begin{proof} The proof is by induction in $n$. For $n = 0$,
\bd
\ad^0_B(\dot B) = \dot B = \frac{1}{2}\int \di x\di y\left( \dot k_t (x,y) a^\ast_x a^\ast_y - \cc{\dot k_t (x,y)} a_x a_y \right).
\ed
Hence $f_{0,1} (x,y) = \dot{k}_t (x,y)$ and $f_{0,2} (x,y) = \cc{\dot{k}_t (x,y)}$, and the estimates (\ref{eq:est-fn}) and (\ref{eq:est-dotfn}) are clearly satisfied. Suppose now the statement holds for some $n \in \bN$. 
We prove them for $(n+1)$. We assume first that $n$ is even. Then, using the canonical commutation relations (\ref{eq:ccr}), we find
\begin{align*}
& \ad^{n+1}_B(\dot B) \\ & = [B,\ad^n_B(\dot B)] \\
& = \left[\frac{1}{2} \int \di x\di y\left( k_t (x,y)a^\ast_x a^\ast_y - \cc{k_t (x,y)}a_x a_y \right), \frac{1}{2}\int \di x\di y\left( f_{n,1}(x,y) a^\ast_x a^\ast_y + f_{n,2}(x,y) a_x a_y \right)\right] \\
& = \frac{1}{2} \int \di x\di z \left(f_{n+1,1}(x,z) a^\ast_x a_z + f_{n+1,2}(x,z) a_x a^\ast_z \right),
\end{align*}
where
\begin{equation}
\label{eq:even}
\begin{split}
f_{n+1,1}(x,z) & = -\frac{1}{2} \int \di y \left( k_t (x,y) \left(
f_{n,2}(z,y) + f_{n,2}(y,z) \right) + \cc{k_t (y,z)}\left( f_{n,2}(x,y) +
f_{n,2}(y,x)\right) \right),\\
f_{n+1,2}(x,z) & = -\frac{1}{2} \int \di y \left( k_t (y,z) \left( f_{n,1}(x,y) + f_{n,1}(y,x) \right) + \cc{k_t (x,y)}\left( f_{n,1}(z,y) + f_{n,1}(y,z)\right) \right) .
\end{split}
\end{equation}
By Cauchy-Schwarz (similarly to (\ref{eq:CS-k})), we have
\be{normnorm}
\begin{split}
\norm{f_{n+1,1}}_{2} & \leq 2 \| k_t  \|_2 \| f_{n,2} \|_2 \leq 2^{n+1} \|
k_t \|^{n+1}_2 \| \dot k_t \|_2, \\
\| f_{n+1 ,2} \|_2 & \leq 2 \| k_t \|_2 \| f_{n,1} \|_2 \leq 2^{n+1} \| k_t
\|^{n+1}_2 \| \dot k_t \|_2,
\end{split} 
\end{equation}
where we used the induction assumption. Moreover, again by Cauchy-Schwarz,
\[ \int dx \, |f_{n+1 ,1} (x,x)|  \leq 2 \| k_t \|_2 \| f_{n,2} \|_2 \leq 2^{n+1} \| k_t \|^{n+1}_2 \| \dot k_t \|_2 \,. \]
As for the time-derivative of $f_{n+1,i}$, we find
\[ \begin{split}
\dot{f}_{n+1,1}(x,z) = &-\frac{1}{2} \int \di y \Big( \dot{k}_t (x,y) \left( f_{n,2}(z,y) + f_{n,2}(y,z) \right) + k_t (x,y) \left( \dot{f}_{n,2} (z,y) + \dot{f}_{n,2} (y,z) \right) \\ &\hspace{1cm} + \cc{\dot{k}_t (y,z)} \left( f_{n,2}(x,y) + f_{n,2}(y,x)\right) + \cc{k_t (y,z)} \left( \dot{f}_{n,2}(x,y) + \dot{f}_{n,2}(y,x)\right) \Big) \end{split}\]
and similarly for $\dot{f}_{n+1,2}$. Hence, we find
\[ \begin{split} 
\| \dot{f}_{n+1,1} \|_2 \leq \; &2 \left( \| \dot{k}_t \|_2 \| f_{n,2} \|_2 + \| k_t \|_2 \| \dot{f}_{n,2} \|_2 \right)  \\
\leq \; & 2 \left( 2^n \| k_t \|_2^n \| \dot{k}_t \|_2^2 + 4^n \| k_t \|_2^n \left( \| \ddot k_t \|_2 \| k_t \|_2 + \| \dot k_t \|_2^2 \right) \right) \\
\leq \; & (2^{n+1} + 4^n) \| k_t \|_2^n \| \dot k_t \|_2^2 + 2 \cdot 4^n \| k_t \|_2^{n+1} \| \ddot k_t \|_2 \\ \leq \; & 4^{n+1} \| k_t \|_2^n \left( \| \ddot k_t \|_2 \| k_t \|_2 + \| \dot k_t \|_2^2 \right),
\end{split} \]
proving (\ref{eq:est-dotfn}) for $i = 1$. The same bound for $i=2$ and the second bound in (\ref{eq:est-fn2}) for $i=1,2$ can be proven similarly. 

If $n$ is odd, we have, using again the canonical commutation relations, 
\begin{align*}
& \ad^{n+1}_B(\dot B)\\
& = \left[ \frac{1}{2}\int \di x \di y \left( k_t (x,y)a^\ast_x a^\ast_y - \cc{k_t (x,y)} a_x a_y \right) , \frac{1}{2}\int \di x\di y \left( f_{n,1}(x,y) a^\ast_x a_y + f_{n,2}(x,y) a_x a^\ast_y \right) \right] \\
& = \frac{1}{2} \int \di x\di z \big( a^\ast_x a^\ast_z f_{n+1,1}(x,z) + a_x a_z f_{n+1,2}(x,z) \big)
\end{align*}
where
\begin{equation}
\label{eq:odd}
\begin{split}
f_{n+1,1}(x,z) & = - \int \di y\, k_t (x,y)\left( f_{n,1}(z,y) +
f_{n,2}(y,z) \right), \\
f_{n+1,2}(x,z) & = - \int \di y\, \cc{k_t (x,y)}\left( f_{n,1}(y,z) + f_{n,2}(z,y) \right).
\end{split}
\end{equation}
The bounds (\ref{eq:est-fn}), (\ref{eq:est-dotfn}), (\ref{eq:est-fn2}) follow as above.
\end{proof}

Using Lemma \ref{lm:highercommutators}, we obtain the following properties of $(\partial_t T^*)T$.
\begin{proposition}\label{prp:dTT}
There exists a constant $C>0$ with 
\begin{equation}\begin{split}
\label{eq:prT-cl1}  \pm \, (i\partial_t T^*) T &\leq C \ech{\|
\varphi_t^{(N)} \|_{H^2}}{\ekt} \,  (\cN +1), \\
\pm \, \left[ \cN , (i\partial_t T^*) T \right]  &\leq C \ech{\|
\varphi_t^{(N)} \|_{H^2}}{\ekt} \, (\cN + 1), \\
\pm \partial_t \left[ (i\partial_t T^*) T \right] &\leq C \ech{(\| \varphi_t^{(N)} \|^2_{H^2} + \| \varphi_t^{(N)} \|_{H^4})}{\ekt} \, (\cN +1). 
\end{split}
\end{equation}
\end{proposition}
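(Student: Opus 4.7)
The plan is to start from the Duhamel-type identity \eqref{eq:BCH}, which expresses $(\partial_t T^\ast)T$ as the absolutely convergent series $\sum_{n\geq 0}\frac{1}{(n+1)!}\ad^n_B(\dot B)$, and then bound each term using the explicit quadratic form \eqref{eq:adnBB} supplied by Lemma~\ref{lm:highercommutators}. The key ingredients are the uniform bound $\|k_t\|_2\leq C$ from Lemma~\ref{l:kernels} and the exponential bounds $\|\dot k_t\|_2,\|\ddot k_t\|_2\leq Ce^{K|t|}$ from Lemma~\ref{lm:dotk}, which together with \eqref{eq:est-fn} yield $\|f_{n,i}\|_2\leq (2C)^n e^{K|t|}$, so that the prefactor $1/(n+1)!$ renders the relevant series geometrically convergent.

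For the first inequality in \eqref{eq:prT-cl1}, I would treat the even-$n$ terms (of the form $\int f_{n,1}\,a_y^\ast a_x^\ast$ plus h.c.) by the standard Cauchy–Schwarz estimate
\[
\Big|\int dx\,dy\, f_{n,1}(x,y)\,\langle\psi, a_y^\ast a_x^\ast\psi\rangle\Big|
  \leq \int dx\,\|a_x\psi\|\,\|a^\ast(f_{n,1}(x,\cdot))\psi\|
  \leq \|f_{n,1}\|_2\,\|(\cN+1)^{1/2}\psi\|^2,
\]
and the odd-$n$ terms of the form $\int f_{n,1}\,a_x^\ast a_y$ analogously. The only subtlety is the odd-$n$ piece $\int f_{n,2}(x,y)\,a_x a_y^\ast$, which by canonical commutation relations equals $\int dx\,f_{n,2}(x,x)\cdot\mathbf 1+\int f_{n,2}(x,y)\,a_y^\ast a_x$; the constant trace piece is controlled via the bound $\int|f_{n,i}(x,x)|dx\leq 2^n\|k_t\|_2^n\|\dot k_t\|_2$ from \eqref{eq:est-fn2}, and is $\leq Ce^{K|t|}\langle\psi,(\cN+1)\psi\rangle/\|\psi\|^2$ after dividing by $\|\psi\|^2$. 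Summing over $n$ produces the factor $\sum_n (2\|k_t\|_2)^n/(n+1)!\leq e^{2\|k_t\|_2}$, giving the desired estimate.

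The commutator bound in \eqref{eq:prT-cl1} follows from the same scheme after observing that $[\cN,\ad^n_B(\dot B)]$ either vanishes (for the number-preserving pieces $a_x^\ast a_y$ coming from odd $n$) or equals $\pm 2\,\ad^n_B(\dot B)$ up to the constant trace term (for the pair-creation/annihilation pieces coming from even $n$ and for the normal-ordered remainder of odd $n$). Hence the same bounds carry over term by term, and the series argument is identical.

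For the third inequality I would differentiate the series in time, using the fact that the domain $D(\cN)$ is invariant under the fluctuation dynamics so the computations are justified on a dense set. Each term contributes $\frac{1}{(n+1)!}\partial_t\ad_B^n(\dot B)$, which by \eqref{eq:adnBB} is again a quadratic expression in creation/annihilation operators but now with kernels $\dot f_{n,i}$. The bounds \eqref{eq:est-dotfn} and \eqref{eq:est-fn2} give $\|\dot f_{n,i}\|_2,\int|\dot f_{n,i}(x,x)|dx\leq 4^n\|k_t\|_2^{n-1}(\|\ddot k_t\|_2\|k_t\|_2+\|\dot k_t\|_2^2)\leq C\cdot 4^n\|k_t\|_2^{n-1}e^{K|t|}$, and so the same quadratic bound combined with the factorial prefactor yields a convergent sum bounded by $Ce^{K|t|}(\cN+1)$. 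The main obstacle is purely bookkeeping: making sure that the normal-ordering constants from the odd-$n$ terms are absorbed into the $(\cN+1)$ bound via the second part of \eqref{eq:est-fn2}, and checking that the remainder in the finite-$n$ truncation of \eqref{eq:BCH} (analogous to the control in Lemma~\ref{l:bt}) vanishes on $D(\cN)$ so that the termwise estimates actually imply the operator inequality on that domain.
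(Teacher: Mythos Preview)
Your proposal is correct and follows essentially the same route as the paper: use the expansion \eqref{eq:BCH}, plug in the quadratic structure from Lemma~\ref{lm:highercommutators}, bound each term by $(\cN+1)$ via Cauchy--Schwarz (with the diagonal integral \eqref{eq:est-fn2} handling the constant from normal-ordering the odd-$n$ piece $a_x a_y^\ast$), and sum using the factorial prefactor. One small clarification: for odd $n$ the operator $\ad_B^n(\dot B)$ is entirely number-preserving (both $a_x^\ast a_y$ and $a_x a_y^\ast$ commute with $\cN$), so $[\cN,\ad_B^n(\dot B)]=0$ without any leftover; your remark about a ``normal-ordered remainder of odd $n$'' in the $\pm 2$ case is unnecessary, but this does not affect the argument.
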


\begin{proof}
We observe first of all that, for all $f_1, f_2 \in L^2 (\bR^3 \times\bR^3)$ with $\int dx \, |f_2 (x,x)| < \infty$,
we have
\begin{equation}\label{eq:prT-1}  \left| \left\langle \psi, \int dx dy \left( f_1(x,y) a^\ast_x a^\ast_y + f_2(x,y) a_x a_y \right) \psi \right\rangle \right|  \leq (\| f_1 \|_2 + \| f_2 \|_2) \, \scal{\psi}{(\Ncal+1)\psi} 
\end{equation}
and
\begin{equation}\label{eq:prT-2} \begin{split} \Big|  \big\langle \psi , \int dx dy \, ( f_1(x,y) a^\ast_x
& a_y + f_2(x,y) a_x a^\ast_y ) \psi \big\rangle \Big|  \\ & \leq (\| f_1 \|_2 + \| f_2 \|_2) \scal{\psi}{\Ncal\psi} + \int |f_2(x,x)| \di x  \, \| \psi \|^2 .
\end{split} \end{equation}
In fact, (\ref{eq:prT-1}) follows because
\[ \begin{split} 
 \Big| \big\langle \psi, \int dx dy \, ( f_1(x,y) a^\ast_x &a^\ast_y + f_2(x,y) a_x a_y) \psi \big\rangle \Big| \\
 \leq \; & \int dx \, \left( \| a_x \psi \| \| a^* (f_1(x,.)) \psi \| +  \, \| a^* (f_2 (x,.)) \psi \| \| a_x \psi \| \right) \\
 \leq \; & \| (\cN +1)^{1/2} \psi \|  \int dx \, (\| f_1(x,.) \|_2 + \| f_2 (x,.)\|_2) \,  \| a_x \psi \| \\
 \leq \; & (\| f_1 \|_2 + \| f_2 \|_2) \| (\cN+1)^{1/2} \psi \|^2.
 \end{split} \]
 Eq.\ (\ref{eq:prT-2}) can be proven similarly. Combining the last estimates with Lemma \ref{lm:highercommutators} and with (\ref{eq:BCH}), we find
 \[ \begin{split} 
| \langle \psi , (\partial_t T^*)T \psi \rangle|  \leq \; &\sum_{n \geq 0} \frac{1}{(n+1)!} \left| \langle \psi , \text{ad}^n_B (\dot{B}) \psi \rangle \right| \\
\leq \; &\sum_{n \geq 0} \frac{1}{(n+1)!} \left( \| f_{n,1} \|_2 + \| f_{n,2} \|_2 \right)  \| (\cN+1)^{1/2} \psi \|^2 \\& + \sum_{n\geq 1} \frac{1}{(2n)!}  \left(\int dx \, |f_{2n-1,2} (x,x)| dx \right)  \| \psi \|^2 \\
\leq \; &C \sum_{n\geq 0} \frac{(2 \| k_t \|_2)^n}{(n+1)!} \, \| \dot{k}_t \|_2 \, \| (\cN+1)^{1/2} \psi \|^2 + 
\sum_{n\geq 1} \frac{(2\| k_t \|_2)^n}{(2n)!}  \| \dot k_t \|_2 \| \psi \|^2 \\ 
\leq \; & C e^{2\| k_t \|_2} \| \dot{k}_t \|_2 \, \| (\cN+1)^{1/2} \psi \|^2 \\
\leq \; & C \ech{\| \varphi_t^{(N)} \|_{H^2}}{\ekt} \, \| (\cN+1)^{1/2} \psi \|^2 
\end{split}
\]
using also Lemma \ref{lm:dotk}. The second inequality in (\ref{eq:prT-cl1}) follows similarly because, essentially, the only consequence of taking the commutator with $\cN$ is to eliminate the terms $\text{ad}^n_B (\dot{B})$ for all odd $n$. Also the third bound in (\ref{eq:prT-cl1}) can be proven analogously, taking the time derivative of the expressions for $\text{ad}^n_B (\dot{B})$ given in (\ref{eq:adnBB}), using the bounds for $\| \dot f_{n,i} \|_2$ in (\ref{eq:est-dotfn}) and (\ref{eq:est-fn2}) and, finally, using the estimate for $\| \ddot k_t \|_2$ proven in Lemma \ref{lm:dotk}.
\end{proof}

\subsection{Proof of Theorem \ref{thm:L}}
\label{sub:thmL}

In this section, we combine the results of the previous subsection, to obtain a proof of Theorem \ref{thm:L}. {F}rom (\ref{eq:TL1T}) and Propositions \ref{prop:TKT}, \ref{prop:TwtL2T}, \ref{prop:TL3T}, \ref{prop:TL4T} it follows that
\begin{equation}\label{eq:cLN-can1} \begin{split} 
\cL_N (t) = \; &C_N (t) + \cK  + \frac{1}{2N}  \int dx dy N^3 V (N (x-y)) a_x^* a_y^* a_y a_x \\
&+ \Big[ N^3 \int dx dy (\Delta w) (N (x-y)) \varphi_t^{(N)} (x) \varphi_t^{(N)} (y)  a_x^* a_y^* \\ 
&\quad\;+ \frac{1}{2} \int dx dy N^3 V(N (x-y)) (1- w(N(x-y))) \varphi_t^{(N)} (x) \varphi_t^{(N)} (y)\, a_x^* a_y^*
+ \text{h.c.} \Big]  
\\ &+ \cE (t) 
\end{split} \end{equation}
where the constant $C_N (t)$ is defined in (\ref{eq:CNt}) and the error $\cE (t)$ is such that, for every $\delta > 0$ there exists $C_\delta > 0$ with
\begin{equation}\label{eq:cE-tot}
\begin{split} \pm \cE (t) &\leq \delta \left( \cK + \int dx dy N^2 V(N
(x-y)) a_x^* a_y^* a_y a_x \right) + C_\delta \frac{\cN^2}{N} + C_\delta
\ech{\| \varphi_t^{(N)} \|_{H^2}^2}{\ekt} \, \left( \cN + 1 \right), \\
\pm \left[ \cN ,  \cE (t)\right]  &\leq \delta \left( \cK + \int dx dy N^2
V(N (x-y)) a_x^* a_y^* a_y a_x \right) + C_\delta \frac{\cN^2}{N}  +
C_\delta \ech{\| \varphi^{(N)}_t \|_{H^2}^2}{\ekt} \left( \cN + 1 \right),\\
\pm \dot{\cE} (t) & \leq \delta \left( \cK + \int dx dy N^2 V(N (x-y)) a_x^* a_y^* a_y a_x \right) + \ech{C_\delta \| \varphi_t^{(N)} \|_{H^4}^2 \frac{\cN^2}{N} \\ &\hspace{.5cm} + C_\delta \left( \| \varphi_t^{(N)} \|_{H^4}  \| \varphi_t^{(N)} \|_{H^2} + \| \varphi_t^{(N)} \|_{H^2}^3 \right) \left(\cN+1 \right)}{C_\delta \ekt \left( \frac{\Ncal^2}{N}+\Ncal+1\right)}. 
\end{split} \end{equation}
In deriving (\ref{eq:cLN-can1}), we made use of the crucial cancellation between the linear contributions in (\ref{eq:TL1T}) and the linear terms in (\ref{eq:pr-TL3T}).  Next, we notice another crucial cancellation. The terms on the third and fourth line in (\ref{eq:cLN-can1}) can be written as
\[ \begin{split} 
N^3 \int dx dy \, a_x^* a_y^* \left[ \left(-\Delta +\frac{1}{2} V \right)(1-w) \right](N(x-y)) \varphi_t^{(N)} (x) \varphi_t^{(N)} (y) = 0 
\end{split}\]
since $f = 1-w$ is a solution of the zero-energy scattering equation $(-\Delta + (1/2) V)f = 0$.

We conclude that
\begin{equation}\label{eq:L-fin} \cL_N (t) = C_N (t) + \cK  + \frac{1}{2N}  \int dx dy N^3 V (N (x-y)) a_x^* a_y^* a_y a_x + \cE (t) \end{equation}
where the error $\cE (t)$ satisfies (\ref{eq:cE-tot}). Then (\ref{eq:thmL-1}) follows from the first bound in (\ref{eq:cE-tot}), taking $\delta = 1/2$. Also (\ref{eq:thmL-2}) and (\ref{eq:thmL-3}) follow from the second and third bounds in (\ref{eq:cE-tot}), since both $\cK$ and the quartic term on the r.h.s.\ of (\ref{eq:L-fin}) commute with $\cN$ and are time-independent. 

This concludes the proof of Theorem \ref{thm:L}.

\appendix

\section{Properties of the solution of the Gross-Pitaevskii equation}
\label{s:pde}

\begin{proof}[Proof of Proposition \ref{t:pdes}] 
(i) This part of the proposition is standard. One proves first local well-posedness of the two equations in $H^1 (\bR^3)$. The time of existence depends only on the $H^1$-norm of the initial data. Since $V,f \geq 0$, the $H^1$-norm is bounded by the energy, which is conserved.  
Hence one obtains global existence and a uniform bound on the $H^1$-norm. 

(ii) Also this part is rather standard, but since the non-linearity in (\ref{eq:mod-GP}) depends on $N$, and we need bounds uniform in $N$, we sketch the proof of the bound (\ref{eq:hireg}) for $\| \varphi_t^{(N)} \|_{H^n}$ (the bound for $\| \varphi_t \|_{H^n}$ can be proven analogously). We present the proof for the case $t >0$. We claim, first of all, that there exists $T>0$ depending only on $\| \varphi \|_{H^1}$ and $n \in \bN$ such that
  \begin{equation}\label{eq:preregularity}
    \sup_{t \in [0,T]} \| \varphi_t^{(N)} \|_{H^n} \le 2 \| \varphi_0 \|_{H^n} +
     \sup_{t \in [0,T]} \| \varphi_t^{(N)} \|_{H^{n-1}}^3.\done
 \end{equation}
 Introducing the short-hand notation $U_N (x)= N^3 V(Nx)f(Nx)$, we write the solution 
$\varphi_t^{(N)}$ of (\ref{eq:mod-GP}) as
\[
\varphi^{(N)}_t = e^{it\Delta} \varphi - i \int_0^t ds \, e^{i(t-s)\Delta} (U_N *
|\varphi^{(N)}_s|^2) \varphi^{(N)}_s.\]
Differentiating this equation w.r.t. the spatial variables we find that
\[
    \partial^\alpha \varphi^{(N)}_t = e^{it \Delta} \partial^\alpha \varphi - i
    \int_0^t ds \, e^{i(t-s) \Delta} \sum_{\beta \le \alpha} \sum_{\nu \le
    \beta} \binom{\alpha}{\beta} \binom{\beta}{\nu} \big( U_N * (\partial^\nu
    \overline{\varphi}^{(N)}_s \partial^{\beta - \nu} \varphi^{(N)}_s) \big)
    \partial^{\alpha - \beta} \varphi^{(N)}_s. \]
Here $\alpha$ is a three-dimensional multi-index of non-negative integers, with $|\alpha| \leq n$.
  
The $L_t^\infty ([0,T] , L_x^2)$-norm of the above expression can be controlled using
Strichartz estimates for the free Schr\"odinger evolution $e^{it\Delta}$
(see \cite[Theorem 1.2]{KT}). We find
\[\begin{split}
\| \partial^\alpha & \varphi^{(N)}_{(\cdot)} \|_{L_t^\infty L_x^2}  \\ \leq \; & \|
    \partial^\alpha \varphi \|_{L^2} +  \sum_{\beta \le \alpha} \sum_{\nu \le \beta} \binom{\alpha}{\beta}
    \binom{\beta}{\nu} \| \big( U_N * (\partial^\nu
    \overline{\varphi}^{(N)}_{(\cdot)} \partial^{\beta - \nu} \varphi^{(N)}_{(\cdot)})
    \big) \partial^{\alpha - \beta} \varphi^{(N)}_{(\cdot)} \|_{L_t^2 L_x^{6/5}} \\
     \leq \; & \|
    \partial^\alpha \varphi \|_{L^2} +  T^{1/2} \sum_{\beta \le \alpha} \sum_{\nu \le \beta} \binom{\alpha}{\beta}
    \binom{\beta}{\nu}  \sup_{t \in [0,T]} \| \big( U_N * (\partial^\nu
    \overline{\varphi}^{(N)}_{t} \partial^{\beta - \nu} \varphi^{(N)}_{t})
    \big) \partial^{\alpha - \beta} \varphi^{(N)}_{t} \|_{L_x^{6/5}}.
\end{split} \]    
By H\"older and Young inequality, we find 
\[ \begin{split}
\| \partial^\alpha & \varphi^{(N)}_{(\cdot)} \|_{L_t^\infty L_x^2} \\ \leq \; & \|
    \partial^\alpha \varphi \|_{L^2} + C T^{1/2}  \sum_{\beta \le \alpha} \sum_{\nu \le \beta}
    \binom{\alpha}{\beta} \binom{\beta}{\nu} \sup_{t \in [0,T]} \|
    \partial^\nu \varphi_t^{(N)} \|_{L^{p_1}} \| \partial^{\beta - \nu} \varphi_t^{(N)}
    \|_{L^{p_2}} \, \| \partial^{\alpha - \beta} \varphi_t^{(N)}
    \|_{L^{p_3}} 
  \end{split}\]
for $p_1, p_2, p_3 \geq 1$ with $p_1^{-1} + p_2^{-1} + p_3^{-1} = 5/6$. It is important to note that the indices $(p_1, p_2, p_3)$ can be chosen differently for each term in the summation. In some of the terms with $\lvert \alpha\rvert = n$, all $n$ derivatives hit the same $\varphi_t^{(N)}$. Since $1/6 + 1/6 + 1/2 = 5/6$, these terms can be bounded by
\begin{equation}\label{eq:multi-a}  \| \varphi_t^{(N)} \|^2_{L^6} \; \| \partial^\alpha \varphi_t^{(N)} \|_{L^2} \leq C \| \varphi_t^{(N)} \|_{H^n} \end{equation} for $C$ depending only on $\| \varphi \|_{H^1}$ (recall here that the $H^1$-norm is bounded uniformly in $t$, by part (i)). In some of the other terms, one $\varphi_t^{(N)}$ has $n - 1$ derivatives, one has at most one derivative and the last one has no derivatives. Since $\| \partial^{\gamma} \varphi_t^{(N)} \|_{L^6} \leq \| \varphi_t^{(N)} \|_{H^n}$, if $|\gamma| \leq n-1$, these terms are bounded by the r.h.s.\ of (\ref{eq:multi-a}). In all other terms, the three copies of $\varphi_t^{(N)}$ have at most $n-2$ derivatives. These terms are bounded by 
\[  \| \partial^{\gamma_1} \varphi_t^{(N)} \|_{L^6} \| \partial^{\gamma_2} \varphi_t^{(N)} \|_{L^6} \| \partial^{\gamma_3} \varphi_t^{(N)} \|_{L^2} \leq C \| \varphi_t^{(N)} \|_{H^{n-1}}^3 \]
for all multiindices $\gamma_1, \gamma_2, \gamma_3$ with $|\gamma_i| \leq n-2$, for $i=1,2,3$. We conclude that
\[ \| \partial^\alpha \varphi^{(N)}_{(\cdot)} \|_{L_t^\infty L_x^2} \leq  \|
    \partial^\alpha \varphi \|_{L^2} + C T^{1/2} \sup_{t \in [0,T]} \| \varphi_t^{(N)} \|_{H^n} + C T^{1/2} \sup_{t \in [0,T]} \| \varphi_t^{(N)} \|_{H^{n-1}}^3. \]   
Summing over all $\alpha$ with $|\alpha| \leq n$, we find
\[ \sup_{t \in [0,T]} \| \varphi^{(N)}_{t} \|_{H^n} \leq  \| \varphi \|_{H^n} + C T^{1/2} \sup_{t \in [0,T]} \| \varphi_t^{(N)} \|_{H^n} + C T^{1/2} \sup_{t \in [0,T]} \| \varphi_t^{(N)} \|_{H^{n-1}}^3. \]  
Choosing $T > 0$ so small that $CT^{1/2} \leq 1/2$, we find (\ref{eq:preregularity}).

To show (\ref{eq:hireg}), we iterate now (\ref{eq:preregularity}). We proceed by induction over $n$. For $n =1$, the claim follows from part (i). Suppose now that $\| \varphi^{(N)}_t \|_{H^{(n-1)}} \leq C_{n-1} \exp (K_{n-1} |t|)$, for constants $C_{n-1}, K_{n-1}$ depending on $\| \varphi \|_{H^{(n-1)}}$ and, respectively, on $\| \varphi \|_{H^1}$. Let $T$ be as in (\ref{eq:preregularity}). For an arbitrary $t >0$, there exists an integer $j \in \bN$ such that $(j-1) T < t \leq jT$. Then
  \begin{align*}
    \| \varphi_t^{(N)} \|_{H^n} & \le \sup_{s \in [(j-1)T, jT]} \| \varphi_s^{(N)}
    \|_{H^n}\done \\
    & \le 2 \| \varphi_{(j-1)T}^{(N)} \|_{H^n} + 2\sup_{s \in [(j-1)T, jT]} \|
    \varphi_s^{(N)} \|_{H^{n-1}}^3 \\
    & \le 2 \| \varphi_{(j-1)T}^{(N)} \|_{H^n} + 2 C_{n-1}^3 e^{3 K_{n-1} j T}.
  \end{align*}
Similarly we have
  \[
    \| \varphi_{(j-1)T}^{(N)} \|_{H^n} \le 2 \|
    \varphi_{(j-2)T}^{(N)} \|_{H^n} + 2 C_{n-1}^3  e^{3 K_{n-1} (j-1)T}.\done
  \]
Iterating $j$-times, we obtain
  \begin{align*}
    \| \varphi_t^{(N)} \|_{H^n} & \le 2^j \| \varphi \|_{H^n} + 2 C_{n-1}^3
    \sum_{\ell=0}^j 2^{\ell} e^{3 K_{n-1} (j-\ell)T} \le C_n e^{K_n t},\done
  \end{align*}
  for some constant $C_n$ depending on $\| \varphi \|_{H^n}$ and $K_n$ depending only on $\| \varphi \|_{H^1}$.

(iii) {F}rom the modified Gross-Pitaevski equation (\ref{eq:mod-GP}), letting $U_N (x) = N^3 V(Nx) f(Nx)$, we find
\[ \begin{split}
\norm{\phdot}_{2} \leq \; & \norm{\ph}_{H^2} + \left\| \left(U_N * \lvert \ph\rvert^2 \right)\ph \right\|_2 \\  \leq \; & \norm{\ph}_{H^2} + \left\| U_N * \lvert \ph\rvert^2 \right\|_2 \| \ph \|_\infty \\ \leq \; &
 \norm{\ph}_{H^2} + C \| U_N \|_1 \| \ph \|_4^2 \, \| \ph \|_\infty \\ \leq \;  &C \| \ph \|_{H^2}^3 \leq C e^{K |t|} 
\end{split} \]
for a constant $C$ depending only on $\| \varphi \|_{H^2}$ and $\| U_N \|_1$, and for $K >0$ depending only on $\| \varphi \|_{H^1}$. Here we used part (ii). Applying a gradient to (\ref{eq:mod-GP}), we find 
\begin{equation}\label{eq:nabladot} \begin{split}
i \nabla \phdot & = - \nabla \Delta \ph + \left( U_N \ast \lvert \ph \rvert^2 \right) \nabla \ph \\
& \quad + \left( U_N \ast \overline{\varphi}_t^{(N)} \nabla \ph \right) \ph + \left( U_N \ast  \nabla \overline{\varphi}_t^{(N)} \ph \right) \ph. 
\end{split} \end{equation}
Clearly, $\norm{\nabla \Delta \ph}_{2} \leq \norm{\ph}_{H^3}$. The second term on the first line is bounded in norm by
\[\begin{split}  \left\| (U_N * |\varphi_t^{(N)}|^2) \nabla \ph \right\|_2 &
\leq \| (U_N * |\varphi_t^{(N)}|^2)  \|_{\infty}   \| \nabla \ph \|_2 \\
&\leq  \|U_N\|_1 \|\varphi_t^{(N)}\|_\infty^2    \| \nabla \ph \|_2 \leq C
\| \ph \|_{H^2}^3. \end{split} \]
The terms on the second line of (\ref{eq:nabladot}) can be bounded similarly. {F}rom part (ii), we conclude that $\| \nabla \dot{\varphi}_t^{(N)} \|_2 \leq C \exp (K |t|)$. Analogously, we can also show that $\| \nabla^2 \dot{\varphi}_t^{(N)} \|_2 \leq C e^{K |t|}$. We conclude that $\| \dot{\varphi}^{(N)}_t\|_{H^2} \leq C e^{K |t|}$. Finally, (\ref{eq:mod-GP}) implies 
\[ \begin{split}
-\phddot  = \; &-\Delta i \phdot + \left( U_N \ast (\overline{\varphi}_t^{(N)} i \phdot)\right) \ph \\
&+ \left( U_N \ast (i\dot{\overline{\varphi}}_t^{(N)} \ph) \right) \ph + \left( U_N \ast \lvert\ph\rvert^2 \right) i \phdot.
\end{split}\]
Plugging in the r.h.s.\ of (\ref{eq:mod-GP}) for $i\phdot$, we arrive at 
\begin{align*}
- \phddot & = \Delta^2 \ph - \Delta \left( (U_N \ast \lvert \ph\rvert^2)\ph \right) + \left( U_N \ast \lvert \ph\rvert^2 \right)^2 \ph \\
& \quad + (U_N \ast \overline{\varphi}_t^{(N)} (-\Delta \ph))\ph + 2 \left[ U_N \ast \left( \lvert \ph\rvert^2 (U_N \ast \lvert \ph\rvert^2) \right) \right] \ph \\
& \quad + (U_N \ast (-\Delta \overline{\varphi}_t^{(N)}) \ph ) \ph + \left( U_N \ast \lvert \ph \rvert^2 \right)(-\Delta \ph).
\end{align*}
Proceeding similarly as above, we find that $\| \ddot \varphi_t^{(N)} \|_{2} \leq C \| \varphi_t^{(N)} \|_{H^4} \leq C \exp (K |t|)$. 

(iv) Using (\ref{eq:GP}) and (\ref{eq:mod-GP}), we find
\begin{equation}\label{ddt}
    \begin{aligned}
      \partial_t \| \varphi_t - \varphi_t^{(N)} \|_{2}^2 & = -2 \Im \left\langle
      \varphi_t, \big(U_N * |\varphi_t^{(N)}|^2 - 8 \pi a_0 |\varphi_t|^2\big)
      \varphi_t^{(N)} \right\rangle \\
      & = - 2 \Im \left\langle \varphi_t, \big(U_N * |\varphi_t|^2 - 8 \pi a_0
      |\varphi_t|^2\big) \varphi_t^{(N)} \right\rangle \\
      & \quad - 2 \Im \left\langle \varphi_t, \left( U_N * \big(|\varphi_t^{(N)}|^2 -
      |\varphi_t|^2\big) \right) \varphi_t^{(N)} \right\rangle.
    \end{aligned}
  \end{equation}
The second term on the r.h.s.\ can be written as
\[  \Im \left\langle \varphi_t, \left( U_N * \big(|\varphi_t^{(N)}|^2 -
      |\varphi_t|^2\big) \right) \varphi_t^{(N)} \right\rangle = \Im \left\langle \varphi_t, \left( U_N * \big(|\varphi_t^{(N)}|^2 -  |\varphi_t|^2\big) \right) (\varphi_t - \varphi_t^{(N)}) \right\rangle. \]
 Hence, by H\"older's and triangle's inequality, 
\begin{equation}\label{ddt-2} \begin{split} 
\Big|  \Im \Big\langle \varphi_t, \Big( U_N * \big(|\varphi_t^{(N)}|^2 &-
      |\varphi_t|^2\big) \Big) \varphi_t^{(N)} \Big\rangle \Big| \\ \leq \; & \| \varphi_t \|_\infty \left\| \left( U_N * \big(|\varphi_t^{(N)}|^2 -  |\varphi_t|^2\big) \right) (\varphi_t - \varphi_t^{(N)}) \right\|_1 \\ \leq \; & 
  \| \varphi_t \|_\infty  \left\| U_N * (|\varphi_t^{(N)}|^2 - |\varphi_t|^2 )\right\|_2 \| \varphi_t - \varphi_t^{(N)} \|_2 \\ \leq \; & \| U_N \|_1 \| \varphi_t \|_\infty \| \varphi_t - \varphi_t^{(N)} \|_2  \left\| |\varphi_t^{(N)}|^2 - |\varphi_t|^2 \right\|_2 \\ \leq \; &  \| U_N \|_1 \| \varphi_t \|_\infty \left( \| \varphi_t \|_\infty + \| \varphi_t^{(N)} \|_\infty \right) \,  \| \varphi_t - \varphi_t^{(N)} \|^2_2 \\ \leq \; & C  \left( \| \varphi_t \|^2_{H^2} + \| \varphi_t^{(N)}  \|^2_{H^2}  \right) \, \,  \| \varphi_t - \varphi_t^{(N)} \|^2_2.
\end{split} \end{equation}
As for the first term on the r.h.s.\ of (\ref{ddt}), we find (since $\int U_N (y) dy = 8 \pi a_0$) 
\[ \begin{split}  \Big| \Big\langle \varphi_t, \big(U_N * |\varphi_t|^2 -& 8 \pi a_0
      |\varphi_t|^2\big) \varphi_t^{(N)} \Big\rangle \Big|  \\ = \; & \left| \int dx \, \overline{\varphi}_t (x) \varphi_t^{(N)} (x) \int dy \, U_N (y) \left( |\varphi_t (x-y)|^2 - |\varphi_t (x)|^2\right) \right|   \\ \leq 
      \; & \int dx dy \, U_N (y) \, |\varphi_t (x)| |\varphi_t^{(N)} (x)| \left| |\varphi_t (x-y)|^2 - |\varphi_t (x)|^2 \right|. 
  \end{split} \]    
Writing $U_N (x) = N^3 U (Nx)$, with $U (x) = V(x) f(x)$ and changing integration variables, we find
\[ \begin{split}
      \Big| \Big\langle \varphi_t, \big(U_N * |\varphi_t|^2 -& 8 \pi a_0
      |\varphi_t|^2\big) \varphi_t^{(N)} \Big\rangle \Big| \\ \leq \; &\int dx dy U (y) |\varphi_t (x)| |\varphi_t^{(N)} (x)| \left| |\varphi_t (x-y/N)|^2 - |\varphi_t (x)|^2 \right|.  
  \end{split} \]
Using
\[ \begin{split} \left| |\varphi_t (x-y/N)|^2 - |\varphi_t (x)|^2 \right|  & = \left| \int_0^1 ds \frac{d}{ds} \, |\varphi_t (x-sy/N)|^2 \right| \\ &\leq 2 |y| N^{-1}\, \int_0^1ds \, |\nabla \varphi_t (x-sy/N)| |\varphi_t (x-sy/N)| \end{split} \]
we conclude that
\begin{equation}\label{ddt-1} \begin{split}
      \Big| \Big\langle \varphi_t, &\big(U_N * |\varphi_t|^2 - 8 \pi a_0
      |\varphi_t|^2\big) \varphi_t^{(N)} \Big\rangle \Big| \\ \leq\; &
   2 N^{-1}   \int dx dy \int_0^1 ds \, U (y) |y| |\varphi_t (x)| |\varphi_t^{(N)} (x)| |\nabla \varphi_t (x-sy/N)| |\varphi_t (x-sy/N)| \\ \leq \; & 2 N^{-1} \| \varphi_t \|^2_\infty \int dx dy \int_0^1 ds \, U(y) |y| \left( |\varphi_t^{(N)} (x)|^2 +  |\nabla \varphi_t (x-sy/N)|^2 \right) \\ \leq \; & CN^{-1} \| \varphi_t \|_\infty^2 \left( \| \varphi_t^{(N)} \|_2^2 + \| \nabla \varphi_t \|_2^2 \right)  \\ \leq \; & CN^{-1} \| \varphi_t \|_{H^2}^2 \end{split} \end{equation}
where the constant $C$ depends on $\int dy U(y) |y|$ and on $\| \varphi \|_{H^1}$. Inserting (\ref{ddt-1}) and (\ref{ddt-2}) into (\ref{ddt}), and using the estimate from part (ii) for $\| \varphi_t \|_{H^2}$, we find
\[\partial_t \| \varphi_t^{(N)} - \varphi_t \|_{2}^2 \le C e^{K|t|} \| \varphi_t^{(N)} - \varphi_t \|_{2}^2 + \frac{C}{N} e^{K|t|}.\]
The claim now follows from Gronwall's inequality, since $\varphi_{t=0} = \varphi^{(N)}_{t=0}$. 
\end{proof}

\section{Properties of the kernel $k_t$}
\label{sec:kernels}

This section is devoted to the proof of Lemma \ref{l:kernels} and Lemma \ref{lm:dotk}.

\begin{proof}[Proof of Lemma \ref{l:kernels}]
(i) We will make use of the bounds (\ref{eq:wN-bd}). The first bound implies immediately that
\begin{equation}\label{eq:k-pt} |k_t (x,y)| \leq \min \left(N |\ph (x)| |\ph (y)| , \frac{1}{|x-y|} |\ph (x)| |\ph (y)| \right) \end{equation}
and therefore, by Hardy's inequality 
\[ \| k_t \|_2^2 \leq C \int dx dy \frac{1}{|x-y|^2} |\ph (x)|^2 |\ph (y)|^2 \leq C \| \ph \|_{H^1}^2 \| \ph \|_2^2  \leq C \, . \]
As for the gradient of $k_t$, we have
\[ \nabla_1 k_t (x,y) = -N^2 \nabla w (N (x-y)) \ph (x) \ph (y) - N w (N (x-y)) \nabla \ph (x) \ph (y) \]
and thus, from the second bound in (\ref{eq:wN-bd}), 
\[ \begin{split} \| \nabla_1 k_t \|_2^2 & \leq C \int dx dy \, \frac{N^4}{(N^2 |x-y|^2 + 1)^2}  |\ph (x)|^2 |\ph (y)|^2  \\ & \leq C N \int dx \frac{N^3}{(N^2 |x|^2 + 1)^2} \| \ph \|_{H^1}^4 \leq C N \end{split} \]
where we used Young and then Sobolev inequalities\done{}. Next we compute
\[ \begin{split} 
\nabla_1(k_t \overline{k}_t) (x,y) =\; & \nabla_x \int dz k_t (x,z) \overline{k}_t (z,y) \\ 
= \; &\nabla_x \left[  \ph (x) \ph (y) \int dz N^2 w (N(x-z)) w(N(z-y)) |\ph (z)|^2 \right] \\
= \; &\nabla \ph (x) \ph (y) \int dz N^2 w (N(x-z)) w(N(z-y)) |\ph (z)|^2\\ &+ \ph (x) \ph (y) \int dz N^3 \nabla w (N (x-z)) w(N(z-y)) |\ph (z)|^2. \end{split} \] 
Using (\ref{eq:wN-bd}), we find
\[ \begin{split}  \| \nabla_1 (k_t \overline{k}_t) \|_2^2 \leq \; &C \int dx dy dz_1 dz_2 \, \frac{|\nabla \ph (x)|^2 |\ph (y)|^2 |\ph(z_1)|^2 |\ph (z_2)|^2}{|x-z_1||z_1 -y||x-z_2||z_2 -y|} \\
&+C \int dx dy dz_1 dz_2 \, \frac{|\ph (x)|^2 |\ph (y)|^2 |\ph(z_1)|^2 |\ph (z_2)|^2}{|x-z_1|^2 |z_1 -y| |x-z_2|^2 |z_2 -y|} \\  \leq \; &C \int dx dy dz_1 dz_2 \, \frac{|\nabla \ph (x)|^2 |\ph (y)|^2 |\ph(z_1)|^2 |\ph (z_2)|^2}{|x-z_1|^2 |z_2 -y|^2} \\
&+C \int dx dy dz_1 dz_2 \, \frac{|\ph (x)|^2 |\ph (y)|^2 |\ph(z_1)|^2 |\ph (z_2)|^2}{|x-z_1|^2 |z_1 - y|^2 |x-z_2|^2} \\ \leq \; &C \| \ph \|_{H^1}^3 \| \ph \|^2_2. \end{split}\]

(ii) The pointwise bound for $k_t(x,y)$ follows directly from (\ref{eq:wN-bd}), as noticed in (\ref{eq:k-pt}). To bound $\lvert r (k_t) (x,y)\rvert$, we observe that, by H\"older's inequality, (\ref{eq:wN-bd}) and part (i),
  \begin{align*}
    |( k_t &\overline{k}_t)^n k_t(x,y)| \\ & = | k_t \overline{k}_t ( k_t \overline{k}_t)^{n-1}
    k_t (x,y)| \\
    & = |\ph(x)| |\ph(y)| \\
& \quad \times \left| \int dz_1 dz_2 \, N^2 w(N(x-z_1))
    w(N(z_2-y)) \ph(z_1) \ph(z_2) \overline{k}_t (k_t
    \overline{k}_t)^{n-1}(z_1,z_2) \right| \\
    & \le |\ph(x)| |\ph(y)| \, \| \overline{k}_t (k_t \overline{k}_t)^{n-1}
    \|_{2} \\
    & \quad \times \left( \int dz_1 N^2 w(N(x-z_1))^2 |\ph(z_1)|^2 \int
    dz_2 \, N^2 w(N(z_2-y))^2 |\ph(z_2)|^2 \right)^{1/2} \\
    & \le C |\ph(x)| |\ph(y)|  \| \nabla \ph \|_{2}^2 \| k
    \|_{2}^{2n-1}.
  \end{align*}
  Thus
  \[
    |r(k_t) (x,y)| \le \sum_{n=1}^\infty \frac{1}{(2n+1)!} |( k_t \overline{k}_t)^n k
    (x,y)|\le C |\ph(x)| |\ph(y)| e^{\| k \|_2}.
  \]
  The pointwise estimate for $p (k_t) (x,y)$ can be proven similarly. 
  This completes the proof of part (ii). Part (iii) follows easily from the pointwise bounds in part (ii).
\end{proof}

\begin{proof}[Proof of Lemma \ref{lm:dotk}]
In the following proof we will use the bounds $\| \dot{\varphi}_t^{(N)} \|_{H^2} , \| \ddot{\varphi}_t^{(N)} \|_2 \leq C e^{K|t|}$ from Proposition \ref{t:pdes}. 

(i) {F}rom (\ref{eq:dtk}), we find
\begin{equation}\label{eq:dotk1}\norm{\dot k_t}^2 \leq 4 \int \di x\di y\, \frac{1}{|x-y|^2}  |\phdot(x)|^2 |\ph(y)|^2 \leq \, C\| \phdot \|_2^2 \| \ph \|_{H^1}^2 \leq C e^{K |t|} \end{equation}
by Hardy's inequality, and from Proposition \ref{t:pdes}, part (iii). Similarly, 
\[ \| \ddot k_t \|_2 \leq C \| \ddot \varphi_t^{(N)} \|_2 + C \| \dot \varphi_t^{(N)} \|_2 \| \nabla \dot \varphi_t^{(N)} \|_2 \leq C e^{K |t|} \]
by Proposition \ref{t:pdes}. Writing $p (k_t) = \sum_{n \geq 0} (k_t \overline{k}_t)^n / (2n!)$, we find immediately
\[ \| \dot p (k_t) \|_2 \leq \sum_{n \geq 0} \frac{1}{(2n)!} n \| \dot k_t \|_2 \| k_t \|_2^{2n-1} \leq \| \dot k_t \|_2 \, e^{ \|k_t \|_2} \leq C e^{K |t|} \]
applying (\ref{eq:dotk1}). The bound for $\dot r (k_t)$ can be proven analogously. 

(ii) {F}rom \ref{l:bt}, part~(v), we have
\[\| \nabla_1 \dot p (k_t) \|_2 \leq C e^{\| k_t \|_2} \left( \| \dot k_t
\|_2 \,  \| \nabla_1 (k_t \overline{k}_t) \|_2 + C \| \nabla_1 (k_t
\dot{\overline{k}}_t) \|_2  + C \| \nabla_1 (\dot k_t \overline{k}_t) \|_2
\right). \]
We are left with the task of estimating $\| \nabla_1 (k_t \dot{\overline{k}}_t) \|_2$ and $\| \nabla_1 (\dot k_t \overline{k}_t) \|_2$. We start by applying the product rule:
\begin{align}
& \norm{\nabla_1 (k_t \dot{\overline{k}}_t)}_2^2  \nonumber \\
& = \int \di x \di y \, \bigg\lvert \nabla_x \, \int dz \left( N w(N(x-z)) \dot{\cc{\varphi}}_t^{(N)} (x) \cc{\varphi}_t^{(N)} (z) + N w(N(x-z))\ \cc{\varphi}_t^{(N)} (x) \dot{\cc{\varphi}}_t^{(N)} (z) \right) \nonumber \\
& \qquad\qquad\qquad \times N w(N(z-y)) {\ph(z)} {\ph(y)} \bigg\rvert^2 \nonumber \\
& \leq 4 \int \di x \di y \left\lvert \int dz N^2 \nabla w (N(z-x)) \dot{\cc{\varphi}}_t^{(N)} (x)  |\ph(z)|^2  \ph(y) N w (N(y-z)) \right|^2 \label{eq:kk1} \\
& \quad + 4 \int \di x \di y \left| \int dz \, N w (N(z-x)) \nabla \dot{\cc{\varphi}}_t^{(N)} (x) |\ph(z)|^2 \ph(y) N w (N(y-z)) \right|^2 \label{eq:kk2}\\
& \quad + 4 \int \di x \di y \left| \int dz \, N^2 \nabla w (N(z-x)) \cc{\varphi}_t^{(N)} (x) \dot{\cc{\varphi}}_t^{(N)} (z) \right. \nonumber \\ & \left. \hspace{6cm} \times \ph(z)  \ph(y) \, N w (N(y-z)) \right|^2 \label{eq:kk3}\\
& \quad + 4 \int \di x \di y \left| \int dz \, N w(N(z-y)) \nabla \cc{\varphi}_t^{(N)} (x) \dot{\cc{\varphi}}_t^{(N)} (z) \ph(z) \ph(y) N w(N(y-z)) \right|^2. \label{eq:kk4}
\end{align}
Next, we estimate the four terms on the r.h.s.\ of the last equation. For the summands \eqref{eq:kk2} and \eqref{eq:kk4} we use that, from (\ref{eq:wN-bd}), $N w(Nx) \leq C |x|^{-1}$. Applying Hardy's inequality, both terms are bounded by $C \norm{\nabla \phdot}_{2}^2 \leq C \exp (K |t|)$, using Proposition \ref{t:pdes}. Since, again by (\ref{eq:wN-bd}), $N^2 \nabla w (Nx)\leq C |x|^{-2}$, the contribution (\ref{eq:kk1}) is bounded by 
\begin{align*}
C \int \di x \di y &\left( \int \di z \frac{1}{|x-z|^2 |z-y|} |\dot{\varphi}_t^{(N)} (x)| |\ph(z)|^2 |\ph(y)|  \right)^2 \\
& = C \int \di x \di y \di z_1 \di z_2 \frac{\lvert\phdot(x)\rvert^2 \lvert \ph(y)\rvert^2 \lvert \ph(z_1)\rvert^2 \lvert \ph(z_2)\rvert^2}{\lvert z_1-y\rvert \lvert z_2 -y\rvert \lvert x-z_1\rvert^2 \lvert x-z_2\rvert^2} \\
& = C \int \di x\, \lvert\phdot(x)\rvert^2 \int \di z_1 \di z_2 \frac{\lvert \ph(z_1)\rvert^2 \lvert \ph(z_2)\rvert^2}{\lvert x-z_1\rvert^2 \lvert x-z_2\rvert^2} \int \di y \, \frac{\lvert \ph(y)\rvert^2}{\lvert z_1-y\rvert \lvert z_2 - y\rvert} \\ &\leq C \| \phdot \|_2^2 \| \ph \|_{H^1}^6 \leq C e^{K|t|}
 \end{align*}
 by Proposition \ref{t:pdes}. Analogously, we can also bound the contribution \eqref{eq:kk3}. 
This shows the bound for $\| \nabla_1 \dot p (k_t) \|_2$. The bounds for $\| \nabla_2 \dot p (k_t) \|_2,\| \nabla_2 \dot p (k_t) \|_2, \| \nabla_2 \dot p (k_t) \|_2$ are proven similarly.

(iii)  {F}rom (\ref{eq:dtk}), using $N w(Nx) \leq C|x|^{-1}$, we find immediately that
\[ \sup_x \| \dot{k}_t (.,x) \|_2 \leq C \left(\| \nabla \phdot \|_2 \| \ph \|_\infty + \| \nabla \ph \|_2 \|\phdot\|_\infty \right) \leq C e^{K|t|} \]
by Proposition \ref{t:pdes}. To show the bound for $\dot{p} (k_t)$, we observe that
\begin{equation}\label{eq:px2} \begin{split}
\dot{p} (k_t) (x,y) = \; & \partial_t \sum_{n \geq 0}\frac{1}{(2n)!} \int dz_1 dz_2 \, k_t (x,z_1) (\overline{k}_t k_t)^{(n-1)} (z_1 ,z_2) \overline{k}_t (z_2 ,y ) \\ = \;& \sum_{n \geq 0}\frac{1}{(2n)!} \Big[  \int dz_1 dz_2 \, \dot{k}_t (x,z_1) (\overline{k}_t k_t)^{(n-1)} (z_1 ,z_2) \overline{k}_t (z_2 ,y ) \\ &\hspace{1.5cm} +
 \int dz_1 dz_2 \, k_t (x,z_1) (\partial_t (\overline{k}_t k_t)^{(n-1)}) (z_1 ,z_2) \overline{k}_t (z_2 ,y )\\
  &\hspace{1.5cm} + \int dz_1 dz_2 \, k_t (x,z_1) (\overline{k}_t k_t)^{(n-1)} (z_1 ,z_2) \dot{\overline{k}}_t (z_2 ,y ) \Big]. \end{split} \end{equation}
The first term in the parenthesis can be bounded in absolute value by
\[ \begin{split} 
\Big| \int &dz_1 dz_2 \, \dot{k}_t (x,z_1) (\overline{k}_t k_t)^{(n-1)} (z_1 ,z_2) \overline{k}_t (z_2 ,y ) \Big| \\ \leq \; &C \int dz_1 dz_2 \, \frac{1}{|x-z_1|} \left( |\phdot (x)| |\ph (z_1)|+ |\ph (x)| |\phdot (z_1)| \right)  \\ &\hspace{5cm} \times | (\overline{k}_t k_t)^{(n-1)} (z_1 ,z_2)| \, \frac{1}{|y-z_2|} |\ph (z_2)| |\ph (y)| \\
\leq \; &C |\ph (x)| |\ph (y)| \int dz_1 dz_2 \, \frac{1}{|x-z_1||y-z_2|} |\phdot (z_1)|  |\ph (z_2)| \,  | (\overline{k}_t k_t)^{(n-1)} (z_1 ,z_2)| \\
&+ C |\phdot (x)| |\ph (y)| \int dz_1 dz_2 \, \frac{1}{|x-z_1||y-z_2|} |\ph (z_1)|  |\ph (z_2)| \,  | (\overline{k}_t k_t)^{(n-1)} (z_1 ,z_2)| 
\\ \leq \; & C \| (\overline{k}_t k_t)^{n-1} \|_2  \, \left( |\ph (x)| |\ph (y)| \| \nabla \phdot \| \| \nabla \ph \| +   |\phdot (x)| |\ph (y)| \| \nabla \ph \|^2 \right).
\\ \leq \; & C \| k_t \|_2^{2(n-1)} (|\phdot (x)| + |\ph (x)|) |\ph (y)|. 
 \end{split}
\]
The last term in the parenthesis on the r.h.s.\ of (\ref{eq:px2}) can be bounded analogously. 
The middle term, on the other hand is bounded in absolute value by
\[ \begin{split}
\Big|  \int dz_1 dz_2 & \, k_t (x,z_1) (\partial_t (\overline{k}_t k_t)^{(n-1)}) (z_1 ,z_2) \overline{k}_t (z_2 ,y ) \Big|
 \\ \leq \; & C |\ph (x)| |\ph(y)| \int dz_1 dz_2 \, \frac{ |\ph (z_1)| |\ph (z_2)|}{|x-z_1| |y-z_2|} |\partial_t (\overline{k}_t k_t)^{n-1} (z_1, z_2)|  \\
\leq \; &C \| \partial_t (\overline{k}_t k_t)^{n-1} \|_2 \, \| \nabla \ph \|_2^2 \,   |\ph (x)| |\ph(y)| 
\\
\leq \; &C \| \dot{k}_t \|_2 \, \| k_t \|_2^{2n-3} |\ph (x)| |\ph (y)| .
\end{split}\] 
Inserting the last bounds in (\ref{eq:px2}), we find
\[ |\dot{p} (k_t) (x,y)| \leq C e^{K |t|}  e^{\| k_t \|_2} \, (|\ph (x)| + |\phdot (x)|) (|\ph (y)| +|\phdot (y)|).  \]
Integrating over $x$ and taking the supremum over $y$ gives, as before using \eqref{eq:hireg},
\[ \sup_y \| \dot{p} (k_t) (.,y) \|_2 \leq C e^{K |t|}. \]
The bound for $\dot{r} (k_t)$ can be proven analogously. Combining the bound for $\dot{r} (k_t)$ with the one for $\dot{k}_t$, we also obtain the bound for $\dot{\text{sh}} (k_t)$.
\end{proof}

\section{Convergence for $N$-particle states}
\label{sec:Npart}

In this section, we show how the result of Theorem \ref{thm:main}, stated there for initial data of the form $W(\sqrt{N} \varphi) T(k_0) \psi$ can be extended to a certain class of data with number of particles fixed to $N$. 
\begin{thm}\label{thm:Npart}
Let $\varphi \in H^4 (\bR^3)$ and suppose $\psi \in \cF$ with $\| \psi \|_\cF  =1$ is such that \begin{equation}\label{eq:ass1-N} \begin{split} \left\langle \psi , \left( \frac{\cN^2}{N} + \cN + \cH_N \right) \psi \right\rangle &\leq C  \end{split} \end{equation}
for a constant $C>0$. Let $P_N$ denote the projection onto the $N$-particle sector of the Fock space and assume that 
\begin{equation}\label{eq:ass2-N} \| P_N W(\sqrt{N} \varphi) T(k_0) \psi \| \geq C N^{-1/4} \end{equation}
for all $N \in \bN$ large enough. We consider the time evolution 
\[ \psi_{N,t} = e^{-i \cH_N t} \frac{P_N W(\sqrt{N} \varphi) T(k_0) \psi}{\| P_N W(\sqrt{N} \varphi) T(k_0) \psi \|} \] and we denote by $\gamma_{N,t}^{(1)}$ the one-particle reduced density associated with the $N$-particle state $\psi_{N,t}$. Then there exist constants $C, c_1 ,c_2 > 0$ with 
\[ \tr  \; \left| \gamma^{(1)}_{N,t} - | \varphi_t \rangle \langle \varphi_t| \right| \leq \frac{C\,  \exp \left( c_1 \exp (c_2 |t|) \right)}{N^{1/4}} \]
for all $t \in \bR$ and all $N$ large enough. Here $\varphi_t$ denotes the solution of the time-dependent Gross-Pitaevskii equation (\ref{eq:GP}), with initial data $\varphi_{t=0} = \varphi$. 
\end{thm}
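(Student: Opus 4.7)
The plan is to reduce the trace-norm estimate to a uniform-in-$N$ bound on the expectation of the ``excitation number operator'' $\cN_\perp^{(N)} := \cN - a^*(\varphi_t^{(N)}) a(\varphi_t^{(N)})$ in the Fock state $\xi_t := e^{-i\cH_N t} W(\sqrt{N}\varphi) T(k_0)\psi$, and then to extract the $N^{-1/4}$ rate by the ``square root trick''. Since $\cH_N$ commutes with $\cN$ and hence with $P_N$, one has $\psi_{N,t} = P_N\xi_t/\|P_N\xi\|$ with $\|P_N\xi_t\|=\|P_N\xi\|$ time-independent. The standard inequality $\|\rho - |\varphi\rangle\langle\varphi|\|_{\mathrm{tr}} \leq 2\sqrt{1 - \langle\varphi,\rho\,\varphi\rangle}$, valid for every unit vector $\varphi$ and every positive trace-one operator $\rho$, together with the $\Ocal(N^{-1})$ error of replacing $\varphi_t$ by $\varphi_t^{(N)}$ from Proposition \ref{t:pdes}, reduces the problem to controlling $1 - \langle\varphi_t^{(N)}, \gamma^{(1)}_{N,t}\varphi_t^{(N)}\rangle$. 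Since $\cN$ acts as $N$ on $\psi_{N,t}$,
\[
1 - \langle\varphi_t^{(N)}, \gamma^{(1)}_{N,t}\varphi_t^{(N)}\rangle = \frac{\langle P_N\xi_t, \cN_\perp^{(N)} P_N\xi_t\rangle}{N\|P_N\xi\|^2} \leq \frac{\langle \xi_t, \cN_\perp^{(N)}\xi_t\rangle}{N\|P_N\xi\|^2},
\]
the last step following from $\cN_\perp^{(N)}\geq 0$ together with $[\cN_\perp^{(N)}, P_N]=0$ (both operators preserve the particle number).

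The crucial step is then the $N$-uniform bound $\langle \xi_t, \cN_\perp^{(N)}\xi_t\rangle \leq C\exp(c_1\exp(c_2|t|))$. The key observation is that $\cN_\perp^{(N)}$ is invariant under conjugation by the Weyl operator $W(\sqrt{N}\varphi_t^{(N)})$: using the shift rules of Lemma \ref{l:W} and the normalization $\|\varphi_t^{(N)}\|_2 = 1$, a direct computation gives
\[
W^*(\sqrt{N}\varphi_t^{(N)})\,\cN\,W(\sqrt{N}\varphi_t^{(N)}) = \cN + \sqrt{N}\,\phi(\varphi_t^{(N)}) + N,
\]
\[
W^*(\sqrt{N}\varphi_t^{(N)})\,a^*(\varphi_t^{(N)})a(\varphi_t^{(N)})\,W(\sqrt{N}\varphi_t^{(N)}) = a^*(\varphi_t^{(N)})a(\varphi_t^{(N)}) + \sqrt{N}\,\phi(\varphi_t^{(N)}) + N,
\]
whose difference is exactly $\cN_\perp^{(N)}$. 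Since $\xi_t = W(\sqrt{N}\varphi_t^{(N)})\,T(k_t)\,\cU(t;0)\psi$ by construction of the fluctuation dynamics, combining this invariance with $\cN_\perp^{(N)} \leq \cN$, with Lemma \ref{lm:TNT} (giving $T^*(k_t)\cN T(k_t) \leq C(\cN+1)$), and with Theorem \ref{thm:N} (controlling the growth of $\cN$ under $\cU(t;0)$) yields
\[
\langle \xi_t, \cN_\perp^{(N)}\xi_t\rangle = \langle T(k_t)\cU(t;0)\psi,\,\cN_\perp^{(N)}\,T(k_t)\cU(t;0)\psi\rangle \leq C\exp(c_1\exp(c_2|t|))
\]
uniformly in $N$.

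Plugging this into the previous inequality and using the hypothesis $\|P_N\xi\|^2 \geq C_0^2 N^{-1/2}$ gives $1 - \langle\varphi_t^{(N)},\gamma^{(1)}_{N,t}\varphi_t^{(N)}\rangle \leq C\exp(c_1\exp(c_2|t|))/\sqrt{N}$, and the square-root trace-norm inequality then produces the claimed $N^{-1/4}$ rate (the $\Ocal(N^{-1})$ correction from replacing $\varphi_t^{(N)}$ by $\varphi_t$ is absorbed in the final constant). The hard part is precisely the Weyl-frame identity $W^*(\sqrt{N}\varphi_t^{(N)})\,\cN_\perp^{(N)}\,W(\sqrt{N}\varphi_t^{(N)}) = \cN_\perp^{(N)}$: a naive bound combining $\langle\xi,\cN\xi\rangle = \Ocal(N)$ with the $\Ocal(N^{-1/2})$ trace-norm estimate from Theorem \ref{thm:main} would only give $\langle\xi_t,\cN_\perp^{(N)}\xi_t\rangle = \Ocal(\sqrt{N}\exp(\ldots))$, which is insufficient; the exact cancellation enabled by $\|\varphi_t^{(N)}\|_2=1$ brings this down to $\Ocal(1)$ and, after taking the square root, produces exactly the $N^{-1/4}$ rate matching the $N^{-1/4}$ lower bound on $\|P_N\xi\|$.
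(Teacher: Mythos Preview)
Your proof is correct but follows a genuinely different route from the paper. The paper proceeds by testing the difference $\gamma^{(1)}_{N,t}-|\varphi_t\rangle\langle\varphi_t|$ against an arbitrary compact operator $J$, expanding $a_y^* a_x$ via the Weyl shift $W(\sqrt{N}\varphi_t^{(N)})$, and bounding the resulting linear and quadratic terms by $\|(\cN+1)^{1/2}\cU(t;0)\psi\|$ and $\|\cN\,\cU(t;0)\psi\|$ respectively; the latter forces an additional appeal to Proposition~\ref{prop:apri} (the a~priori $\cN^2$ bound) before concluding by duality between trace-class and compact operators. Your argument instead rewrites $1-\langle\varphi_t^{(N)},\gamma_{N,t}^{(1)}\varphi_t^{(N)}\rangle$ as the expectation of the excitation number $\cN_\perp^{(N)}=\cN-a^*(\varphi_t^{(N)})a(\varphi_t^{(N)})$, observes that $\cN_\perp^{(N)}$ is \emph{invariant} under conjugation by $W(\sqrt{N}\varphi_t^{(N)})$ (the exact cancellation you compute), drops the projection $P_N$ by positivity and commutativity with $\cN$, and then invokes only Lemma~\ref{lm:TNT} and Theorem~\ref{thm:N}; the trace norm is recovered via the standard inequality $\tr\,|\rho-|\varphi\rangle\langle\varphi||\le 2\sqrt{1-\langle\varphi,\rho\,\varphi\rangle}$. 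Your route is a bit more streamlined---it avoids the duality argument and the separate use of Proposition~\ref{prop:apri}, and the Weyl-invariance identity for $\cN_\perp^{(N)}$ is an elegant shortcut---while the paper's approach has the advantage of giving directly a bound linear in $\|J\|$ without passing through a square root, which would matter if one wanted to track constants more carefully or to test against specific observables.
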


{\it Remarks. } 
\begin{itemize}
\item[(i)] If we relax (\ref{eq:ass2-N}) to the weaker condition $\| P_N W (\sqrt{N} \varphi) T(k_0) \psi \| \geq C N^{-\alpha}$, for some $1/4 \leq \alpha < 1/2$, the proof below still implies the convergence $\gamma^{(1)}_{N,t} \to |\varphi_t \rangle \langle \varphi_t|$ but only with the slower rate $N^{-1/2+ \alpha}$. 
\item[(ii)] The assumption (\ref{eq:ass2-N}) and its weaker versions mentioned in the previous remark 
are very reasonable; let us explain why. The expected number of particles in the Fock state $W(\sqrt{N} \varphi) T(k_0) \psi$ is given by
\begin{equation}\label{eq:expecN}
\begin{split} &\left\langle W(\sqrt{N} \varphi) T(k_0) \psi , \cN \, W(\sqrt{N} \varphi) T(k_0) \psi \right\rangle \\ & \hspace{2cm}= N + \sqrt{N} \langle T(k_0) \psi , \phi (\varphi) T(k_0) \psi \rangle + \langle T(k_0) \psi , \cN T(k_0) \psi \rangle \end{split} \end{equation}
with the notation $\phi (\varphi) = a(\varphi) + a^* (\varphi)$. {F}rom Lemma \ref{l:a}, Lemma \ref{lm:TNT} and the assumption (\ref{eq:ass1-N}) we conclude that there exists a constant $C >0$ with 
\[ N - CN^{1/2} \leq \left\langle W(\sqrt{N} \varphi) T(k_0) \psi , \cN \,
W(\sqrt{N} \varphi) T(k_0) \psi \right\rangle \leq N + C N^{1/2}. \]
The expectation of $\cN^2$, on the other hand, is given by
\[ \begin{split} 
\big\langle W(\sqrt{N} \varphi) &T(k_0)  \psi , \cN^2 \, W(\sqrt{N} \varphi) T(k_0) \psi \big\rangle \\
= \; &\langle T(k_0) \psi , (\cN + \sqrt{N} \phi (\varphi) + N)^2 T(k_0)
\psi \rangle \\ = \; &N^2 + 2 N^{3/2} \langle T(k_0) \psi , \phi (\varphi)
T(k_0) \psi \rangle + 2N \langle T(k_0) \psi , \cN T(k_0) \psi \rangle \\
&+N \langle  T(k_0) \psi , \phi (\varphi)^2 T(k_0) \psi \rangle + \sqrt{N}
\langle   T(k_0) \psi ,  \left( \cN \phi (\varphi) + \phi (\varphi) \cN
\right) T(k_0) \psi \rangle \\ &+ \langle T(k_0) \psi, \cN^2 T(k_0) \psi
\rangle. \end{split} \]
 Subtracting the square of (\ref{eq:expecN}), and applying again Lemma \ref{l:a}, Lemma \ref{lm:TNT} and the assumption (\ref{eq:ass1-N}), we estimate the variance of the number of particles in the state $W(\sqrt{N} \varphi) T(k_0) \psi$ by
 \[ \left\langle W(\sqrt{N} \varphi) T(k_0) \psi, (\cN - \langle \cN \rangle)^2 W(\sqrt{N} \varphi) T(k_0) \psi \right\rangle \leq C N \]
for an appropriate constant $C >0$. Here $\langle \cN \rangle$ is a shorthand notation for the l.h.s.\ of (\ref{eq:expecN}). We conclude that
\[  \left\langle W(\sqrt{N} \varphi) T(k_0) \psi,  {\bf 1} \left(| \cN -
\langle \cN \rangle | \geq K \sqrt{N} \right)  W(\sqrt{N} \varphi) T(k_0)
\psi \right\rangle \leq C K^{-2}. \]
Choosing $K >0$ sufficiently large, we find
\[ \left\langle W(\sqrt{N} \varphi) T(k_0) \psi,  {\bf 1} \left(| \cN -
\langle \cN \rangle | \leq K \sqrt{N} \right)  W(\sqrt{N} \varphi) T(k_0)
\psi \right\rangle \geq 1/2. \] 
{F}rom (\ref{eq:expecN}), adjusting the value of $K$, we obtain
\[ \left\langle W(\sqrt{N} \varphi) T(k_0) \psi,  {\bf 1} \left(| \cN - N |
\leq K \sqrt{N} \right)  W(\sqrt{N} \varphi) T(k_0) \psi \right\rangle \geq
1/2. \] 
This means that
\[ \sum_{j= N - K \sqrt{N}}^{N+K \sqrt{N}} \left\| P_j \,  W(\sqrt{N} \varphi) T(k_0) \psi \right\|^2 \geq 1/2. \]
The average value of $\| P_j W (\sqrt{N} \varphi) T(k_0) \psi \|^2$ for $j$ between $N- K \sqrt{N}$ and $N+K \sqrt{N}$ is therefore larger or equal to $N^{-1/2}$, in accordance with the assumption (\ref{eq:ass2-N}). In fact, this argument shows that for every $N$ there exists an $M \in [N-K \sqrt{N} , N + K \sqrt{N}]$ with $\| P_M \, W(\sqrt{N} \varphi) T(k_0) \psi \| \geq N^{-1/4}$. Letting \[ \psi_{N,M,t} = e^{-i \cH_N t}  \frac{P_M \, W(\sqrt{N} \varphi) T(k_0) \psi}{\|  P_M \, W(\sqrt{N} \varphi) T(k_0) \psi \|} \]
and denoting by $\gamma^{(1)}_{N,M,t}$ the one-particle reduced density associated with $\psi_{N,M,t}$, one can show, similarly to Theorem \ref{thm:Npart}, that
\[ \tr \; \left| \gamma^{(1)}_{N,M,t} - |\varphi_t \rangle \langle \varphi_t| \right| \leq  \frac{C\,  \exp \left( c_1 \exp (c_2 |t|) \right)}{N^{1/4}} \]
The fact that the number of particles $M$ does not exactly match the parameter $N$ entering the Hamiltonian and the Weyl operator $W(\sqrt{N} \varphi)$ does not affect the analysis in any substantial way, since $|M - N| \leq C N^{1/2} \ll N$. 
\end{itemize}

 \begin{proof}[Proof of Theorem \ref{thm:Npart}]
We write the integral kernel of $\gamma^{(1)}_{N,t}$ as
\[ \begin{split} \gamma^{(1)}_{N,t} &(x;y) \\ = \; &\frac{1}{N \| P_N W(\sqrt{N} \varphi) T(k_0) \psi \|^2} \\ & \hspace{.5cm} \times \left\langle  
e^{-i\cH_N t}  P_N W(\sqrt{N} \varphi) T(k_0) \psi , a_y^* a_x  e^{-i\cH_N t}  P_N W(\sqrt{N} \varphi) T(k_0) \psi \right\rangle \\  = \; & \frac{1}{N \| P_N W(\sqrt{N} \varphi) T(k_0) \psi \|^2} \\ & \hspace{.5cm}  \times \left\langle  
e^{-i\cH_N t}  P_N W(\sqrt{N} \varphi) T(k_0) \psi , W (\sqrt{N} \varphi_t^{(N)}) \left(a_y^* + \sqrt{N} \overline{\varphi}_t^{(N)} (y) \right) \right. \\  & \hspace{2cm} \times \left. \left( a_x + \sqrt{N} \varphi_t^{(N)} (x)\right) W(\sqrt{N} \varphi_t^{(N)})  e^{-i\cH_N t}  W(\sqrt{N} \varphi) T(k_0) \psi \right\rangle. \end{split} \]
Hence, we find
\[ \begin{split} 
 \gamma^{(1)}_{N,t} &(x;y) - \overline{\varphi}_t^{(N)} (y)\varphi_t^{(N)} (x) \\  = \; & \frac{\varphi_t (x)}{\sqrt{N} \| P_N W(\sqrt{N} \varphi) T(k_0) \psi \|^2} 
\\ & \hspace{.1cm}  \times \left\langle  
e^{-i\cH_N t}  P_N W(\sqrt{N} \varphi) T(k_0) \psi , W (\sqrt{N} \varphi_t^{(N)}) a_y^* W(\sqrt{N} \varphi_t^{(N)} ) e^{-i\cH_N t}  W(\sqrt{N} \varphi) T(k_0) \psi \right\rangle \\ &+ \frac{\overline{\varphi}_t (y)}{\sqrt{N} \| P_N W(\sqrt{N} \varphi) T(k_0) \psi \|^2} \\ & \hspace{.1cm}  \times \left\langle  
e^{-i\cH_N t}  P_N W(\sqrt{N} \varphi) T(k_0) \psi , W (\sqrt{N} \varphi_t^{(N)}) a_x W(\sqrt{N} \varphi_t^{(N)} ) e^{-i\cH_N t}  W(\sqrt{N} \varphi) T(k_0) \psi \right\rangle\\ &+ \frac{1}{N \| P_N W(\sqrt{N} \varphi) T(k_0) \psi \|^2} \left\langle  e^{-i\cH_N t}  P_N W(\sqrt{N} \varphi) T(k_0) \psi , W (\sqrt{N} \varphi_t^{(N)}) \right. \\ & \left. \hspace{6cm}  \times  a^*_y a_x W(\sqrt{N} \varphi_t^{(N)})  e^{-i\cH_N t}  W(\sqrt{N} \varphi) T(k_0) \psi \right\rangle. 
\end{split} \]
Therefore, for any compact operator $J$ on $L^2 (\bR^3)$ we find 
\[ \begin{split} 
\tr \; J \, &\left( \gamma^{(1)}_{N,t} - |\varphi_t \rangle\langle \varphi_t| \right) \\ = \; &\frac{1}{\sqrt{N} \| P_N W(\sqrt{N} \varphi) T(k_0) \psi \|^2}  \left\langle  
e^{-i\cH_N t}  P_N W(\sqrt{N} \varphi) T(k_0) \psi , W (\sqrt{N} \varphi_t^{(N)}) \right. \\ & \left. \hspace{4cm} \times \left( a (J\varphi_t) + a^* (J \varphi_t) \right) W(\sqrt{N} \varphi_t^{(N)}  e^{-i\cH_N t}  W(\sqrt{N} \varphi) T(k_0) \psi \right\rangle
\\ &+  \frac{1}{N \| P_N W(\sqrt{N} \varphi) T(k_0) \psi \|^2} \left\langle  
e^{-i\cH_N t}  P_N W(\sqrt{N} \varphi) T(k_0) \psi , W (\sqrt{N} \varphi_t^{(N)}) \right. \\ &\hspace{4cm} \left.  \times d\Gamma (J) \, W(\sqrt{N} \varphi_t^{(N)}  e^{-i\cH_N t}  W(\sqrt{N} \varphi) T(k_0) \psi \right\rangle \end{split} \]
where $d\Gamma (J)$ denotes the second quantization of the operator $J$, defined by  
\( (d\Gamma (J) \psi)^{(n)} = \sum_{i=1}^n J^{(i)} \psi^{(n)} \)
for every $\psi = \{ \psi^{(n)} \}_{n\geq 0} \in \cF$ (here $J^{(i)}$ denotes the operator acting as $J$ on the $i$-th particle and as the identity on the other $(n-1)$ particles). Since $\| d\Gamma (J)  \psi \| \leq \| J \| \, \| \cN \psi \|$, we find, applying Lemma \ref{l:a}, 
\[ \begin{split}
\Big| \tr \; J \, &\left( \gamma^{(1)}_{N,t} - |\varphi_t \rangle\langle \varphi_t| \right)  \Big| \\ \leq \; &\frac{\| J \|}{\sqrt{N} \| P_N W(\sqrt{N} \varphi) T(k_0) \psi \|}   \| (\cN+1)^{1/2} \, W(\sqrt{N} \varphi_t^{(N)})  e^{-i\cH_N t}  W(\sqrt{N} \varphi) T(k_0) \psi \| 
\\ &+  \frac{\| J \|}{N \| P_N W(\sqrt{N} \varphi) T(k_0) \psi \|} \| \cN \, W(\sqrt{N} \varphi_t^{(N)})  e^{-i\cH_N t}  W(\sqrt{N} \varphi) T(k_0) \psi \|  \end{split} \]
where $\| J \|$ denotes the operator norm of $J$. {F}rom Lemma \ref{lm:TNT}, recalling the definition 
(\ref{eq:cU}) of the fluctuation dynamics, we find
\[  \begin{split}
\Big| \tr \; J \, &\left( \gamma^{(1)}_{N,t} - |\varphi_t \rangle\langle \varphi_t| \right)  \Big| \\  \leq \; &\frac{\| J \|}{\sqrt{N} \| P_N W(\sqrt{N} \varphi) T(k_0) \psi \|}  \| \cN^{1/2} \,T^* (k_t) \,  W^* (\sqrt{N} \varphi_t^{(N)})  e^{-i\cH_N t}  W(\sqrt{N} \varphi) T(k_0) \psi \|  
\\ &+  \frac{1}{N \| P_N W(\sqrt{N} \varphi) T(k_0) \psi \|} \| \cN \, T^* (k_t) \, W^* (\sqrt{N} \varphi_t^{(N)}  e^{-i\cH_N t}  W(\sqrt{N} \varphi) T(k_0) \psi \|  \\ \leq \:& \frac{ \| (\cN+1)^{1/2} \, \cU (t;0) \psi \|}{\sqrt{N} \| P_N W(\sqrt{N} \varphi) T(k_0) \psi \|}  + \frac{ \| \cN \, \cU (t;0) \psi \| 
}{N \|  P_N W(\sqrt{N} \varphi) T(k_0) \psi \|}.  \end{split} \]
Using Proposition \ref{prop:apri}, we conclude that
\[  \begin{split}
\left| \tr \; J \, \left( \gamma^{(1)}_{N,t} - |\varphi_t \rangle\langle \varphi_t| \right)  \right|  \leq \:& \frac{C\| J \| \, \| (\cN+1)^{1/2} \, \cU (t;0) \psi \|}{\sqrt{N} \| P_N W(\sqrt{N} \varphi) T(k_0) \psi \|}   + 
\frac{C \| (\cN+1)^{1/2} \, \psi \|  }{\sqrt{N} \|  P_N W(\sqrt{N} \varphi) T(k_0) \psi \|}  
\\ &+ \frac{C \| (\cN+1) \psi \| }{N \| P_N W(\sqrt{N} \varphi) T(k_0) \psi \|}.  \end{split}
\]
{F}rom the assumptions (\ref{eq:ass1-N}) and (\ref{eq:ass2-N}), we obtain
\[ \left| \tr \; J \, \left( \gamma^{(1)}_{N,t} - |\varphi_t \rangle\langle \varphi_t| \right)  \right| \leq \frac{C\| J \|}{N^{1/4}} \| (\cN+1)^{1/2} \, \cU (t;0) \psi \| + \frac{C}{N^{1/4}}. \]
Finally, Theorem \ref{thm:N} implies that
\begin{equation}\label{eq:last} \left| \tr \; J \, \left( \gamma^{(1)}_{N,t} - |\varphi_t \rangle\langle \varphi_t| \right)  \right| \leq \frac{C\| J \| \exp \left(c_1 \exp (c_2 |t|) \right)}{N^{1/4}}. \end{equation}
Since the Banach space $\cL^1 (L^2 (\bR^3))$ is the dual space to the space of compact operators, equipped with the operator norm, (\ref{eq:last}) implies the claim. 
\end{proof}


\begin{thebibliography}{10}

\bibitem{BEC1}\done{}
M.~H.~Anderson, J.~R.~Ensher, M.~R.~Matthews, C.~E.~Wieman, and E.~A.~Cornell.
\newblock Observation of {B}ose-{E}instein condensation in a dilute atomic vapor.
\newblock {\em Science} {\bf 269} (1995), 198--201.

\bibitem{CLS}
L.~{Chen}, J.~O.~{Lee}, and B.~{Schlein}.
\newblock {Rate of convergence towards {H}artree dynamics}.
\newblock {\em J. Stat. Phys.} {\bf 144} (2011), 872--903.

\bibitem{BEC2}
K.~B.~Davis, M.~O.~Mewes, M.~R.~Andrews, N.~J.~van Druten, D.~S.~Durfee, D.~M.~Kurn, and W.~Ketterle.
\newblock Bose-einstein condensation in a gas of sodium atoms.
\newblock {\em Phys. Rev. Lett.} {\bf 75} (1995), 3969--3973.

\bibitem{ES}
A.~Elgart and B.~Schlein.
\newblock Mean field dynamics of boson stars.
\newblock {\em Comm. Pure Appl. Math.}, {\bf 60} (2007), no. 4, 500--545.

\bibitem{EMS}
L.~{Erd{\H o}s}, A.~{Michelangeli}, and B.~{Schlein}.
\newblock {Dynamical formation of correlations in a {B}ose-{E}instein condensate}.
\newblock {\em Comm. Math. Phys.}, {\bf 289} (2009), 1171--1210.

\bibitem{ESY0}
L. Erd{\H{o}}s, B. Schlein, and H.-T. Yau.
\newblock Derivation of the {G}ross-{P}itaevskii hierarchy for the dynamics of
  {B}ose-{E}instein condensate.
\newblock {\em Comm. Pure Appl. Math.} {\bf 59} (2006), no. 12, 1659--1741.

\bibitem{ESY1}
L.~{Erd{\H o}s}, B.~{Schlein}, and H.-T.~{Yau}.
\newblock {Derivation of the cubic non-linear Schr{\"o}dinger equation from
  quantum dynamics of many-body systems}.
\newblock {\em Inv. Math.} {\bf 167} (2006), 515--614.

\bibitem{ESY2}
L.~Erd{\H{o}}s, B.~Schlein, and H.-T.~Yau.
\newblock Derivation of the {G}ross-{P}itaevskii equation for the dynamics of
  {B}ose-{E}instein condensate.
\newblock {\em Ann. of Math. (2)} {\bf 172} (2010), no. 1, 291--370.

\bibitem{ESY3}
L.~{Erd{\H o}s}, B.~{Schlein}, and H.-T.~{Yau}.
\newblock {Rigorous derivation of the Gross-Pitaevskii equation}.
\newblock {\em Phys. Rev. Lett.} {\bf 98} (2007), no. 4, 040404.

\bibitem{ESY4}
L.~{Erd{\H o}s}, B.~{Schlein}, and H.-T.~{Yau}.
\newblock {Rigorous derivation of the Gross-Pitaevskii equation with a large
  interaction potential}.
\newblock {\em J. Amer. Math. Soc.} {\bf 22} (2009), 1099--1156.
  
\bibitem{EY}
L.~Erd{\H{o}}s and H.-T.~Yau.
\newblock Derivation of the nonlinear {S}chr\"odinger equation from a many body
  {C}oulomb system.
\newblock {\em Adv. Theor. Math. Phys.} {\bf 5} (2001), no. 6, 1169--1205.

\bibitem{FKS}
J.~Fr{\"o}hlich, A.~Knowles, and S.~Schwarz.
\newblock On the mean-field limit of bosons with {C}oulomb two-body
  interaction.
\newblock {\em Comm. Math. Phys.} {\bf 288} (2009), no. 3, 1023--1059.

\bibitem{GV}
J.~Ginibre and G.~Velo.
\newblock The classical field limit of scattering theory for non-relativistic
  many-boson systems. I. and II. \textit{Comm. Math. Phys.} \textbf{66} (1979),
37--76, and \textbf{68} (1979), 45--68.

\bibitem{GMM2010}
M.~G.~Grillakis, M.~Machedon, and D.~Margetis.
\newblock Second-order corrections to mean field evolution of weakly
  interacting bosons. {I}.
\newblock {\em Comm. Math. Phys.}, {\bf 294} (2010), no. 1, 273--301.

\bibitem{H}
K.~Hepp.
\newblock The classical limit for quantum mechanical correlation functions.
\newblock {\em Comm. Math. Phys.} {\bf 35} (1974), 265--277.

\bibitem{KT}
M.~Keel and T.~Tao.
\newblock Endpoint {S}trichartz estimates.
\newblock {\em Amer. J. Math.} {\bf 120} (1998), no. 5, 955--980.

\bibitem{KP2009}
A.~{Knowles} and P.~{Pickl}.
\newblock {Mean-field dynamics: singular potentials and rate of convergence}.
\newblock {\em Comm. Math. Phys.} {\bf 298} (2010), 101--138.

\bibitem{LS}
E.~H.~Lieb and R.~Seiringer.
\newblock Proof of {B}ose-{E}instein condensation for dilute trapped gases.
\newblock {\em Phys. Rev. Lett.} {\bf 88} (2002), 170409.

\bibitem{LSY}
E.~H.~Lieb, R.~Seiringer, and J.~Yngvason.
\newblock Bosons in a trap: A rigorous derivation of the {G}ross-{P}itaevskii
  energy functional. \newblock {\em Phys. Rev. A} {\bf 61} (2000), 043602.

\bibitem{LSSY}
E.~H.~Lieb, R.~Seiringer, J.~P.~Solovej, and J.~Yngvason.
{\it The mathematics of the {B}ose gas and its condensation}. 
Oberwolfach Seminars, {\bf 34}. Birkhauser Verlag, Basel, 2005.

%
\bibitem{P}
P.~{Pickl}.
\newblock {Derivation of the time dependent {G}ross {P}itaevskii equation with
  external fields}. Preprint arXiv:1001.4894.

\bibitem{RS}
I.~Rodnianski and B.~Schlein.
\newblock Quantum fluctuations and rate of convergence towards mean field
  dynamics. \newblock {\em Comm. Math. Phys.} {\bf 291} (2009), no. 1, 31--61.

\bibitem{S}
H.~Spohn.
\newblock Kinetic equations from hamiltonian dynamics: Markovian limits.
\newblock {\em Rev. Mod. Phys.} {\bf 52} (1980), 569--615.

\end{thebibliography}
\end{document}